\documentclass[letterpaper,12pt]{article}

\makeatletter
\def\monthname{\ifcase\month\or
January\or February\or March\or April\or May\or June\or
July\or August\or September\or October\or November\or December\fi}
\makeatother

\usepackage[round]{natbib}
	\bibliographystyle{apalike}

\makeatletter
\def\@sect#1#2#3#4#5#6[#7]#8{\ifnum #2>\c@secnumdepth
     \let\@svsec\@empty\else
     \refstepcounter{#1}\edef\@svsec{\csname the#1\endcsname. \hskip 0.4em}\fi
     \@tempskipa #5\relax
      \ifdim \@tempskipa>\z@
        \begingroup #6\relax
          \@hangfrom{\hskip #3\relax\@svsec}{\interlinepenalty \@M #8\par}%
        \endgroup
       \csname #1mark\endcsname{#7}\addcontentsline
         {toc}{#1}{\ifnum #2>\c@secnumdepth \else
                      \protect\numberline{\csname the#1\endcsname}\fi
                    #7}\else
        \def\@svsechd{#6\hskip #3\relax  %
                   \@svsec #8\csname #1mark\endcsname
                      {#7}\addcontentsline
                           {toc}{#1}{\ifnum #2>\c@secnumdepth \else
                             \protect\numberline{\csname the#1\endcsname}\fi
                       #7}}\fi
     \@xsect{#5}}
\makeatother

\makeatletter
\renewcommand{\section}{\@startsection{section}{1}{0mm}{-\baselineskip}{0.5\baselineskip}{\center\normalfont\large\bf}}
\renewcommand{\subsection}{\@startsection{subsection}{2}{0mm}{-\baselineskip}{0.25\baselineskip}{\raggedright\normalfont\normalsize\bf}}
\renewcommand{\subsubsection}{\@startsection{subsubsection}{3}{0mm}{-\baselineskip}{0.05\baselineskip}{\raggedright\normalfont\normalsize\itshape}}
\def\@begintheorem#1#2{\trivlist \item[\hskip \labelsep{\bf #1\ #2:}]\it}
\makeatother

\makeatletter
\def\blfootnote{\xdef\@thefnmark{}\@footnotetext}
\makeatother

\makeatletter

\makeatother

\renewcommand{\thesection}{\arabic{section}}
\renewcommand{\thesubsection}{\arabic{section}.\arabic{subsection}}

\usepackage[colorlinks=True,citecolor=black,urlcolor=black,pagebackref=true,backref=false,linkcolor=black,hypertexnames=false]{hyperref}
\usepackage{fullpage}
\usepackage{graphicx} %
\usepackage[english]{babel}
\usepackage{array}

\usepackage[normalem]{ulem}
\usepackage{breakcites}
\usepackage[utf8]{inputenc} %
\usepackage[T1]{fontenc}    %
\usepackage{booktabs}       %
\usepackage{amsfonts}       %
\usepackage{nicefrac}       %
\usepackage{microtype}      %
\usepackage{caption}
\usepackage{subcaption}
\usepackage{float}
\usepackage{tikz}
\usetikzlibrary{positioning}
\usetikzlibrary{arrows.meta}
\usepackage{amsmath,amsthm,amsmath}

\usepackage{amssymb}
\usepackage{mathtools}
\usepackage{soul}
\usepackage{bm}
\usepackage{enumitem}
\usepackage{multirow}
\usepackage{wrapfig}
\usepackage{scalerel}
\allowdisplaybreaks
\usepackage{dsfont}
\usepackage[noend]{algpseudocode}
\usepackage{xcolor}
\usepackage{thmtools,thm-restate}

\newtheorem{definition}{Definition}

\newtheorem{assumption}{Assumption}
\newtheorem{theorem}{Theorem}
\newtheorem{corollary}{Corollary}
\newtheorem{lemma}{Lemma}

\newtheorem{proposition}{Proposition}

\newtheorem{example}{Example}

\DeclareMathAlphabet{\mathbsf}{OT1}{cmss}{bx}{n}%
\DeclareMathAlphabet{\mathssf}{OT1}{cmss}{m}{sl}%

\newcommand{\mca}{\texttt{MC}}
\newcommand{\cR}{\mathcal{R}} %
\newcommand{\cN}{\mathcal{N}} %
\newcommand{\cC}{\mathcal{C}} %
\newcommand{\cI}{\mathcal{I}} %
\newcommand{\cP}{\mathcal{P}} %
\newcommand{\Probability}{\mathbb{P}}
\newcommand{\Expectation}{\mathbb{E}}

\newcommand{\Variance}{\mathbb{V}\mathrm{ar}}
\newcommand{\Covariance}{\mathbb{C}\mathrm{ov}}
\newcommand{\subGaussian}{\textrm{subGaussian}}
\newcommand{\subExponential}{\textrm{subExponential}}
\newcommand{\subWeibull}{\textrm{subWeibull}}
\newcommand{\subG}[1][\sigma^2]{\textrm{subGaussian}(#1)}
\newcommand{\subE}[1][\sigma^2]{\textrm{subExponential}(#1)}
\newcommand{\subW}[1][\sigma^2]{\textrm{subWeibull}_\rho(#1)}
\newcommand{\Reals}{\mathbb{R}} %
\newcommand{\real}{\Reals} %

\newcommand{\indep}{\hspace{1mm}{\perp \!\!\! \perp}~}
\newcommand{\notindep}{~{\not\!\perp\!\!\!\perp}~}

\newcommand{\variance}[1][DR]{\mathbb{Z}^{\mathrm{#1}}}
\newcommand{\bias}[1][DR]{\mathbb{X}^{\mathrm{#1}}}
\newcommand{\error}[1][DR]{\mrm{Err}^{\mathrm{#1}}}

\newcommand{\termIPW}[1][a]{\mathbb{T}_{i,j}^{(\mathrm{#1}, \mathrm{IPW})}}
\newcommand{\termDR}[1][a]{\mathbb{T}_{i,j}^{(\mathrm{#1}, \mathrm{DR})}}

\newcommand{\termV}[1][a]{\mathbb{T}_{i,j}^{(\mathrm{#1})}}

\newcommand{\lbar}{{\bar\lambda}}
\newcommand{\biasterm}{\mathbb{X}}
\newcommand{\ATETrue}[1][j]{\mathrm{ATE}_{\cdot,#1}}
\newcommand{\ATEDR}[1][j]{\what{\mathrm{ATE}}{}_{\cdot,#1}^{\,\mathrm{DR}}}
\newcommand{\ATEOI}[1][j]{\what{\mathrm{ATE}}{}_{\cdot,#1}^{\,\mathrm{OI}}}
\newcommand{\ATEIPW}[1][j]{\what{\mathrm{ATE}}{}_{\cdot,#1}^{\,\mathrm{IPW}}}

\newcommand{\fsg}{Finite Sample Guarantees for DR}
\newcommand{\fsgoiipw}{Finite Sample Guarantees for OI and IPW}
\newcommand{\fsgoi}{Finite Sample Guarantees for OI}
\newcommand{\fsgipw}{Finite Sample Guarantees for IPW}
\newcommand{\normality}{Asymptotic Normality for DR}

\newcommand{\sigmax}{\wbar{\sigma}}

\newcommand{\tallwide}{\texttt{TW}}
\newcommand{\tall}{(\trm{tall})}
\newcommand{\wide}{(\trm{wide})}
\newcommand{\miss}{}

\newcommand{\cfReg}{\texttt{Cross-}\allowbreak\texttt{Fitted-}\allowbreak\texttt{Regression}}
\newcommand{\ssSVD}{\texttt{Cross-}\allowbreak\texttt{Fitted-}\allowbreak\texttt{SVD}}
\newcommand{\ssMC}{\texttt{Cross}\allowbreak\texttt{-}\allowbreak\texttt{Fitted}\allowbreak\texttt{-}\allowbreak\texttt{MC}}

\newcommand{\mpo}[1][(a)]{{\Theta}^{#1}}
\newcommand{\mpoz}{\mpo[(0)]}
\newcommand{\mpoo}{\mpo[(1)]}
\newcommand{\mpob}{\mpoz,\mpoo}

\newcommand{\empo}[1][(a)]{{\what{\Theta}}^{#1}}
\newcommand{\bmpo}[1][(a)]{{\wbar{\Theta}}^{#1}}
\newcommand{\empoz}{\empo[(0)]}
\newcommand{\empoo}{\empo[(1)]}

\newcommand{\bmpoo}{\bmpo[(1)]}
\newcommand{\empob}{\empoz,\empoo}

\newcommand{\mta}{{P}}
\newcommand{\emta}{{\what{P}}}

\newcommand{\ta}{{A}}
\newcommand{\oo}{{Y}}

\newcommand{\matR}{R}
\newcommand{\matS}{S}
\newcommand{\matU}{U}
\newcommand{\matV}{V}
\newcommand{\matW}{W}
\newcommand{\matT}{T}
\newcommand{\matH}{H}
\newcommand{\matmask}{{F}}

\newcommand{\ooa}[1][]{{{Y}}^{(a),\mathrm{obs}}_{#1}}
\newcommand{\ooz}[1][]{{{Y}}^{(0),\mathrm{obs}}_{#1}}
\newcommand{\oocontrol}{{Y}^{(0),\mathrm{pre}}}
\newcommand{\ooo}[1][]{{{Y}}^{(1),\mathrm{obs}}_{#1}}

\newcommand{\tooz}[1][]{{\wtil{Y}}^{(0),\mathrm{obs}}_{#1}}
\newcommand{\tooo}[1][]{{\wtil{Y}}^{(1),\mathrm{obs}}_{#1}}

\newcommand{\barz}[1][]{{{Y}}^{(0),\mathrm{full}}_{#1}}
\newcommand{\baro}[1][]{{{Y}}^{(1),\mathrm{full}}_{#1}}

\newcommand{\noisey}[1][a]{{E}^{(#1)}}
\newcommand{\noiseyz}{\noisey[0]}
\newcommand{\noiseyo}{\noisey[1]}
\newcommand{\noiseyb}{\noiseyz,\noiseyo}
\newcommand{\noisea}{{W}}

\newcommand{\Rcol}{\mc{E}}

\newcommand{\RP}{\Rcol\bigparenth{\emta}}
\newcommand{\RTheta}{\Rcol\bigparenth{\empo[]}}

\newcommand{\allones}{\boldsymbol{1}}
\newcommand{\ones}[1][N]{\boldsymbol{1}_{#1}}

\renewcommand{\star}{\,?}

\makeatletter
\newcommand{\ogeneric}[2][0.7]{%
  \vphantom{\oplus}\mathpalette\o@generic{{#1}{#2}}%
}
\newcommand{\o@generic}[2]{\o@@generic#1#2}
\newcommand{\o@@generic}[3]{%
  \begingroup
  \sbox\z@{$\m@th#1\oplus$}%
  \dimen@=\dimexpr\ht\z@+\dp\z@\relax
  \savebox\tw@[\totalheight]{$\m@th#1\bigcirc$}%
  \makebox[\wd\z@]{%
    \ooalign{%
      $#1\vcenter{\hbox{\resizebox{\dimen@}{!}{\usebox\tw@}}}$\cr
      \hidewidth
      $#1\vcenter{\hbox{\resizebox{#2\dimen@}{!}{$#1\vphantom{\oplus}{#3}$}}}$%
      \hidewidth
      \cr
    }%
  }%
  \endgroup
}
\makeatother

\newcommand{\odiv}{\mathrel{\ogeneric[0.4]{\boldsymbol{/}}}}

\newcommand{\Uniform}{\texttt{Uniform}} %

\newcommand{\ld}[1][]{\ell_{\delta#1}}

\newcommand{\lone}[1][]{\ell_{1}}

\newcommand{\m}{m}

\newcommand{\propsssvd}{Guarantees for $\ssSVD$}
\newcommand{\propssmc}{Guarantees for $\ssMC$}

\newcommand{\stkout}[1]{\ifmmode\text{\sout{\ensuremath{#1}}}\else\sout{#1}\fi}
\newcommand{\thetamax}{\theta_{\max}}

\newcommand{\muexpected}[1][]{\mu^{[#1]}_{\cdot,T}}
\newcommand{\vareps}{\varepsilon}

\newcommand{\inprob}{\stackrel{p}{\longrightarrow}}

\newcommand{\E}{\mbb E}

\newcommand{\sless}[1]{\stackrel{#1}{\leq}}
\newcommand{\sgreat}[1]{\stackrel{#1}{\geq}}
\newcommand{\sequal}[1]{\stackrel{#1}{=}}

\newcommand{\normalbrackets}[1]{[ #1 ]}
\newcommand{\brackets}[1]{\left[ #1 \right]}
\newcommand{\bigbrackets}[1]{\big[ #1 \big]}
\newcommand{\Bigbrackets}[1]{\Big[ #1 \Big]}
\newcommand{\biggbrackets}[1]{\bigg[ #1 \bigg]}
\newcommand{\normalparenth}[1]{( #1 )}
\newcommand{\parenth}[1]{\left( #1 \right)}
\newcommand{\bigparenth}[1]{\big( #1 \big)}
\newcommand{\Bigparenth}[1]{\Big( #1 \Big)}
\newcommand{\biggparenth}[1]{\bigg( #1 \bigg)}
\newcommand{\normalbraces}[1]{\{ #1  \}}

\newcommand{\sbraces}[1]{\{ #1 \}}
\newcommand{\bigbraces}[1]{\big\{ #1 \big \}}
\newcommand{\Bigbraces}[1]{\Big\{ #1 \Big \}}

\newcommand{\abs}[1]{\left| #1 \right |}
\newcommand{\normalabs}[1]{| #1  |}
\newcommand{\bigabs}[1]{\big| #1 \big|}
\newcommand{\Bigabs}[1]{\Big| #1 \Big|}
\newcommand{\biggabs}[1]{\bigg| #1 \bigg|}

\newcommand{\floors}[1]{\left\lfloor #1 \right \rfloor}

\newcommand{\qtext}[1]{\quad\text{#1}\quad} 
\newcommand{\stext}[1]{\ \text{#1}\ }

\def\defeq{\triangleq} %
\newcommand{\defn}{\defeq}

\def\norm#1{\big\|{#1}\big\|} %
 \def\snorm#1{\|{#1}\|} %
\newcommand{\twonorm}[1]{\norm{#1}_2} %
\newcommand{\stwonorm}[1]{\snorm{#1}_2} %
\newcommand{\sinfnorm}[1]{\snorm{#1}_{\infty}} %

\newcommand{\matnorm}[1]{|\!| #1 | \! |} %
\newcommand{\maxmatnorm}[1]{\matnorm{#1}_{\max}}
\newcommand{\fronorm}[1]{\matnorm{#1}_{\mrm{F}}} %
\newcommand{\twoinfnorm}[1]{\matnorm{#1}_{2, \infty}} 
\newcommand{\stwoinfnorm}[1]{\snorm{#1}_{2, \infty}} 
\newcommand{\onetwonorm}[1]{\matnorm{#1}_{1, 2}}

\def\what#1{\widehat{#1}}

\def\mbb#1{\mathbb{#1}}
\def\mc#1{\mathcal{#1}}
\def\mrm#1{\mathrm{#1}}
\def\trm#1{\textrm{#1}}

\def\til#1{\widetilde{#1}}
\def\wtil#1{\til{#1}}
\def\wbar#1{\overline{#1}}

\def\balign#1\ealign{\begin{align}#1\end{align}}
\def\baligns#1\ealigns{\begin{align*}#1\end{align*}}
\def\balignat#1\ealign{\begin{alignat}#1\end{alignat}}
\def\balignats#1\ealigns{\begin{alignat*}#1\end{alignat*}}
\def\bitemize#1\eitemize{\begin{itemize}#1\end{itemize}}
\def\benumerate#1\eenumerate{\begin{enumerate}#1\end{enumerate}}

\newenvironment{talign*}
 {\let\displaystyle\textstyle\csname align*\endcsname}
 {\endalign}
\newenvironment{talign}
 {\let\displaystyle\textstyle\csname align\endcsname}
 {\endalign}

\def\balignst#1\ealignst{\begin{talign*}#1\end{talign*}}
\def\balignt#1\ealignt{\begin{talign}#1\end{talign}}

\graphicspath{{../figures/},{figs/}}
\usepackage[capitalize,noabbrev]{cleveref} 
\usepackage{autonum}

\crefname{appendix}{Appendix}{Appendices}
\crefname{equation}{Eq.}{Eqs.}
\crefname{lemma}{Lemma}{Lemmas}
\crefname{theorem}{Theorem}{Theorems}
\crefname{Corollary}{Corollary}{Corollaries}
\crefname{Claim}{Claim}{Claims}
\crefname{algorithm}{Algorithm}{Algorithms}
\crefname{example}{Example}{Examples}

\crefname{section}{Section}{Sections}
\crefname{table}{Table}{Tables}
\crefname{remark}{Remark}{Remarks}
\crefname{algorithm}{Algorithm}{Algorithms}
\crefname{definition}{Definition}{Definitions}
\crefname{Proposition}{Proposition}{Propositions}
\crefname{myremark}{Remark}{Remarks}
\crefname{mylemma}{Lemma}{Lemmas}
\crefname{myexample}{Example}{Examples}
\crefname{mydefinition}{Definition}{Definitions}
\crefname{myproposition}{Proposition}{Propositions}
\crefname{mycorollary}{Corollary}{Corollaries}
\crefname{myclaim}{Claim}{Claims}
\crefname{myassumption}{Assumption}{Assumptions}
\crefname{figure}{Figure}{Figures}
\crefname{enumi}{}{}
\crefname{name}{}{}

\begin{document}

\setcounter{page}{1}
\pagestyle{plain}
\vskip 80pt
\centerline{\Large\bf Doubly Robust Inference in Causal Latent Factor Models}
  \begin{center}%
    \vskip 10pt
     \blfootnote{\hspace*{-0.30in} %
    Alberto Abadie, Department of Economics, MIT, abadie@mit.edu.
    Anish Agarwal, Department of Industrial Engineering and Operations Research, Columbia University, aa5194@columbia.edu.
    Raaz Dwivedi, Department of Operations Research and Information Engineering, Cornell Tech, dwivedi@cornell.edu. 
    Abhin Shah, Department of Electrical Engineering and Computer Science, MIT, abhin@mit.edu. We are grateful to Haruki Kono, Guido Imbens, James Robins, Stefan Wager, and seminar participants at Columbia, MIT, the Online Causal Inference Seminar, and Stanford for helpful comments and discussion.
   }
    {\large
     \lineskip .5em%
     \begin{tabular}[t]{ccc}%
        Alberto Abadie&&Anish Agarwal\\[.2ex]
       MIT&&Columbia\\[1ex]
       Raaz Dwivedi&&Abhin Shah\\[.2ex]
       Cornell Tech&&MIT\\
    \end{tabular}
      \par}%
      \vskip 1em%
      {\large \today } \par%
       \vskip 1em%
  \end{center}\par

\bigskip
\begin{center}\normalsize\bf\text{Abstract}
 \end{center}\advance\leftmargini -.8em\begin{quote}\normalsize
\noindent  
This article introduces a new estimator of average treatment effects under unobserved confounding in modern data-rich environments featuring large numbers of units and outcomes. The proposed estimator is doubly robust, combining outcome imputation, inverse probability weighting, and a novel cross-fitting procedure for matrix completion. We derive finite-sample and asymptotic guarantees, and show that the error of the new estimator converges to a mean-zero Gaussian distribution at a parametric rate. Simulation results demonstrate the relevance of the formal properties of the estimators analyzed in this article.
\end{quote}
\bigskip
\def\submission{arxiv}
\def\arxiv{arxiv}
\section{Introduction}
\label{section_introduction}

This article presents a novel framework for the estimation of average treatment effects in modern data-rich environments in the presence of unobserved confounding. We define modern data-rich environments as those featuring many outcome measurements across a wide range of units. Our interest in data-rich environments stems from the emergence of digital platforms (e.g., internet retailers, social media companies, and ride-sharing companies), electronic medical records systems, IoT devices, and other real-time digitized data systems, which gather economic and social behavior data with unprecedented scope and granularity.

Take the example of an internet retailer. The platform collects not only information on purchases of many customers across many products or product categories, but also on glance views, impressions, conversions, engagement metrics, navigation paths, shipping choices, payment methods, returns, reviews, and more. While some variables, such as geo-location and type of device or browser, can be safely treated as pre-determined relative to the platform's treatments (advertisements, discounts, web-page design, etc.), most are outcomes affected by the treatments, latent customer preferences, and unobserved product features. We leverage the availability of many outcome measures in modern data-rich environments to estimate average treatment effects in the presence of unobserved confounding. The core identification concept is that if each element of a high-dimensional outcome vector is influenced by a common low-dimensional vector of unobserved confounders, it becomes possible to remove the influence of the confounders and identify treatment effects.

Two primary approaches to the estimation of treatment effects are outcome-based and assignment-based methods. Consider again the example of an internet-retail platform where customers interact with various product categories. For each consumer-category pair, the platform makes decisions to either offer a discount or not, and records whether the consumer purchased a product in the category. Outcome-based methods operate by imputing the missing potential outcomes for each consumer-product category pair. This process involves predicting whether a consumer, who received a discount, would have made the purchase without the discount (i.e., the potential outcome without discount), and conversely, if a consumer who did not receive the discount would have purchased the product had they received the discount (i.e., the potential outcome with discount). In contrast, assignment-based methods estimate the probabilities of consumers receiving discounts in each product category and adjust for missing potential outcomes by weighting observed outcomes inversely to the probability of missingness.

A substantial body of literature has explored outcome-based methods, particularly in settings where all confounding factors are measured \citep[see, e.g.,][among many others]{cochran1968effectiveness,rosenbaum1983central,angrist1998estimating,abadie2006large}. 
Imputing potential outcomes in the presence of unobserved confounders poses a more complex challenge. In this context, a commonly adopted framework is the synthetic control method and its variants \citep[see, e.g.,][]{abadie2003economic,abadie2010synthetic,cattaneo2021prediction,arkhangelsky2021synthetic}. An alternative but related approach to outcome imputation under unobserved confounding is the latent factor framework \citep{bai2002determining,bai2009panel,xiong2023large}, wherein each element of the large-dimensional outcome vector is influenced by the same low-dimensional vector of unobserved confounders. 
Matrix completion methods  \citep[see, e.g.,][]{chatterjee2015matrix, athey2021matrix, bai2021matrix, dwivedi2022counterfactual, agarwal2023causal} which have found widespread applications in recommendation systems and panel data models, are closely related to latent factor models. 
Similarly, existing assignment-based procedures to estimate average treatment effects rely on the assumption of no unmeasured confounding \citep[see, e.g.,][]{robins2000marginal, hirano2003efficient, wooldridge2007inverse}, common trends restrictions \citep{abadie2005}, or the availability of an instrumental variable \citep{abadie2003semiparametric,tymon2024kappa}. 

In this article, we propose a doubly-robust estimator \citep[see][]{robins1994estimation, bang2005doubly, chernozhukov2018double} of average treatment effects in the presence of unobserved confounding. This estimator leverages information on both the outcome process and the treatment assignment mechanism under a latent factor framework. It combines outcome imputation and inverse probability weighting with a new cross-fitting approach for matrix completion. We show that the proposed doubly-robust estimator has better finite-sample guarantees than alternative outcome-based and assignment-based estimators. Furthermore, the doubly-robust estimator is approximately Gaussian, asymptotically unbiased, and converges at a parametric rate, under provably valid error rates for matrix completion, irrespective of other properties of the matrix completion algorithm used for estimation.

To our knowledge, this is the first article that leverages latent structures in both the assignment and the outcome processes to obtain a doubly-robust estimator of average treatment effects in the presence of unobserved confounding. \cite{arkhangelsky2022doubly} study doubly-robust identification with longitudinal data under the assumption that conditioning of a function of the treatment assignments over time (e.g., the fraction of times an individual is exposed to treatment) is enough to remove confounding. \cite{athey2021matrix}, \cite{bai2021matrix}, \cite{dwivedi2022counterfactual}, \cite{agarwal2023causal}, and \cite{xiong2023large} propose estimators that apply matrix completion techniques to impute potential outcomes. Although these studies utilize low-rank restrictions in the outcome process, they do not investigate the possibility of similar latent structures in the treatment assignment process. Our article addresses this question, and demonstrate substantial benefits from incorporating knowledge about the structure of the assignment mechanism. 
\medskip

\noindent {\bf Terminology and notation.} 
For any real number $b\in \Reals$, $\floors{b}$ is the greatest integer less than or equal to $b$. For any positive integer $b$, $[b]$ denotes the set of integers from $1$ to $b$, i.e., $[b] \defn \normalbraces{1,\cdots, b}$. We use $c$ to denote any generic universal constant, whose value may change between instances. For any $c > 0$, $\m(c)=\max\normalbraces{c, \sqrt{c}}$ and $\ell_c=\log(2/c)$. For any two deterministic sequences $a_n$ and $b_n$ where $b_n$ is positive, $a_n = O(b_n)$ means that there exist a finite $c > 0$ and a finite $n_0 > 0$ such that $\normalabs{a_n} \leq c\,b_n$ for all $n \geq n_0$. Similarly, $a_n = o(b_n)$ means that for every $c > 0$, there exists a finite $n_0 > 0$ such that $\normalabs{a_n} < c\,b_n$ for all $n \geq n_0$. Further, $a_n = \Omega(b_n)$ means that there exist a finite $c > 0$ and a finite $n_0 > 0$ such that $\normalabs{a_n} \geq c\,b_n$ for all $n \geq n_0$. For a sequence of random variables, $x_n = O_p(1)$ means that the sequence $\normalabs{x_n}$ is stochastically bounded, i.e., for every $\varepsilon > 0$, there exists a finite $\delta >0$ and a finite  $n_0 > 0$ such that $\Probability\bigparenth{\normalabs{ x_n } > \delta } < \varepsilon$ for all $n \geq n_0$. Similarly, $x_n = o_p(1)$ means that the sequence $\normalabs{x_n}$ converges to zero in probability, i.e., for every $\varepsilon > 0$ and $\delta >0$, there exists a finite  $n_0 > 0$ such that $\Probability\bigparenth{\normalabs{ x_n } > \delta } < \varepsilon$ for all $n \geq n_0$. For sequences of random variables $x_n$ and $b_n$, $x_n = O_p(b_n)$ means $x_n = \wbar{x}_n b_n$ where the sequence $\wbar{x}_n = O_p(1)$. Likewise, $x_n = o_p(b_n)$ means $x_n = \wbar{x}_n b_n$ where the sequence $\wbar{x}_n = o_p(1)$.

A mean-zero random variable $x$ is $\subGaussian$ if there exists some $b >0$ such that $\Expectation[\exp(sx)] \leq \exp(b^2s^2/2)$ for all $s\in\Reals$. Then, the $\subGaussian$ norm of $x$ is given by $\snorm{x}_{\psi_2} = \inf\normalbraces{t > 0: \Expectation[\exp(x^2/t^2)] \leq 2}$. A mean-zero random variable $x$ is $\subExponential$ if there exist some $b_1, b_2 >0$ such that $\Expectation[\exp(sx)] \leq \exp(b_1^2s^2/2)$ for all $-1/b_2 < s < 1/b_2$. Then, the $\subExponential$ norm of $x$ is given by $\snorm{x}_{\psi_1} = \inf\normalbraces{t > 0: \Expectation[\exp(\normalabs{x}/t)] \leq 2}$. $\Uniform(a,b)$ denotes the uniform distribution over the interval $[a,b]$ for $a,b \in \Reals$ such that $a<b$. $\cN(\mu,\sigma^2)$ denotes the Gaussian distribution with mean $\mu$ and variance $\sigma^2$. 

For a vector $u \in \Reals^n$, we  denote its $t^{th}$ coordinate by $u_t$ and its $2$-norm $\stwonorm{u}$. For a matrix $U \in \Reals^{n_1 \times n_2}$, we denote the element in $i^{th}$ row and $j^{th}$ column by $u_{i,j}$, the $i^{th}$ row by $U_{i, \cdot}$, the $j^{th}$ column by $U_{\cdot, j}$, the largest eigenvalue by $\lambda_{\max}(U)$, and the smallest by $\lambda_{\min}(U)$. Given a set of indices $\cR \subseteq [n_1]$ and $\cC \subseteq [n_2]$,  $U_{\cI} \in  \real^{\normalabs{\cR} \times \normalabs{\cC}}$ is a sub-matrix of $U$ corresponding to the entries in $\cI \defn \cR \times \cC$, and $U_{-\cI}= \{u_{i,j} : (i,j)\in\{[n_1]\times [n_2]\}\setminus \cI\}$. Further, we denote the Frobenius norm by $\fronorm{U} \defn  \bigparenth{\sum_{i \in [n_1], j \in [n_2]} u_{i,j}^2}^{1/2}$, the $(1,2)$ operator norm by $\onetwonorm{U} \defn \max_{j \in [n_2]} \bigparenth{\sum_{i \in [n_1]} u_{i,j}^2}^{1/2}$, the $(2,\infty)$ operator norm by $\stwoinfnorm{U}\defn \max_{i \in [n_1]} \bigparenth{\sum_{j \in [n_2]} u_{i,j}^2}^{1/2}$, and the maximum norm by $\maxmatnorm{U} \defn \max_{i \in [n_1], j \in [n_2]}  \normalabs{u_{i,j}}$. Given two matrices $U, V \in \Reals^{n_1 \times n_2}$, the operators $\odot$ and $\odiv$ denote element-wise multiplication and division, respectively, i.e., $t_{i,j} = u_{i,j} \cdot v_{i,j}$ when $T = U \odot V$, and $t_{i,j} = u_{i,j} / v_{i,j}$ when $T = U \odiv V$. When $V$ is a binary matrix, i.e., $V \in \normalbraces{0,1}^{n_1 \times n_2}$, the operator $\otimes$ is defined such that $t_{i,j} = u_{i,j}$ if $v_{i,j} = 1$ and  $t_{i,j} = \star$ if $v_{i,j} = 0$ for $T = U \otimes V$. Given two matrices $U \in \Reals^{n_1 \times n_2}$ and $V \in \Reals^{n_1 \times n_3}$, the operator $*$ denotes the (transposed column-wise) Khatri-Rao product of $U$ and $V$, i.e., $T = U * V \in \Reals^{n_1 \times n_2 n_3}$ such that $t_{i,j} = u_{i,j-n_2 \bar{j}} \cdot v_{i,1+\bar{j}}$ where $\bar{j} = \floors{(j-1)/n_2}$. For random objects $U$ and $V$, $U \indep V$ means that $U$ is independent of $V$. 

\section{Setup}\label{sec_prob_formulation}

Consider a setting with $N$ units and $M$ measurements per unit. For each unit-measurement pair $i \in [N] \stext{and} j \in [M]$, we observe a treatment assignment $a_{i,j} \in \normalbraces{0,1}$ and the value of the outcome $y_{i,j} \in \Reals$. Although our results can be easily generalized to multi-ary treatments, for the ease of exposition, we focus on binary treatments.

We operate within the Neyman-Rubin potential outcomes framework and denote the potential outcome for unit $i \in [N]$ and measurement $j \in [M]$ under treatment $a \in \normalbraces{0,1}$ by $y_{i,j}^{(a)} \in \Reals$. 
A no-spillover assumption is implicit in the notation, i.e., the potential outcome $y_{i,j}^{(a)}$ does not depend on the treatment assignment for any other unit-measurement pair. 
In the context of online retail data, the assumption of no spillovers across measurements is justified if the cross-elasticity of demand across product categories, $j$, is low.
Our framework allows for the possibility that the same treatment affects multiple outcomes (e.g., $a_{i,j}=a_{i,j'}$ with probability one, for some $j$ and $j'$ in $[M]$). 
Realized outcomes, $y_{i,j}$, depend on potential outcomes and treatment assignments,
\begin{align}
\label{eq_consistency}
    y_{i,j} & = y_{i,j}^{(0)} (1-a_{i,j}) + y_{i,j}^{(1)} a_{i,j}, 
\end{align}
for all $i \in [N] \stext{and} j \in [M]$. \cref{extension_delayed} 
and \cref{sec:app_dynamic} 
extend the framework proposed in this article to a panel data setting with lagged treatment effects. 

\subsection{Sources of stochastic variation}  

In the setup of this article, each unit $i \in [N]$ is characterized by a set of unknown parameters, $\{(\theta_{i,j}^{(0)},\theta_{i,j}^{(1)}, p_{i,j})\in \mathbb R^2\times [0, 1] \}_{j\in [M]}$, which we treat as fixed. Potential outcomes and treatment assignments are generated as follows: for all $i \in [N], j \in [M]$, and $a \in \normalbraces{0,1}$, 
\begin{gather}\label{eq_outcome_model}
y_{i,j}^{(a)} = \theta_{i,j}^{(a)} + \varepsilon_{i,j}^{(a)}\\
\shortintertext{and}
\label{eq_treatment_model}
    a_{i,j} = p_{i,j} + \eta_{i,j},
\end{gather}
where $\varepsilon_{i,j}^{(a)}$ and $\eta_{i,j}$ are mean-zero random variables, and 
\begin{align}\label{eq_eta_defn}
\eta_{i,j} = \begin{cases}
    - p_{i,j} \qtext{with probability} 1 - p_{i,j}\\
    1 - p_{i,j} \qtext{with probability} p_{i,j}.
\end{cases}
\end{align} 
It follows that $\theta_{i,j}^{(a)}$ is the mean of the potential outcome $y_{i,j}^{(a)}$, and $p_{i,j}$ is the unknown assignment probability or latent propensity score.
The matrices $\mpoz \defn \normalbraces{\theta_{i,j}^{(0)}}_{i \in [N], j \in [M]}$, $\mpoo \defn \normalbraces{\theta_{i,j}^{(1)}}_{i \in [N], j \in [M]}$, and $\mta \defn \normalbraces{p_{i,j}}_{i \in [N], j \in [M]}$
collect mean potential outcomes and assignment probabilities. Then, the matrices $\noiseyz\defn \normalbraces{\varepsilon_{i,j}^{(0)}}_{i \in [N], j \in [M]}, \noiseyo\defn \normalbraces{\varepsilon_{i,j}^{(1)}}_{i \in [N], j \in [M]}$, and $ \noisea\defn \normalbraces{\eta_{i,j}}_{i \in [N], j \in [M]}$ capture all sources of randomness in potential outcomes and treatment assignments.

Our setup allows $\mpoz, \mpoo$ to be arbitrarily associated with $\mta$, inducing unobserved confounding. The assumptions in \cref{sec_main_results} imply that $\mpoz, \mpoo$, and $\mta$ include all confounding factors, and require $(\varepsilon^{(0)}_{i, j}, \varepsilon^{(1)}_{i, j}) \indep \eta_{i, j}$ for every $i \in [N]$ and $j \in [M]$. 

\subsection{Target causal estimand} 

For any given measurement $j \in [M]$, we aim to estimate the effect of the treatment averaged over all units,
\begin{gather}
\ATETrue \defn \mu^{(1)}_{\cdot,j} - \mu^{(0)}_{\cdot,j} \label{eq_ate_parameter_combined} \shortintertext{where} 
\mu^{(a)}_{\cdot,j} \defn \frac{1}{N}\sum_{i \in [N]} \theta_{i,j}^{(a)}.
\end{gather}
$\ATETrue$ akin to the conditional average treatment effect of \cite{abadie2006large}, but based on the latent means, $\theta_{i,j}^{(a)}$, in \cref{eq_outcome_model} rather than on conditional means that depend on observed covariates only. 
It is straightforward to adapt the methods in this article to the estimation of alternative parameters, like the average treatment effect across measurements for each unit $i$, or the estimation of treatment effects over a subset of the units, $S \subset [N]$.

\section{Estimation}\label{sec_estimators}

In this section, we propose a procedure that uses the treatment assignment matrix $\ta$ and the observed outcomes matrix $\oo$ to estimate $\ATETrue$, where
\begin{align}
    \oo \defn \normalbraces{y_{i,j}}_{i \in [N], j \in [M]} \qtext{and} \ta \defn \normalbraces{a_{i,j}}_{i \in [N], j \in [M]}.
\end{align}
The estimator proposed in this section leverages matrix completion as a key subroutine. We start the section with a brief overview of matrix completion methods.
\subsection{Matrix completion: A primer}\label{subsec_mc_primer}
Consider a matrix of parameters $\matT \in \Reals^{N \times M}$. While $\matT$ is unobserved, we observe the matrix
$\matS \in \{\Reals \cup \{\star\}\}^{N \times M}$ where $\star$ denotes a missing value.  The relationship between $\matS$ and $\matT$ is given by 
\begin{align}\label{eq_mc_main}
  \matS = (\matT + \matH) \otimes \matmask.
\end{align}
Here, $\matH \in \Reals^{N \times M}$ is a noise matrix, and $\matmask \in \{0, 1\}^{N \times M}$ is a masking matrix with ones for the recorded entries of $\matS$ and zeros for the missing entries.

A matrix completion algorithm, denoted by $\mca$, takes the $\matS$ as its input, and returns an estimate of $\matT$, which we denote by $\what {\matT}$ or $\mca(\matS)$.
In other words, $\mca$ produces an estimate of a matrix from noisy observations of a subset of all the elements of the matrix.

The matrix completion literature is rich with algorithms $\mca$ that provide error guarantees, namely bounds on $\snorm{\mca(\matS)-\matT}$ for a suitably chosen norm/metric $\snorm{\cdot}$, under a variety of assumptions on the triplet $(\matT, \matH, \matmask)$. 
Typical assumptions are $(i)$ $\matT$ is low-rank, $(ii)$ the entries of $\matH$ are independent, mean-zero and sub-Gaussian random variables, and $(iii)$ the entries of $\matmask$ are independent Bernoulli random variables. 
Though matrix completion is commonly associated with the imputation of missing values, a typically underappreciated aspect is that it also denoises the observed matrix.
Even when each entry of $\matS$ is observed, $\mca(\matS)$ subtracts the effects of $\matH$ from $\matS$, i.e., it performs matrix denoising. \cite{nguyen2019low} provide a survey of various matrix completion algorithms.

\begin{figure}[t]
    \centering
    \hspace*{-0.1cm}\begin{tabular}{c@{\hskip 0.6cm}c@{\hskip 0.6cm}c@{\hskip 0.6cm}c}
    \includegraphics[trim={3cm 7cm 2.5cm 2.75cm}, width=0.22\linewidth,clip]{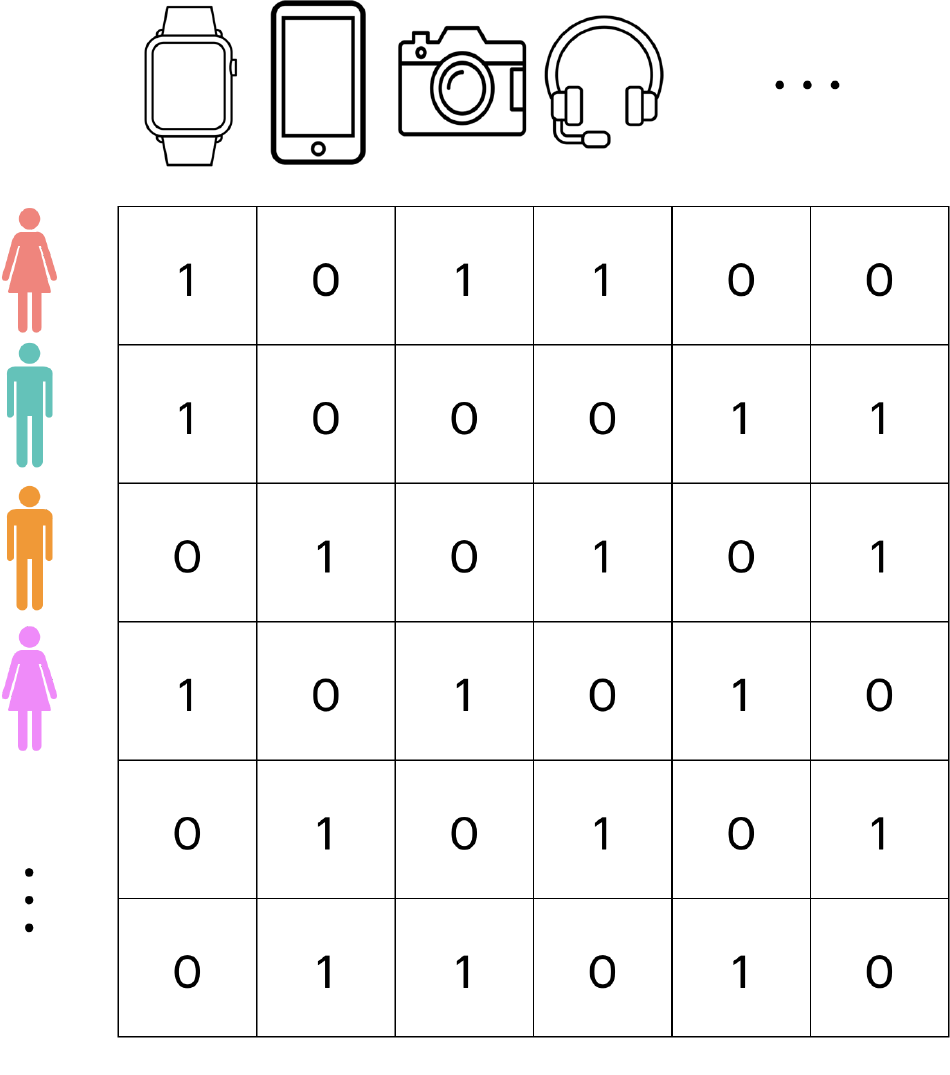}&
    \includegraphics[trim={3cm 7cm 2.5cm 2.75cm}, width=0.22\linewidth,clip]{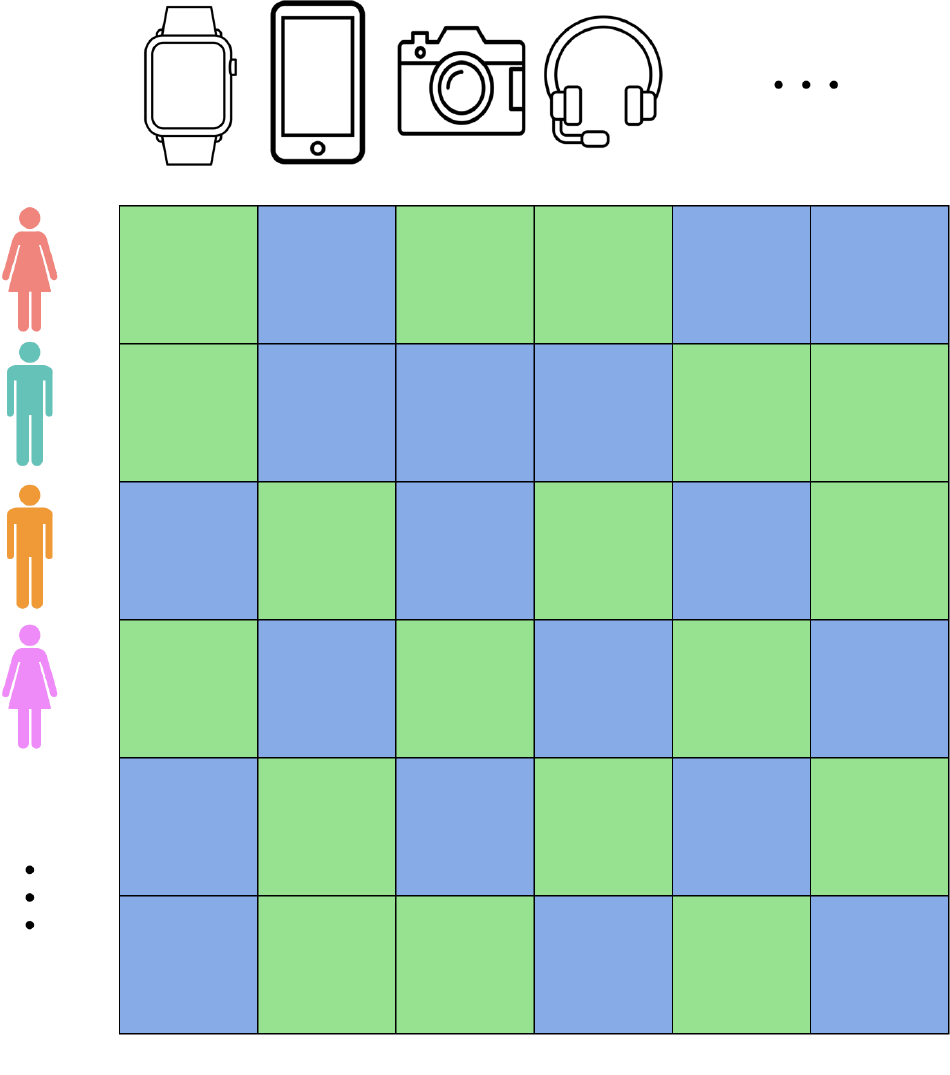}&
    \includegraphics[trim={3cm 7cm 2.5cm 2.75cm}, width=0.22\linewidth,clip]{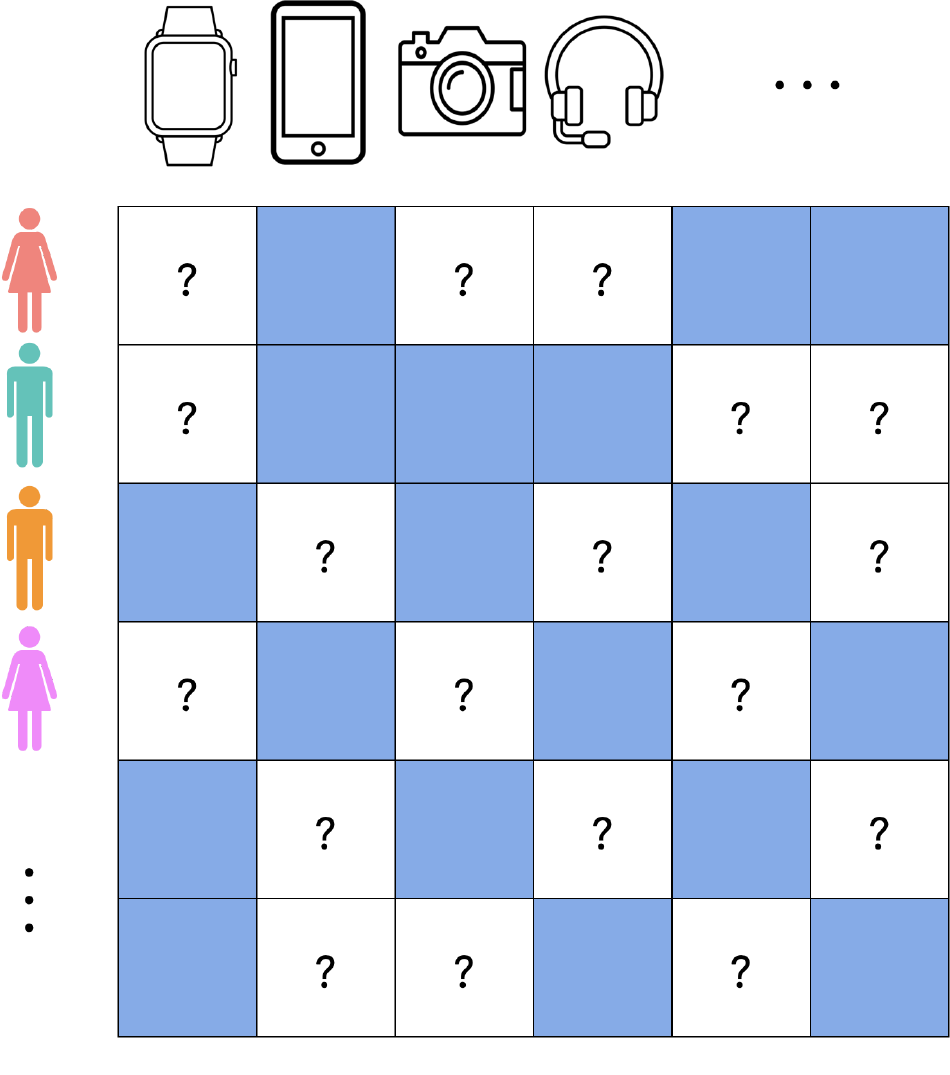}&
    \includegraphics[trim={3cm 7cm 2.5cm 2.75cm}, width=0.22\linewidth,clip]{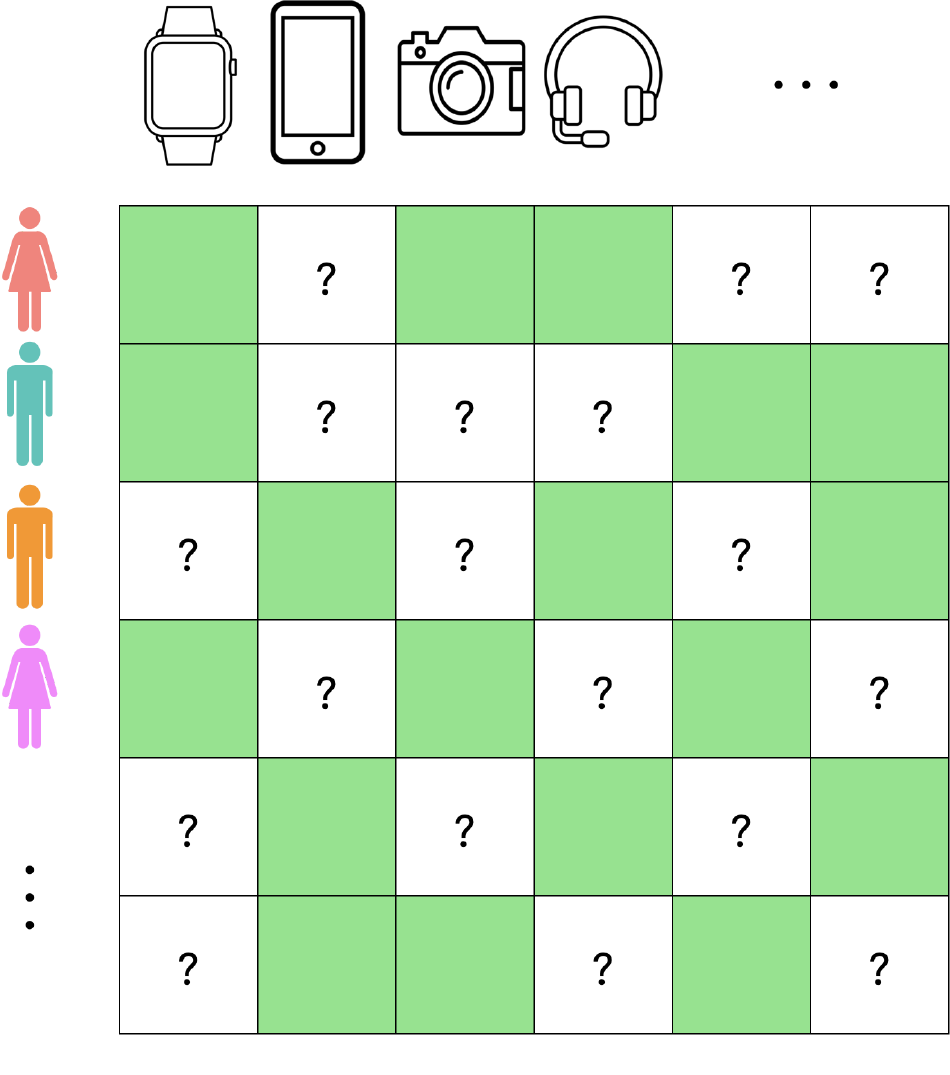}
    \\  
    $(a)$ $\ta$ & $(b)$ $\oo$ & $(c)$ $\ooz$ & $(d)$ $\ooo$
    \end{tabular}
\caption{
Schematic of the treatment assignment matrix $\ta$, the observed outcomes matrix $\oo$ (where green and blue fills indicate observations under $a = 1$ and $a = 0$, respectively), and the observed component of the potential outcomes matrices, i.e., $\ooz$ and $\ooo$ (where $\star$ indicates a missing value).
All matrices are $N \times M$ where $N$ is the number of customers and $M$ is the number of products. }

\label{figure_caricature}
\end{figure}

\subsection{Key building blocks} 

We now define and express matrices that are related to the quantities of interest $\mpoz, \mpoo$, and $\mta$ in a form similar to \cref{eq_mc_main}.
See Figure \ref{figure_caricature} for a visual representation of these matrices.
\begin{itemize}
\item 
\textbf{Outcomes}:  Let $\ooz = \oo \otimes (\allones - \ta) \in \{\Reals\cup \{\star\}\}^{N \times M}$ be a matrix with $(i, j)$-th entry equal to $y_{i, j}$ if $a_{i, j} = 0$, and equal to $\star$ otherwise. Here, $\allones$ is the $N \times M$ matrix with all entries equal to one. 
Analogously, let $\ooo  = \oo \otimes \ta \in \{\Reals\cup \{\star\}\}^{N \times M}$ be a matrix with $(i, j)$-th entry equal to $y_{i, j}$ if $a_{i, j} = 1$, and equal to $\star$ otherwise. 
In other words, $\ooz$ and $\ooo$ capture the observed components of $\normalbraces{y_{i,j}^{(0)}}_{i \in [N], j \in [M]}$ and $\normalbraces{y_{i,j}^{(1)}}_{i \in [N], j \in [M]}$, respectively, with missing entries denoted by $\star$.
Then, we can write
\begin{align}
\ooz = (\mpoz + \noiseyz)\otimes (\allones - \ta)
\qtext{and} \ooo = (\mpoo + \noiseyo)\otimes \ta.
\label{eq:y_matrix}
\end{align}
\item \textbf{Treatments}: From \cref{eq_treatment_model}, we can write 
\begin{align}
\ta = (\mta + \noisea).
\label{eq:a_matrix}
\end{align}
\end{itemize}
Building on the earlier discussion, the application of matrix completion yields the following estimates:
\begin{align}\label{eq:denoised_estimated}
\empoz = \mca(\ooz), \quad \empoo = \mca(\ooo), \qtext{and} \emta = \mca(\ta),
\end{align}
where the algorithm $\mca$ may vary for $\empoz$, $\empoo$, and $\emta$. Because
all entries of $\ta$ are observed, $\mca(\ta)$ denoises $\ta$ but does not need to impute missing entries. 
From \cref{eq:y_matrix} and \cref{eq:denoised_estimated}, it follows that $\empoz$ and $\empoo$ depend on $\ta$ and $\oo$, whereas $\emta$ depends only on $\ta$.

In this section, we deliberately leave the matrix completion algorithm $\mca$ as a ``black-box''. In \cref{sec_main_results}, we establish finite-sample and asymptotic guarantees for our proposed estimator, contingent on specific properties for $\mca$. 
In \cref{sec_algorithm}, we propose a novel end-to-end matrix completion algorithm that satifies these properties. 

Given matrix completion estimates of $(\empoz, \empoo, \emta)$, we formulate two preliminary estimators for $\ATETrue$: $(i)$ an outcome imputation estimator, which uses $\empoz$ and $\empoo$ only, and $(ii)$ an inverse probability weighting estimator, which uses $\emta$ only. Then, we combine these to obtain a doubly-robust estimator of $\ATETrue$. \medskip

\noindent {\bf Outcome imputation (OI) estimator.}
Let $\what{\theta}_{i,j}^{(a)}$ denote the $(i,j)$-th entry of $\empo$ for $i \in [N], j \in [M]$, and $a \in \{0, 1\}$.
The OI estimator for $\ATETrue$ is defined as follows:
\begin{gather}
\ATEOI  \defn \what{\mu}^{(1, \mathrm{OI})}_{\cdot,j} -  \what{\mu}^{(0, \mathrm{OI})}_{\cdot,j},\label{eq_counterfactual_mean_oi}
\\  \shortintertext{where} 
\what{\mu}^{(a, \mathrm{OI})}_{\cdot,j}  \defn \frac{1}{N} \sum_{i \in [N]}  \what{\theta}_{i,j}^{(a)}   
\qtext{for} a \in  \normalbraces{0,1}.
\end{gather}
That is, the OI estimator is obtained by taking the difference of the average value of the $j$-th column of the estimates $\empoz$ and $\empoo$. 
The quality of the OI estimator depends on how well $\empoz$ and $\empoo$ approximate the mean potential outcome matrices $\mpoz$ and $\mpoo$, respectively.\medskip

\noindent {\bf Inverse probability weighting (IPW) estimator.}
Let $\what{p}_{i,j}$ denote the $(i,j)$-th entry of $\emta$ for $i \in [N]$ and $j \in [M]$. 
The IPW estimate for $\ATETrue$ is  defined as follows:
\begin{gather}
\ATEIPW  \defn  \what{\mu}^{(1, \mathrm{IPW})}_{\cdot,j} -  \what{\mu}^{(0, \mathrm{IPW})}_{\cdot,j}, \label{eq_counterfactual_mean_ipw} 
\shortintertext{where} 
\what{\mu}^{(0, \mathrm{IPW})}_{\cdot,j}  \defn \frac{1}{N}\sum_{i \in [N]}  \frac{y_{i,j} \bigparenth{1 - a_{i,j}}}{1 - \what{p}_{i,j}}  
\quad\mbox{ and }\quad 
\what{\mu}^{(1, \mathrm{IPW})}_{\cdot,j}  \defn \frac{1}{N}\sum_{i \in [N]} \frac{ y_{i,j} a_{i,j}}{\what{p}_{i,j}}. 
\end{gather}
That is, the IPW estimator is obtained by taking the difference of the average value of the $j$-th column of 
the matrices $\ooz$ and $\ooo$, replacing unobserved entries with zeros, and weighting each outcome by the inverse of the estimated assignment probability to account for confounding.
The quality of the IPW estimate depends on how well $\emta$ approximates the probability matrix $\mta$.
\medskip

The matrix completion-based OI and IPW estimators in \cref{eq_counterfactual_mean_oi}  and \cref{eq_counterfactual_mean_ipw} have the same form as the classical OI and IPW estimators, which are derived for settings where all confounders are observed \citep[e.g.,][]{imbens_rubin_2015}. In contrast to the classical setting, our framework is one with unmeasured confounding. 

\subsection{Doubly-robust (DR) estimator}

The DR estimator of $\ATETrue$ combines the estimates $\empoz, \empoo$, and $\emta$ from \cref{eq:denoised_estimated}.
It is defined as follows:

\begin{gather}
\ATEDR  \defn  \what{\mu}^{(1, \mathrm{DR})}_{\cdot,j} -  \what{\mu}^{(0, \mathrm{DR})}_{\cdot,j}, \label{eq_counterfactual_mean_dr}
\\  \shortintertext{where} 
\what{\mu}^{(0, \mathrm{DR})}_{\cdot,j}   \defn \frac{1}{N}\sum_{i \in [N]}  \what{\theta}_{i,j}^{(0, \mathrm{DR})}
\qtext{with}
\what{\theta}_{i,j}^{(0, \mathrm{DR})}
\defeq 
\what{\theta}_{i,j}^{(0)} + \bigparenth{y_{i,j} - \what{\theta}_{i,j}^{(0)}} \frac{1 - a_{i,j}}{1 - \what{p}_{i,j}}, \label{eq_counterfactual_mean_dr_0}
\\ \shortintertext{ and } \what{\mu}^{(1, \mathrm{DR})}_{\cdot,j}  \defn \frac{1}{N}\sum_{i \in [N]}  \what{\theta}_{i,j}^{(1, \mathrm{DR})} 
\qtext{with}
\what{\theta}_{i,j}^{(1, \mathrm{DR})}
\defeq 
 \what{\theta}_{i,j}^{(1)} + \bigparenth{y_{i,j} - \what{\theta}_{i,j}^{(1)}} \frac{a_{i,j}}{\what{p}_{i,j}}. \label{eq_counterfactual_mean_dr_1}
\end{gather}
In \cref{sec_main_results}, we prove that $\ATEDR$  consistently estimates $\ATETrue$ as long as either $(\empoz, \empoo)$ is consistent for $(\mpoz, \mpoo)$ or $\emta$ is consistent for $\mta$, i.e., it is doubly-robust. 
Furthermore, we show that the DR estimator provides superior finite sample guarantees than the OI and IPW estimators, and that it satisfies a central limit theorem at a parametric rate under weak conditions on the convergence rate of the matrix completion routine. 
Using simulated data, \cref{figure_comparison} demonstrates the improved performance of DR, relative to OI and IPW. 
Despite substantial biases observed in both OI and IPW estimates, the error of the DR estimate closely follows a mean-zero Gaussian distribution. 
We provide a detailed description of the simulation setup in \cref{sec_simulations}.   

\begin{figure}[t]
    \centering
    \hspace*{-0.5cm}\begin{tabular}{c}
    \includegraphics[width=0.55\linewidth,clip]{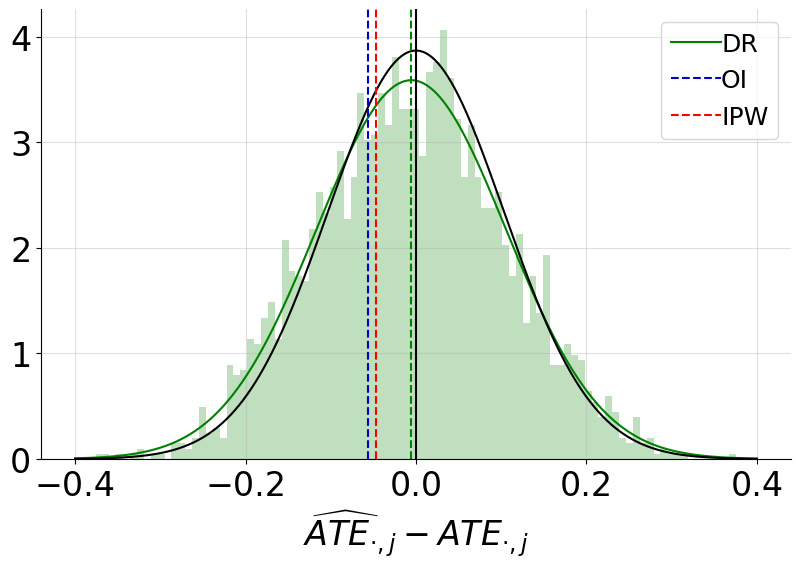}
    \end{tabular}
\caption{Simulation evidence of the convergence of the error of the doubly-robust (DR) estimator to a mean-zero Gaussian distribution. The histogram represents $\ATEDR - \ATETrue$, the green curve represents the (best) fitted Gaussian distribution, and the black curve represents the Gaussian approximation from \cref{thm_normality} in \cref{sec_main_results}. Histogram counts are normalized so that the area under the histogram integrates to one. Unlike DR, the outcome imputation (OI) and inverse probability weighting (IPW) estimators have non-trivial biases, as evidenced by the means of the distributions in dashed green, blue, and red, respectively. \cref{sec_simulations} reports complete simulation results. 
}
\label{figure_comparison}
\end{figure}

\section{Main Results}\label{sec_main_results}

This section presents the formal results of the article. 
\cref{sub:asum} details assumptions, \cref{subsec_non_asymp_guarantee} discusses finite-sample guarantees, and \cref{subsec_asymp_guarantee} presents a central limit theorem for $\ATEDR$.

\subsection{Assumptions}
\label{sub:asum}

\noindent {\bf Requirements on data generating process.}
We make two assumptions on how the data is generated.
First, we impose a positivity condition on the assignment probabilities.
\begin{assumption}[Positivity on true assignment probabilities]
\label{assumption_pos}
The unknown assignment probability matrix $\mta$ is such that
\begin{align}
 \lambda \leq p_{i,j} \leq 1 - \lambda, 
 \label{equation:bounds_ps}
\end{align}
for all $i \in [N]$ and $j \in [M]$, where $0 < \lambda \leq 1/2$.
\end{assumption}
\cref{assumption_pos} requires that the propensity score for each unit-outcome pair is bounded away from $0$ and $1$, implying that any unit-item pair can be assigned either of the two treatments. 
An analogous assumption is pervasive in causal inference models with no-unmeasured confounding. For simplicity of exposition and to avoid notational clutter, \cref{assumption_pos} requires \cref{equation:bounds_ps} for all outcomes, $j\in [M]$.
In practical applications, however, $\ATETrue$ may be estimated for a select group of those outcomes. In that case, the positivity assumption applies only for the selected subset of outcomes for which $\ATETrue$ is estimated.

Next, we formalize the requirements on the noise variables. 

\begin{assumption}[Zero-mean, independent, and $\subGaussian$ noise]
\label{assumption_noise}
Fix any $j \in [M]$. Then,
\begin{enumerate}[itemsep=1mm, topsep=2mm, label=(\alph*)]
    \item\label{item_ass_2aa} $\sbraces{\normalparenth{\varepsilon_{i,j}^{(0)}, \varepsilon_{i,j}^{(1)},\eta_{i,j}} : i\in[N]}$ are mean zero and independent (across $i$); 
    \item\label{item_ass_2bb} for every $i \in [N]$ and $j\in [M]$, $(\varepsilon_{i,j}^{(0)}, \varepsilon_{i,j}^{(1)}) \indep \eta_{i,j}$; moreover, the distribution of $(\varepsilon_{i,j}^{(0)}, \varepsilon_{i,j}^{(1)})$ depends on $(\mpoz, \mpoo,\mta)$ only through $(\theta_{i,j}^{(0)}, \theta_{i,j}^{(1)})$, and the distribution of $\eta_{i,j}$ depends on $(\mpoz, \mpoo,\mta)$ only through $p_{i,j}$; and
    \item\label{item_ass_2cc} $\varepsilon_{i,j}^{(a)}$ has $\subGaussian$ norm bounded by a constant $\sigmax$ for every $i \in [N]$ and $a \in \normalbraces{0,1}$.
\end{enumerate}
\end{assumption}
\cref{assumption_noise}\cref{item_ass_2aa} defines $(\mpoz, \mpoo, \mta)$ as matrices collecting the means of the potential outcomes and treatment assignments in \cref{eq_outcome_model,eq_treatment_model}. Further, for every measurement, it imposes independence across units in the noise variables. \cref{assumption_noise}\cref{item_ass_2bb} imposes independence between the noise in the potential outcomes and noise in treatment assignment, and implies that for each particular unit $i$ and measurement $j$, confounding emerges only from the interplay between $(\theta_{i,j}^{(0)}, \theta_{i,j}^{(1)})$ and $p_{i,j}$. 
Finally, \cref{assumption_noise}\cref{item_ass_2cc} is mild and useful to derive finite-sample guarantees. 
For the central limit theorem in \cref{subsec_asymp_guarantee}, subGaussianity could be disposed of by restricting the moments of $\varepsilon_{i,j}^{(a)}$. 
\cref{assumption_noise} does not restrict the dependence between $\varepsilon_{i,j}^{(0)}$ and $\varepsilon_{i,j}^{(1)}$.
Neither \cref{assumption_noise} restricts the dependence of $\eta_{i,j}$ across outcomes. In particular, \cref{assumption_noise} allows for the existence of pairs of outcomes $(j,j')$ such that $\Expectation[\eta_{i,j}^2]=\Expectation[\eta_{i,j'}^2]=\Expectation[\eta_{i,j}\eta_{i,j'}]$, in which case $a_{i,j}=a_{i,j'}$ with probability one.\medskip 

\noindent {\bf Requirements on matrix completion estimators.}
First, we assume the estimate $\emta$ is consistent with \cref{assumption_pos}. 
\begin{assumption}[Positivity on estimated assignment probabilities]
\label{assumption_pos_estimated}
The estimated probability matrix $\emta$ is such that
\begin{align}
 \lbar \leq \what{p}_{i,j} \leq 1 - \lbar, 
\end{align}
for all $i \in [N]$ and $j \in [M]$, where $0<\lbar \leq \lambda$.
\end{assumption}

\cref{assumption_pos_estimated} holds when the entries of $\emta$ are truncated to the range $[\lbar, 1-\lbar]$, provided $\lbar$ is not greater than $\lambda$.
Second, our theoretical analysis requires independence between certain elements of the estimates $(\emta, \empoz, \empoo)$  from \cref{eq:denoised_estimated}, and the noise matrices $(\noisea, \noiseyb)$. 
We formally state this independence condition as an assumption below. 

\begin{assumption}[Independence between estimates and noise]
\label{assumption_estimates}
Fix any $j \in [M]$. There exists a non-empty partition $(\cR_0, \cR_1)$ of the units $[N]$ such that
\begin{gather}
\bigbraces{\bigparenth{\what{p}_{i, j}, \what{\theta}_{i,j}^{(a)}}}_{i\in\cR_s} \indep \bigbraces{\eta_{i, j}}_{i\in\cR_s}
\label{eq_independence_requirement_estimates_dr}
\shortintertext{and}
\bigbraces{ \what{p}_{i, j} }_{i\in\cR_s} \indep \bigbraces{\bigparenth{\eta_{i, j}, \varepsilon^{(a)}_{i,j}}}_{i\in\cR_s},
\label{eq_independence_requirement_estimates_dr_p}
\end{gather}
for every $a \in \normalbraces{0,1}$ and $s \in \normalbraces{0,1}$.
\end{assumption}
\cref{eq_independence_requirement_estimates_dr} requires that within each of the two partitions of the units, estimated mean potential outcomes and estimated assignment probabilities are jointly independent of the error in assignment probabilities, for every measurement. Similarly, \cref{eq_independence_requirement_estimates_dr_p} requires that within each of the two partitions of the units, estimated assignment probabilities are independent jointly of the noise in assignment probabilities and potential outcomes, for every measurement. Conditions like \cref{eq_independence_requirement_estimates_dr} and \cref{eq_independence_requirement_estimates_dr_p} are familiar in the doubly-robust estimation literature. \citet{chernozhukov2018double} employ a cross-fitting device to enforce
an assumption similar to \cref{assumption_estimates} in a context with no unmeasured confounders. 
\cref{sec_algorithm} provides a novel cross-fitting procedure for matrix estimation under which \cref{assumption_estimates} holds for any $\mca$ algorithm (under additional assumptions on the noise variables).\medskip

\noindent {\bf Matrix completion error rates.} The formal guarantees in this section depend on the normalized $(1,2)$-norms of the errors in estimating the unknown parameters $(\mpoz, \mpoo, \mta)$.
We use the following notation for these errors:
\begin{gather}
 \RP \defn \frac{\onetwonorm{\emta - \mta}}{\sqrt{N}} \qtext{and} \RTheta \defn \!\!\sum_{a \in \normalbraces{0,1}} \!\!  \Rcol\bigparenth{\empo}, \label{eq_notation} \\ \shortintertext{where} \Rcol\bigparenth{\empo}  = \frac{\onetwonorm{\empo - \mpo}}{\sqrt{N}}. 
 \label{eq:etheta_p}
\end{gather}
A variety of matrix completion algorithms deliver $\RP=O_p(\min\normalbraces{N,M}^{-\alpha})$ and $\RTheta=O_p(\min\normalbraces{N,M}^{-\beta})$, where $0<\alpha, \beta \leq 1/2$. The conditions in this section track dependence on $N$ only. We say that the normalized errors $\RP$ and $\RTheta$ achieve the parametric rate when they have the same rate as $O_p(N^{-1/2})$.
\cref{sec_algorithm} explicitly characterizes how the rates of convergence $\RP$ and $\RTheta$ depend on $N$ and $M$ for a particular matrix completion algorithm based on \cite{bai2021matrix}.

\subsection{Non-asymptotic guarantees}\label{subsec_non_asymp_guarantee}

The first main result of this section provides a non-asymptotic error bound for  $\ATEDR - \ATETrue$ in terms of the errors $\RP$ and $\RTheta$ defined in \cref{eq_notation}. 
\begin{theorem}[\textbf{\fsg}]\label{thm_fsg}
Suppose \cref{assumption_pos,assumption_noise,assumption_estimates,assumption_pos_estimated} hold. Fix $\delta \in (0,1)$ and $j \in [M]$.
Then, with probability at least $1-\delta$, we have
\begin{align}
\bigabs{\ATEDR - \ATETrue}  \leq \error_{N, \delta}, 
\label{fsg_dr_ate_t}
\end{align}
where 
\begin{align}
\error_{N, \delta} & \!\defn\!  
\frac{2}{\lbar}  \biggbrackets{\RTheta  \RP + \Bigparenth{\frac{\sqrt{c \ld[/12]}}{\sqrt{\lone}} \RTheta + 2\sigmax \sqrt{c \ld[/12]} + \frac{2\sigmax \m(c \ld[/12])}{\sqrt{\lone}} }   \frac{1}{\sqrt{N}}},
\label{eq_combined_error_dr}
\end{align}
for $\m(c)$ and $\ell_c$ as defined in \cref{section_introduction}. 
\end{theorem}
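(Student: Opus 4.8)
The plan is to reduce the bound to controlling each treatment arm separately and, within each arm, to a single deterministic product term plus three mean-zero noise averages. Write $\Delta^{\theta,(a)}_{i} \defeq \what{\theta}^{(a)}_{i,j} - \theta^{(a)}_{i,j}$ and $\Delta^{p}_{i} \defeq \what{p}_{i,j} - p_{i,j}$. The first step is a purely algebraic doubly-robust identity. For the treated arm, $a_{i,j}$ binary gives $y_{i,j}a_{i,j} = y^{(1)}_{i,j}a_{i,j} = (\theta^{(1)}_{i,j}+\varepsilon^{(1)}_{i,j})a_{i,j}$, and substituting this together with $a_{i,j}=p_{i,j}+\eta_{i,j}$ into \cref{eq_counterfactual_mean_dr_1} yields
\[
\what{\theta}^{(1,\mathrm{DR})}_{i,j}-\theta^{(1)}_{i,j} = \varepsilon^{(1)}_{i,j}\,\frac{a_{i,j}}{\what{p}_{i,j}} + \Delta^{\theta,(1)}_{i}\,\frac{\Delta^{p}_{i}-\eta_{i,j}}{\what{p}_{i,j}},
\]
with the symmetric identity for the control arm obtained by replacing $a_{i,j}, \what{p}_{i,j}, \varepsilon^{(1)}$ by $1-a_{i,j}, 1-\what{p}_{i,j}, \varepsilon^{(0)}$ and flipping the sign of $\eta_{i,j}$. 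Averaging over $i$, applying the triangle inequality across the two arms, and splitting $a_{i,j}=p_{i,j}+\eta_{i,j}$ inside the first summand, it suffices to bound, for each $a$, four averages: (A) the bias product $\tfrac1N\sum_i \Delta^{\theta,(a)}_i \Delta^{p}_i/\what{p}_{i,j}$; (B) the cross term $\tfrac1N\sum_i \Delta^{\theta,(a)}_i \eta_{i,j}/\what{p}_{i,j}$; (C1) $\tfrac1N\sum_i \varepsilon^{(a)}_{i,j}\, p_{i,j}/\what{p}_{i,j}$; and (C2) $\tfrac1N\sum_i \varepsilon^{(a)}_{i,j}\,\eta_{i,j}/\what{p}_{i,j}$.

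Term (A) is deterministic. Using $1/\what{p}_{i,j}\le 1/\lbar$ from \cref{assumption_pos_estimated} and Cauchy--Schwarz, it is at most $\tfrac1{\lbar N}\bigparenth{\sum_i(\Delta^{\theta,(a)}_i)^2}^{1/2}\bigparenth{\sum_i(\Delta^{p}_i)^2}^{1/2}$; since these column-wise sums of squared errors are bounded by the squared $(1,2)$-operator norms $\onetwonorm{\empo-\mpo}^2$ and $\onetwonorm{\emta-\mta}^2$ that define $\Rcol\bigparenth{\empo}$ and $\RP$ in \cref{eq_notation,eq:etheta_p}, term (A) is at most $\tfrac1{\lbar}\Rcol\bigparenth{\empo}\RP$, and summing over $a$ gives the $\tfrac1{\lbar}\RTheta\RP$ contribution.

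The terms (B), (C1), (C2) are controlled by conditioning on the matrix-completion estimates within each block $\cR_s$ of the partition from \cref{assumption_estimates} and applying concentration; the decomposition has been arranged precisely so that every summand is conditionally mean-zero and conditionally independent across units. For (B), conditioning on $\{(\what{p}_{i,j},\what{\theta}^{(a)}_{i,j})\}_{i\in\cR_s}$ fixes the weights $\Delta^{\theta,(a)}_i/\what{p}_{i,j}$ while \cref{eq_independence_requirement_estimates_dr} keeps $\{\eta_{i,j}\}_{i\in\cR_s}$ independent and mean-zero; as $\eta_{i,j}\in[-1,1]$ has sub-Gaussian norm at most $1/\sqrt{\lone}$, a Hoeffding bound gives order $\tfrac1{\lbar}\tfrac{\sqrt{\ld[/12]}}{\sqrt{\lone}}\Rcol\bigparenth{\empo}/\sqrt N$. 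For (C1) and (C2) we instead condition on $\{\what{p}_{i,j}\}_{i\in\cR_s}$, under which \cref{eq_independence_requirement_estimates_dr_p} keeps $(\eta_{i,j},\varepsilon^{(a)}_{i,j})$ independent of the weights; it is essential here that the correction isolates $\varepsilon^{(a)}$ into summands weighted only by $p_{i,j}/\what{p}_{i,j}$ and $\eta_{i,j}/\what{p}_{i,j}$, i.e.\ by functions of $\what{p}$ and $\eta$ alone, and never by $\what{\theta}^{(a)}$ (with which $\varepsilon^{(a)}$ is correlated). Term (C1), a weighted sub-Gaussian average with $\snorm{\varepsilon^{(a)}_{i,j}}_{\psi_2}\le\sigmax$, contributes order $\tfrac1{\lbar}\sigmax\sqrt{\ld[/12]}/\sqrt N$. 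Term (C2) is the delicate one: each summand $\varepsilon^{(a)}_{i,j}\eta_{i,j}/\what{p}_{i,j}$ is a product of a sub-Gaussian and a bounded variable, hence sub-exponential, and has zero conditional mean because $\varepsilon^{(a)}_{i,j}\indep\eta_{i,j}$ by \cref{assumption_noise}\cref{item_ass_2bb}; a Bernstein inequality then produces the two-regime factor $\m(\ld[/12])$ and the contribution $\tfrac1{\lbar}\sigmax\m(\ld[/12])/(\sqrt{\lone}\sqrt N)$.

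Finally I would take a union bound at level $\delta/12$ over the noise terms (two arms, three random averages, both tails), sum the per-arm bounds, combine the two arms, and collect the common $1/\lbar$, absorbing the block-splitting constants from $\sqrt x+\sqrt y\le \sqrt2\sqrt{x+y}$ into the generic constant and the leading factor $2$, to recover $\error_{N,\delta}$ as in \cref{eq_combined_error_dr}. The main obstacle is the conditioning bookkeeping of the previous paragraph: one must check that \cref{assumption_estimates} is exactly strong enough to make (B), (C1), (C2) conditionally centered and independent, which relies on $\what{\theta}^{(a)}$ being required independent of $\eta$ only (not of $\varepsilon^{(a)}$) and on the DR correction confining $\varepsilon^{(a)}$ to terms weighted solely through $\what{p}$; the sub-exponential control of (C2) via Bernstein, yielding the $\m(\cdot)$ term, is the other technically involved step.
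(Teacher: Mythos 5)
Your proposal is correct and follows essentially the same route as the paper's proof: the same algebraic expansion of $\what{\theta}^{(a,\mathrm{DR})}_{i,j}-\theta^{(a)}_{i,j}$ into the error-product term, the $\Delta^{\theta,(a)}\eta$ cross term, and the two noise terms $\varepsilon^{(a)} p/\what{p}$ and $\varepsilon^{(a)}\eta/\what{p}$, handled respectively by Cauchy--Schwarz with $1/\what{p}_{i,j}\le 1/\lbar$, sub-Gaussian concentration under \cref{eq_independence_requirement_estimates_dr}, sub-Gaussian concentration under \cref{eq_independence_requirement_estimates_dr_p}, and a Bernstein-type sub-exponential bound producing the $\m(\cdot)$ factor, all applied block-wise on the partition $(\cR_0,\cR_1)$ before a union bound at level $\delta/12$. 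The only (immaterial) bookkeeping difference is that your twelve union-bound events are counted as arms $\times$ averages $\times$ tails, whereas the paper's count is arms $\times$ blocks $\times$ averages (the concentration bounds are already two-sided), but both resolve to the same constants in \cref{eq_combined_error_dr}.
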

The proof of \cref{thm_fsg} is given in \cref{sec_proof_thm_fsg}. \cref{fsg_dr_ate_t,eq_combined_error_dr} bound the absolute error of the DR estimator by  {the rate of $\RTheta (\RP  + N^{-0.5}) + N^{-0.5}$.} When $\RP$ is lower bounded at the parametric rate of $N^{-0.5}$,  $\error_{N, \delta}$ has the same rate as $\RP \RTheta + N^{-0.5}$.\medskip

\noindent {\bf Doubly-robust behavior of \texorpdfstring{$\ATEDR$}{}.}
The error rate of  $\RP \RTheta + N^{-0.5}$ immediately reveals that the DR estimate is doubly-robust with respect to the error in estimating the mean potential outcomes $(\mpob)$ and the assignment probabilities $\mta$. 
First, the error  $\error_{N, \delta}$ decays at a parametric rate of $O_p(N^{-0.5})$ as long as the product of error rates, $\RP \RTheta$, decays as $O_p(N^{-0.5})$. As a result, $\ATEDR$ can exhibit a parametric error rate even when neither the mean potential outcomes nor the assignment probabilities are estimated at a parametric rate.
Second, $\error_{N, \delta}$ decays to zero as long as either of $\RP$ or $\RTheta$ decays to zero, provided both errors are $O_p(1)$. 

We next compare the performance of DR estimator with the OI and IPW  estimators from \cref{eq_counterfactual_mean_oi,eq_counterfactual_mean_ipw}, respectively.
Towards this goal, we characterize the $\ATETrue$ estimation error of $\ATEOI$ in terms of $\RTheta$ and of $\ATEIPW$ in terms of $\RP$. 

\begin{proposition}[\textbf{\fsgoiipw}]
\label{thm_fsg_ipw_oi}
Fix any $j \in [M]$. For OI, we have
\begin{align}
\bigabs{\ATEOI - \ATETrue} & \leq  \error[OI]_N \defn \RTheta. \label{fsg_oi_ate_t}\\
\intertext{For IPW, suppose \cref{assumption_pos,assumption_noise,assumption_estimates,assumption_pos_estimated} hold. Define $\thetamax \defn \sum_{a \in \normalbraces{0,1}}\maxmatnorm{\mpo}$, and fix any $\delta \in (0,1)$. Then, with probability at least $1-\delta$, we have}
\bigabs{\ATEIPW - \ATETrue}
& \leq  \error[IPW]_{N, \delta}, \label{fsg_ipw_ate_t}
\end{align}
where 
\begin{align}
\error[IPW]_{N, \delta} \defn  \frac{2}{\lbar}  \biggbrackets{\thetamax\,  \RP + \Bigparenth{\frac{\sqrt{c \ld[/12]}}{\sqrt{\lone}} \thetamax + 2\sigmax \sqrt{c \ld[/12]} + \frac{2\sigmax \m(c \ld[/12])}{\sqrt{\lone}} }  \frac{1}{\sqrt{N}}},
\end{align}
for $\m(c)$ and $\ell_c$ as defined in \cref{section_introduction}. 
\end{proposition}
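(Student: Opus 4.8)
The plan is to prove the two bounds separately, since OI involves only the outcome estimates and IPW only the propensity estimates. The OI bound in \cref{fsg_oi_ate_t} should be essentially immediate from definitions. Writing out $\ATEOI - \ATETrue = (\what{\mu}^{(1,\mathrm{OI})}_{\cdot,j} - \mu^{(1)}_{\cdot,j}) - (\what{\mu}^{(0,\mathrm{OI})}_{\cdot,j} - \mu^{(0)}_{\cdot,j})$, each term is $\frac{1}{N}\sum_{i\in[N]}(\what\theta^{(a)}_{i,j} - \theta^{(a)}_{i,j})$. By the triangle inequality and then Cauchy–Schwarz applied to the length-$N$ column vector of errors, $\bigabs{\frac{1}{N}\sum_i(\what\theta^{(a)}_{i,j}-\theta^{(a)}_{i,j})} \leq \frac{1}{\sqrt N}\bigparenth{\frac{1}{N}\sum_i (\what\theta^{(a)}_{i,j}-\theta^{(a)}_{i,j})^2}^{1/2} \cdot \sqrt N$, which after simplification is exactly $\onetwonorm{\empo-\mpo}/\sqrt N = \Rcol(\empo)$. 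Summing over $a\in\{0,1\}$ yields $\RTheta$, and this bound is deterministic, requiring no probabilistic assumptions.

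For the IPW bound, the core idea is to decompose the estimation error into a systematic (bias) part driven by the propensity estimation error $\RP$ and a stochastic (fluctuation) part driven by the noise variables $\eta_{i,j}$ and $\varepsilon^{(a)}_{i,j}$. First I would rewrite each $\what\mu^{(a,\mathrm{IPW})}_{\cdot,j}$ using the data-generating equations \cref{eq_consistency,eq_outcome_model,eq_treatment_model}, replacing $a_{i,j}=p_{i,j}+\eta_{i,j}$ and $y_{i,j}$ by its potential-outcome form, so that each summand splits into a term involving the true propensity ratio $p_{i,j}/\what p_{i,j}$ (or $(1-p_{i,j})/(1-\what p_{i,j})$) and terms linear in the noise. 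Using positivity (\cref{assumption_pos_estimated}) to bound $1/\what p_{i,j}$ and $1/(1-\what p_{i,j})$ by $1/\lbar$, the bias term becomes a weighted sum of $\theta^{(a)}_{i,j}(p_{i,j}-\what p_{i,j})$; bounding $\theta^{(a)}_{i,j}$ by $\maxmatnorm{\mpo}$ and applying Cauchy–Schwarz against the propensity-error column vector produces the $\thetamax\,\RP$ term.

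The stochastic part is the main obstacle. After isolating the noise-linear terms, I expect a sum of the form $\frac{1}{N}\sum_i w_{i,j}\,\xi_{i,j}$ where $w_{i,j}$ depends on $\what p_{i,j}$ (hence bounded by $1/\lbar$) and $\xi_{i,j}$ is built from $\eta_{i,j}$ and $\varepsilon^{(a)}_{i,j}$. The crucial point is that \cref{assumption_estimates}, specifically the independence \cref{eq_independence_requirement_estimates_dr_p} within each partition cell $\cR_0,\cR_1$, makes the weights independent of the noise, so that conditionally on the estimates these are sums of independent mean-zero terms. I would then invoke a Bernstein-type concentration inequality for sums of products of bounded and $\subGaussian$/$\subExponential$ variables (using \cref{assumption_noise}\cref{item_ass_2cc}), applied within each partition cell and combined via a union bound over the two cells and the $\delta/12$-type splitting of the failure probability. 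This yields the three remaining terms scaling as $N^{-1/2}$: the $\frac{\sqrt{c\ld[/12]}}{\sqrt{\lone}}\thetamax$ contribution from fluctuations coupling to the propensity error, the $2\sigmax\sqrt{c\ld[/12]}$ sub-Gaussian tail, and the $\frac{2\sigmax\m(c\ld[/12])}{\sqrt\lone}$ sub-exponential tail from the product structure. Assembling the bias and fluctuation bounds with the overall $2/\lbar$ prefactor gives $\error[IPW]_{N,\delta}$; the structural parallel with \cref{thm_fsg} (with $\thetamax$ playing the role that $\RTheta$ plays there) suggests the IPW argument is a simpler special case of the DR analysis in which the outcome model is replaced by the crude a-priori bound $\maxmatnorm{\mpo}$.
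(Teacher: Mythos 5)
Your proposal follows essentially the same route as the paper's proof: the OI bound via triangle inequality plus Cauchy–Schwarz on the column errors, and the IPW bound via substituting the data-generating equations, bounding the $\theta^{(a)}_{i,j}(p_{i,j}-\what p_{i,j})$ term with Cauchy–Schwarz and $\maxmatnorm{\mpo}$, and controlling the three noise terms with sub-Gaussian/sub-exponential concentration under the cross-fitting independence of \cref{assumption_estimates}, combined by union bounds with the $\delta/12$ reparameterization. Your closing observation is also exactly the paper's: the IPW decomposition is the DR decomposition with $\what\theta^{(a)}_{i,j}-\theta^{(a)}_{i,j}$ replaced by $\theta^{(a)}_{i,j}$ (the only slight imprecision is calling the $\thetamax\sqrt{c\ld[/12]}/\sqrt{\lone}$ term a coupling to the propensity \emph{error}; it arises from $\eta_{i,j}$ coupling to the true means $\theta^{(1)}_{i,j}$, weighted by $1/\what p_{i,j}$).
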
 
The proofs of \cref{fsg_oi_ate_t} and \cref{fsg_ipw_ate_t} are given in  \cref{sec_proof_thm_fsg_oi,sec_proof_thm_fsg_ipw}, respectively. \cref{thm_fsg_ipw_oi} implies that 
\ifx\submission\arxiv
in an asymptotic sequence
\fi
with bounded $\thetamax$, OI and IPW attain the parametric rate $O_p(N^{-0.5})$ provided $\RTheta$ and $\RP$ are $O_p(N^{-0.5})$, respectively.
The next corollary, proven in \cref{proofs_coro}, compares these error rates with those obtained for the DR estimator in \cref{thm_fsg}.
\begin{corollary}[\textbf{Gains of DR over OI and IPW}]
\label{coro_compare}
Suppose
\cref{assumption_pos,assumption_noise,assumption_estimates,assumption_pos_estimated} hold. Fix any $j \in [M]$.
\ifx\submission\arxiv
Consider an asymptotic sequence such that 
\fi
\ifx\submission\ec
Suppose
\fi
$\thetamax$ is bounded. If $\RP = O_p(N^{-\alpha})$ and $\RTheta = O_p(N^{-\beta})$ for $0 \leq \alpha\leq 0.5$ and $0\leq \beta \leq 0.5$, then
\begin{gather}
     \bigabs{\ATEOI - \ATETrue} = O_p(N^{-\beta}),\qquad \bigabs{\ATEIPW - \ATETrue} = O_p(N^{-\alpha}),\\
    \shortintertext{and}
    \bigabs{\ATEDR - \ATETrue} = O_p(N^{- \min\normalbraces{\alpha+\beta, 0.5}}).
    \end{gather}
\end{corollary}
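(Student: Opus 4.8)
The plan is to obtain all three rates by substituting the assumed convergence rates $\RP = O_p(N^{-\alpha})$ and $\RTheta = O_p(N^{-\beta})$ directly into the finite-sample bounds already proved in \cref{thm_fsg_ipw_oi,thm_fsg}, treating the positivity level $\lbar$, the noise scale $\sigmax$, the constant $\sqrt{\lone}$, and (by hypothesis) $\thetamax$ as fixed quantities that do not grow with $N$. The OI statement is immediate: the bound $\bigabs{\ATEOI - \ATETrue} \leq \RTheta$ in \cref{fsg_oi_ate_t} holds pathwise, so $\bigabs{\ATEOI - \ATETrue} = O_p(N^{-\beta})$ follows with no additional argument.

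For IPW I would start from $\bigabs{\ATEIPW - \ATETrue} \leq \error[IPW]_{N,\delta}$ in \cref{fsg_ipw_ate_t}. With $\thetamax$ bounded and the remaining prefactors fixed, the leading term $\frac{2}{\lbar}\thetamax\RP$ is $O_p(N^{-\alpha})$ and every other bracketed term is a fixed (and, for fixed $\delta$, finite) constant times $N^{-1/2}$; since $\alpha \leq 1/2$ the $N^{-1/2}$ pieces are dominated, so the right-hand side is $O(N^{-\alpha})$ on the high-probability event. The one genuinely delicate step is passing from this $\delta$-indexed high-probability bound to the unconditional statement $\bigabs{\ATEIPW - \ATETrue} = O_p(N^{-\alpha})$: given the target level $\varepsilon$ from the definition of $O_p(1)$, I would freeze $\delta = \varepsilon/2$ (which fixes the constant $c\ld[/12]$), take the corresponding event of probability at least $1-\varepsilon/2$, intersect it with the event $\{N^\alpha \RP \leq K_1\}$ of probability at least $1-\varepsilon/2$ guaranteed by $N^\alpha\RP = O_p(1)$, and conclude that $N^\alpha\bigabs{\ATEIPW - \ATETrue}$ is bounded by a fixed constant with probability at least $1-\varepsilon$ for all large $N$.

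For DR the argument has the same shape but tracks the four terms of $\error_{N,\delta}$ in \cref{eq_combined_error_dr}. The cross term $\frac{2}{\lbar}\RTheta\RP$ is $O_p(N^{-(\alpha+\beta)})$; the term carrying $\RTheta\,N^{-1/2}$ is $O_p(N^{-(\beta+1/2)})$, which never dominates because $\beta+1/2 \geq \alpha+\beta$ (using $\alpha \leq 1/2$) and $\beta+1/2 \geq 1/2$; and the last two terms are fixed constants times $N^{-1/2}$. The slowest contributions are therefore $N^{-(\alpha+\beta)}$ and $N^{-1/2}$, whose sum has rate $N^{-\min\{\alpha+\beta,\,1/2\}}$, so $\bigabs{\ATEDR - \ATETrue} = O_p(N^{-\min\{\alpha+\beta,\,1/2\}})$. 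The high-probability-to-$O_p$ conversion is the main obstacle here as well, only now I must also control the product $\RTheta\RP$: writing $\gamma = \min\{\alpha+\beta,1/2\}$ and noting $N^\gamma\RTheta\RP = N^{\gamma-\alpha-\beta}(N^\alpha\RP)(N^\beta\RTheta)$ with $N^{\gamma-\alpha-\beta}\leq 1$, I would intersect the fixed-$\delta$ event with the two events controlling $N^\alpha\RP$ and $N^\beta\RTheta$ to bound $N^\gamma\bigabs{\ATEDR - \ATETrue}$ by a constant with the required probability.
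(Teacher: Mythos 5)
Your proposal is correct and follows essentially the same route as the paper: substitute the assumed rates into the finite-sample bounds of \cref{thm_fsg} and \cref{thm_fsg_ipw_oi}, multiply through by the target power of $N$, and convert each $\delta$-indexed high-probability bound into an $O_p$ statement by intersecting it with the events on which $N^\alpha \RP$ and $N^\beta \RTheta$ are bounded. The only differences are cosmetic: the paper packages your intersection argument as a standalone result (\cref{lemma_conv_coro_support}) and splits the DR case into $\alpha+\beta\leq 0.5$ versus $\alpha+\beta>0.5$, whereas your single exponent $\gamma=\min\{\alpha+\beta,1/2\}$ together with $N^{\gamma-\alpha-\beta}\leq 1$ handles both cases at once (and, as a small bonus, keeps the random $\RTheta$ appearing inside the paper's ``constant'' $f_2(\delta)$ under explicit control, which the paper's lemma, stated for a deterministic $f$, glosses over).
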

\cref{coro_compare} shows that the DR estimate's error decay rate is consistently superior to that of the OI and IPW estimates across a variety of regimes for $\alpha, \beta$. 
Specifically, the error $\error_{N, \delta}$ scales strictly faster than both $\error[OI]_N$ and $\error[IPW]_{N, \delta}$ if the estimation errors of $\empoz$, $\empoo$, and $\emta$ converge slower than at the parametric rate $O_p(N^{-1/2})$. 
When the estimation errors of $\empoz$, $\empoo$, and $\emta$ all decay at a parametric rate, OI, IPW, and DR estimation errors decay also at a parametric rate.

\subsection{Asymptotic guarantees}\label{subsec_asymp_guarantee}

The next result, proven in  \cref{proofs_coro} as a corollary of \cref{thm_fsg}, provides conditions on $\RP$ and $\RTheta$ for consistency of $\ATEDR$.
\begin{corollary}[\textbf{Consistency for DR}]\label{coro_consistency}
    Suppose \cref{assumption_pos,assumption_noise,assumption_estimates,assumption_pos_estimated} hold. As $N\rightarrow \infty$, if either {\it (i)} $\RP = o_p(1), \ \RTheta = O_p(1)$, or {\it (ii)} $\RTheta = o_p(1), \ \RP = O_p(1)$, it holds that
\begin{align}
\ATEDR -  \ATETrue \inprob 0, \label{1con_dr_ate_t}
\end{align}
for all $j\in[M]$. 
\end{corollary}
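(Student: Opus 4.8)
The plan is to obtain the result as an immediate consequence of the non-asymptotic bound in \cref{thm_fsg}: I would show that the finite-sample error bound $\error_{N, \delta}$ converges to zero in probability under each of the two conditions, and then convert the high-probability statement of \cref{thm_fsg} into convergence in probability.

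First I would fix $\delta \in (0,1)$ and inspect the structure of $\error_{N, \delta}$ in \cref{eq_combined_error_dr}. For fixed $\delta$, the quantities $\ld[/12] = \log(24/\delta)$, $\lone = \log 2$, $\sigmax$, and $\lbar$ are all constants that do not depend on $N$; treating $\lbar$ as a fixed positive constant along the asymptotic sequence, $\error_{N, \delta}$ has the form $\tfrac{2}{\lbar}\bigbrackets{\RTheta \, \RP + (C_1 \RTheta + C_2) N^{-1/2}}$ for constants $C_1, C_2$ depending on $\delta$ but not on $N$. In case {\it (i)}, where $\RP = o_p(1)$ and $\RTheta = O_p(1)$, the product $\RTheta \, \RP$ is $o_p(1)$ (an $O_p(1)$ factor times an $o_p(1)$ factor), the term $C_1 \RTheta N^{-1/2}$ is $O_p(1)\cdot o(1) = o_p(1)$, and $C_2 N^{-1/2} = o(1)$; hence $\error_{N, \delta} = o_p(1)$. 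In case {\it (ii)}, the roles of $\RP$ and $\RTheta$ are swapped, so that $\RTheta \, \RP = o_p(1)\cdot O_p(1) = o_p(1)$ and the remaining terms vanish for the same reason, giving $\error_{N, \delta} = o_p(1)$ again.

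Next I would convert the two facts ``$\bigabs{\ATEDR - \ATETrue} \leq \error_{N, \delta}$ with probability at least $1-\delta$'' and ``$\error_{N, \delta} = o_p(1)$'' into $\ATEDR - \ATETrue \inprob 0$. Fix $\epsilon > 0$ and $\gamma > 0$, and set $\delta = \gamma/2$. By \cref{thm_fsg} the event $A_N = \bigbraces{\bigabs{\ATEDR - \ATETrue} \leq \error_{N, \delta}}$ satisfies $\Probability(A_N) \geq 1-\delta$, and since $\bigbraces{\bigabs{\ATEDR - \ATETrue} > \epsilon} \subseteq A_N^c \cup \bigbraces{\error_{N, \delta} > \epsilon}$,
\begin{align*}
\Probability\bigparenth{\bigabs{\ATEDR - \ATETrue} > \epsilon}
& \leq \Probability(A_N^c) + \Probability\bigparenth{\error_{N, \delta} > \epsilon} \\
& \leq \delta + \Probability\bigparenth{\error_{N, \delta} > \epsilon}.
\end{align*}
Because $\error_{N, \delta} = o_p(1)$, the last probability falls below $\gamma/2$ for all $N$ large enough, yielding an overall bound of $\gamma$ and hence the claimed convergence in probability for every $j \in [M]$.

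The point requiring the most care will be the order of quantifiers. The constants $C_1, C_2$ inside $\error_{N, \delta}$ grow as $\delta \to 0$ through $\ld[/12] = \log(24/\delta)$, so $\delta$ must be frozen first---tied to the target probability level $\gamma$---before letting $N \to \infty$; only then are those constants harmless and the $N^{-1/2}$ factors together with the product $\RTheta \, \RP$ drive the bound to zero. The rest of the argument is routine bookkeeping with $o_p$ and $O_p$ sequences.
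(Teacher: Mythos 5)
Your proof is correct and follows the paper's strategy: both deduce the corollary directly from \cref{thm_fsg} by showing that $\error_{N,\delta}$ is $o_p(1)$ under either condition (i) or (ii). The only difference lies in how the high-probability statement is converted into convergence in probability: the paper chooses the vanishing sequence $\delta = 1/N$---accepting that the constants $\ld[/12] = \log(24N)$ then grow logarithmically in $N$, which is harmless since $\log N/\sqrt{N} \to 0$---and hands the conversion to \cref{lemma_conv_prob}, which packages exactly the two-term union bound you carry out inline. Your fixed-$\delta$ version keeps all constants independent of $N$, so the $o_p$ bookkeeping is marginally cleaner and the argument is self-contained, whereas the paper's device delegates the quantifier-order issue (which you correctly identify as the delicate point) to a reusable lemma at the cost of tracking the mild growth of $\ld[/12]$ against the $N^{-1/2}$ factors.
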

\cref{coro_consistency} states that $\ATEDR$ is a consistent estimator for $\ATETrue$ as long as either the mean potential outcomes or the assignment probabilities are estimated consistently.

The next theorem, proven in \cref{sec_proof_thm_normality}, establishes a Gaussian approximation for $\ATEDR$ under mild conditions on error rates $\RP$ and $\RTheta$.
\begin{theorem}[\textbf{\normality}]\label{thm_normality}
Suppose \cref{assumption_pos,assumption_noise,assumption_pos_estimated,assumption_estimates} and the following conditions hold,
\begin{enumerate}[label=(C\arabic*), leftmargin=2.5em] 
\item\label{item_error_for_normality} 
$\RP = o_p(1)$ and $\RTheta = o_p(1)$. 
\item\label{item_product_of_error_for_normality} $\RP \RTheta = o_p\bigparenth{N^{-1/2}}$.
\item\label{item_sigma_bounded_below} For every $i \in [N]$ and $j \in [M]$, let $\sigma_{i,j}^{(0)}$ and $\sigma_{i,j}^{(1)}$ be the standard deviations of $\varepsilon_{i,j}^{(0)}$ and $\varepsilon_{i,j}^{(1)}$, respectively. The sequence
\begin{equation}
\wbar{\sigma}_j^2 \defn 
\frac{1}{N} \sum_{i \in [N]} \frac{(\sigma_{i,j}^{(1)})^2}{p_{i,j}} + \frac{1}{N} \sum_{i \in [N]} \frac{(\sigma_{i,j}^{(0)})^2}{1-p_{i,j}},
\label{eq_normality_variance_thm}
\end{equation}
is bounded away from zero as $N$ increases.
\end{enumerate}
Then, for all $j \in [M]$, 
\begin{align}
\sqrt{N} \bigparenth{\ATEDR - \ATETrue}  / \wbar{\sigma}_j  \stackrel{d}{\longrightarrow} \mathcal{N}\bigparenth{0, 1}, 
\label{eq_clt}
\end{align}
as $N\to \infty$.
\end{theorem}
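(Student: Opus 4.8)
The plan is to expand the per-unit doubly-robust summand exactly, isolate a leading term that is an average of independent mean-zero variables with the right variance, and show that every remaining term is $o_p(N^{-1/2})$, so that a triangular-array central limit theorem applied to the leading term delivers \cref{eq_clt}. The two remainder estimates are exactly where the cross-fitting independence of \cref{assumption_estimates} and the rate conditions (C1)--(C2) enter, while the final limit uses \cref{assumption_pos}, \cref{assumption_noise}, and (C3).

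I would begin with the exact algebraic decomposition of each correction. Using $a_{i,j}=p_{i,j}+\eta_{i,j}$ and the identity $y_{i,j}a_{i,j}=(\theta_{i,j}^{(1)}+\varepsilon_{i,j}^{(1)})a_{i,j}$ (valid since both sides vanish when $a_{i,j}=0$), a rearrangement of \cref{eq_counterfactual_mean_dr_1} gives
\begin{align}
\what{\theta}_{i,j}^{(1,\mathrm{DR})}-\theta_{i,j}^{(1)}
= \frac{a_{i,j}}{p_{i,j}}\varepsilon_{i,j}^{(1)}
+ \bigparenth{\what{\theta}_{i,j}^{(1)}-\theta_{i,j}^{(1)}}\,\frac{(\what{p}_{i,j}-p_{i,j})-\eta_{i,j}}{\what{p}_{i,j}}
- a_{i,j}\varepsilon_{i,j}^{(1)}\,\frac{\what{p}_{i,j}-p_{i,j}}{\what{p}_{i,j}\,p_{i,j}},
\end{align}
and a symmetric expansion of \cref{eq_counterfactual_mean_dr_0} holds with $a_{i,j}$, $p_{i,j}$, $\what p_{i,j}$, $\varepsilon^{(1)}$ replaced by $1-a_{i,j}$, $1-p_{i,j}$, $1-\what p_{i,j}$, $\varepsilon^{(0)}$. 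Averaging over $i$ and differencing the two arms, I define the leading summand
\begin{align}
\xi_i \defn \frac{a_{i,j}}{p_{i,j}}\varepsilon_{i,j}^{(1)} - \frac{1-a_{i,j}}{1-p_{i,j}}\varepsilon_{i,j}^{(0)},
\end{align}
so that $\sqrt{N}\bigparenth{\ATEDR-\ATETrue}=N^{-1/2}\sum_{i}\xi_i + R_N$, where $R_N$ collects, for both arms, a \emph{bias-product} term $(\what\theta-\theta)(\what p-p)/\what p$, a \emph{cross} term $(\what\theta-\theta)\eta/\what p$, and an \emph{inverse-propensity} term $a\varepsilon(\what p-p)/(\what p\,p)$.

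The three remainder families are handled in turn. For the bias-product term, bounding $1/\what{p}_{i,j}\le 1/\lbar$ and applying Cauchy--Schwarz yields $\bigabs{\frac1N\sum_i(\what\theta_{i,j}^{(1)}-\theta_{i,j}^{(1)})(\what p_{i,j}-p_{i,j})/\what p_{i,j}}\le \lbar^{-1}\Rcol\bigparenth{\empoo}\,\RP$, so $\sqrt N$ times it is $O_p(\sqrt N\,\RTheta\,\RP)=o_p(1)$ by (C2). For the cross and inverse-propensity terms I would split the sum over the partition $(\cR_0,\cR_1)$ of \cref{assumption_estimates} and condition on the estimates. Within $\cR_s$ the first independence statement of \cref{assumption_estimates} makes $\{(\what p_{i,j},\what\theta_{i,j}^{(a)})\}$ independent of $\{\eta_{i,j}\}$, so the cross term is, conditionally, an average of independent mean-zero summands; using $\Variance(\eta_{i,j})\le 1/4$ and $\what p_{i,j}\ge\lbar$, its conditional variance is at most $(4\lbar^2)^{-1}N^{-1}\Rcol\bigparenth{\empoo}^2$, whence $\sqrt N$ times it has conditional variance $O_p(\RTheta^2)=o_p(1)$ by (C1); a conditional-Chebyshev bound followed by bounded convergence upgrades this to $o_p(1)$ unconditionally. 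The inverse-propensity term is treated identically via the second independence statement of \cref{assumption_estimates}, which makes $\{\what p_{i,j}\}$ independent of $\{(\eta_{i,j},\varepsilon_{i,j}^{(a)})\}$ within $\cR_s$; after using $\Expectation[a_{i,j}^2]=p_{i,j}$ and \cref{assumption_pos}, its conditional variance is $O(N^{-1}\RP^2)$, again $o_p(1)$ by (C1). Summing over the two cells gives $R_N=o_p(1)$.

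It remains to apply a triangular-array central limit theorem to $N^{-1/2}\sum_i\xi_i$. By \cref{assumption_noise}\cref{item_ass_2aa} the $\xi_i$ are independent across $i$, and they are mean-zero because $a_{i,j}\indep\varepsilon_{i,j}^{(1)}$ with $\Expectation[a_{i,j}]=p_{i,j}$. Since $a_{i,j}(1-a_{i,j})=0$, the two arms are uncorrelated, so $\Variance(\xi_i)=(\sigma_{i,j}^{(1)})^2/p_{i,j}+(\sigma_{i,j}^{(0)})^2/(1-p_{i,j})$ and $N^{-1}\sum_i\Variance(\xi_i)=\wbar\sigma_j^2$ exactly. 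Positivity (\cref{assumption_pos}) bounds the weights by $1/\lambda$ and sub-Gaussianity (\cref{assumption_noise}\cref{item_ass_2cc}) bounds $\Expectation\bigabs{\varepsilon_{i,j}^{(a)}}^3$, giving a uniform bound on $\Expectation\bigabs{\xi_i}^3$; the Lyapunov ratio is then $O(N^{-1/2}\wbar\sigma_j^{-3})\to 0$ because (C3) keeps $\wbar\sigma_j$ bounded away from zero. Lyapunov's theorem gives $(\wbar\sigma_j\sqrt N)^{-1}\sum_i\xi_i\stackrel{d}{\longrightarrow}\cN(0,1)$, and Slutsky's lemma with $R_N/\wbar\sigma_j=o_p(1)$ completes the argument. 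I expect the main obstacle to be the vanishing of the cross and inverse-propensity terms: this is precisely where the honest cross-fitting independence of \cref{assumption_estimates} is indispensable, since without conditional mean-zeroness these terms would contribute first-order bias rather than being negligible.
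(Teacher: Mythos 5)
Your proposal is correct and follows essentially the same route as the paper's proof: your leading term $\xi_i$ is exactly the paper's variance summand $\variance_{i,j}$ (since $a_{i,j}=p_{i,j}+\eta_{i,j}$), your three remainder families are a regrouping of the paper's four bias terms, and both arguments kill the bias-product term via Cauchy--Schwarz with condition \cref{item_product_of_error_for_normality}, kill the cross and inverse-propensity terms by conditioning on the estimates through \cref{assumption_estimates}, and finish with a Lyapunov CLT (with the same variance computation, zero cross-covariance between arms) plus Slutsky. The only cosmetic differences are that you bound the conditional remainders by second moments and Chebyshev where the paper bounds conditional first absolute moments of sub-Gaussian/sub-exponential variables and applies conditional Markov, and your pointwise observation $a_{i,j}(1-a_{i,j})=0$ is a slightly cleaner way to see the vanishing covariance that the paper verifies in expectation.
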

\cref{thm_normality} describes two simple requirements on the estimated matrices $\emta$ and $(\empoz, \empoo)$, under which $\ATEDR$ exhibits an asymptotic Gaussian distribution centered at $\ATETrue$. 
Condition \cref{item_error_for_normality} requires that the estimation errors of $\emta$ and  $(\empoz, \empoo)$ converge to zero in probability. Condition \cref{item_product_of_error_for_normality} requires that the product of the errors decays sufficiently fast, at a rate $o_p\bigparenth{N^{-1/2}}$, ensuring that the bias of the normalized estimator in \cref{eq_clt} converges to zero. 
Condition \cref{item_product_of_error_for_normality} is similar to conditions in the literature on doubly-robust estimation of average treatment effects under observed confounding \citep[e.g., Assumption 5.1 in][]{chernozhukov2018double}.
Specifically, in that context, \citet{chernozhukov2018double} assume that the product of propensity estimation error and outcome regression error decays faster than $N^{-1/2}$.\medskip

\noindent {\bf Black-box asymptotic normality.} We emphasize that \cref{thm_normality} applies to any matrix completion algorithm $\mca$, provided conditions \cref{item_error_for_normality} and \cref{item_product_of_error_for_normality} hold. This level of generality is useful because the product of $\RP$ and $\RTheta$ is $o_p\bigparenth{N^{-1/2}}$ for a wide range of $\mca$ algorithms, under mild assumptions on $(\mpob, \mta)$. In contrast, achieving such black-box asymptotic normality for OI or IPW estimates is challenging. Their biases are tied to the individual error rates, $\RTheta$ and $\RP$, which are typically lower-bouded at the parametric rate of $N^{-0.5}$.

The next result, proven in \cref{subsec_proof_variance_estimate}, provides a consistent estimator for the asymptotic variance $\wbar{\sigma}_j^2$ from \cref{thm_normality}.
\begin{proposition}[\textbf{Consistent variance estimation}]\label{prop_variance_estimate}
Suppose \cref{assumption_pos,assumption_noise,assumption_pos_estimated} and condition \cref{item_error_for_normality} in \cref{thm_normality} holds. Suppose 
the partition $(\cR_0, \cR_1)$ of the units $[N]$ from \cref{assumption_estimates} is such that
\begin{align}
    \normalbraces{\bigparenth{\what{p}_{i, j}, \what{\theta}^{(a)}_{i,j}}}_{i\in\cR_s} \indep \normalbraces{\normalparenth{\eta_{i, j}, \varepsilon^{(a)}_{i,j}}}_{i\in\cR_s}, \label{eq_independence_requirement_estimates_new}
\end{align}
for every $j \in [M]$, $a \in \normalbraces{0,1}$ and $s \in \normalbraces{0,1}$. Then, for all $j \in [M]$, $\what{\sigma}_j^2 - \wbar{\sigma}_j^2 \inprob 0$, where
    \begin{align}
    \what{\sigma}_j^2 \defn {\frac{1}{N} \sum_{i \in [N]} \frac{ \bigparenth{y_{i,j} - \what{\theta}^{(1)}_{i,j}}^2 a_{i,j}}{\bigparenth{\what{p}_{i,j}}^2} + \frac{1}{N} \sum_{i \in [N]}  \frac{ \bigparenth{y_{i,j} - \what{\theta}^{(0)}_{i,j}}^2 \normalparenth{1-a_{i,j}}} {\bigparenth{1-\what{p}_{i,j}}^2}}. \label{eq_estimated_variance}
\end{align}
\end{proposition}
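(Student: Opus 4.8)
The plan is to compare $\what{\sigma}_j^2$ against the oracle quantity one would form knowing the true parameters, and to absorb the estimation errors using the rates $\RP,\RTheta=o_p(1)$ together with the positivity bounds of \cref{assumption_pos,assumption_pos_estimated}. Set $\Delta^{(a)}_{i,j}\defeq\theta^{(a)}_{i,j}-\what{\theta}^{(a)}_{i,j}$. Since $y_{i,j}a_{i,j}=(\theta^{(1)}_{i,j}+\varepsilon^{(1)}_{i,j})a_{i,j}$ and $y_{i,j}(1-a_{i,j})=(\theta^{(0)}_{i,j}+\varepsilon^{(0)}_{i,j})(1-a_{i,j})$ by \cref{eq_consistency,eq_outcome_model}, the two summands of $\what{\sigma}_j^2$ in \cref{eq_estimated_variance} equal
\[
\frac{1}{N}\sum_{i\in[N]}\frac{\bigparenth{\varepsilon^{(1)}_{i,j}+\Delta^{(1)}_{i,j}}^2 a_{i,j}}{\bigparenth{\what{p}_{i,j}}^2}
\qtext{and}
\frac{1}{N}\sum_{i\in[N]}\frac{\bigparenth{\varepsilon^{(0)}_{i,j}+\Delta^{(0)}_{i,j}}^2\normalparenth{1-a_{i,j}}}{\bigparenth{1-\what{p}_{i,j}}^2}.
\]
I will treat the first summand as representative and show it converges in probability to $\tfrac1N\sum_{i}(\sigma^{(1)}_{i,j})^2/p_{i,j}$; the second is symmetric with $(\varepsilon^{(0)},\Delta^{(0)},1-a,1-\what p,1-p)$ in place of $(\varepsilon^{(1)},\Delta^{(1)},a,\what p,p)$, and adding the two pieces gives the claim against $\wbar{\sigma}_j^2$ of \cref{eq_normality_variance_thm}.

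The core device is to condition on the estimates and exploit the strengthened independence \cref{eq_independence_requirement_estimates_new}. Fix $s\in\{0,1\}$, restrict the sum to $i\in\cR_s$, and let $\cG_s$ be the $\sigma$-field generated by $\{(\what{p}_{i,j},\what{\theta}^{(a)}_{i,j})\}_{i\in\cR_s}$. By \cref{eq_independence_requirement_estimates_new}, the noise $\{(\eta_{i,j},\varepsilon^{(a)}_{i,j})\}_{i\in\cR_s}$ is independent of $\cG_s$, so conditionally on $\cG_s$ the quantities $\what{p}_{i,j},\Delta^{(a)}_{i,j}$ are fixed while $(\varepsilon^{(1)}_{i,j},\eta_{i,j})$ retain their law, their independence across $i$ (\cref{assumption_noise}\cref{item_ass_2aa}), and the relation $\varepsilon^{(1)}_{i,j}\indep\eta_{i,j}$ (\cref{assumption_noise}\cref{item_ass_2bb}). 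Using $\Expectation[a_{i,j}]=p_{i,j}$, $\Expectation[\varepsilon^{(1)}_{i,j}]=0$, and $\varepsilon^{(1)}_{i,j}\indep a_{i,j}$, the cross term vanishes in conditional mean and
\[
\Expectation\Bigbrackets{\frac{\bigparenth{\varepsilon^{(1)}_{i,j}+\Delta^{(1)}_{i,j}}^2 a_{i,j}}{\bigparenth{\what{p}_{i,j}}^2}\,\Big|\,\cG_s}
=\frac{\bigparenth{(\sigma^{(1)}_{i,j})^2+(\Delta^{(1)}_{i,j})^2}\,p_{i,j}}{\bigparenth{\what{p}_{i,j}}^2}.
\]
Summing over $i$ and comparing with the target, the $(\Delta^{(1)}_{i,j})^2$ contribution is at most $\lbar^{-2}\tfrac1N\sum_i(\Delta^{(1)}_{i,j})^2\le\lbar^{-2}\Rcol\bigparenth{\empoo}^2=o_p(1)$, while the difference between $(\sigma^{(1)}_{i,j})^2 p_{i,j}/\what{p}_{i,j}^2$ and $(\sigma^{(1)}_{i,j})^2/p_{i,j}$ is bounded by $c\,\lbar^{-3}(\sigma^{(1)}_{i,j})^2\,\normalabs{\what{p}_{i,j}-p_{i,j}}$ via the positivity bounds, hence after averaging by $c'\sigmax^2\lbar^{-3}\RP=o_p(1)$ (using $(\sigma^{(1)}_{i,j})^2\le c\sigmax^2$ from \cref{assumption_noise}\cref{item_ass_2cc}). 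Thus the conditional mean equals $\tfrac1N\sum_i(\sigma^{(1)}_{i,j})^2/p_{i,j}+o_p(1)$.

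It remains to control the conditional fluctuations. By conditional independence across $i$ and $\what{p}_{i,j}\ge\lbar$, the conditional variance of the summand is at most $\tfrac{1}{N^2}\sum_i\Variance(\,\cdot\mid\cG_s)\le \tfrac{1}{N^2\lbar^4}\sum_i\Expectation[(\varepsilon^{(1)}_{i,j}+\Delta^{(1)}_{i,j})^4\mid\cG_s]$. Since $\varepsilon^{(1)}_{i,j}$ is $\subGaussian$ with norm at most $\sigmax$ and $\Delta^{(1)}_{i,j}$ is $\cG_s$-measurable, this is at most $\tfrac{c\sigmax^4}{N\lbar^4}+\tfrac{c}{N^2\lbar^4}\sum_i(\Delta^{(1)}_{i,j})^4$. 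The first term is $O(1/N)$; for the second, converting the available $(1,2)$-norm (second-moment) bound $\tfrac1N\sum_i(\Delta^{(1)}_{i,j})^2\le\Rcol\bigparenth{\empoo}^2$ into a fourth-moment bound via $\sum_i(\Delta^{(1)}_{i,j})^4\le\bigparenth{\sum_i(\Delta^{(1)}_{i,j})^2}^2$ shows it is $O_p\bigparenth{\Rcol\bigparenth{\empoo}^4}=o_p(1)$. Hence the conditional variance is $o_p(1)$, and a conditional Chebyshev inequality followed by bounded convergence (the conditional variance, truncated at one, tends to zero in probability) gives that the summand equals its conditional mean up to $o_p(1)$. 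Aggregating over $s\in\{0,1\}$ and adding the symmetric control summand yields $\what{\sigma}_j^2-\wbar{\sigma}_j^2\inprob0$.

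I expect the main obstacle to be precisely this fluctuation step: because the estimates are random and correlated with the noise, no unconditional law of large numbers applies directly, so the conditioning afforded by \cref{eq_independence_requirement_estimates_new} is essential. The delicate points are bounding the conditional variance without any per-entry control on $\Delta^{(a)}_{i,j}$ — achieved by the elementary inequality $\sum_i(\Delta^{(a)}_{i,j})^4\le(\sum_i(\Delta^{(a)}_{i,j})^2)^2$ that upgrades the $(1,2)$-norm error rate to the needed fourth-moment bound — and converting a conditional-variance-in-probability statement into an unconditional $o_p(1)$ conclusion via bounded convergence.
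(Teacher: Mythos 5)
Your proof is correct, and while it shares the skeleton that the assumptions force on any argument here — fix an arm $a$, split the sum over units into the folds $\cR_0,\cR_1$, and condition on the estimates via the strengthened independence \cref{eq_independence_requirement_estimates_new} so that the noise keeps its law and its independence across $i$ — the technical execution is genuinely different from the paper's. The paper expands each summand into five explicit terms (the squared estimation error, two cross terms in $\varepsilon^{(1)}_{i,j}$ and $\varepsilon^{(1)}_{i,j}\eta_{i,j}$, the $(\varepsilon^{(1)}_{i,j})^2\eta_{i,j}$ term, and the centering mismatch $(\varepsilon^{(1)}_{i,j})^2 p_{i,j}/\what{p}_{i,j}^2-(\sigma^{(1)}_{i,j})^2/p_{i,j}$), bounds the two non-centered pieces deterministically by $\RTheta^2$ and $\RP$, and controls each mean-zero piece through conditional first-moment bounds for sums of $\subGaussian$, $\subExponential$, and $\subWeibull$ (index $2/3$) variables, using \cref{lem_prod_subG,lem_prod_subW} and converting to $o_p(1)$ with \cref{lemma_bigOMarkov}. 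You instead compute the conditional mean of each summand exactly — which reproduces precisely those two deterministic pieces, $(\Delta^{(1)}_{i,j})^2 p_{i,j}/\what{p}_{i,j}^{\,2}$ and $(\sigma^{(1)}_{i,j})^2 p_{i,j}/\what{p}_{i,j}^{\,2}$ — and then absorb all the mean-zero fluctuations at once into a single conditional-variance bound, finishing with conditional Chebyshev plus bounded convergence. What your route buys: it needs only fourth moments of $\subGaussian$ noise together with the elementary inequality $\sum_i (\Delta^{(1)}_{i,j})^4\le\bigparenth{\sum_i(\Delta^{(1)}_{i,j})^2}^2$, so the $\subExponential$/$\subWeibull$ product machinery never appears; your Chebyshev-plus-bounded-convergence step is a valid second-moment analogue of the paper's first-moment \cref{lemma_bigOMarkov}. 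What the paper's route buys: explicit term-by-term rates (each fluctuation term is visibly $O_p(\RTheta/\sqrt{N})$ or $O_p(1/\sqrt{N})$), which mirrors the structure of the proofs of \cref{thm_fsg,thm_normality} and keeps the individual error sources transparent. Both proofs are complete; yours is the more elementary of the two.
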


\subsection{Application to panel data with lagged treatment effects}
\label{extension_delayed} \cref{subsec_non_asymp_guarantee,subsec_asymp_guarantee} 
considered a model where the outcome $y_{i,j}$ for unit $i \in [N]$ and measurement $j \in [M]$ depends on treatment assignment only for unit $i$ and measurement $j$, i.e., $a_{i,j}$. \cref{sec:app_dynamic} discusses how to extend the results of this section to a setting of panel data with lagged treatment effects. In a panel data setting, the $M$ measurements correspond to $T$ time periods, and $t$ denotes the time index.
Then, \cref{sec:app_dynamic} considers an auto-regressive setting, where the potential outcomes at time $t$ depends on the treatment assignment at time $t$ and the realized outcome at time $t-1$, i.e., for all $i \in [N], t \in [T]$, and $a \in \normalbraces{0,1}$, 
\begin{equation}
y_{i,t}^{(a|y_{i,t-1})} = \alpha^{(a)} y_{i,t-1} + \theta_{i, t}^{(a)} 
+ \vareps_{i,t}^{(a)},
\end{equation}
and observed outcomes satisfy 
\begin{align}
\label{eq:dynamic_obs_orig}
    y_{i,t} & = y_{i,t}^{(0|y_{i,t-1})} (1-a_{i,t}) + y_{i,t}^{(1|y_{i,t-1})}  a_{i,t}.
\end{align}

The presence of lagged treatment effects in this model makes it crucial to define causal estimands for entire sequences of treatments. \cref{sec:app_dynamic} describes how the proposed doubly-robust estimation can be extended to treatment sequences and derives a generalization of \cref{thm_fsg}.

\section{Matrix Completion with Cross-Fitting}
\label{sec_algorithm}

In this section, we introduce a novel algorithm designed to construct estimates $(\empoz, \empoo, \emta)$ that adhere to \cref{assumption_estimates} and satisfy conditions \cref{item_error_for_normality,item_product_of_error_for_normality} in \cref{thm_normality}. 
We first explain why traditional matrix completion algorithms fail to deliver the properties required by \cref{assumption_estimates}.
We then present $\ssMC$, a meta-algorithm that takes any matrix completion algorithm and uses it to construct $(\empoz, \empoo, \emta)$ that satisfy \cref{assumption_estimates}, and the stronger independence condition in \cref{prop_variance_estimate}. 
Finally, we describe $\ssSVD$, an end-to-end algorithm obtained by combining $\ssMC$ with the singular value decomposition (\texttt{SVD})-based algorithm of \citet{bai2021matrix}, and establish that it also satisfies conditions \cref{item_error_for_normality,item_product_of_error_for_normality} in \cref{thm_normality}.\medskip

\noindent {\bf Traditional matrix completion.} Estimates $(\empoz, \empoo, \emta)$ obtained from existing matrix completion algorithms need not satisfy \cref{assumption_estimates}. In particular, using the entire assignment matrix $\ta$ to estimate each element of $\mta$ typically results in a violation of $\bigbraces{\what{p}_{i, j}}_{i\in\cR_s} \indep \bigbraces{\eta_{i, j}}_{i\in\cR_s}$ in \cref{assumption_estimates}, as each entry of $\what{P}$ is allowed to depend on the entire noise matrix $\noisea$. For example, in spectral methods \citep[e.g.,][]{nguyen2019low}, $\emta$ is a function of the \texttt{SVD} of the entire matrix $\ta$, and
\begin{align}\label{eq_dependence}
    \what{p}_{i,j} \notindep a_{i',j'}, 
\end{align}
for all $(i, j), (i', j') \in [N] \times [M]$ in general, which implies $\bigbraces{\what{p}_{i, j}}_{i\in\cR_s} \notindep \bigbraces{\eta_{i, j}}_{i\in\cR_s}$, for every $\cR_s \subset [N]$. Similarly, in matching methods such as nearest neighbors \citep{li2019nearest}, $\emta$ is a function of the matches/neighbors estimated from the entire matrix $\ta$. 
Dependence structures such as $\what{p}_{i,j} \notindep a_{i,j}$ for any $i, j \in [N] \times [M]$---which is weaker than \cref{eq_dependence}---are enough to violate the $\bigbraces{\what{p}_{i, j}}_{i\in\cR_s} \indep \bigbraces{\eta_{i, j}}_{i\in\cR_s}$ requirement in \cref{assumption_estimates}.
Likewise, the requirement $\bigbraces{ \what{\theta}_{i,j}^{(a)}}_{i\in\cR_s} \indep \bigbraces{\eta_{i, j}}_{i\in\cR_s}$ in \cref{assumption_estimates} can be violated, because $\empoz$ and $\empoo$ depend respectively on $\ooz$ and $\ooo$, which themselves depend on the entire matrix $\ta$.

\subsection{\texorpdfstring{\ssMC}{}: A meta-cross-fitting algorithm for matrix completion}
\label{sub:kbb}

We now introduce $\ssMC$, a cross-fitting procedure that modifies any $\mca$ algorithm to produce $(\empoz, \empoo, \emta)$ that satisfy \cref{assumption_estimates}. We employ the following assumption on the noise variables.
\begin{assumption}[Block independence between noise]
\label{assumption_block_noise}
Let $(\cR_0, \cR_1)$ denote the partition of the units $[N]$ from \cref{assumption_estimates}. There exists partitions $(\cC_0, \cC_1)$ of the measurements $[M]$, such that for each block $\cI \in \mc P \defn \normalbraces{\cR_s \times \cC_k: s,k \in \normalbraces{0,1}}$,
\begin{align}
\noisea_{\cI} & \indep  \noisea_{-\cI},  \noisey_{-\cI} 
\label{eq_noise_independence_1}
\shortintertext{and}
\noisea_{-\cI} & \indep \noisea_{\cI}, \noisey_{\cI}.
\label{eq_noise_independence_2}
\end{align}
for every $a \in \normalbraces{0,1}$.
\end{assumption}
For a given block $\cI$, \cref{eq_noise_independence_1} requires the noise in the treatment assignments corresponding to $\cI$ to be independent jointly of the noise in the treatment assignments and the potential outcomes corresponding to the remaining three blocks. Likewise, \cref{eq_noise_independence_2} requires the noise in the treatment assignments corresponding to the remaining three blocks to be independent jointly of the noise in the treatment assignments and the potential outcomes corresponding to $\cI$. \cref{assumption_block_noise} leaves unrestricted the dependence of the noise variables across outcomes that belong to the same block.

For notational simplicity, \cref{assumption_block_noise} imposes independence conditions across blocks of outcomes in a partition of $[M]$ into  two blocks only. It is important to note, however, that the results in this section  hold under more general dependence patterns. In particular, at the cost of additional notational complexity, it is straightforward to extend the result in this section to partitions of outcomes $(\cC_0, \cC_1, \ldots, \cC_m)$ such that for each $k \in \{0, 1, \ldots, m\}$, $s\in\{0,1\}$ and $a\in\{0,1\}$, there exists $k' \in \{0, 1, \ldots, m\} \setminus \{k\}$ with 
$\{\eta_{i,j}\}_{(i,j) \in \mathcal{R}_s \times \cC_k} \indep \{\eta_{i,j}, \varepsilon_{i,j}^{(a)}\}_{(i,j) \in \mc{R}_{1-s} \times \cC_{k'}}$ and $\{\eta_{i,j}\}_{(i,j) \in  \mc{R}_{1-s} \times \cC_{k'}} \indep \{\eta_{i,j}, \varepsilon_{i,j}^{(a)}\}_{(i,j) \in \mathcal{R}_s \times \cC_k}$.
This  allows for rather general patterns of dependence across outcomes while preserving independence across specific sets of outcomes (e.g., certain product categories in the retail example of \cref{section_introduction}).

Recall the setup from \cref{subsec_mc_primer}: Given an observation matrix $\matS \in\{\Reals \cup \{\star\}\}^{N \times M}$, a matrix completion algorithm $\mca$  produces an estimate $\what{\matT} = \mca(\matS) \in \Reals^{N \times M}$ of a matrix of interest $\matT$, where $\matS$ and $\matT$ are related via \cref{eq_mc_main}. 
With this background, we now describe the $\ssMC$ meta-algorithm. 
\begin{enumerate}
    \item The inputs are $(i)$ a matrix completion algorithm $\mca$, $(ii)$ an observation matrix $\matS \in \{\Reals \cup \{\star\}\}^{N \times M}$, and $(iii)$ a block partition $\cP$ of the set $[N] \times [M]$ into four blocks as in  \cref{assumption_block_noise}.
    \item\label{step2_ssmc} 
    For each block $\cI \in \cP$, construct $\what{\matT}_{\mc I}$ by applying $\mca$ on $\matS  \otimes \allones^{-\mc I}$ where $\allones^{-\mc I} \in \Reals^{N \times M}$ denotes a masking matrix with $(i,j)$-th entry equal to $0$ if $(i, j) \in \mc I$ and $1$ otherwise, and the operator $\otimes$ is as defined in \cref{section_introduction}.
    In other words,
         \begin{align}\label{eq_mc_sample_split}
         \what{\matT}_{\mc I} & = \wbar{\matT}_{\mc I} \qtext{where} \wbar{\matT} = \mca(\matS  \otimes \allones^{-\mc I}). 
         \end{align}
    \item Return $\what{\matT} \in \Reals^{N\times M}$ obtained by collecting together $\normalbraces{\what{\matT}_{\cI}}_{\cI \in \cP}$, with each entry in its original position.
\end{enumerate}
We represent this meta-algorithm succinctly as below:
\begin{align}
    \what{\matT} = \ssMC(\mca, \matS, \cP). \label{eq_meta_ssmc}
\end{align}
In summary, $\ssMC$ produces an estimate $\what{\matT}$ such that for each block $\mc I  \in \cP$, the sub-matrix $\what{\matT}_{\mc I}$ is constructed only using the entries of $\matS$ corresponding to the remaining three blocks of $\cP$. 
\cref{figure_sample_split}$(a)$ provides a schematic of the block partition $\mc P$ for $\cR_0 = [\floors{N/2}]$ and $\cC_0 = [\floors{M/2}]$. See \cref{figure_sample_split}$(b)$ for a visualization of $\matS  \otimes \allones^{-\mc I}$. 
The following result, proven in \cref{proof_prop_ssmc}, establishes $(\empob, \emta)$ generated by $\ssMC$ satisfy \cref{assumption_estimates}. 

\begin{proposition}[\textbf{\propssmc}]
\label{prop_ssmc}
    Suppose \cref{assumption_noise,assumption_block_noise} hold. Let $\mca$ be any matrix completion algorithm and $\cP$ be the block partition of the set $[N] \times [M]$ into four blocks from \cref{assumption_block_noise}. 
    Let
    \begin{align}
        \empoz & = \ssMC(\mca, \ooz, \cP), \label{ssmc_1}\\
        \empoo & = \ssMC(\mca, \ooo, \cP), \label{ssmc_2}\\
        \emta & = \ssMC(\mca, \ta, \cP), \label{ssmc_3}
    \end{align}
    where $\ooz$ and $\ooo$ are defined in \cref{eq:y_matrix}. Then, \cref{assumption_estimates} holds for all $j \in [M]$. Further, suppose
    \begin{align}
        \noisea_{\cI}, \noisey_{\cI} \indep \noisea_{-\cI}, \noisey_{-\cI}, \label{eq_noise_independence_3}
    \end{align}
    for every block $\cI \in \mc P$ and $a \in \normalbraces{0,1}$.
    Then, \cref{eq_independence_requirement_estimates_new} holds too.
\end{proposition}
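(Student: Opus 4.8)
The plan is to exploit the cross-fitting structure of $\ssMC$: for each column $j$, the estimates attached to the rows of one partition will turn out to be functions of noise variables that the block-independence assumption renders independent of the quantities appearing on the right-hand sides of \cref{eq_independence_requirement_estimates_dr,eq_independence_requirement_estimates_dr_p,eq_independence_requirement_estimates_new}. The single workhorse is the elementary fact that a measurable function of a random object independent of $Z$ is itself independent of $Z$, so the entire argument reduces to (i) identifying exactly which noise entries each cross-fitted estimate depends on, and (ii) matching those against \cref{assumption_block_noise}.

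First I would prove a measurability lemma describing the inputs of each cross-fitted sub-matrix. Fix a block $\cI \in \cP$. By \cref{eq_mc_sample_split}, $\emta_{\cI}$ is obtained by applying $\mca$ to $\ta \otimes \allones^{-\cI}$, whose non-missing entries are exactly $\{a_{i,j}\}_{(i,j)\notin\cI}$; since $\ta = \mta + \noisea$ by \cref{eq:a_matrix} with $\mta$ fixed, $\emta_{\cI}$ is a deterministic function of $\noisea_{-\cI}$ alone. By \cref{eq:y_matrix}, the non-missing entries of $\ooz \otimes \allones^{-\cI}$ are the values $\theta_{i,j}^{(0)} + \varepsilon_{i,j}^{(0)}$ over the pairs $(i,j)\notin\cI$ with $a_{i,j}=0$, hence they are determined by $\mpoz$ (fixed), $\noiseyz_{-\cI}$, and the masking pattern $\{a_{i,j}\}_{(i,j)\notin\cI}$, i.e. by $\noisea_{-\cI}$; thus $\empoz_{\cI}$ is a deterministic function of $(\noisea_{-\cI}, \noiseyz_{-\cI})$, and symmetrically $\empoo_{\cI}$ of $(\noisea_{-\cI}, \noiseyo_{-\cI})$. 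Compactly, for each $a \in \{0,1\}$, $\empo[(a)]_{\cI}$ is measurable with respect to $(\noisea_{-\cI}, \noisey_{-\cI})$ and $\emta_{\cI}$ is measurable with respect to $\noisea_{-\cI}$.

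Next I would localize to a fixed column. Fix $j \in [M]$ and $s \in \{0,1\}$, and let $k$ be the unique index with $j \in \cC_k$, so that the pairs $\{(i,j) : i \in \cR_s\}$ all lie in the single block $\cI^{\ast} \defn \cR_s \times \cC_k$. Then $\{\eta_{i,j}\}_{i\in\cR_s}$ is a sub-collection of $\noisea_{\cI^{\ast}}$ and $\{\varepsilon^{(a)}_{i,j}\}_{i\in\cR_s}$ of $\noisey_{\cI^{\ast}}$, while $\{(\what{p}_{i,j}, \what{\theta}^{(a)}_{i,j})\}_{i\in\cR_s}$ is read off from $(\emta_{\cI^{\ast}}, \empo[(a)]_{\cI^{\ast}})$ and so, by the measurability lemma, is a function of $(\noisea_{-\cI^{\ast}}, \noisey_{-\cI^{\ast}})$, with $\{\what{p}_{i,j}\}_{i\in\cR_s}$ a function of $\noisea_{-\cI^{\ast}}$ alone. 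The three conclusions then follow by pushing the block-independence statements through these functions: \cref{eq_independence_requirement_estimates_dr} from $\noisea_{\cI^{\ast}} \indep (\noisea_{-\cI^{\ast}}, \noisey_{-\cI^{\ast}})$, which is \cref{eq_noise_independence_1} for the matching $a$; \cref{eq_independence_requirement_estimates_dr_p} from $\noisea_{-\cI^{\ast}} \indep (\noisea_{\cI^{\ast}}, \noisey_{\cI^{\ast}})$, which is \cref{eq_noise_independence_2}; and, under the strengthened hypothesis \cref{eq_noise_independence_3}, the claim \cref{eq_independence_requirement_estimates_new} from $(\noisea_{\cI^{\ast}}, \noisey_{\cI^{\ast}}) \indep (\noisea_{-\cI^{\ast}}, \noisey_{-\cI^{\ast}})$. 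Since $j$ and $s$ were arbitrary, the statements hold for all $j \in [M]$ and $s \in \{0,1\}$. Note that the per-$a$ form of \cref{assumption_block_noise} is precisely matched by the per-$a$ form of the targets, so no joint statement over both $a$'s is required.

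The one delicate point, which I would write most carefully, is the measurability lemma for $\empoz_{\cI}$ and $\empoo_{\cI}$, because the masked outcome matrices depend on the assignment noise through two channels at once: additively through the recorded outcome values and combinatorially through the $\otimes$ masking pattern dictated by $\ta$. I must verify that both channels only ever reference entries indexed by $(i,j)\notin\cI$, so that the estimate on block $\cI$ is genuinely a function of $(\noisea_{-\cI}, \noisey_{-\cI})$ and of no entry inside $\cI$; this is exactly where the $\ssMC$ masking by $\allones^{-\cI}$ does its work. Everything else is a routine application of the ``function of independent variables is independent'' principle.
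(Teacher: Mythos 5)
Your proposal is correct and follows essentially the same route as the paper's proof: both establish that each cross-fitted block estimate $\bigparenth{\emta_{\cI}, \empo_{\cI}}$ is a deterministic function of the noise $\bigparenth{\noisea_{-\cI}, \noisey_{-\cI}}$ outside the block (with $\emta_{\cI}$ a function of $\noisea_{-\cI}$ alone), and then read off the three independence claims from \cref{eq_noise_independence_1}, \cref{eq_noise_independence_2}, and \cref{eq_noise_independence_3}, respectively, via the ``measurable function of an independent object'' principle. Your explicit localization of column $j$ and rows $\cR_s$ to the single block $\cI^{\ast} = \cR_s \times \cC_k$, and your careful treatment of the two channels (values and masking pattern) through which the outcome matrices depend on $\noisea$, only make explicit steps the paper leaves implicit.
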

A host of $\mca$ algorithms are designed to de-noise and impute missing entries of matrices under random patterns of missingness; the most common missingness pattern studied is where each entry has the same probability of being missing, independent of everything else.
In contrast, $\ssMC$ generates patterns where all entries in one block are deterministically missing, as in \cref{figure_sample_split}$(b)$. 
A recent strand of research on the interplay between matrix completion methods and causal inference models---specifically, within the synthetic controls framework---has contributed matrix completion algorithms that allow for block missingness \citep[see, e.g.,][]{athey2021matrix, agarwal2021robustness, bai2021matrix, SI, arkhangelsky2021synthetic, agarwal2023causal, dwivedi2022counterfactual,dwivedi2022doubly}. 
However, it is a challenge to apply known theoretical guarantees for these methods to the setting in this article because of: (i) the use of cross-fitting---which creates blocks where all observations are missing---and (ii) outside of the completely-missing blocks, there can still be missing observations with heterogeneous probabilities of missingness.
In the next section, we show how to modify an $\mca$ algorithm designed for block missingness patterns so that it can be applied to our setting with cross-fitting and heterogeneous probabilities of missingness outside the folds. 
For concreteness, we work with the Tall-Wide matrix completion algorithm of \citet{bai2021matrix}.

\begin{figure}[t]
    \centering
    \begin{tabular}{cc}
    \includegraphics[trim={3.0cm 9.75cm 3.0cm 2.75cm}, width=0.35\linewidth,clip]{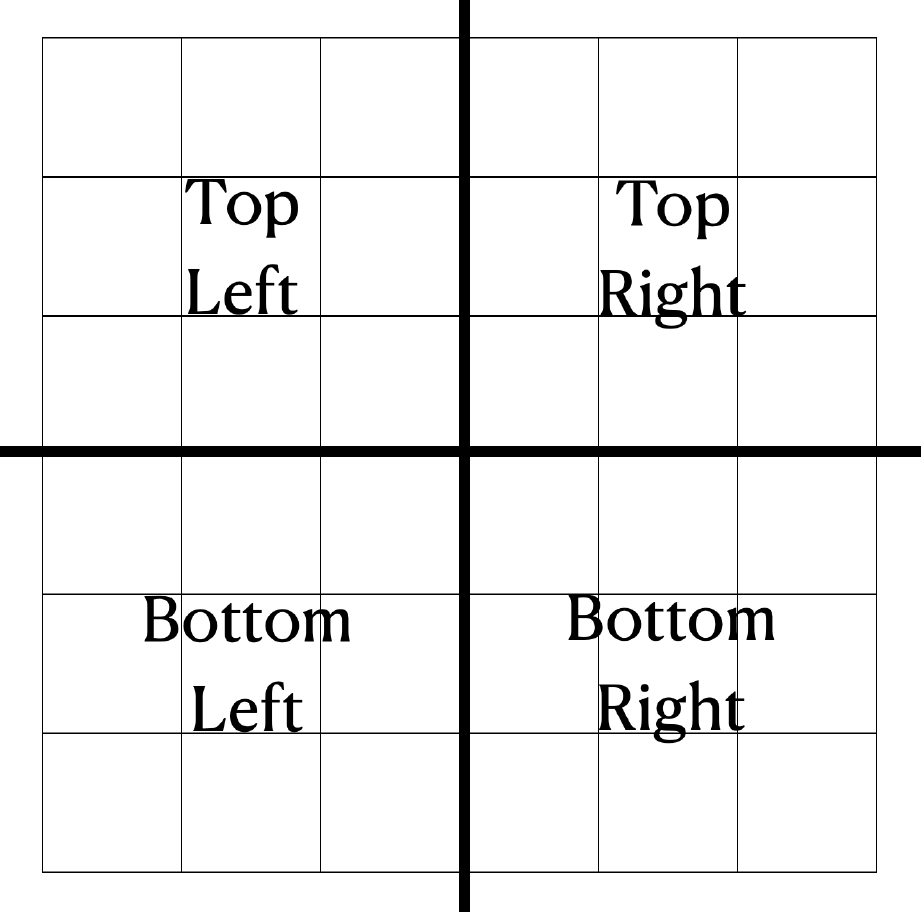} &
    \includegraphics[trim={3.0cm 9.75cm 3.0cm 2.75cm}, width=0.35\linewidth,clip]{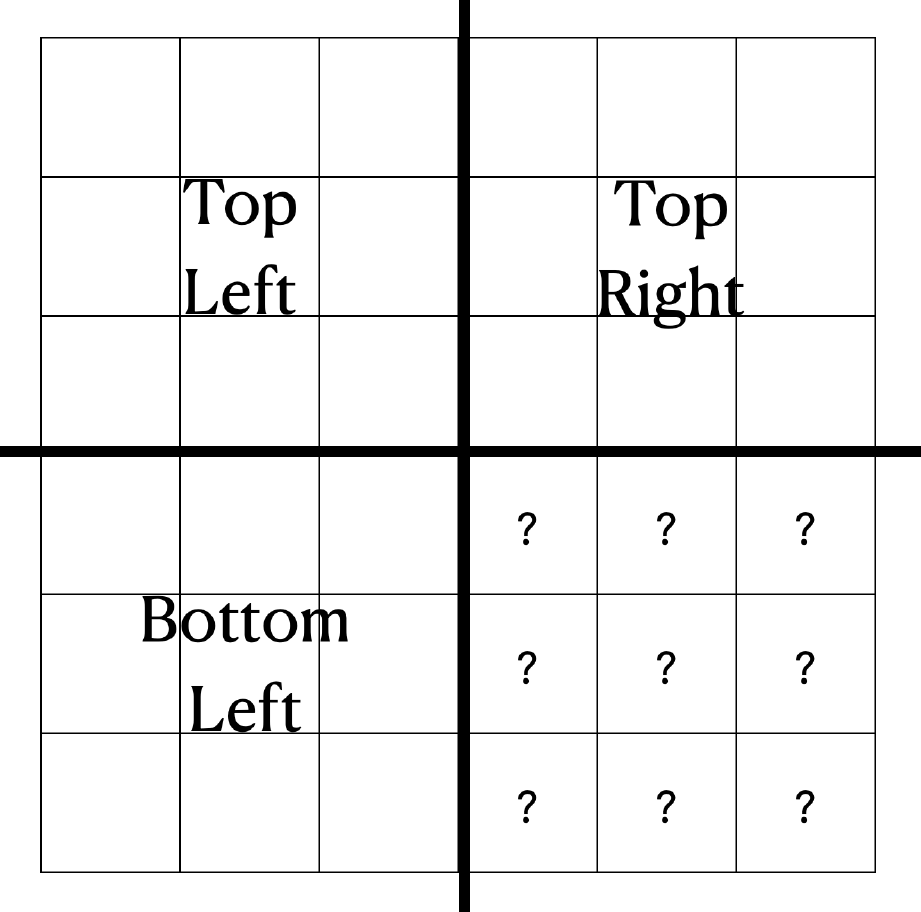} \\
    $(a)$ $\matS$ & $(b)$ $\matS  \otimes \allones^{-\text{Bottom Right}}$
    \end{tabular}
\caption{Panel $(a)$: A matrix $\matS$ partitioned into four blocks  when $\cR_0 = [N/2]$ and $\cC_0 = [M/2]$ in \cref{assumption_block_noise}, i.e., $\mc P = \normalbraces{\text{Top Left, Top Right, Bottom Left, Bottom Right}}$. Panel $(b)$: The matrix $\matS  \otimes \allones^{-\text{Bottom Right}}$ obtained from the matrix $\matS$  by masking the entries corresponding to the $\text{Bottom Right}$ block with $\star$.}
\label{figure_sample_split}
\end{figure}

\subsection{The \texorpdfstring{$\ssSVD$}{} algorithm} \label{sub:ssvd}

$\ssSVD$ is an end-to-end $\mca$ algorithm obtained by instantiating the $\ssMC$ meta-algorithm with the Tall-Wide algorithm of \citet{bai2021matrix}, which we denote as $\tallwide$.
For completeness, we detail the $\tallwide$ algorithm in \cref{sec:TW_algorithm}, and then use it to describe $\ssSVD$ in \cref{sec:ssvd_algorithm}.
\subsubsection{The \texorpdfstring{$\tallwide$}{} algorithm of \texorpdfstring{\citet{bai2021matrix}}{}.}\label{sec:TW_algorithm}
\citet{bai2021matrix} propose $\tallwide$ to impute missing values in matrices with a set of rows and a set of columns without missing entries. 
More concretely, for any matrix $\matS \in \normalbraces{\Reals \cup \{\star\}}^{N \times M}$, let $\cR_{\mathrm{obs}}  \subseteq [N]$ and $\cC_{\mathrm{obs}}  \subseteq [M]$ denote the set of all rows and all columns, respectively, with all entries observed. 
Then, all missing entries of $\matS$ belong to the block $\cI = \cR_{\mathrm{miss}} \times \cC_{\mathrm{miss}}$, where $\cR_{\mathrm{miss}} \defn [N] \setminus \cR_{\mathrm{obs}}$ and $\cC_{\mathrm{miss}} \defn [M]  \setminus \cC_{\mathrm{obs}}$.

Given a rank hyper-parameter $r \in [\min\normalbraces{\normalabs{\cR_{\mathrm{obs}}},\normalabs{\cC_{\mathrm{obs}}}}]$, $\tallwide_r$ produces an estimate of $\matT$ as follows: 

\begin{enumerate}
\item Run \texttt{SVD}
separately on $\matS^{\tall} \defn \matS_{[N] \times \cC_{\mathrm{obs}}}$ and $\matS^{\wide} \defn \matS_{\cR_{\mathrm{obs}} \times [M]}$, i.e.,
\begin{align}
\texttt{SVD}(\matS^{\tall})  & = (\matU^{\tall} \in \Reals^{N \times \wbar{r}_N}, \Sigma^{\tall} \in \Reals^{\wbar{r}_N \times \wbar{r}_N}, \matV^{\tall} \in \Reals^{\normalabs{\cC_{\mathrm{obs}}}  \times \wbar{r}_N})\\ \shortintertext{and}  
\texttt{SVD}(\matS^{\wide})  & = (\matU^{\wide} \in \Reals^{\normalabs{\cR_{\mathrm{obs}}} \times \wbar{r}_M}, \Sigma^{\wide} \in \Reals^{\wbar{r}_M \times \wbar{r}_M}, \matV^{\wide} \in \Reals^{M  \times \wbar{r}_M})\label{eq_svd_tw}
\end{align}
where $\wbar{r}_N \defn \min\normalbraces{ N, \normalabs{\cC_{\mathrm{obs}}}}$ and $\wbar{r}_M \defn \min\normalbraces{ \normalabs{\cR_{\mathrm{obs}}}, M }$. The
 columns of $\matU^{\tall}$ and $\matU^{\wide}$ are the left singular vectors of $\matS^{\tall}$ and $\matS^{\wide}$, respectively, and the columns of $\matV^{\tall}$ and $\matV^{\wide}$ are the right singular vectors of $\matS^{\tall}$ and $\matS^{\wide}$, respectively. 
 The diagonal entries of $\Sigma^{\tall}$ and $\Sigma^{\wide}$ are the singular values of $\matS^{\tall}$ and $\matS^{\wide}$, respectively, and the off-diagonal entries are zeros.
This step of $\tallwide$ requires the existence of the fully observed blocks $\matS^{\tall}$ and $\matS^{\wide}$, i.e., $\cR_{\mathrm{obs}}$ and $\cC_{\mathrm{obs}}$ cannot be empty. 
\item\label{step3_tallwide}
Let $\wtil{\matV}^{\tall} \in \mathbb R^{|\cC_{\mathrm{obs}}| \times r}$ 
be the sub-matrix of $\matV^{\tall}$ that keeps the columns corresponding to the $r$ largest singular values only. 
Let $\wtil{\matV}^{\wide} \in \mathbb R^{|\cC_{\mathrm{obs}}| \times r}$ 
be the sub-matrix of $\matV^{\wide}$ that keeps the columns corresponding to the $r$ largest singular values only and the rows corresponding to the indices in $\cC_{\mathrm{obs}}$ only.
Obtain a rotation matrix $\matR^{\miss} \in \Reals^{r \times r}$ as follows:
\begin{align}\label{eq:TW_rotation}
\matR^{\miss} \defn 
\wtil{\matV}^{\tall\top} \wtil{\matV}^{\wide}\bigparenth{\wtil{\matV}^{\wide\top} \wtil{\matV}^{\wide}}^{-1}.
\end{align}
That is, $\matR^{\miss}$ is obtained by regressing $\wtil{\matV}^{\tall}$ on $\wtil{\matV}^{\wide}$.
In essence, $\matR^{\miss}$ aligns the right singular vectors of  $\matS^{\tall}$ and $\matS^{\wide}$ using the entries that are common between these two matrices, i.e., the entries corresponding to indices $\cR_{\mathrm{obs}} \times \cC_{\mathrm{obs}}$. 
The formal guarantees of the $\tallwide$ algorithm remains unchanged if one alternatively regresses $\wtil{\matV}^{\wide}$ on $\wtil{\matV}^{\tall}$, or uses the left singular vectors of $\matS^{\tall}$ and $\matS^{\wide}$ for alignment.

\item Let $\wbar{\Sigma}^{\tall} \in \mathbb R^{\wbar{r}_N \times r}$ 
be the sub-matrix of $\Sigma^{\tall}$ that keeps the columns corresponding to the $r$ largest singular values only.
Let $\wbar{\matV}^{\wide} \in \mathbb R^{M \times r}$ be the sub-matrix of $\matV^{\wide}$ that keeps the columns corresponding to the $r$ largest singular values only.
Return $\what{\matT} \defn \matU^{\tall} \wbar{\Sigma}^{\tall} \matR^{\miss} \wbar{\matV}^{\wide\top}$ as an estimate for  $\matT$.
\end{enumerate}

\subsubsection{\texorpdfstring{$\ssSVD$}{} algorithm.}\label{sec:ssvd_algorithm} \medskip
\begin{enumerate}
    \item The inputs are $(i)$ $\ta \in \Reals^{N \times M}$, $(ii)$ $\ooa \in \normalbraces{\Reals \cup \{\star\}}^{N \times M}$ for $a \in \normalbraces{0,1}$, $(iii)$ a block partition $\cP$ of the set $[N] \times [M]$ into four blocks as in  \cref{assumption_block_noise}, and $(iv)$ hyper-parameters $r_1$, $r_2$, $r_3$, and $\lbar$ such that $r_1, r_2, r_3 \in [\min\normalbraces{N,M}]$ and $0 < \lbar \leq 1/2$. 
    \item\label{item_ssSVD_P} Return $\emta = \texttt{Proj}_{\lbar} \bigparenth{\ssMC(\tallwide_{r_1}, \ta, \cP)}$ 
    where $\texttt{Proj}_{\lbar}(\cdot)$ projects each entry of its input to the interval $[\lbar,1-\lbar]$.
    \item\label{step_definition} Define $\barz$ as equal to $\ooz$, but with all missing entries in $\ooz$ set to zero. Define $\baro$ analogously with respect to $\ooo$.
    \item\label{item_ssSVD_Theta0} Return $\empoz = \ssMC(\tallwide_{r_2}, \barz, \cP) \odiv (\allones - \emta)$. 
    \item\label{item_ssSVD_Theta1} Return $\empoo = \ssMC(\tallwide_{r_3}, \baro, \cP) \odiv \emta$.
\end{enumerate}
We provide intuition on the key steps of the $\ssSVD$ algorithm next.\medskip

\begin{figure}[t]
    \centering
    \begin{tabular}{ccc}
    \includegraphics[trim={3cm 7cm 2.5cm 2.75cm}, width=0.29\linewidth,clip]{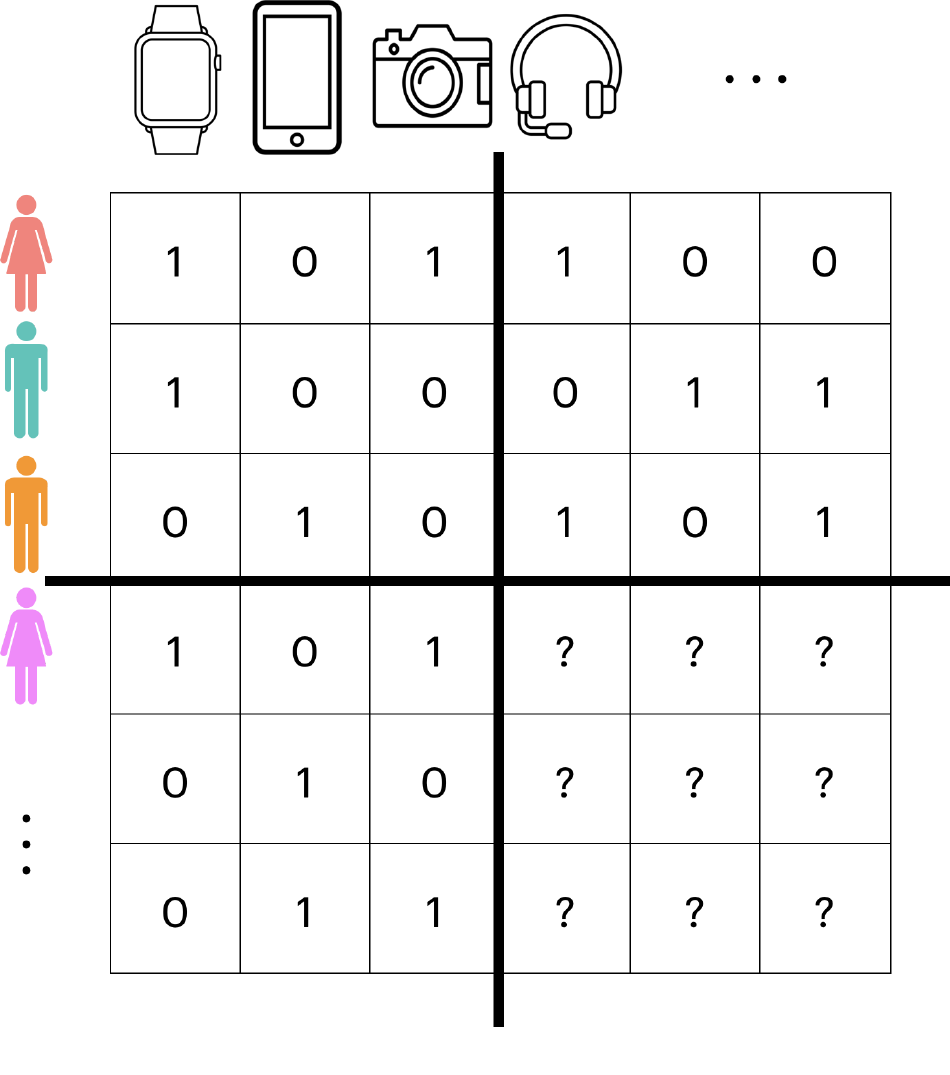}&
    \includegraphics[trim={3cm 7cm 2.5cm 2.75cm}, width=0.29\linewidth,clip]{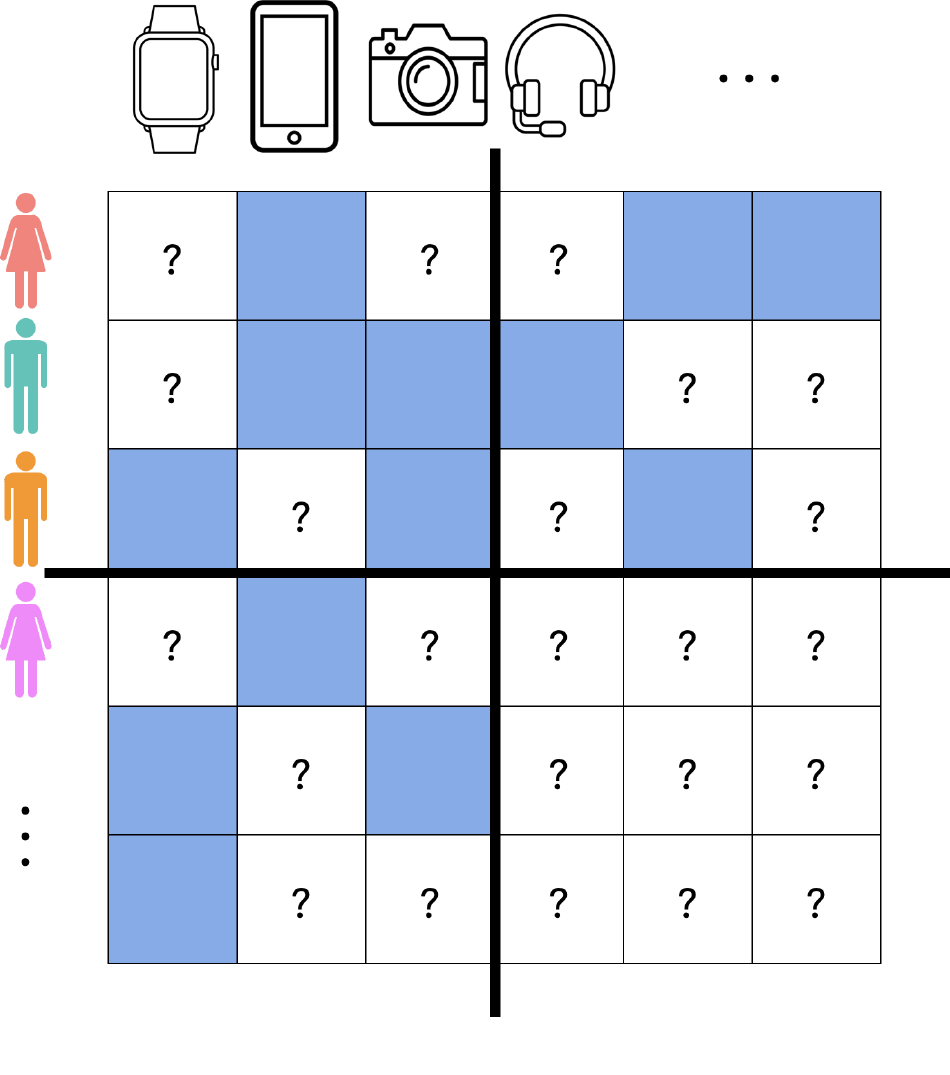}&
    \includegraphics[trim={3cm 7cm 2.5cm 2.75cm}, width=0.29\linewidth,clip]{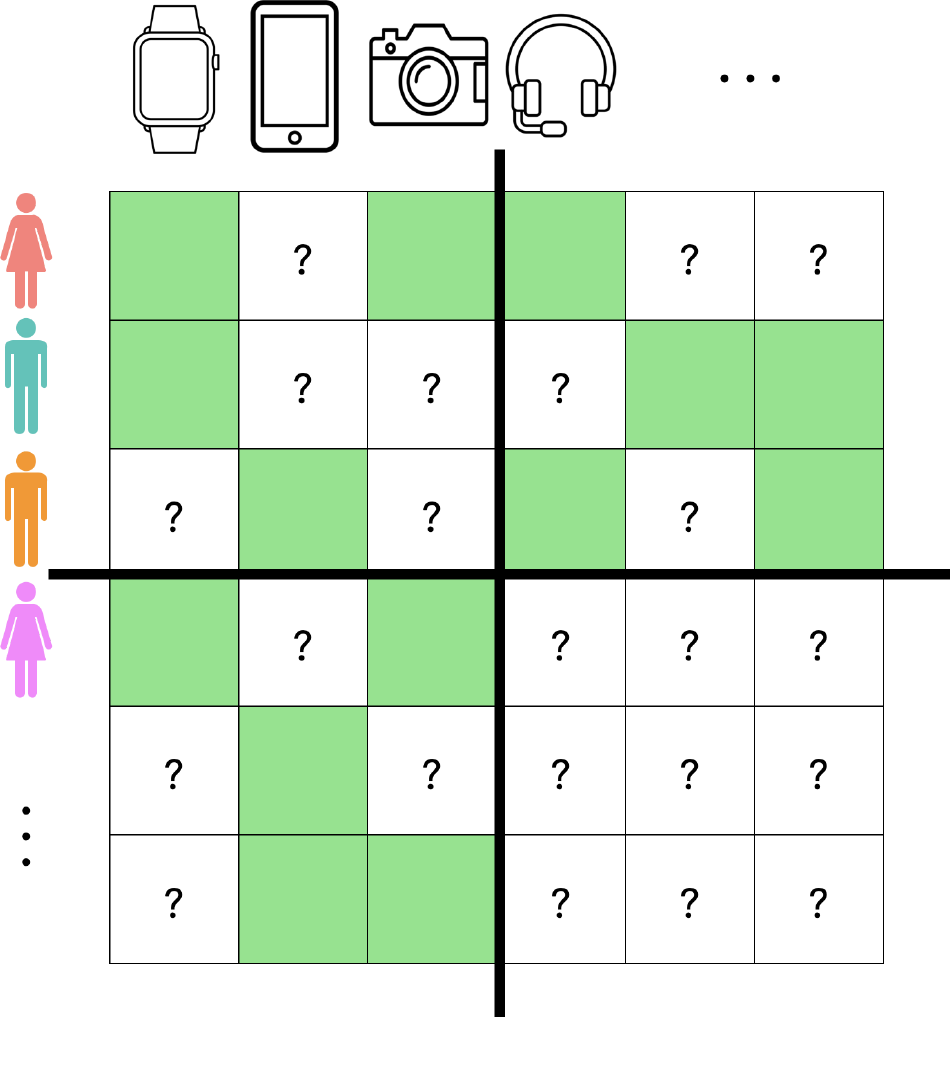}\\  
    $(a)$ $\ta  \otimes \allones^{-\text{Bottom Right}}$ & $(b)$ $\ooz  \otimes \allones^{-\text{Bottom Right}}$ &  $(c)$ $\ooo  \otimes \allones^{-\text{Bottom Right}}$
    \end{tabular}
\caption{
Panels $(a)$, $(b)$, and $(c)$ illustrate the matrices $\ta  \otimes \allones^{-\cI}$, $\ooz  \otimes \allones^{-\cI}$, and $\ooo  \otimes \allones^{-\cI}$ obtained from $\ta$, $\ooz$ and $\ooo$, respectively, for the block partition $\mc P$ in \cref{figure_sample_split}$(a)$ and the block $\cI = \text{Bottom Right}$. Unlike Panels $(b)$ and $(c)$, there exists rows and columns with all entries observed in Panel $(a)$. To enable the application of $\tallwide$ for Panels $(b)$ and $(c)$, we replace missing entries in blocks $\text{Top Left}$, $\text{Top Right}$, and $\text{Bottom Left}$ with zeros.}
\label{figure_algorithm}
\end{figure}

\noindent {\bf Computing $\emta$.} 
The estimate $\emta$ comes from applying $\ssMC$ with $\tallwide$ on $\ta$ and truncating the entries of the resulting matrix to the range $[\lbar, 1-\lbar]$, in accordance with \cref{assumption_pos_estimated}.
The $\tallwide$ sub-routine is directly applicable to $\ta$, because for any block $\cI = \cR_s \times \cC_k \in \cP$ the masked matrix $\ta \otimes \allones^{-\mc I}$ has $[N] \setminus \cR_s$ fully observed rows and $[M] \setminus \cC_k$ fully observed columns.
See \cref{figure_algorithm}$(a)$ for a visualization of $\ta  \otimes \allones^{-\mc I}$. \medskip

\noindent {\bf Computing $\empoz$ and $\empoo$.} 
The estimates $\empoz$ and $\empoo$ are constructed by applying $\ssMC$ with $\tallwide$ on $\barz$ and $\baro$, which do not have missing entries.
$\tallwide$ is not directly applicable on $\ooz$ and $\ooo$, as both matrices may not have any rows and columns that are fully observed. See \cref{figure_algorithm}$(b)$ and \cref{figure_algorithm}$(c)$ for visualizations of $\ooz \otimes \allones^{-\cI}$ and $\ooo \otimes \allones^{-\cI}$, respectively.
However, notice that, due to \cref{assumption_noise}\cref{item_ass_2aa} and \cref{assumption_noise}\cref{item_ass_2bb},
\begin{align}
\Expectation[\barz] & = \Expectation[\oo \odot (\allones - \ta)] = \mpoz \odot (\allones - \mta), 
\shortintertext{and}
\Expectation[\baro] & = \Expectation[\oo \odot \ta] = \mpoo \odot \mta.
\end{align}
As a result, $\mca(\barz)$ and $\mca(\baro)$ provide estimates of $\mpoz \odot (\allones - \mta)$ and $\mpoo \odot \mta$, respectively---recall the discussion in \cref{subsec_mc_primer}. 
To construct $\empoz$ and $\empoo$, we divide the entries of $\mca(\barz)$ and $\mca(\baro)$ by the entries of $(\allones - \emta)$ and $\emta$, respectively, to adjust for heterogeneous probabilities of missingness \citep[see, e.g.,][for related procedures]{jin2021factor,bhattacharya2022matrix,xiong2023large}. This inverse probability of treatment weighting adjustment to estimate $\empoz$ and $\empoo$  is distinct and in addition to the augmented IPW procedure that generates $\ATEDR$ from estimates $\empoz$, $\empoo$ and $\emta$.

\subsection{Theoretical guarantees for \texorpdfstring{$\ssSVD$}{}}
\label{sub:lf}

To establish theoretical guarantees for $\ssSVD$, we adopt three assumptions from \citet{bai2021matrix}. The first assumption imposes a low-rank structure on the matrices $\mta$, $\mpoz$, and $\mpoo$, namely that their entries are given by an inner product of latent factors. 

\begin{assumption}[Linear latent factor model on the confounders]
\label{assumption_lf}
There exist constants $r_{p}, r_{\theta_0}, r_{\theta_1} \in [\min\sbraces{N, M}]$ and a collection of latent factors
\begin{align}
	\matU \in \Reals^{N \times r_p}, 
	\quad \matV \in \Reals^{M \times r_p},
	\quad
	\matU^{(a)} \in \Reals^{N \times r_{\theta_a}},
	\qtext{and} 
	\matV^{(a)} \in \Reals^{M \times r_{\theta_a}}
	\qtext{for} a \in \sbraces{0, 1},
\end{align}
such that the unobserved confounders $(\mpob, \mta)$ satisfy the following factorization:
\begin{align}\label{eq:LFM_explicit}
	\mta = \matU \matV^{\top}
	\qtext{and}
	\mpo = \matU^{(a)} \matV^{(a)\top}
	\qtext{for} a \in \sbraces{0, 1}.
\end{align}
\end{assumption}

\cref{assumption_lf} decomposes each of the unobserved confounders ($\mta$, $\mpoz$, and $\mpoz$) into low-dimensional unit-dependent latent factors ($\matU$, $\matU^{(0)}$, and $\matU^{(1)}$) and measurement-dependent latent factors ($\matV$, $\matV^{(0)}$, and $\matV^{(1)}$). 
In particular, every unit $i \in [N]$ is associated with three low-dimensional factors: $(i)$ $U_i \in \Reals^{r_p}$, $(ii)$ $U_i^{(0)} \in \Reals^{r_{\theta_0}}$, and $(iii)$ $U_i^{(1)} \in \Reals^{r_{\theta_1}}$. Similarly, every measurement $j \in [M]$ is associated with three factors: $(i)$ $V_j \in \Reals^{r_p}$, $(ii)$ $V_j^{(0)} \in \Reals^{r_{\theta_0}}$, and $(iii)$ $V_j^{(1)} \in \Reals^{r_{\theta_1}}$. Low-rank assumptions are standard in the matrix completion literature.

The second assumption requires that the factors that determine $\mta$, $\mpoz \odot (\allones - \mta)$, and $\mpoo \odot \mta$ explain a sufficiently large amount of the variation in the data. This assumption is made on the factors of $\mpoz \odot (\allones - \mta)$ and $\mpoo \odot \mta$ instead of $\mpoz$ and $\mpoo$ as the $\tallwide$ algorithm is applied on $\barz = \oo \odot (\allones - \ta)$ and $\baro = \oo \odot \ta$, instead of $\ooz$ and $\ooo$ (see steps \cref{item_ssSVD_Theta0} and \cref{item_ssSVD_Theta1} of $\ssSVD$). To determine the factors of $\mpoz \odot (\allones - \mta)$ and $\mpoo \odot \mta$, let
\begin{align}
    \wbar{\matU} \defn \normalbrackets{\ones, -\matU} \in \Reals^{N \times (r_p + 1)} \qtext{and} \wbar{\matV} \defn \normalbrackets{\ones[M], \matV} \in \Reals^{M \times (r_p + 1)},
\end{align}
where $\ones \in \Reals^N$ and $\ones[M] \in \Reals^M$ are vectors of all $1$'s.
Then, 
\begin{align}
    \mpoz \odot (\allones - \mta) = \wbar{\matU}^{(0)} \wbar{\matV}^{(0)\top} \qtext{and} \mpoo \odot \mta = \wbar{\matU}^{(1)} \wbar{\matV}^{(1)\top}, \label{eq_factors_theta_p}
\end{align}
where $\wbar{\matU}^{(0)} \defn \wbar{\matU} * \matU^{(0)} \in \Reals^{N \times r_{\theta_0}(r_p + 1)}$, $\wbar{\matV}^{(0)} \defn \wbar{\matV} * \matV^{(0)}  \in \Reals^{M \times r_{\theta_0}(r_p + 1)}$, $\wbar{\matU}^{(1)} \defn \matU * \matU^{(1)} \in \Reals^{N \times r_{\theta_1}r_p}$, and  $\wbar{\matV}^{(1)} \defn \matV * \matV^{(1)} \in \Reals^{M \times r_{\theta_1}r_p}$, with the operator $*$ denoting the Khatri-Rao product (see \cref{section_introduction}).
We provide details of the derivation of these factors in \cref{subsubsec_claim_p_t_factors}.
\begin{assumption}[Strong factors]
\label{assumption_lf_strong}
There exists a positive constant $c$ such that
\begin{align}
    \stwoinfnorm{\matU} \leq c, \quad \stwoinfnorm{\matV} \leq c, \quad \stwoinfnorm{\matU^{(a)}} \leq c, \qtext{and}  \stwoinfnorm{\matV^{(a)}} \leq c \qtext{for} a \in \sbraces{0, 1}.\label{eq_strong_factors}
\end{align}
Further, the matrices defined below exist and are  positive definite: 
\begin{align}
\lim_{N \to \infty} \frac{\matU^{\top} \matU}{N},  \!\!\quad  
\lim_{M \to \infty} \frac{\matV^{\top} \matV}{M}, \!\!\quad 
\lim_{N \to \infty} \frac{\wbar{\matU}^{(a)\top} \wbar{\matU}^{(a)}}{N}, \!\!\qtext{and}
\lim_{M \to \infty} \frac{\wbar{\matV}^{(a)\top} \wbar{\matV}^{(a)}}{M} \!\!\qtext{for} a \in \sbraces{0, 1}. \label{eq_psd_factors}
\end{align}
\end{assumption}
\cref{assumption_lf_strong}, a classic assumption in the literature on latent factor models, ensures that the factor structure is strong. Specifically, it ensures that each eigenvector of $\mta$, $\mpoz \odot (\allones - \mta)$, and $\mpoo \odot \mta$ carries sufficiently large signal.

The third assumption requires a strong factor structure on the sub-matrices of $\mta$, $\mpoz \odot (\allones - \mta)$, and $\mpoo \odot \mta$ corresponding to every block $\cI$ in the block partition $\cP$ from \cref{assumption_block_noise}. Further, it also requires that the size $\cI$ grows linearly in $N$ and $M$.
\begin{assumption}[Strong block factors]
\label{ass_stron_block_factors}
Consider the block partition $\cP \defn \normalbraces{\cR_s \times \cC_k: s,k \in \normalbraces{0,1}}$ from \cref{assumption_block_noise}. For every $s \in \normalbraces{0,1}$, let $\matU_{(s)} \in \Reals^{\normalabs{\cR_s} \times r_p}$, $\wbar{\matU}^{(0)}_{(s)} \in \Reals^{\normalabs{\cR_s} \times r_{\theta_0}(r_p + 1)}$, and $\wbar{\matU}^{(1)}_{(s)} \in \Reals^{\normalabs{\cR_s} \times r_{\theta_1}r_p}$ be the sub-matrices of $\matU$, $\wbar{\matU}^{(0)}$, and $\wbar{\matU}^{(1)}$, respectively, that keeps the rows corresponding to the indices in $\cR_s$. For every $k \in \normalbraces{0,1}$, let $\matV_{(k)} \in \Reals^{\normalabs{\cC_k} \times r_p}$, $\wbar{\matV}^{(0)}_{(k)} \in \Reals^{\normalabs{\cC_k} \times r_{\theta_0}(r_p + 1)}$, and $\wbar{\matV}^{(1)}_{(k)} \in \Reals^{\normalabs{\cC_k} \times r_{\theta_1}r_p}$ be the sub-matrices of $\matV$, $\wbar{\matV}^{(0)}$, and $\wbar{\matV}^{(1)}$, respectively, that keeps the rows corresponding to the indices in $\cC_k$.
    Then, for every $s,k \in \normalbraces{0,1}$, the matrices defined below exist and are positive definite:  
    \begin{align}
    \lim_{N \to \infty} \frac{\matU_{(s)}^{\top} \matU_{(s)}}{\normalabs{\cR_s}},  \!\!\quad  
\lim_{M \to \infty} \frac{\matV_{(k)}^{\top} \matV_{(k)}}{\normalabs{\cC_k}}, \!\!\quad 
\lim_{N \to \infty} \frac{\wbar{\matU}_{(s)}^{(a)\top} \wbar{\matU}_{(s)}^{(a)}}{\normalabs{\cR_s}}, \!\!\qtext{and}
\lim_{M \to \infty} \frac{\wbar{\matV}_{(k)}^{(a)\top} \wbar{\matV}_{(k)}^{(a)}}{\normalabs{\cC_k}} \!\!\qtext{for} a \in \sbraces{0, 1}. \label{eq_strong_block_factors_main}
    \end{align}
    Further, for every $s,k \in \normalbraces{0,1}$, 
    $\normalabs{\cR_s}= \Omega(N)$ and $\normalabs{\cC_k}= \Omega(M)$.
\end{assumption}

The subsequent assumption introduces additional conditions on the noise variables in \cite{bai2021matrix} than those specified in \cref{assumption_noise,assumption_block_noise}.
\begin{assumption}[Weak dependence in noise across measurements and independence in noise across units]
\label{ass_weak_noise}
{\color{white}.}
\begin{enumerate}[itemsep=1mm, topsep=2mm, label=(\alph*)]
    \item\label{item_weak_ass_2aa} $\sum_{j' \in [M]} \bigabs{\Expectation\normalbrackets{\eta_{i,j} \eta_{i,j'}}} \leq c$ for every $i \in [N]$ and $j \in [M]$,
    \item\label{item_weak_ass_2bb} $\sum_{j' \in [M]} \bigabs{\Expectation\normalbrackets{\wbar{\varepsilon}_{i,j}^{(a)} \wbar{\varepsilon}_{i,j'}^{(a)}}} \leq c$ for every $i \in [N]$, $j \in [M]$, and $a \in \normalbraces{0,1}$, where $ \wbar{\varepsilon}_{i,j}^{(a)} \defn \theta_{i,j} \eta_{i,j} + \varepsilon_{i,j}^{(a)} p_{i,j} + \varepsilon_{i,j}^{(a)} \eta_{i,j}$, and
    \item\label{item_weak_ass_2cc}  The elements of $\sbraces{\normalparenth{\noisey_{i,\cdot}, \noisea_{i,\cdot}} : i\in[N]}$ are mutually independent (across $i$) for $a \in \normalbraces{0,1}$.
\end{enumerate}
\end{assumption}
\cref{ass_weak_noise}\cref{item_weak_ass_2aa} and \cref{ass_weak_noise}\cref{item_weak_ass_2bb} requires the noise variables to exhibit only weak dependency across measurements. 
Still, these assumptions allow the existence of pairs of perfectly correlated outcomes (e.g., $j,j'\in [M]$ such that $a_{i,j}=a_{i,j'}$).   \cref{ass_weak_noise}\cref{item_weak_ass_2cc}  requires the noise $\normalparenth{\noisey, \noisea}$ to be jointly independent across units, for every $a \in \normalbraces{0,1}$.
We are now ready to provide guarantees on the estimates produced by $\ssSVD$.
The proof can be found in \cref{proof_bai_ng_prop}.
\begin{proposition}[\textbf{\propsssvd}]
\label{alg_guarantee}
Suppose \cref{assumption_pos,assumption_noise,assumption_lf,assumption_lf_strong,ass_stron_block_factors,ass_weak_noise} hold. 
\ifx\submission\arxiv
Consider an asymptotic sequence such that $\thetamax$ is bounded as both $N$ and $M$ increase.
\fi
\ifx\submission\ec
Suppose $\thetamax$ is bounded. 
\fi
Let $\emta$, $\empoz$, and $\empoo$ be the estimates returned by $\ssSVD$ with the block partition $\cP$ from \cref{assumption_block_noise}, $r_1 = r_p$, $r_2 = r_{\theta_0}(r_p + 1)$,  $r_3 = r_{\theta_1}r_p$, and any $\lbar$ such that $0 < \lbar \leq \lambda$ with $\lambda$ denoting the constant from \cref{assumption_pos}.
Then, as $N, M  \to \infty$,
\begin{align}
\RP = O_p\biggparenth{\frac{1}{\sqrt{N}} + \frac{1}{\sqrt{M}}} \qtext{and} \RTheta = O_p\biggparenth{\frac{1}{\sqrt{N}} + \frac{1}{\sqrt{M}}}.
\end{align}
\end{proposition}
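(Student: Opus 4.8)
The plan is to reduce the guarantee for $\ssSVD$ to the per-block error guarantees of the $\tallwide$ algorithm of \citet{bai2021matrix}, and then to propagate those errors through the truncation and the inverse-probability-weighting divisions.

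First, I would record the low-rank structure of the three matrices to which $\tallwide$ is actually applied inside $\ssSVD$. By \cref{assumption_lf}, $\mta = \matU\matV^\top$ has rank at most $r_p$, while \cref{eq_factors_theta_p} expresses $\mpoz\odot(\allones-\mta)=\wbar{\matU}^{(0)}\wbar{\matV}^{(0)\top}$ and $\mpoo\odot\mta=\wbar{\matU}^{(1)}\wbar{\matV}^{(1)\top}$ as products of Khatri--Rao factors of ranks at most $r_{\theta_0}(r_p+1)$ and $r_{\theta_1}r_p$; these are exactly the ranks $r_1,r_2,r_3$ supplied to $\tallwide$. I would then check that $\barz$ and $\baro$ are noisy observations of these matrices in the sense of \cref{eq_mc_main}: using $a_{i,j}\in\{0,1\}$, a short computation gives $\Expectation[\barz]=\mpoz\odot(\allones-\mta)$ and $\Expectation[\baro]=\mpoo\odot\mta$, with residuals equal to the composite variables $\wbar{\varepsilon}^{(a)}_{i,j}$ whose weak cross-measurement dependence is controlled by \cref{ass_weak_noise}\cref{item_weak_ass_2bb}; similarly $\ta=\mta+\noisea$ has noise controlled by \cref{ass_weak_noise}\cref{item_weak_ass_2aa}.

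Next, for each of the four blocks $\cI=\cR_s\times\cC_k\in\cP$, the masked input $\matS\otimes\allones^{-\cI}$ formed inside $\ssMC$ has the complementary row block $[N]\setminus\cR_s$ fully observed and the complementary column block $[M]\setminus\cC_k$ fully observed, so $\tallwide$ applies with its tall and wide sub-matrices drawn from these complementary blocks. The strong-factor conditions (\cref{assumption_lf_strong}), the strong-block-factor conditions (\cref{ass_stron_block_factors}, which also forces $\normalabs{\cR_s}=\Omega(N)$ and $\normalabs{\cC_k}=\Omega(M)$), the correctly specified ranks, and the weak-noise conditions (\cref{ass_weak_noise}) together match the hypotheses of the $\tallwide$ error theorem of \citet{bai2021matrix}. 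Invoking that theorem block-by-block yields, for each masked block, an estimation error of order $O_p(N^{-1/2}+M^{-1/2})$ in the relevant column-wise norm. Since $\cP$ has only four blocks and the squared $(1,2)$-norm residual in any fixed column $j$ splits as a sum over $\cR_0$ and $\cR_1$ of two such block contributions, stitching the blocks preserves the rate and gives $\onetwonorm{\ssMC(\tallwide_{r_1},\ta,\cP)-\mta}/\sqrt{N}=O_p(N^{-1/2}+M^{-1/2})$, together with the analogous bounds for the numerator estimates $\ssMC(\tallwide_{r_2},\barz,\cP)$ and $\ssMC(\tallwide_{r_3},\baro,\cP)$. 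For $\RP$, the truncation $\texttt{Proj}_{\lbar}$ is a projection onto $[\lbar,1-\lbar]$, which contains every true $p_{i,j}\in[\lambda,1-\lambda]$ because $\lbar\leq\lambda$; as projection onto a convex set is non-expansive, it cannot increase the distance to $\mta$, so $\RP=O_p(N^{-1/2}+M^{-1/2})$.

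Finally, I would propagate the numerator and denominator errors through the element-wise divisions defining $\empoz$ and $\empoo$. Writing $\empoz=\what{G}_0\odiv(\allones-\emta)$ with $\what{G}_0\defn\ssMC(\tallwide_{r_2},\barz,\cP)$ the estimate of $\mpoz\odot(\allones-\mta)$, and using $\theta^{(0)}_{i,j}=[\mpoz\odot(\allones-\mta)]_{i,j}/(1-p_{i,j})$, an entrywise manipulation gives
\begin{align*}
\bigabs{\what{\theta}_{i,j}^{(0)}-\theta_{i,j}^{(0)}}
\leq \frac{1}{\lbar}\Bigparenth{\bigabs{(\what{G}_0)_{i,j}-[\mpoz\odot(\allones-\mta)]_{i,j}}+\thetamax\,\bigabs{\what{p}_{i,j}-p_{i,j}}},
\end{align*}
where $1-\what{p}_{i,j}\geq\lbar$ by \cref{assumption_pos_estimated} and $\thetamax$ is bounded by hypothesis. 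Taking the column-wise $(1,2)$-norm and dividing by $\sqrt{N}$ bounds $\Rcol\bigparenth{\empoz}$ by $\lbar^{-1}$ times the sum of the numerator error and $\thetamax\,\RP$, both of which are $O_p(N^{-1/2}+M^{-1/2})$; the same argument handles $\empoo$, and summing over $a\in\{0,1\}$ yields $\RTheta=O_p(N^{-1/2}+M^{-1/2})$. The main obstacle is the per-block invocation of the $\tallwide$ guarantee: one must verify that the composite noise $\wbar{\varepsilon}^{(a)}$ in $\barz,\baro$ (rather than the raw $\varepsilon^{(a)}$) satisfies the Bai--Ng regularity conditions, and that the theorem delivers control of the worst column uniformly over the growing number of columns --- precisely what the strong-factor and weak-dependence assumptions are designed to guarantee.
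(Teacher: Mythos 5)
Your proposal follows essentially the same route as the paper's proof: invoke the Tall--Wide guarantee of \citet{bai2021matrix} block-by-block inside the cross-fitting scheme, stitch the four blocks together, note that the truncation $\texttt{Proj}_{\lbar}$ is entrywise non-expansive because $[\lambda,1-\lambda]\subseteq[\lbar,1-\lbar]$, and propagate numerator and denominator errors through the element-wise division using $\what{p}_{i,j},\,1-\what{p}_{i,j}\geq\lbar$ and the boundedness of $\thetamax$. Your entrywise identity for the division step is in fact slightly cleaner than the paper's two-case argument on max-norms, and the two are algebraically equivalent; your stitching and projection arguments also match the paper's.

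The genuine gap is the step you explicitly defer, because that is where the paper spends most of its effort. The statement ``this is precisely what the strong-factor and weak-dependence assumptions are designed to guarantee'' is not a proof; the paper must (i) isolate exactly which of Bai--Ng's Assumptions A--C are actually used in their Lemma A.1 (their distinct-eigenvalue and asymptotic-normality conditions are not), restate the remaining noise requirements as \cref{ass_H_noise_extra}, and verify them from weak cross-measurement dependence plus row independence; (ii) prove that $\mpoo\odot\mta$ and $\mpoz\odot(\allones-\mta)$ inherit \emph{strong} factors, i.e., bound $\twoinfnorm{\matU * \matU^{(1)}}$ and establish positive-definite limits for the Khatri--Rao factors, not merely record the factorization and its rank; and (iii) prove that the composite noise $\wbar{\varepsilon}^{(1)}=\mpoo\odot\noisea+\noiseyo\odot\mta+\noiseyo\odot\noisea$ is zero-mean and sub-exponential (via the product-of-sub-Gaussians lemma and closure of sub-exponentials under sums) with the required weak dependence --- which is exactly why \cref{ass_weak_noise}\cref{item_weak_ass_2bb} is stated in terms of $\wbar{\varepsilon}^{(a)}$ rather than the raw $\varepsilon^{(a)}$. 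Without carrying out these verifications, the per-block invocation of the Tall--Wide theorem (and hence the uniform max-norm control over the missing block on which your whole argument rests) is asserted rather than established.
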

\cref{alg_guarantee} implies that the conditions \cref{item_error_for_normality,item_product_of_error_for_normality} in \cref{thm_normality} hold whenever $N^{1/2}/M=o(1)$.
Then, the DR estimator from \cref{eq_counterfactual_mean_dr} constructed using $\ssSVD$  estimates $\empoz$, $\empoo$, and $\emta$ exhibits an asymptotic Gaussian distribution centered at the target causal estimand.  Further, 
\cref{alg_guarantee} implies that the estimation errors $\RP$ and $\RTheta$ achieve the parametric rate whenever $N/M = O(1)$.

\subsection{Application to panel data with staggered adoption}
\label{sec_stag_adop_short}
\cref{sub:kbb} considered a setting with block independence between noise (formalized in \cref{assumption_block_noise}). 
\cref{sec_stag_adop_app} discusses how to extend the proposed doubly-robust framework to a setting of panel data with staggered adoption, where this assumption may not hold.
Recall (from \cref{extension_delayed}) that in the panel data setting $M$ measurements correspond to $T$ time periods, and $t$ denotes the time index. Then, \cref{sec_stag_adop_app} considers a setting where a unit remains under control for some period of time, after which it deterministically remains under treatment. In other words, for every unit $i \in [N]$, there exists a time point $t_i \in [T]$ such that $a_{i, t} = 0$ for $t \leq t_i$, and $a_{i, t} = 1$ for $t > t_i$. Such a treatment assignment pattern leads to a heavy dependence in the noise $\{\eta_{i, t}\}_{t \in [T]}$ for every unit $i \in [N]$. \cref{sec_stag_adop_app} describes an alternative approach to the $\ssSVD$ algorithm and shows that \cref{assumption_estimates} still holds for a suitable staggered adoption model.

\section{Simulations}
\label{sec_simulations}
This section reports simulation results on the performance of the DR estimator of \cref{eq_counterfactual_mean_dr} and the OI and IPW  estimators of \cref{eq_counterfactual_mean_oi,eq_counterfactual_mean_ipw}, respectively.\medskip

\noindent {\bf Data Generating Process (DGP).}
We now briefly describe the DGP for our simulations; \cref{app_dgp} provides details.
All simulations set $N=M$. 
To generate, $\mta$, $\mpoz$, and $\mpoo$, we use the latent factor model given in \cref{eq:LFM_explicit}.
To introduce unobserved confounding, we set the unit-specific latent factors to be the same across $\mta$, $\mpoz$, and $\mpoo$, i.e., $U = U^{(0)} = U^{(1)}$.
The entries of $U$ and the measurement-specific latent factors, $V, V^{(0)}, V^{(1)}$ are each sampled independently from a uniform distribution, with hyperparameter $r_p$ equal to the dimension of $U$ and $V$, and hyperparameter $r_p$ equal to the dimension of $U^{(a)}$ and $V^{(a)}$ for $a=0,1$. 
Further, the entries of the noise matrices $\noiseyz$ and $\noiseyo$ are sampled independently from a normal distribution, and the entries of $\noisea$ are sampled independently as in \cref{eq_eta_defn}. Then, $y^{(a)}_{i, j}$, $a_{i,j}$, and $y_{i,j}$ are determined from \cref{eq_outcome_model,eq_treatment_model,eq_consistency}, respectively. 
The simulation generates $\mta$, $\mpoz$, and $\mpoo$ once. Given the fixed values of $\mta$, $\mpoz$, and $\mpoo$, the simulation generates $2500$ realizations of $(Y,A)$---that is, only the noise matrices $\noiseyz, \noiseyo, \noisea$ are resampled for each of the $2500$ realizations.
For each simulation realization, we apply 
the $\ssSVD$ algorithm with hyper-parameters as in \cref{alg_guarantee} and $\lbar = \lambda = 0.05$ to obtain
 $\emta$, $\empoz$, and $\empoo$, and compute $\ATETrue$ from \cref{eq_ate_parameter_combined}, and $\ATEOI$, $\ATEIPW$ and $\ATEDR$ from
\cref{eq_counterfactual_mean_oi,eq_counterfactual_mean_ipw,eq_counterfactual_mean_dr}. 

\begin{figure}[t!]
    \centering
    \begin{tabular}{cc}
    \includegraphics[width=0.45\linewidth,clip]{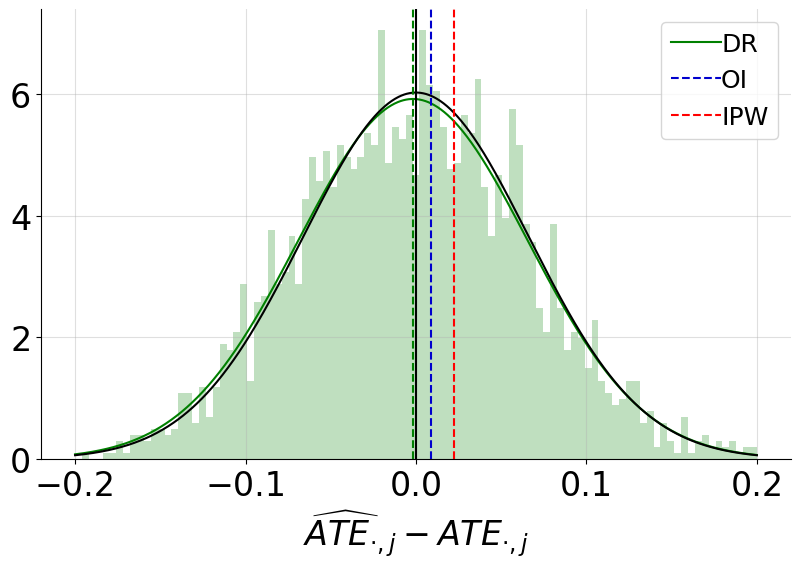} &
    \includegraphics[width=0.45\linewidth,clip]{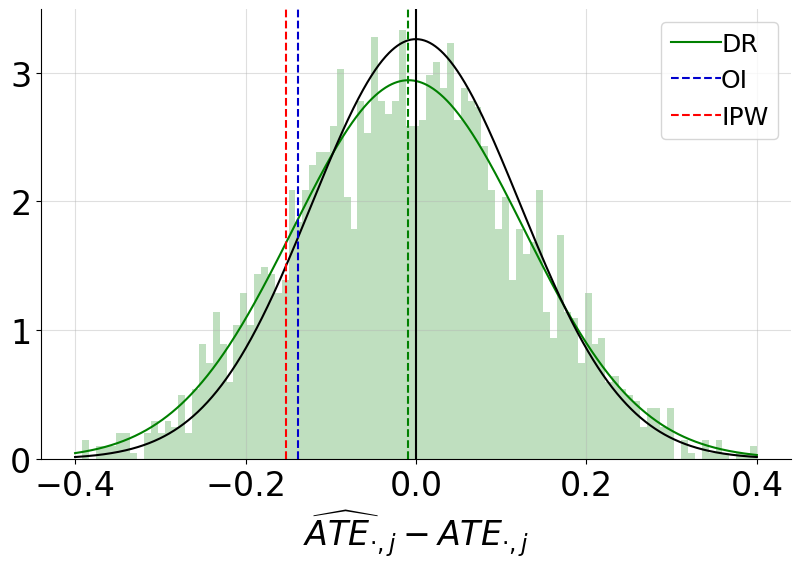}\\
    $(a)$ $r_p = 3 $, $ r_{\theta} = 3$ & $(b)$ $r_p = 5 $, $ r_{\theta} = 3$
    \end{tabular}
\caption{Empirical illustration of the asymptotic performance of DR as in \cref{thm_normality}. The histogram corresponds to the errors of 2500 independent instances of DR estimates, the green curve represents the (best) fitted Gaussian distribution, and the black curve represents the Gaussian approximation from \cref{thm_normality}. The dashed green, blue, and red lines represent the biases of DR, OI, and IPW estimators.
}
\label{figure_thm2_simulations}
\end{figure}

\begin{figure}[ht!]
    \centering
    \begin{tabular}{cc}
    \includegraphics[width=0.45\linewidth,clip]{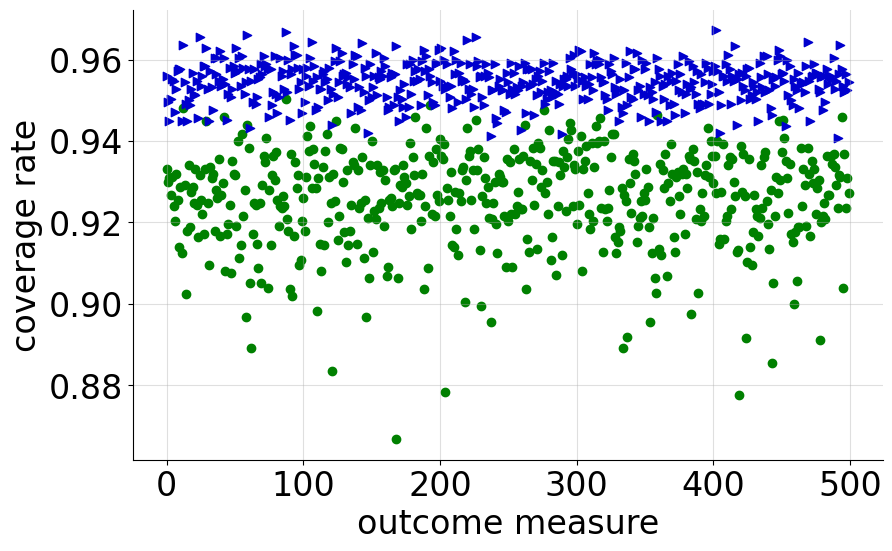} &
    \includegraphics[width=0.45\linewidth,clip]{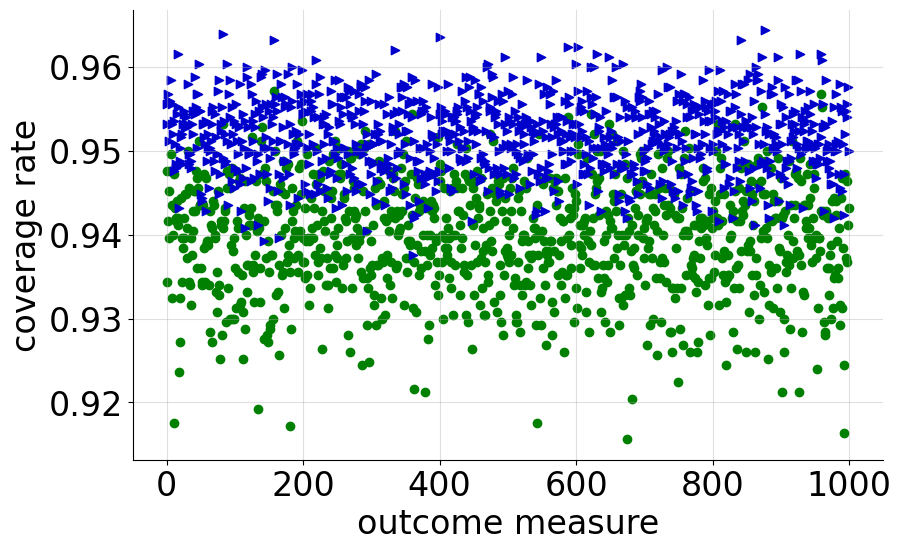}\\
    $(a)$ $N = 500$ & $(b)$ $N = 1000$ \\[1.5ex] \includegraphics[width=0.45\linewidth,clip]{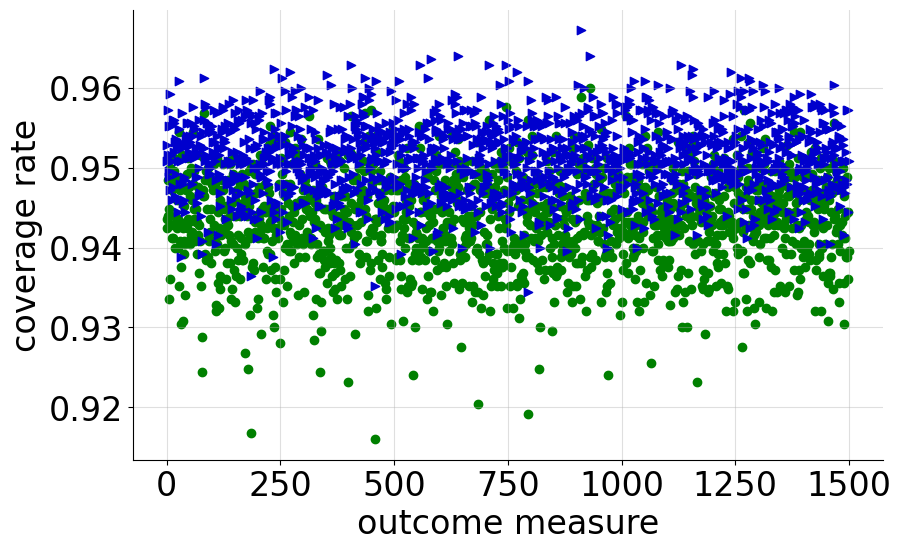} &
    \includegraphics[width=0.45\linewidth,clip]{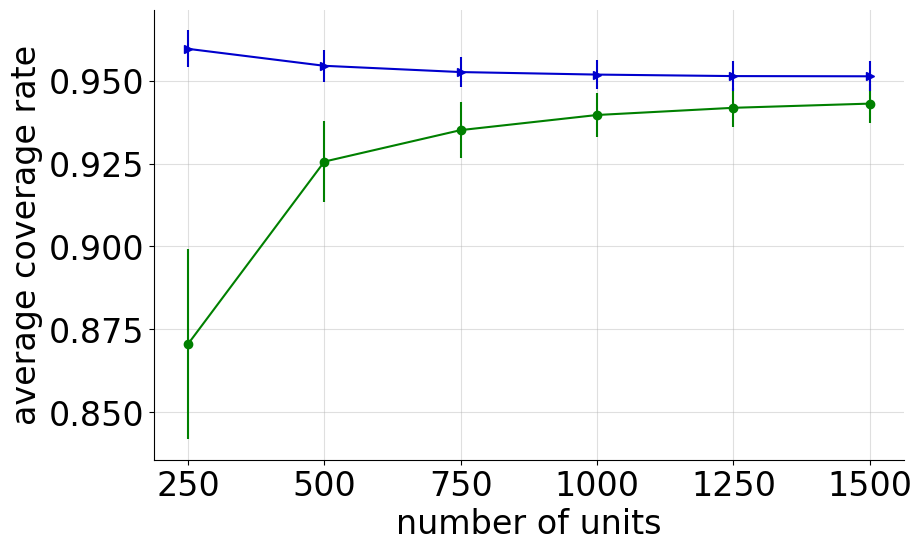} \\
    $(c)$ $N = 1500$ & $(d)$ Average coverage across outcomes
    \end{tabular}
    \vspace{2mm}
    \centering
    \vspace{2mm}
    \includegraphics[width=0.20\linewidth,clip]{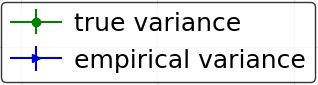} 
\caption{Panels $(a)$, $(b)$, and $(c)$ report  coverage rates for  nominal $95\%$ confidence intervals constructed using the estimated variance from \cref{eq_estimated_variance} (in blue) and the true variance from \cref{eq_normality_variance_thm} (in green) for $N \in \normalbraces{500, 1000, 1500}$ and $M=N$. Panel $(d)$ shows the means and standard deviations of coverage rates across outcomes for different values of $N$.}
\label{figure_coverage}
\end{figure}
\medskip

\noindent {\bf Results.}
\cref{figure_thm2_simulations} reports simulation results for $N = 1000$, with $r_p = 3$, $ r_{\theta} = 3$ in Panel $(a)$, and $r_p = 5$, $ r_{\theta} = 3$ in Panel $(b)$. \cref{figure_comparison} in \cref{sec_estimators} reports simulation results for $r_p = 3$, $ r_{\theta} = 5$. In each case, the figure shows a histogram of the distribution of $\ATEDR - \ATETrue$ across $2500$ simulation instances for a fixed $j$, along with the best fitting Gaussian distribution (green curve). The histogram counts are normalized so that the area under the histogram integrates to one. \cref{figure_thm2_simulations} plots the Gaussian distribution in the result of \cref{thm_normality} (black curve). 
The dashed blue, red and green lines in \cref{figure_thm2_simulations,figure_comparison} indicate the values of the means of the OI, IPW, and DR error, respectively, across simulation instances. For reference, we place a black solid line at zero.
The DR estimator has minimal bias and a close-to-Gaussian distribution. The biases of OI and IPW are non-negligible. In \cref{app_dgp}, we compare the biases and the standard deviations of OI, IPW, and DR across many $j$.

Panels $(a)$, $(b)$, and $(c)$ of \cref{figure_coverage} report coverage rates over the 2500 simulations for $\ATEDR$-centered nominal $95\%$ confidence intervals with  $N = 500$, $N = 1000$, and $N = 1500$, respectively, all with $M=N$ and $r_p = r_{\theta} = 3$. For every $j \in [M]$, panels $(a)$, $(b)$ and $(c)$ show $\what{\mathrm{c}}_j$, the percentage of times $[\ATEDR \pm 1.96 \what{\sigma}_j/\sqrt{N}]$ covers $\ATETrue$ (in blue), and ${\mathrm{c}}_j$, the percentage of times $[\ATEDR \pm 1.96 {\sigma}_j/\sqrt{N}]$ covers $\ATETrue$ (in green). Panel $(d)$ shows the means and standard deviations of $\normalbraces{\what{\mathrm{c}}_j}_{j \in [M]}$ and $\normalbraces{{\mathrm{c}}_j}_{j \in [M]}$ for different values of $N$.  
Confidence intervals based on the large-sample approximation results of \cref{sec_main_results} exhibit small size distortion even for fairly small values of $N$.

\section{Conclusion}
\label{sec_discussion}

This article introduces a new framework to estimate treatment effects in the presence unobserved confounding. We consider modern data-rich environments, where there are many units, and outcomes of interest per unit. We show it is possible to control for the confounding effects of a set of latent variables when this set is low-dimensional relative to the number of observed treatments and outcomes.

Our proposed estimator is doubly-robust, combining outcome imputation and inverse probability weighting with matrix completion. Analytical tractability of its distribution is gained through 
a novel cross-fitting procedure for causal matrix completion. We study the properties of the doubly-robust estimator, along with the outcome imputation and inverse probability weighting-based estimators under black-box matrix completion error rates. We show that the decay rate of the error of the doubly-robust estimator dominates those of the outcome imputation and the inverse probability weighting estimators. Moreover, we establish a Gaussian approximation to the distribution of the doubly-robust estimator. 
Simulation results demonstrate the practical relevance of the formal properties of the doubly-robust estimator.

\pagebreak

\appendix
\section*{Appendices}
\renewcommand{\thesubsection}{\thesection.\arabic{subsection}}
\renewcommand{\theequation}{A.\arabic{equation}}
\setcounter{myassumption}{0}
\renewcommand{\themyassumption}{A.\arabic{myassumption}}
\setcounter{equation}{0}
\counterwithin{mylemma}{section}
\counterwithin{myproposition}{section}
\counterwithin{mycorollary}{section}
\section{Supporting Concentration and Convergence Results}

This section presents known results on $\subGaussian$, $\subExponential$, and $\subWeibull$ random variables (defined below), along with few basic results on convergence of random variables.

We use $\subG[\sigma]$ to represent a $\subGaussian$ random variable, where $\sigma$ is a bound on the $\subGaussian$ norm; and $\subE[\sigma]$ to represent a $\subExponential$ random variable,  where $\sigma$ is a bound on the $\subExponential$ norm. Recall the definitions of the norms from \cref{section_introduction}. 

\begin{lemma}[$\subGaussian$ concentration: Theorem 2.6.3 of \citet{vershynin2018high}]
\label{subg_conc}
    Let $x \in \Reals^{n}$ be a random vector whose entries are independent, zero-mean, $\subG[\sigma]$ random variables. 
    Then, for any $b \in \Reals^n$ and $t \geq 0$,
    \begin{align}
        \Probability\Bigbraces{\bigabs{b^\top x} \geq t} \leq 2\exp\Bigparenth{\frac{-ct^2}{\sigma^2 \stwonorm{b}^2}}.
    \end{align}
\end{lemma}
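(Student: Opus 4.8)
The plan is to prove this via the standard exponential-moment (Chernoff) method, the canonical route for Hoeffding-type tail bounds. First I would recall the equivalence between sub-Gaussian characterizations (e.g., Proposition~2.5.2 of \citet{vershynin2018high}): for a mean-zero random variable, a bound $\snorm{x_i}_{\psi_2} \leq \sigma$ on the sub-Gaussian norm implies a moment-generating-function control of the form $\Expectation[\exp(s x_i)] \leq \exp(c_1 \sigma^2 s^2)$ for every $s \in \Reals$ and some universal constant $c_1 > 0$. This is the step where the zero-mean hypothesis is used.

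Next I would exploit the independence of the coordinates. Writing $b^\top x = \sum_{i \in [n]} b_i x_i$, independence factorizes the moment generating function, so that for every $s \in \Reals$,
\[
\Expectation\bigbraces{\exp(s\, b^\top x)} = \prod_{i \in [n]} \Expectation\bigbraces{\exp(s b_i x_i)} \leq \prod_{i \in [n]} \exp\bigparenth{c_1 \sigma^2 b_i^2 s^2} = \exp\bigparenth{c_1 \sigma^2 \stwonorm{b}^2 s^2}.
\]
Thus $b^\top x$ is itself a mean-zero sub-Gaussian variable with proxy variance proportional to $\sigma^2 \stwonorm{b}^2$.

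Then I would apply Markov's inequality to the exponentiated variable and optimize over the free parameter. For any $s > 0$,
\[
\Probability\bigbraces{b^\top x \geq t} \leq e^{-st}\, \Expectation\bigbraces{\exp(s\, b^\top x)} \leq \exp\bigparenth{-st + c_1 \sigma^2 \stwonorm{b}^2 s^2}.
\]
Choosing $s = t/(2 c_1 \sigma^2 \stwonorm{b}^2)$ minimizes the exponent and yields $\Probability\bigbraces{b^\top x \geq t} \leq \exp\bigparenth{-c\, t^2/(\sigma^2 \stwonorm{b}^2)}$ with $c = 1/(4 c_1)$. Applying the identical argument to $-x$ (equivalently, replacing $b$ by $-b$) controls the lower tail $\Probability\bigbraces{b^\top x \leq -t}$ by the same quantity, and a union bound over the two one-sided events produces the factor of $2$, giving the claim.

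There is no genuine obstacle here, as this is a textbook result; the only step requiring care is the first one---translating the $\psi_2$-norm bound into an MGF bound with the correct universal constant---which is why I would invoke the standard sub-Gaussian equivalence lemma rather than rederive it from the definition $\snorm{x}_{\psi_2} = \inf\normalbraces{t > 0: \Expectation[\exp(x^2/t^2)] \leq 2}$ given in \cref{section_introduction}. Tracking the universal constant $c$ through the optimization is routine, and its precise value is immaterial for the way the lemma is applied in this article.
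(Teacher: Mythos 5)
Your proof is correct and follows the canonical route. Note that the paper itself offers no proof of this lemma---it is stated verbatim as Theorem 2.6.3 of \citet{vershynin2018high} and used as a black box---and your argument (sub-Gaussian norm to MGF bound, factorization by independence, Markov's inequality with optimization over the exponent, and a union bound over the two tails) is precisely the standard proof found in that reference, so there is no discrepancy to reconcile.
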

\noindent The following corollary expresses the bound in \cref{subg_conc} in a convenient form.
\begin{corollary}[$\subGaussian$ concentration]
\label{cor_subg_conc}
    Let $x \in \Reals^{n}$ be a random vector whose entries are independent, zero-mean, $\subG[\sigma]$ random variables. 
    Then, for any $b \in \Reals^n$ and any $\delta \in (0,1)$, with probability at least $1 - \delta$,
    \begin{align}
    \bigabs{b^\top x} \leq \sigma \sqrt{c \ld} \cdot \stwonorm{b}.
\end{align} 
\end{corollary}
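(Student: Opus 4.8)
The plan is to derive the stated high-probability bound directly from the tail bound in \cref{subg_conc} by calibrating the threshold $t$ to the confidence level $\delta$. First, I would apply \cref{subg_conc} with the given vector $b$, which yields, for every $t \geq 0$,
\[
\Probability\Bigbraces{\bigabs{b^\top x} \geq t} \leq 2\exp\Bigparenth{\frac{-ct^2}{\sigma^2 \stwonorm{b}^2}}.
\]

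Next, I would force the right-hand side to be at most $\delta$. Setting $2\exp(-ct^2/(\sigma^2\stwonorm{b}^2)) \leq \delta$ and taking logarithms gives $ct^2/(\sigma^2\stwonorm{b}^2) \geq \log(2/\delta) = \ld$, that is, $t \geq \sigma \stwonorm{b}\sqrt{\ld/c}$, where I have used the definition $\ld = \log(2/\delta)$ from \cref{section_introduction}. Choosing $t = \sigma\stwonorm{b}\sqrt{\ld/c}$ therefore makes the tail probability at most $\delta$, so that with probability at least $1-\delta$ the complementary event $\bigbraces{\bigabs{b^\top x} < t}$ holds.

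Finally, I would simplify the constant. Since $c$ denotes a generic universal constant whose value may change between instances (per \cref{section_introduction}), the factor $1/c$ inside the square root can be reabsorbed into the same symbol, turning $\sqrt{\ld/c}$ into $\sqrt{c\,\ld}$ and yielding the claimed bound $\bigabs{b^\top x} \leq \sigma\sqrt{c\,\ld}\cdot\stwonorm{b}$. The only point requiring care---and the closest thing to an obstacle in an otherwise routine argument---is this bookkeeping of the generic constant: the $c$ appearing inside the radical in the corollary is the reciprocal of the $c$ in \cref{subg_conc}, which is legitimate under the stated convention. Beyond that, the proof is a one-line inversion of the sub-Gaussian tail, and no further probabilistic input is needed.
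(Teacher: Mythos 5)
Your proposal is correct and is essentially identical to the paper's own proof, which likewise inverts the tail bound of \cref{subg_conc} by choosing $\delta = 2\exp\normalparenth{-ct^2/\sigma^2 \stwonorm{b}^2}$ and solving for $t$. Your explicit remark about reabsorbing the reciprocal of $c$ into the generic constant is a fair point of bookkeeping that the paper leaves implicit.
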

\begin{proof}
    The proof follows from \cref{subg_conc} by choosing $\delta \defn 2\exp\normalparenth{-ct^2/\sigma^2 \stwonorm{b}^2}$.
\end{proof}

\begin{lemma}[$\subExponential$ concentration: Theorem 2.8.2 of \citet{vershynin2018high}]
\label{sube_conc}
    Let $x \in \Reals^{n}$ be a random vector whose entries are independent, zero-mean, $\subE[\sigma]$ random variables. 
    Then, for any $b \in \Reals^n$ and $t \geq 0$,
    \begin{align}
        \Probability\Bigbraces{\bigabs{b^\top x} \geq t} \leq 2\exp\biggparenth{-c\min\Bigparenth{\frac{t^2}{\sigma^2 \stwonorm{b}^2}, \frac{t}{\sigma \sinfnorm{b}}}}.
    \end{align}
\end{lemma}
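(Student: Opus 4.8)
The plan is to establish this two-regime tail bound --- a form of Bernstein's inequality --- by the Chernoff / moment-generating-function method. The paper's own definition of a mean-zero $\subExponential$ variable already supplies the key input: each coordinate $x_i$ satisfies $\Expectation[\exp(s x_i)] \leq \exp(b_1^2 s^2 / 2)$ on the restricted range $-1/b_2 < s < 1/b_2$, where $b_1, b_2$ are both controlled by the $\subExponential$ norm $\sigma$ up to universal constants. First I would record this coordinate-wise MGF bound, carefully tracking the admissible range of the dual variable.

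Next I would exploit the independence of the entries of $x$ to control the MGF of the linear form $b^\top x = \sum_i b_i x_i$. By independence the MGF factorizes, and applying the coordinate bound to each $s b_i$ gives
\[
\Expectation\bigbrackets{\exp(s\, b^\top x)} = \prod_i \Expectation\bigbrackets{\exp(s b_i x_i)} \leq \exp\Bigparenth{\tfrac{b_1^2 s^2}{2}\, \stwonorm{b}^2},
\]
valid whenever $|s b_i| < 1/b_2$ for every $i$, i.e. whenever $|s| < 1/(b_2\, \sinfnorm{b})$. Markov's inequality applied to $\exp(s\, b^\top x)$ then yields, for $s$ positive in this range, $\Probability\{b^\top x \geq t\} \leq \exp\bigparenth{-st + \tfrac{b_1^2 s^2}{2}\, \stwonorm{b}^2}$.

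The crux is the constrained minimization of this exponent over $s \in (0,\, 1/(b_2\, \sinfnorm{b}))$, which is exactly what produces the two regimes. The unconstrained minimizer is $s^{*} = t/(b_1^2\, \stwonorm{b}^2)$. When $s^{*}$ falls inside the admissible range --- the small-$t$, sub-Gaussian regime --- substituting $s = s^{*}$ gives the bound $\exp\bigparenth{-t^2/(2 b_1^2\, \stwonorm{b}^2)}$. When $s^{*}$ exceeds the boundary --- the large-$t$ regime --- I would instead set $s$ to the boundary value $1/(b_2\, \sinfnorm{b})$; since $s \leq s^{*}$ implies $b_1^2 s\, \stwonorm{b}^2 \leq t$, the quadratic term is then at most half the linear term and the exponent is bounded by $-t/(2 b_2\, \sinfnorm{b})$. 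Taking the larger of the two regime bounds and absorbing $b_1, b_2$ into the universal constant $c$ delivers the one-sided inequality $\exp\bigparenth{-c \min\{t^2/(\sigma^2 \stwonorm{b}^2),\, t/(\sigma\, \sinfnorm{b})\}}$.

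Finally, applying the identical argument to the random vector $-x$ (which is again mean-zero with the same $\subExponential$ norm) controls the lower tail, and a union bound over the two tails supplies the factor of $2$ and the stated two-sided bound. I expect the boundary analysis in the large-$t$ regime --- verifying that at the constraint the linear term dominates, so the exponent degrades only linearly rather than quadratically in $t$ --- to be the one genuinely delicate step; the remainder is bookkeeping and constant-tracking.
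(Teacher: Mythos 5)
Your proposal is correct, and since the paper states this lemma as an imported result (Theorem 2.8.2 of \citet{vershynin2018high}) without giving any proof, the right comparison is to the cited source: your Chernoff/MGF argument with the two-regime optimization over the constrained dual variable is precisely the standard proof given there. The only input you rely on beyond bookkeeping is the equivalence (up to universal constants) between the paper's $\psi_1$-norm definition of $\subE[\sigma]$ and the restricted-range MGF bound, which you correctly flag and which is itself a standard fact (Proposition 2.7.1 in \citet{vershynin2018high}).
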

\noindent The following corollary expresses the bound in \cref{sube_conc} in a convenient form.
\begin{corollary}[$\subExponential$ concentration]
\label{cor_sube_conc}
    Let $x \in \Reals^{n}$ be a random vector whose entries are independent, zero-mean, $\subE[\sigma]$ random variables. 
    Then, for any $b \in \Reals^n$ and any $\delta \in (0,1)$, with probability at least $1 - \delta$,
    \begin{align}
    \bigabs{b^\top x} \leq \sigma \m(c \ld)  \cdot \stwonorm{b},
\end{align} 
where recall that $\m(c \ld) = \max\bigparenth{c \ld, \sqrt{c \ld}}$.
\end{corollary}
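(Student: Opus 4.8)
The plan is to follow the same one-line strategy as the proof of \cref{cor_subg_conc}: invert the tail bound of \cref{sube_conc} to find the threshold $t$ at which the right-hand side equals $\delta$, and then read off that $\bigabs{b^\top x} < t$ with probability at least $1-\delta$. The only feature distinguishing this from the $\subGaussian$ case is that the exponent in \cref{sube_conc} is a minimum of a quadratic term (scaled by $\stwonorm{b}$) and a linear term (scaled by $\sinfnorm{b}$), so the inversion must be carried out piecewise.

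First I would eliminate $\sinfnorm{b}$ in favor of $\stwonorm{b}$. Since $\sinfnorm{b} \leq \stwonorm{b}$, we have $t/(\sigma\sinfnorm{b}) \geq t/(\sigma\stwonorm{b})$, so replacing $\sinfnorm{b}$ by $\stwonorm{b}$ can only shrink the minimum in the exponent and hence only enlarge the bound:
\begin{align*}
\Probability\Bigbraces{\bigabs{b^\top x} \geq t} \;\leq\; 2\exp\biggparenth{-c\min\Bigparenth{\frac{t^2}{\sigma^2 \stwonorm{b}^2},\; \frac{t}{\sigma \stwonorm{b}}}}.
\end{align*}
Writing $u \defn t/(\sigma\stwonorm{b})$, the right-hand side becomes $2\exp(-c\min(u^2,u))$, and I would choose $u$ so that this equals $\delta$, i.e.\ so that $\min(u^2,u) = \ld/c$ (recall $\ld=\log(2/\delta)$).

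The key observation is that $u = \m(\ld/c) = \max(\ld/c,\sqrt{\ld/c})$ solves this in both regimes: when $\ld/c \geq 1$ we have $u = \ld/c \geq 1$ and $\min(u^2,u)=u=\ld/c$, while when $\ld/c<1$ we have $u=\sqrt{\ld/c}<1$ and $\min(u^2,u)=u^2=\ld/c$. In either case $2\exp(-c\cdot\ld/c)=2\exp(-\ld)=\delta$, so $\Probability\{\bigabs{b^\top x}\geq t\}\leq\delta$ with $t=\sigma\stwonorm{b}\,\m(\ld/c)$; taking the new generic constant to be the reciprocal of the one from \cref{sube_conc} rewrites $\m(\ld/c)$ as $\m(c\ld)$ verbatim, yielding the claimed threshold on the complementary event of probability at least $1-\delta$. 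I do not expect a genuine obstacle here; the only point needing care is the two-regime check that $z\mapsto\m(z)$ is exactly the inverse of $z\mapsto\min(z^2,z)$, which is precisely the reason the definition of $\m$ takes a maximum.
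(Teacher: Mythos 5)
Your proposal is correct and follows essentially the same route as the paper's proof: substitute $t = u\,\sigma\stwonorm{b}$ into \cref{sube_conc}, use $\sinfnorm{b}\leq\stwonorm{b}$ to reduce the exponent to $-c\min(u^2,u)$ (the paper writes this as $-ct_0\min(t_0,1)$, which is the same quantity), and invert via the two-regime check that $\m$ is the inverse of $z\mapsto\min(z^2,z)$, absorbing the reciprocal of the constant into the generic $c$.
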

\begin{proof}
    Choosing $t = t_0 \sigma \stwonorm{b}$ in \cref{sube_conc}, we have
    \begin{align}
        \Probability\Bigbraces{\bigabs{b^\top x} \geq t_0 \sigma \stwonorm{b}} & \leq 2\exp\Bigparenth{-c t_0 \min\Bigparenth{t_0, \frac{\stwonorm{b}}{\sinfnorm{b}}}}\\
        & \leq 2\exp\Bigparenth{-c t_0 \min\bigparenth{t_0, 1}},
    \end{align}
    where the second inequality follows from $\min\normalbraces{t_0, c} \geq \min\normalbraces{t_0, 1}$ for any $c \geq 1$ and $\stwonorm{b} \geq \sinfnorm{b}$. Then, the proof follows by choosing $\delta \defn 2\exp\bigparenth{-c t_0 \min\bigparenth{t_0, 1}}$ which fixes $t_0 = \max\normalbraces{\sqrt{c \ld}, c\ld} = \m(c\ld)$.
\end{proof}

\begin{lemma}[Product of $\subGaussian$s is $\subExponential$: Lemma. 2.7.7 of \citet{vershynin2018high}]
\label{lem_prod_subG}
    Let $x_1$ and $x_2$ be $\subG[\sigma_1]$ and $\subG[\sigma_2]$ random variables, respectively. Then, $x_1 x_2$ is $\subE[\sigma_1\sigma_2]$ random variable.
\end{lemma}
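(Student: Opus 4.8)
The plan is to prove the sharper norm inequality $\snorm{x_1 x_2}_{\psi_1} \leq \snorm{x_1}_{\psi_2}\snorm{x_2}_{\psi_2}$, which immediately yields the lemma since the hypotheses give $\snorm{x_1}_{\psi_2} \leq \sigma_1$ and $\snorm{x_2}_{\psi_2} \leq \sigma_2$, hence $\snorm{x_1 x_2}_{\psi_1} \leq \sigma_1 \sigma_2$. I would stress at the outset that this route uses only the two sub-Gaussian norm bounds: neither independence of $x_1$ and $x_2$ nor any further distributional structure is required.

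First I would reduce to the normalized case by homogeneity. Since $\snorm{c\,x}_{\psi_2} = \normalabs{c}\,\snorm{x}_{\psi_2}$ and $\snorm{c\,x}_{\psi_1} = \normalabs{c}\,\snorm{x}_{\psi_1}$ (both are immediate from the defining infima by the substitution $t \mapsto t/\normalabs{c}$), it suffices to treat $\snorm{x_1}_{\psi_2} = \snorm{x_2}_{\psi_2} = 1$ and show $\snorm{x_1 x_2}_{\psi_1} \leq 1$; the degenerate case where a norm is zero forces the variable, and hence the product, to vanish almost surely. Under this normalization the definition of the $\psi_2$ norm supplies the two moment bounds $\Expectation[\exp(x_1^2)] \leq 2$ and $\Expectation[\exp(x_2^2)] \leq 2$, which are the only inputs the argument will need.

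The key step is a double application of the elementary inequality $ab \leq (a^2 + b^2)/2$. Applying it pointwise with $a = \normalabs{x_1}$, $b = \normalabs{x_2}$ gives $\normalabs{x_1 x_2} \leq (x_1^2 + x_2^2)/2$, so that $\exp(\normalabs{x_1 x_2}) \leq \exp(x_1^2/2)\exp(x_2^2/2)$. Applying the same inequality in the guise $\sqrt{uv} \leq (u+v)/2$ to $u = \exp(x_1^2)$, $v = \exp(x_2^2)$ bounds the right-hand side by $(\exp(x_1^2) + \exp(x_2^2))/2$. Taking expectations and inserting the two normalized moment bounds yields $\Expectation[\exp(\normalabs{x_1 x_2})] \leq (2+2)/2 = 2$, which by definition of the $\psi_1$ norm is exactly the statement $\snorm{x_1 x_2}_{\psi_1} \leq 1$. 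Undoing the normalization then delivers the general inequality.

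There is no genuine obstacle: this is the textbook fact recorded as Lemma 2.7.7 of \citet{vershynin2018high}, and the whole argument is a few lines once the right inequalities are in hand. The only point meriting care is recognizing that the single inequality $ab \leq (a^2+b^2)/2$ must be invoked twice for distinct purposes—first to linearize the product $x_1 x_2$ inside the exponent, then to split the resulting product of exponentials into a sum before taking expectations. I would also flag that $x_1 x_2$ need not be mean-zero, so the conclusion is to be read as a bound on the size of the $\psi_1$ norm rather than as a statement about the centered moment-generating-function characterization of sub-exponentiality.
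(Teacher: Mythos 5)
Your proof is correct and takes essentially the same route as the paper's source: the paper states this lemma purely by citation to Lemma 2.7.7 of \citet{vershynin2018high}, whose proof is precisely your argument—normalize the $\psi_2$ norms to one, then apply Young's inequality $ab \leq (a^2+b^2)/2$ twice (once to bound $\normalabs{x_1 x_2}$ inside the exponential, once to split the product of exponentials into a sum) and conclude $\Expectation[\exp(\normalabs{x_1x_2})] \leq 2$. Your closing remark that the product need not be mean-zero, so the conclusion is a bound on the $\psi_1$ norm rather than a centered-MGF statement, is also the right reading of how the paper uses the lemma.
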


Next, we provide the definition of a $\subWeibull$ random variable.
\begin{definition}[$\subWeibull$ random variable: Definition 1 of \cite{zhang2022sharper}]
    For $\rho > 0$, a random variable $x$ is $\subWeibull$ with index $\rho$ if it has a bounded $\subWeibull$ norm defined as follows:
    \begin{align}
        \snorm{x}_{\psi_{\rho}} \defn \inf\normalbraces{t > 0: \Expectation[\exp(|x|^\rho/t^\rho)] \leq 2}.
    \end{align}
\end{definition}
We use $\subW[\sigma]$ to represent a $\subWeibull$ random variable with index $\rho$, where $\sigma$ is a bound on the $\subWeibull$ norm. We note that $\subGaussian$ and $\subExponential$ random variables are $\subWeibull$ random variable with indices $2$ and $1$, respectively.

\begin{lemma}[Product of $\subWeibull$s is $\subWeibull$: Proposition 2 of \citet{zhang2022sharper}]
\label{lem_prod_subW}
    For $i \in [d]$, let $x_i$ be a $\subWeibull_{\rho_i}(\sigma_i)$ random variable.  Then, $\Pi_{i \in [d]} x_i$ is $\subW[\sigma]$ random variable where
    \begin{align}
        \sigma = \Pi_{i \in [d]} \sigma_i \qtext{and} \rho = \Bigg(\sum_{i \in [d]} 1/\rho_i\Bigg)^{-1}.
    \end{align}
\end{lemma}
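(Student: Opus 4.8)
The plan is to work directly from the definition of the $\subWeibull$ norm and reduce the whole claim to a single moment-generating-function bound, combining weighted arithmetic--geometric mean (AM--GM) with the generalized H\"older inequality. Write $\sigma_i$ for the given bound on $\snorm{x_i}_{\psi_{\rho_i}}$ and set $\rho \defn \bigparenth{\sum_{i\in[d]} 1/\rho_i}^{-1}$ and $\sigma \defn \prod_{i\in[d]} \sigma_i$. Since $\sigma_i \geq \snorm{x_i}_{\psi_{\rho_i}}$ and $t \mapsto \Expectation[\exp(\abs{x_i}^{\rho_i}/t^{\rho_i})]$ is non-increasing, the definition of the norm gives $\Expectation[\exp(\abs{x_i}^{\rho_i}/\sigma_i^{\rho_i})] \leq 2$; equivalently, the normalized variables $u_i \defn \abs{x_i}/\sigma_i$ satisfy $\Expectation[\exp(u_i^{\rho_i})] \leq 2$. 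It then suffices to prove $\Expectation\bigbrackets{\exp\bigparenth{\abs{\prod_i x_i}^\rho/\sigma^\rho}} \leq 2$, as this certifies $\snorm{\prod_i x_i}_{\psi_\rho} \leq \sigma$, which is the assertion.

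The first key step is a purely algebraic bound on the product. Define the weights $w_i \defn \rho/\rho_i$, which are nonnegative and, by the definition of $\rho$, satisfy $\sum_{i\in[d]} w_i = \rho \sum_{i\in[d]} 1/\rho_i = 1$. Writing $(\prod_i u_i)^\rho = \prod_i (u_i^{\rho_i})^{w_i}$ and applying the weighted AM--GM inequality with these weights to the nonnegative quantities $u_i^{\rho_i}$ yields
\[
\Bigparenth{\prod_{i\in[d]} u_i}^{\rho} = \prod_{i\in[d]} \bigparenth{u_i^{\rho_i}}^{w_i} \leq \sum_{i\in[d]} w_i\, u_i^{\rho_i}.
\]

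The second key step handles the expectation without any independence assumption on the $x_i$. Exponentiating the previous display and then invoking the generalized H\"older inequality with exponents $1/w_i$ (admissible precisely because $\sum_i w_i = 1$) gives
\[
\Expectation\Bigbrackets{\exp\Bigparenth{\Bigparenth{\prod_{i} u_i}^{\rho}}} \leq \Expectation\Bigbrackets{\prod_{i\in[d]} \exp\bigparenth{w_i\, u_i^{\rho_i}}} \leq \prod_{i\in[d]} \Bigparenth{\Expectation\bigbrackets{\exp(u_i^{\rho_i})}}^{w_i} \leq \prod_{i\in[d]} 2^{w_i} = 2.
\]
Since $\abs{\prod_i x_i}^\rho = \sigma^\rho (\prod_i u_i)^\rho$, this is exactly $\Expectation\bigbrackets{\exp\bigparenth{\abs{\prod_i x_i}^\rho/\sigma^\rho}} \leq 2$, completing the argument.

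I expect the main obstacle to be conceptual rather than computational: recognizing that the exponent $\rho = \bigparenth{\sum_i 1/\rho_i}^{-1}$ is exactly the value making the weights $w_i = \rho/\rho_i$ sum to one, which is what simultaneously unlocks both the weighted AM--GM step and the H\"older step. A secondary care point is that no independence between the $x_i$ is assumed, so the expectation of the product cannot be factored directly and H\"older's inequality is essential; had independence been available, the same weights would let one factor the expectation and bound each term by $2^{w_i}$ instead. Finally, I would flag the mild measure-theoretic point that the defining inequality holds at the threshold value $\sigma_i$ itself, which follows from monotonicity of $t \mapsto \Expectation[\exp(\abs{x_i}^{\rho_i}/t^{\rho_i})]$ together with $\sigma_i \geq \snorm{x_i}_{\psi_{\rho_i}}$.
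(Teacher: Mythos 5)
Your proof is correct, and there is no in-paper argument to compare it against: the paper imports this lemma by citation (Proposition 2 of \citet{zhang2022sharper}), and your route---normalize $u_i = |x_i|/\sigma_i$, use the weighted AM--GM (Young) inequality with weights $w_i = \rho/\rho_i$ summing to one, then the generalized H\"older inequality to get $\Expectation[\exp((\prod_i u_i)^\rho)] \le \prod_i 2^{w_i} = 2$ without any independence---is precisely the standard proof of that proposition, and the natural generalization of the Young-inequality trick behind Vershynin's Lemma 2.7.7 ($ab \le a^2/2 + b^2/2$ being the case $d=2$, $\rho_1=\rho_2=2$). One pedantic refinement to your flagged measure-theoretic point: monotonicity of $t \mapsto \Expectation[\exp(|x_i|^{\rho_i}/t^{\rho_i})]$ yields the bound $\le 2$ only for $\sigma_i$ strictly above $\snorm{x_i}_{\psi_{\rho_i}}$; when $\sigma_i$ equals the norm you additionally need monotone convergence as $t \downarrow \sigma_i$ (or, equivalently, run your argument for every $t_i > \sigma_i$ and let $t_i \downarrow \sigma_i$ at the end), which closes the gap but does not change anything substantive.
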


\noindent Next set of lemmas provide useful intermediate results on stochastic convergence.
\begin{lemma}
\label{lemma_conv_prob}
Let $X_n$ and $\wbar{X}_n$  be sequences of random variables. Let $\delta_n$ be a deterministic sequence such that $0\leq \delta_n \leq 1$ and $\delta_n \rightarrow 0$. Suppose $X_n=o_p(1)$ and $\Probability(|\wbar{X}_n|\leq |X_n|) \geq 1-\delta_n$.
Then, $\wbar{X}_n = o_p(1)$.
\end{lemma}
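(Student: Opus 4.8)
The plan is to work directly from the definition of $o_p(1)$ and bound the tail probability of $\wbar{X}_n$ by splitting on the ``good'' event $\cG_n \defn \{|\wbar{X}_n| \leq |X_n|\}$, whose probability is at least $1 - \delta_n$ by hypothesis. Recall that $\wbar{X}_n = o_p(1)$ means that for every $\varepsilon > 0$ and every $\delta > 0$ there exists a finite $n_0$ such that $\Probability(|\wbar{X}_n| > \delta) < \varepsilon$ for all $n \geq n_0$. So I fix arbitrary $\varepsilon, \delta > 0$ and aim to control $\Probability(|\wbar{X}_n| > \delta)$.

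The key step is the event decomposition. Writing $\cG_n^c$ for the complement of $\cG_n$, I would bound
\begin{align}
\Probability\bigparenth{|\wbar{X}_n| > \delta}
&= \Probability\bigparenth{|\wbar{X}_n| > \delta,\, \cG_n} + \Probability\bigparenth{|\wbar{X}_n| > \delta,\, \cG_n^c}\\
&\leq \Probability\bigparenth{|X_n| > \delta} + \Probability\bigparenth{\cG_n^c}
\leq \Probability\bigparenth{|X_n| > \delta} + \delta_n,
\end{align}
where the first inequality uses that on $\cG_n$ the inequality $|\wbar{X}_n| \leq |X_n|$ forces $\{|\wbar{X}_n| > \delta\} \subseteq \{|X_n| > \delta\}$, and the second uses the hypothesis $\Probability(\cG_n) \geq 1 - \delta_n$, i.e.\ $\Probability(\cG_n^c) \leq \delta_n$.

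To finish, I would invoke the two convergence facts separately. Since $X_n = o_p(1)$, there is a finite $n_1$ with $\Probability(|X_n| > \delta) < \varepsilon/2$ for all $n \geq n_1$; since $\delta_n$ is deterministic with $\delta_n \to 0$, there is a finite $n_2$ with $\delta_n < \varepsilon/2$ for all $n \geq n_2$. Taking $n_0 \defn \max\{n_1, n_2\}$ gives $\Probability(|\wbar{X}_n| > \delta) < \varepsilon$ for all $n \geq n_0$. As $\varepsilon, \delta > 0$ were arbitrary, this is exactly $\wbar{X}_n = o_p(1)$.

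There is no real obstacle here; the argument is an elementary union bound. The only point requiring a little care is the event split: one must peel off the complement $\cG_n^c$ (controlled by $\delta_n$) before using the domination $|\wbar{X}_n| \leq |X_n|$, since that domination is only guaranteed on $\cG_n$. The hypothesis that $\delta_n$ is deterministic makes the second term trivially vanishing, so no further probabilistic input is needed for it.
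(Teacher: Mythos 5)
Your proof is correct and follows essentially the same argument as the paper: both bound $\Probability(|\wbar{X}_n|>\delta)$ by $\delta_n + \Probability(|X_n|>\delta)$ via an event split (your $\cG_n/\cG_n^c$ decomposition is just the union-bound phrasing the paper uses), then pick $n$ large enough so each term is below $\varepsilon/2$. Nothing is missing.
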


\begin{proof}
We need to show that for any $\epsilon>0$ and $\delta > 0$, there exist finite $\wbar n$, such that 
\[
\Probability(|\wbar{X}_n|>\delta)<\epsilon
\]
for all $n\geq \wbar n$. Fix any $\epsilon>0$.
Because $\delta_n$ converges to zero, there exists a finite $n_0$ such that $\delta_n<\epsilon/2$, for all $n\geq n_0$.
Because $X_n$ is converges to zero in probability, there exists finite $n_1$, such that $\Probability(|X_n| > \delta) < \epsilon/2$ for all $n \geq n_1$. Now, the event 
$\{|\wbar{X}_n|>\delta\}$ belongs to the union of $\{|\wbar{X}_n|>|X_n|\}$ and $\{|X_n|>\delta\}$.
As a result, we obtain
\[
\Probability(|\wbar{X}_n|>\delta)\leq \Probability (|\wbar{X}_n|>|X_n|) + \Probability(|X_n|>\delta)\leq \delta_n + \Probability(|X_n|>\delta)<\epsilon, 
\]
for $n\geq \wbar{n}=\max\{n_0,n_1\}$. Therefore, $\wbar{X}_n = o_p(1)$.
\end{proof}

\begin{lemma}
\label{lemma_bigOMarkov}
Let $X_n$ and $\wbar{X}_n$  be sequences of random variables. Suppose $\Expectation\bigbrackets{|X_n| \big|\wbar{X}_n} = o_p(1)$. Then, $X_n = o_p(1)$.
\end{lemma}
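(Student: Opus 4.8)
The plan is to combine the conditional Markov inequality with a truncation argument. Write $Y_n \defn \Expectation\bigbrackets{\normalabs{X_n} \,\big|\, \wbar{X}_n}$, so that the hypothesis reads $Y_n = o_p(1)$ and the goal is to show $X_n = o_p(1)$; that is, for every $\epsilon, \delta > 0$ I must produce a finite $n_0$ with $\Probability(\normalabs{X_n} > \delta) < \epsilon$ for all $n \geq n_0$.

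First I would apply the conditional form of Markov's inequality to the nonnegative random variable $\normalabs{X_n}$, which gives, for any fixed $\delta > 0$,
\[
\Probability\bigparenth{\normalabs{X_n} > \delta \,\big|\, \wbar{X}_n} \leq \frac{Y_n}{\delta} \qquad \text{almost surely.}
\]

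The tempting next move---taking unconditional expectations to get $\Probability(\normalabs{X_n} > \delta) \leq \Expectation[Y_n]/\delta$---does \emph{not} work, because $Y_n \inprob 0$ does not imply $\Expectation[Y_n] \to 0$ (the conditional mean may carry a non-vanishing heavy tail). The fix, and the crux of the argument, is to truncate at a level $\tau > 0$: splitting the unconditional expectation over the events $\sbraces{Y_n \leq \tau}$ and $\sbraces{Y_n > \tau}$, bounding the integrand by $Y_n/\delta$ on the former and by the trivial estimate $\Probability(\normalabs{X_n} > \delta \mid \wbar{X}_n) \leq 1$ on the latter, I obtain
\[
\Probability(\normalabs{X_n} > \delta) \;\leq\; \Expectation\biggbrackets{\frac{Y_n}{\delta}\,\Indicator(Y_n \leq \tau)} + \Probability(Y_n > \tau) \;\leq\; \frac{\tau}{\delta} + \Probability(Y_n > \tau).
\]

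To finish, given $\epsilon, \delta > 0$ I would set $\tau \defn \epsilon\delta/2$, so that the first term equals $\epsilon/2$, and then invoke $Y_n = o_p(1)$ to choose $n_0$ with $\Probability(Y_n > \tau) < \epsilon/2$ for all $n \geq n_0$; this yields $\Probability(\normalabs{X_n} > \delta) < \epsilon$, as required. The main obstacle here is conceptual rather than computational: recognizing that the naive unconditional-expectation bound is invalid, and that the truncation at $\tau$ is precisely what converts convergence in probability of the conditional mean $Y_n$ into the desired tail bound on $X_n$. Everything else is a routine application of the conditional Markov inequality.
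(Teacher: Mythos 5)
Your proof is correct and follows essentially the same route as the paper's: both begin with the conditional Markov inequality and then exploit the fact that the conditional probability $\Probability\bigparenth{|X_n| > \delta \mid \wbar{X}_n}$ is bounded by one to pass to the unconditional probability. The only difference is presentational: the paper compresses the final step into an appeal to the law of total probability plus bounded convergence (a sequence bounded by one and converging to zero in probability has expectation converging to zero), whereas your truncation at level $\tau = \epsilon\delta/2$ is precisely an elementary, self-contained proof of that same bounded-convergence fact.
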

\begin{proof}
Fix any $\delta > 0$. 
Markov's inequality implies
\begin{align}   \Probability\Bigparenth{\normalabs{X_n} \geq \delta \Big|\wbar{X}_n} \leq \frac{1}{\delta}  \Expectation\Bigbrackets{|X_n|  \Big| \wbar{X}_n} = o_p(1).\label{eq_conv_2}
\end{align}
The law of total probability and the fact that conditional probabilities are bounded yield
\[
\Probability\Bigparenth{\normalabs{X_n} \geq \delta}=\Expectation\Bigbrackets{\Probability\Bigparenth{\normalabs{X_n}\geq \delta \Big|\wbar{X}_n}}\longrightarrow 0.
\]
\end{proof}

\begin{lemma}
\label{lemma_conv_coro_support}
Let $X_n$ and $\wbar{X}_n$  be sequences of random variables. Suppose $X_n=O_p(1)$ and $\Probability\bigparenth{|\wbar{X}_n|\geq |X_n| + f(\epsilon)} < \epsilon$ for some positive function $f$ and every $\epsilon \in \normalparenth{0,1}$. Then, $\wbar{X}_n = O_p(1)$.
\end{lemma}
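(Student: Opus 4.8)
The plan is to verify the definition of $O_p(1)$ directly for $\wbar{X}_n$, mirroring the structure of the proof of \cref{lemma_conv_prob}. That is, I would fix an arbitrary $\epsilon > 0$ and exhibit a finite threshold $\delta$ and a finite index $\wbar{n}$ such that $\Probability(|\wbar{X}_n| > \delta) < \epsilon$ for all $n \geq \wbar{n}$, which is precisely the definition of stochastic boundedness given in \cref{section_introduction}.

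First, I would invoke the hypothesis $X_n = O_p(1)$ with tolerance $\epsilon/2$ to obtain a finite $\delta_1 > 0$ and a finite $\wbar{n}$ such that $\Probability(|X_n| > \delta_1) < \epsilon/2$ for all $n \geq \wbar{n}$. Next, I would apply the second hypothesis at the admissible value $\epsilon/2 \in (0,1)$ to get $\Probability(|\wbar{X}_n| \geq |X_n| + f(\epsilon/2)) < \epsilon/2$. The natural candidate threshold is then $\delta \defn \delta_1 + f(\epsilon/2)$, which is finite because $f$ is a fixed positive function evaluated at the single fixed argument $\epsilon/2$.

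The key step is a deterministic set inclusion: the event $\{|\wbar{X}_n| > \delta_1 + f(\epsilon/2)\}$ is contained in the union of $\{|\wbar{X}_n| \geq |X_n| + f(\epsilon/2)\}$ and $\{|X_n| > \delta_1\}$. Indeed, on the complement of the first event one has $|\wbar{X}_n| < |X_n| + f(\epsilon/2)$, so exceeding $\delta_1 + f(\epsilon/2)$ forces $|X_n| > \delta_1$. A union bound then yields $\Probability(|\wbar{X}_n| > \delta) \leq \Probability(|\wbar{X}_n| \geq |X_n| + f(\epsilon/2)) + \Probability(|X_n| > \delta_1) < \epsilon/2 + \epsilon/2 = \epsilon$ for all $n \geq \wbar{n}$, establishing the claim.

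I do not anticipate a serious obstacle; the proof is elementary bookkeeping against the definition of $O_p(1)$. The only point requiring care is the set-inclusion argument, which is the analogue of the decomposition used in \cref{lemma_conv_prob}, except that here the additive shift $f(\epsilon/2)$ must be carried consistently through both events in the union. The finiteness of $\delta$ hinges on evaluating $f$ at the single point $\epsilon/2$ rather than requiring any uniform bound on $f$.
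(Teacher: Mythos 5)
Your proof is correct and follows essentially the same route as the paper's: invoke $X_n = O_p(1)$ at tolerance $\epsilon/2$, apply the hypothesis at $\epsilon/2$, use the set inclusion $\{|\wbar{X}_n| > \delta_1 + f(\epsilon/2)\} \subseteq \{|\wbar{X}_n| \geq |X_n| + f(\epsilon/2)\} \cup \{|X_n| > \delta_1\}$, and conclude by a union bound with threshold $\delta_1 + f(\epsilon/2)$. No gaps; this matches the paper's argument step for step.
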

\begin{proof}
We need to show that for any $\epsilon>0$, there exist finite $\wbar \delta>0$ and $\wbar n>0$, such that 
\[
\Probability(|\wbar{X}_n|>\wbar \delta)<\epsilon
\]
for all $n\geq \wbar n$. Fix any $\epsilon > 0$. Because $X_n$ is bounded in probability, there exist finite $\delta$ and $n_0$, such that $\Probability(|X_n| > \delta) < \epsilon/2$ for all $n \geq n_0$. 
Further, we have $\Probability\bigparenth{|\wbar{X}_n|\geq |X_n| + f(\epsilon/2)} < \epsilon/2$. Now, the event $\{|\wbar{X}_n| > \delta + f(\epsilon/2)\}$ belongs to the union of $\{|\wbar{X}_n| > |X_n| + f(\epsilon/2)\}$ and $\{|X_n| > \delta\}$.
As a result, we obtain
\begin{align}
    \Probability\bigparenth{|\wbar{X}_n| > \delta + f(\epsilon/2)} \leq \Probability\bigparenth{|\wbar{X}_n| > |X_n| + f(\epsilon/2)} + \Probability\bigparenth{|X_n| > \delta} < \epsilon.
\end{align}
    for all $n\geq n_0$. In other words, $\Probability(|\wbar{X}_n| > \wbar \delta) < \epsilon$ for all $n \geq \wbar n$, where $\wbar \delta = \delta + f(\epsilon/2)>0$ and $\wbar n = n_0$. Therefore, $\wbar{X}_n = O_p(1)$.
\end{proof}

\section{Proof of \texorpdfstring{Theorem \ref{thm_fsg}}{}: \fsg}
\label{sec_proof_thm_fsg}
Fix any $j \in [M]$. Recall the definitions of the parameter $\ATETrue$ and corresponding doubly-robust estimate $\ATEDR$ from \cref{eq_ate_parameter_combined,eq_counterfactual_mean_dr}, respectively. The error $\Delta \ATETrue^{\mathrm{DR}} = \ATEDR - \ATETrue$ 
can be re-expressed as
\begin{align}
\Delta \ATETrue^{\mathrm{DR}} & = \frac{1}{N}\sum_{i \in [N]} \Bigparenth{\what{\theta}_{i,j}^{(1,\mathrm{DR})} - \what{\theta}_{i,j}^{(0,\mathrm{DR})}} - \frac{1}{N}\sum_{i \in [N]} \Bigparenth{\theta_{i,j}^{(1)} - \theta_{i,j}^{(0)}}\\
& =  \frac{1}{N} \sum_{i \in [N]} \biggparenth{ \bigparenth{\what{\theta}_{i,j}^{(1,\mathrm{DR})} - \theta_{i,j}^{(1)} } - \bigparenth{\what{\theta}_{i,j}^{(0,\mathrm{DR})} - \theta_{i,j}^{(0)} }}\\
&\sequal{(a)} \frac{1}{N}  \sum_{i \in [N]} \bigparenth{\termDR[1] + \termDR[0]},
\label{eq_main_dr}
\end{align}
where $(a)$ follows after defining $\termDR[1] \defn \bigparenth{\what{\theta}_{i,j}^{(1,\mathrm{DR})} - \theta_{i,j}^{(1)}}$ and $\termDR[0] \defn - \bigparenth{\what{\theta}_{i,j}^{(0,\mathrm{DR})} - \theta_{i,j}^{(0)} }$ for every $(i,j) \in [N] \times [M]$. Then, we have
\begin{align}
    \termDR[1]  & = \what{\theta}_{i,j}^{(1,\mathrm{DR})} - \theta_{i,j}^{(1)}\\
    &\sequal{(a)}
    \what{\theta}_{i,j}^{(1)} + \bigparenth{y_{i,j} - \what{\theta}_{i,j}^{(1)}} \frac{a_{i,j}}{\what{p}_{i,j}}
    - \theta_{i,j}^{(1)} \\ 
    &\sequal{(b)} \what{\theta}_{i,j}^{(1)} + \bigparenth{\theta_{i, j}^{(1)} + \vareps_{i, j}^{(1)} - \what{\theta}_{i,j}^{(1)}} \frac{p_{i,j}+\eta_{i, j}}{\what{p}_{i,j}}
    - \theta_{i,j}^{(1)} \label{eq:int_error_dr_nondynamic} \\ 
    &= (\what{\theta}_{i,j}^{(1)} - \theta_{i,j}^{(1)}) \Bigparenth{1-\frac{p_{i,j}+\eta_{i, j}}{\what{p}_{i,j}}} +  \vareps_{i, j}^{(1)} \Bigparenth{\frac{p_{i,j}+\eta_{i, j}}{\what{p}_{i,j}}} \\ 
    &= \frac{ (\what{\theta}_{i,j}^{(1)} - \theta_{i,j}^{(1)}) (\what{p}_{i, j}-p_{i, j})}{\what{p}_{i,j}} - 
    \frac{ (\what{\theta}_{i,j}^{(1)} - \theta_{i,j}^{(1)}) \eta_{i, j}}{\what{p}_{i,j}}
    + \frac{\vareps_{i, j}^{(1)}p_{i,j}}{\what{p}_{i,j}} 
    + \frac{\vareps_{i, j}^{(1)}\eta_{i, j}}{\what{p}_{i,j}},
    \label{eq:mu1_err}
\end{align}
where $(a)$ follows from \cref{eq_counterfactual_mean_dr_1}, and $(b)$ follows from \cref{eq_consistency,eq_outcome_model,eq_treatment_model}. A similar derivation for $a = 0$ implies that  
\begin{align}
    \termDR[0] & = - \what{\theta}_{i,j}^{(0,\mathrm{DR})} + \theta_{i,j}^{(0)}\\
    &= - \frac{ (\what{\theta}_{i,j}^{(0)} - \theta_{i,j}^{(0)}) (1-\what{p}_{i, j}\!-\!(1-p_{i, j}))}{1-\what{p}_{i,j}} + 
    \frac{ (\what{\theta}_{i,j}^{(0)} - \theta_{i,j}^{(0)}) (-\eta_{i, j})}{1-\what{p}_{i,j}}
    - \frac{\vareps_{i, j}^{(0)}(1-p_{i,j})}{1-\what{p}_{i,j}} 
    - \frac{\vareps_{i, j}^{(0)}(-\eta_{i, j})}{1-\what{p}_{i,j}}  \\
    &= \frac{ (\what{\theta}_{i,j}^{(0)} - \theta_{i,j}^{(0)}) (\what{p}_{i, j}-p_{i, j})}{1-\what{p}_{i,j}} - 
    \frac{ (\what{\theta}_{i,j}^{(0)} - \theta_{i,j}^{(0)}) \eta_{i, j}}{1-\what{p}_{i,j}}
    - \frac{\vareps_{i, j}^{(0)}(1-p_{i,j})}{1-\what{p}_{i,j}} 
    + \frac{\vareps_{i, j}^{(0)}\eta_{i, j}}{1-\what{p}_{i,j}}.
    \label{eq:mu0_err}
\end{align}
Consider any $a \in \normalbraces{0,1}$ and any $\delta \in (0,1)$. We claim that, with probability at least $1 - 6\delta$, 
\begin{align}
    \frac{1}{N} \Bigabs{\sum_{i \in [N]} \termDR[a]} \leq &   \frac{2}{\lbar} \Rcol\bigparenth{\empo}   \RP + \frac{2\sqrt{c \ld}}{\lbar \sqrt{\lone N}}\Rcol\bigparenth{\empo} + \frac{2\sigmax \sqrt{c \ld}}{\lbar \sqrt{N}} + \frac{2 \sigmax \m(c \ld)}{\lbar \sqrt{\lone N}}, \label{eq_bound_11_dr}
\end{align}
where recall that $\m(c \ld) = \max\bigparenth{c\ld, \sqrt{c \ld}}$. We provide a proof of this claim at the end of this section. 
Applying triangle inequality in \cref{eq_main_dr} 
and using \cref{eq_bound_11_dr} with a union bound, we obtain that
\begin{align}
    \bigabs{\Delta \ATETrue^{\mathrm{DR}}} \leq \frac{2}{\lbar}  \RTheta  \RP + \frac{2 \sqrt{c \ld}}{\lbar \sqrt{\lone N}}\RTheta + \frac{4\sigmax \sqrt{c \ld}}{\lbar \sqrt{N}} + \frac{4\sigmax \m(c \ld)}{\lbar \sqrt{\lone N}}, \label{eq_fsg_final}
\end{align}
with probability at least $1-12\delta$.
The claim in \cref{eq_combined_error_dr} follows  by  re-parameterizing $\delta$. \medskip

\noindent {\bf Proof of bound~\texorpdfstring{\eqref{eq_bound_11_dr}}{}.}
Recall the partitioning of the units $[N]$ into $\cR_0$ and $\cR_1$ from \cref{assumption_estimates}. 
Now, to enable the application of concentration bounds, we split the summation over $i \in [N]$ in the left hand side of \cref{eq_bound_11_dr} into two parts---one over $i \in \cR_0$ and the other over $i \in \cR_1$---such that the noise terms are independent of the estimates of $\mpoz, \mpoo, \mta$ in each of these parts as in \cref{eq_independence_requirement_estimates_dr,eq_independence_requirement_estimates_dr_p}. 

Fix $a = 1$ and note that $|\sum_{i\in[N]}\termDR[1]| \leq |\sum_{i\in\cR_0} \termDR[1]|+|\sum_{i\in\cR_1} \termDR[1]|$. Fix any  $s \in \normalbraces{0,1}$. Then, \cref{eq:mu1_err} and triangle inequality imply
\begin{align}
    \Bigabs{\sum_{i\in\cR_s}\termDR[1]} \leq   & ~  \Bigabs{\sum_{i\in\cR_s}\frac{ \bigparenth{\what{\theta}_{i,j}^{(1)} \!-\! \theta_{i,j}^{(1)}}\bigparenth{\what{p}_{i,j} \!-\! p_{i,j}}}{\what{p}_{i,j}}} \!+\!\Bigabs{\sum_{i\in\cR_s}\frac{\bigparenth{\what{\theta}_{i,j}^{(1)} \!-\! \theta_{i,j}^{(1)}} \eta_{i,j}}{\what{p}_{i,j}}} \\
     & + \Bigabs{\sum_{i\in\cR_s}\frac{\varepsilon_{i,j}^{(1)} p_{i,j}}{\what{p}_{i,j}}}
     \!+\! \Bigabs{\sum_{i\in\cR_s}\frac{\varepsilon_{i,j}^{(1)} \eta_{i,j}}{\what{p}_{i,j}}}.
     \label{eq:decomp-T}
\end{align}
Applying the Cauchy-Schwarz inequality to bound the first term yields that
\begin{align}
    \biggabs{\sum_{i\in\cR_s}\frac{ \bigparenth{\what{\theta}_{i,j}^{(1)} \!-\! \theta_{i,j}^{(1)}}\bigparenth{\what{p}_{i,j} \!-\! p_{i,j}}}{\what{p}_{i,j}}}
    &\leq \sqrt{\sum_{i\in \cR_s} \biggparenth{\frac{\what{\theta}_{i,j}^{(1)} \!-\! \theta_{i,j}^{(1)}}{\what{p}_{i, j}}}^2
     \sum_{i\in \cR_s} \bigparenth{\what{p}_{i, j}-p_{i, j}}^2} \\
    &\leq \twonorm{ \bigparenth{\empoo_{\cdot,j} \!-\! \mpoo_{\cdot,j}} \odiv \emta_{\cdot,j}} \twonorm{\emta_{\cdot,j} \!-\! P_{\cdot,j}}.
    \label{eq:t1}
\end{align}

To bound the second term in \cref{eq:decomp-T}, note that $\eta_{i,j}$ is $ \subG[1/\sqrt{\lone}]$ (see Example 2.5.8 in \citet{vershynin2018high}) as well as zero-mean and independent across all $i \in [N]$ due to \cref{assumption_noise}\cref{item_ass_2aa}. By \cref{assumption_estimates}, $\normalbraces{\normalparenth{\what{p}_{i, j}, \what{\theta}_{i,j}^{(1)}}}_{i\in\cR_s} \indep \normalbraces{\eta_{i, j}}_{i\in\cR_s}$. The $\subGaussian$ concentration result in \cref{cor_subg_conc} yields
\begin{align}
\biggabs{\sum_{i\in\cR_s}\frac{\bigparenth{\what{\theta}_{i,j}^{(1)} \!-\! \theta_{i,j}^{(1)}} \eta_{i,j}}{\what{p}_{i,j}}}
    \leq \frac{\sqrt{c \ld}}{\sqrt{\lone}}  \sqrt{\sum_{i\in \cR_s} \biggparenth{\frac{\what{\theta}_{i,j}^{(1)} \!-\! \theta_{i,j}^{(1)}}{\what{p}_{i, j}}}^2}
    \leq \frac{\sqrt{c \ld}}{\sqrt{\lone}}   \twonorm{ \bigparenth{\empoo_{\cdot,j} \!-\! \mpoo_{\cdot,j}} \odiv \emta_{\cdot,j}}, 
    \label{eq:t2}
\end{align}
with probability at least $1-\delta$.

To bound the third term in \cref{eq:decomp-T}, note that $\varepsilon_{i,j}^{(1)}$ is $ \subG[\sigmax]$, zero-mean, and independent across all $i \in [N]$ due to \cref{assumption_noise}. By Assumption \cref{assumption_estimates}, $\normalbraces{\what{p}_{i, j}}_{i\in\cR_s} \indep \normalbraces{\vareps_{i, j}^{(1)}}_{i\in\cR_s}$. 
The $\subGaussian$ concentration result in \cref{cor_subg_conc} yields
\begin{align}
    \Bigabs{\sum_{i\in\cR_s}\frac{\varepsilon_{i,j}^{(1)} p_{i,j}}{\what{p}_{i,j}}}
    \leq \sigmax \sqrt{c \ld} \sqrt{\sum_{i\in \cR_s} \Bigparenth{\frac{p_{i, j}}{\what{p}_{i, j}}}^2}
    \leq \sigmax \sqrt{c \ld}  \twonorm{ \mta_{\cdot,j} \odiv \emta_{\cdot,j}},
    \label{eq:t3}
\end{align}
with probability at least $1-\delta$.

Finally, to bound the fourth term in \cref{eq:decomp-T},
note that $\varepsilon_{i,j}^{(1)}\eta_{i,j}$ is $\subE[\sigmax/\sqrt{\lone}]$ because of \cref{lem_prod_subG} as well as zero-mean and independent across all $i\in[N]$ due to \cref{assumption_noise}.
By \cref{assumption_estimates}, $\normalbraces{ \what{p}_{i, j} }_{i\in\cR_s} \indep \normalbraces{\normalparenth{\eta_{i, j}, \varepsilon^{(1)}_{i,j}}}_{i\in\cR_s}$. 
The $\subExponential$ concentration result in \cref{cor_sube_conc} yields that
\begin{align}
    \Bigabs{\sum_{i\in\cR_s}\frac{\varepsilon_{i,j}^{(1)} \eta_{i,j}}{\what{p}_{i,j}}}
    \leq \frac{\sigmax \m(c \ld)}{\sqrt{\lone}}  \stwonorm{ \ones \odiv \emta_{\cdot,j}},
    \label{eq:t4}
\end{align}
with probability at least $1-\delta$. 
Putting together \cref{eq:decomp-T,eq:t1,eq:t2,eq:t3,eq:t4}, we conclude that, with probability at least $1-3\delta$, 
\begin{align}
    \frac{1}{N} \Bigabs{\sum_{i \in \cR_s} \termDR[1]} &  
    \leq 
    \frac{1}{N} \twonorm{ \bigparenth{\empoo_{\cdot,j} \!-\! \mpoo_{\cdot,j}} \odiv \emta_{\cdot,j}} \twonorm{\emta_{\cdot,j} \!-\! P_{\cdot,j}} + \frac{\sqrt{c \ld}}{\sqrt{\lone} N} \twonorm{ \bigparenth{\empoo_{\cdot,j} \!-\! \mpoo_{\cdot,j}} \odiv \emta_{\cdot,j}}  \\
    & \qquad +  \frac{\sigmax \sqrt{c \ld}}{N}  \twonorm{ \mta_{\cdot,j} \odiv \emta_{\cdot,j}} + \frac{\sigmax \m(c \ld)}{\sqrt{\lone} N}  \twonorm{ \ones \odiv \emta_{\cdot,j}} \label{eq:decomp_T_first_bound}.
\end{align}
Then, noting that $1/\what{p}_{i,j} \leq 1/\lbar$ for every $i \in [N]$ and $j \in [M]$ from \cref{assumption_pos_estimated}, and consequently that $\stwonorm{B_{\cdot, j} \odiv \what P_{\cdot, j}} \leq  \onetwonorm{B}/\lbar$ for any matrix $B$ and every $j\in[M]$, we obtain the following bound, with probability at least $1-3\delta$, 
\begin{align}
    \frac{1}{N} \Bigabs{\sum_{i \in \cR_s} \termDR[1]} & \leq  
    \frac{1}{\lbar N} \onetwonorm{\empoo \!-\! \mpoo} \onetwonorm{\emta \!-\! \mta} + \frac{\sqrt{c \ld}}{ \lbar  \sqrt{\lone} N} \onetwonorm{\empoo \!-\! \mpoo} \\
    & \qquad + \frac{\sigmax \sqrt{c \ld}}{\lbar N}\onetwonorm{ \mta } + \frac{\sigmax \m(c \ld)}{\lbar \sqrt{\lone} N}\onetwonorm{ \allones }  \label{eq_dr_almost_final} \\
    & \sless{(a)} 
    \frac{1}{\lbar} \Rcol\bigparenth{\empo[(1)]}   \RP + \frac{\sqrt{c \ld}}{\lbar \sqrt{\lone N}}\Rcol\bigparenth{\empo[(1)]} + \frac{\sigmax \sqrt{c \ld}}{\lbar \sqrt{N}} + \frac{ \sigmax \m(c \ld)}{\lbar \sqrt{\lone N}},
    \label{eq_dr_final}
\end{align}
where $(a)$ follows from \cref{eq_notation} and because $\onetwonorm{ \mta } \leq \sqrt{N}$ and $\onetwonorm{ \allones } = \sqrt{N}$.
Then, the claim in \cref{eq_bound_11_dr} follows for $a=1$ by using \cref{eq_dr_final} and applying a union bound over $s \in \normalbraces{0,1}$.
The proof of \cref{eq_bound_11_dr} for $a = 0$ follows similarly.
\section{Proofs of \texorpdfstring{\cref{coro_compare,coro_consistency}}{}}
\label{proofs_coro}

\subsection{Proof of \texorpdfstring{\cref{coro_compare}}{}: \textbf{Gains of DR over OI and IPW}}
\label{subsec_proof_compare}
Fix any $j \in [M]$ and any $\delta \in (0,1)$. First, consider IPW. Take any $\alpha \in [0,1/2]$. From \cref{fsg_ipw_ate_t}, with probability at least $1-\delta$,
\begin{align}
    N^\alpha\bigabs{\ATEIPW - \ATETrue} &\leq   \frac{2\thetamax}{\lbar} N^\alpha\RP + f_1(\delta) N^{\alpha-1/2}\\
    &\leq \frac{2\thetamax}{\lbar} N^\alpha\RP + f_1(\delta)
\end{align}
where 
\begin{align}
    f_1(\delta) \defn \frac{2}{\lbar}  \biggparenth{\frac{\sqrt{c \ld[/12]}}{\sqrt{\lone}} \thetamax + 2\sigmax \sqrt{c \ld[/12]} + \frac{2\sigmax \m(c \ld[/12])}{\sqrt{\lone}}},
\end{align}
for $\m(c)$ and $\ell_c$ as defined in \cref{section_introduction}. Then, if $\RP = O_p\bigparenth{ N^{-\alpha}} $, \cref{lemma_conv_coro_support} implies
\begin{align}
    \bigabs{\ATEIPW - \ATETrue}  = O_p  \bigparenth{N^{-\alpha}}.
\end{align}
Next, consider DR. From \cref{fsg_dr_ate_t}, with probability at least $1-\delta$,
\begin{align}
    \bigabs{\ATEDR - \ATETrue} \leq \frac{2}{\lbar}  \RTheta  \RP +  f_2(\delta) N^{-1/2},
\end{align}
where 
\begin{align}
    f_2(\delta) \defn \frac{2}{\lbar}  \biggparenth{\frac{\sqrt{c \ld[/12]}}{\sqrt{\lone}} \RTheta + 2\sigmax \sqrt{c \ld[/12]} + \frac{2\sigmax \m(c \ld[/12])}{\sqrt{\lone}} }.
\end{align}
Suppose $\RP = O_p\bigparenth{ N^{-\alpha}} $ and $\RTheta = O_p\bigparenth{ N^{-\beta}}$. Consider two cases. First, suppose $\alpha + \beta \leq 0.5$. Then, with probability at least $1-\delta$,
\begin{align}
N^{\alpha + \beta}\bigabs{\ATEDR - \ATETrue} &\leq \frac{2}{\lbar}  N^{\alpha + \beta}\RTheta  \RP +  f_2(\delta) N^{\alpha+\beta-1/2}\\
&\leq \frac{2}{\lbar}  N^{\alpha + \beta}\RTheta  \RP +  f_2(\delta).
\end{align}
\cref{lemma_conv_coro_support} implies
$\bigabs{\ATEDR - \ATETrue} = O_p(N^{-(\alpha + \beta)})$. Next, suppose $\alpha + \beta > 0.5$. With probability at least $1-\delta$,
\begin{align}
N^{1/2}\bigabs{\ATEDR - \ATETrue} &\leq \frac{2}{\lbar}  N^{1/2}\RTheta  \RP +  f_2(\delta)\\
&\leq \frac{2}{\lbar}  N^{\alpha + \beta}\RTheta  \RP +  f_2(\delta).
\end{align}
\cref{lemma_conv_coro_support} implies
$\bigabs{\ATEDR - \ATETrue} = O_p(N^{-1/2})$.

\subsection{Proof of \texorpdfstring{\cref{coro_consistency}}{}: \textbf{Consistency for DR}}
\label{subsec_proof_consistency}
Fix any $j \in [M]$. Then, choose $\delta = 1/N$ in \cref{eq_combined_error_dr} and 
note that every term in the right hand side of \cref{eq_combined_error_dr} is $o_p(1)$ under the conditions on $\RTheta$ and $\RP$. Then, \cref{1con_dr_ate_t} follows from \cref{lemma_conv_prob}.

\section{Proof of \texorpdfstring{Theorem \ref{thm_normality}}{}: \normality}
\label{sec_proof_thm_normality}
For every $(i,j) \in [N] \times [M]$, recall the definitions of $\termDR[1]$ and $\termDR[0]$ from \cref{eq:mu1_err} and \cref{eq:mu0_err}, respectively. Then, define
\begin{align}
    \biasterm^{(1, \mathrm{DR})}_{i,j} & \defn \termDR[1] - \varepsilon_{i,j}^{(1)} - \frac{\varepsilon_{i,j}^{(1)} \eta_{i,j}}{p_{i,j}} \label{eq_bias1_ij_term}\\    
    \biasterm^{(0, \mathrm{DR})}_{i,j} & \defn \termDR[0]  + \varepsilon_{i,j}^{(0)} - \frac{\varepsilon_{i,j}^{(0)} \eta_{i,j}}{1-p_{i,j}}, \label{eq_bias0_ij_term}
    \shortintertext{and}
    \variance_{i,j} & \defn \varepsilon_{i,j}^{(1)} + \frac{\varepsilon_{i,j}^{(1)} \eta_{i,j}}{p_{i,j}} - \varepsilon_{i,j}^{(0)} + \frac{\varepsilon_{i,j}^{(0)} \eta_{i,j}}{1-p_{i,j}}. \label{eq_variance_ij_term}
\end{align}
Then, $\Delta \ATETrue^{\mathrm{DR}}$ in \cref{eq_main_dr} can be expressed as
\begin{align}
    \Delta \ATETrue^{\mathrm{DR}} & = \frac{1}{N}  \sum_{i \in [N]} \Bigparenth{\biasterm^{(1, \mathrm{DR})}_{i,j} + \biasterm^{(0, \mathrm{DR})}_{i,j} + \variance_{i,j}}. \label{eq_main_biasvariance}
\end{align}
We obtain the following convergence results.
\begin{lemma}[Convergence of $\bias_j$]
\label{lemma_bias_dr}
    Fix any $j \in [M]$. Suppose  \cref{assumption_pos,assumption_noise,assumption_estimates,assumption_pos_estimated} and conditions \cref{item_error_for_normality,item_product_of_error_for_normality,item_sigma_bounded_below} in \cref{thm_normality} hold. Then, 
    \begin{align}
        \frac{1}{\wbar{\sigma}_j \sqrt{N}}  \sum_{i \in [N]} \Bigparenth{\biasterm^{(1, \mathrm{DR})}_{i,j} + \biasterm^{(0, \mathrm{DR})}_{i,j}} = o_p(1).
    \end{align}
\end{lemma}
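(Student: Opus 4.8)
The plan is to substitute the definitions of the bias terms, cancel the leading-order noise contributions (which is exactly what these terms were designed to remove), and then show that the remaining pieces, all of which carry an estimation-error factor, vanish after normalization. First I would expand $\biasterm^{(1,\mathrm{DR})}_{i,j}$ using \cref{eq:mu1_err} in \cref{eq_bias1_ij_term}: the $\tfrac{\varepsilon_{i,j}^{(1)}p_{i,j}}{\what{p}_{i,j}}-\varepsilon_{i,j}^{(1)}$ and $\tfrac{\varepsilon_{i,j}^{(1)}\eta_{i,j}}{\what{p}_{i,j}}-\tfrac{\varepsilon_{i,j}^{(1)}\eta_{i,j}}{p_{i,j}}$ combinations simplify, yielding
\begin{align*}
\biasterm^{(1,\mathrm{DR})}_{i,j}
&= \frac{\bigparenth{\what{\theta}_{i,j}^{(1)}-\theta_{i,j}^{(1)}}\bigparenth{\what{p}_{i,j}-p_{i,j}}}{\what{p}_{i,j}}
- \frac{\bigparenth{\what{\theta}_{i,j}^{(1)}-\theta_{i,j}^{(1)}}\eta_{i,j}}{\what{p}_{i,j}} \\
&\quad - \frac{\varepsilon_{i,j}^{(1)}\bigparenth{\what{p}_{i,j}-p_{i,j}}}{\what{p}_{i,j}}
- \frac{\varepsilon_{i,j}^{(1)}\eta_{i,j}\bigparenth{\what{p}_{i,j}-p_{i,j}}}{\what{p}_{i,j}\,p_{i,j}},
\end{align*}
and, by the symmetric computation from \cref{eq:mu0_err} and \cref{eq_bias0_ij_term}, an identical expression for $\biasterm^{(0,\mathrm{DR})}_{i,j}$ with $\what{p}_{i,j},p_{i,j}$ replaced by $1-\what{p}_{i,j},1-p_{i,j}$. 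The key structural point is that every surviving summand now contains at least one of the estimation errors $\what{\theta}_{i,j}^{(a)}-\theta_{i,j}^{(a)}$ or $\what{p}_{i,j}-p_{i,j}$ as a factor; no bare $\varepsilon^{(a)}$ or $\varepsilon^{(a)}\eta$ term (which could not be made $o_p(1)$) remains.

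For the product-of-errors term (the first summand), I would invoke $\what{p}_{i,j}\geq\lbar$ from \cref{assumption_pos_estimated} and Cauchy--Schwarz to get $\tfrac{1}{\sqrt{N}}\bigabs{\sum_{i\in[N]}(\cdot)}\leq \tfrac{1}{\lbar\sqrt{N}}\twonorm{\empoo_{\cdot,j}-\mpoo_{\cdot,j}}\twonorm{\emta_{\cdot,j}-\mta_{\cdot,j}}\leq \tfrac{\sqrt{N}}{\lbar}\,\Rcol\bigparenth{\empoo}\,\RP\leq \tfrac{\sqrt{N}}{\lbar}\,\RTheta\,\RP$, using $\twonorm{U_{\cdot,j}}\leq\onetwonorm{U}$ and the definitions in \cref{eq_notation}. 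By condition \cref{item_product_of_error_for_normality}, $\RTheta\,\RP=o_p(N^{-1/2})$, so this bound is $o_p(1)$.

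For the three noise terms, I would split $\sum_{i\in[N]}=\sum_{i\in\cR_0}+\sum_{i\in\cR_1}$ over the partition of \cref{assumption_estimates} and condition on the estimates within each fold, treating them as fixed weights. By \cref{eq_independence_requirement_estimates_dr} the $\eta$-weighted term is conditionally a sum of independent mean-zero variables given $\{(\what{p}_{i,j},\what{\theta}_{i,j}^{(1)})\}_{i\in\cR_s}$, and by \cref{eq_independence_requirement_estimates_dr_p} the $\varepsilon^{(1)}$- and $\varepsilon^{(1)}\eta$-weighted terms are conditionally mean-zero sums of independent variables given $\{\what{p}_{i,j}\}_{i\in\cR_s}$; note $\Expectation[\varepsilon_{i,j}^{(1)}\eta_{i,j}]=0$ by the within-pair independence in \cref{assumption_noise}\cref{item_ass_2bb}. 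For each, the conditional second moment (equivalently conditional variance) is bounded using $\Expectation[\eta_{i,j}^2]=p_{i,j}(1-p_{i,j})\leq 1/4$, $\Var(\varepsilon_{i,j}^{(1)})\lesssim\sigmax^2$, $p_{i,j}\geq\lambda$, and $1/\what{p}_{i,j}\leq1/\lbar$, giving conditional $L_2$ norms of order $\Rcol\bigparenth{\empoo}$ (for the $\eta$ term) or $\RP$ (for the $\varepsilon^{(1)}$ and $\varepsilon^{(1)}\eta$ terms). Each of these is $o_p(1)$ by condition \cref{item_error_for_normality}, and since it is a function of the conditioning variables, \cref{lemma_bigOMarkov} upgrades the conditional $L_1$ bound to the unconditional statement that each normalized fold-sum is $o_p(1)$. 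Summing the $a=0$ and $a=1$ contributions and dividing by $\wbar{\sigma}_j$, which is bounded away from zero by condition \cref{item_sigma_bounded_below}, preserves $o_p(1)$ and completes the argument.

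The main obstacle is really the bookkeeping in the first step: one must verify that the precise cancellation in the definitions of $\biasterm^{(a,\mathrm{DR})}_{i,j}$ leaves only terms each multiplied by an estimation error, since otherwise a surviving noise term of order $N^{-1/2}\sum_i\varepsilon_{i,j}^{(a)}$ would contribute at the same scale as the Gaussian limit rather than vanishing. The conceptual crux underlying Step 3 is that the cross-fitting encoded in \cref{assumption_estimates} is exactly what allows conditioning on the estimates as deterministic weights; without this independence the weighted noise sums would not be conditionally mean-zero and the conditional-variance argument would fail.
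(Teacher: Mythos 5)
Your proposal is correct and follows essentially the same route as the paper: the identical four-term expansion of $\biasterm^{(a,\mathrm{DR})}_{i,j}$, a deterministic Cauchy--Schwarz bound of order $\sqrt{N}\,\RTheta\,\RP$ for the product-of-errors term (killed by condition \cref{item_product_of_error_for_normality}), and a fold-wise conditioning argument via \cref{assumption_estimates} combined with \cref{lemma_bigOMarkov} for the three noise-weighted terms (killed by condition \cref{item_error_for_normality}). The only cosmetic difference is that you bound conditional second moments where the paper bounds conditional first moments via sub-Gaussian/sub-exponential norms; both yield the same conditional $L_1$ control, so the argument goes through unchanged.
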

\begin{lemma}[Convergence of $\variance_j$]
\label{lemma_variance_dr}
    Fix any $j \in [M]$. Suppose  \cref{assumption_pos,assumption_noise} hold and condition \cref{item_sigma_bounded_below} in \cref{thm_normality} hold. Then, 
    \begin{align}
        \frac{1}{\wbar{\sigma}_j \sqrt{N}}  \sum_{i \in [N]} \variance_{i,j} \stackrel{d}{\longrightarrow} \mathcal{N}(0,1).
    \end{align}
\end{lemma}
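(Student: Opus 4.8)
The plan is to recognize $\sum_{i\in[N]}\variance_{i,j}$ as a sum of independent, mean-zero summands and to establish the claim via the Lyapunov central limit theorem for triangular arrays. The single most useful preliminary step is to rewrite $\variance_{i,j}$ in terms of the realized treatment $a_{i,j}=p_{i,j}+\eta_{i,j}$. Substituting $\eta_{i,j}=a_{i,j}-p_{i,j}$ gives the algebraic identities $1+\eta_{i,j}/p_{i,j}=a_{i,j}/p_{i,j}$ and $1-\eta_{i,j}/(1-p_{i,j})=(1-a_{i,j})/(1-p_{i,j})$, so that
\begin{align}
\variance_{i,j}=\varepsilon_{i,j}^{(1)}\,\frac{a_{i,j}}{p_{i,j}}-\varepsilon_{i,j}^{(0)}\,\frac{1-a_{i,j}}{1-p_{i,j}}.
\end{align}
This representation is what collapses every moment computation: because $a_{i,j}(1-a_{i,j})=0$ pointwise, any product mixing the two terms vanishes identically.

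First I would verify that each summand has mean zero. Using \cref{assumption_noise}\cref{item_ass_2aa} (zero mean) and \cref{assumption_noise}\cref{item_ass_2bb} ($(\varepsilon^{(0)}_{i,j},\varepsilon^{(1)}_{i,j})\indep\eta_{i,j}$, hence $\varepsilon^{(a)}_{i,j}\indep a_{i,j}$), the factorization $\Expectation[\varepsilon_{i,j}^{(a)}a_{i,j}]=\Expectation[\varepsilon_{i,j}^{(a)}]\Expectation[a_{i,j}]=0$ gives $\Expectation[\variance_{i,j}]=0$. Next I would compute the variance: squaring the display, dropping the cross term via $a_{i,j}(1-a_{i,j})=0$, and using $a_{i,j}^2=a_{i,j}$, $\Expectation[a_{i,j}]=p_{i,j}$, together with the independence of $\varepsilon$ from $a$, yields $\Variance(\variance_{i,j})=(\sigma_{i,j}^{(1)})^2/p_{i,j}+(\sigma_{i,j}^{(0)})^2/(1-p_{i,j})$. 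Since the summands are independent across $i$ by \cref{assumption_noise}\cref{item_ass_2aa}, averaging over $i$ recovers exactly $\wbar{\sigma}_j^2$ from \cref{eq_normality_variance_thm}, so $\frac{1}{\wbar{\sigma}_j\sqrt{N}}\sum_i\variance_{i,j}$ has variance one for every $N$, confirming the normalization is correct.

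With that in hand, I would invoke the Lyapunov CLT for the triangular array $X_{N,i}\defn\variance_{i,j}/(\wbar{\sigma}_j\sqrt{N})$, verifying the fourth-moment (Lyapunov with $\delta=2$) condition. Using the display once more with $a_{i,j}(1-a_{i,j})=0$ and $a_{i,j}^4=a_{i,j}$ gives the pointwise identity $\variance_{i,j}^4=(\varepsilon_{i,j}^{(1)})^4 a_{i,j}/p_{i,j}^4+(\varepsilon_{i,j}^{(0)})^4(1-a_{i,j})/(1-p_{i,j})^4$, whence $\Expectation[\variance_{i,j}^4]=\Expectation[(\varepsilon_{i,j}^{(1)})^4]/p_{i,j}^3+\Expectation[(\varepsilon_{i,j}^{(0)})^4]/(1-p_{i,j})^3$. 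Here the two structural assumptions do the work: \cref{assumption_noise}\cref{item_ass_2cc} ($\subGaussian$ norm at most $\sigmax$) bounds each fourth moment by a universal multiple of $\sigmax^4$, and \cref{assumption_pos} ($\lambda\le p_{i,j}\le 1-\lambda$) bounds $p_{i,j}^{-3}$ and $(1-p_{i,j})^{-3}$ by $\lambda^{-3}$. This yields a uniform bound $\Expectation[\variance_{i,j}^4]\le c\,\sigmax^4/\lambda^3$, so that
\begin{align}
\sum_{i\in[N]}\Expectation[X_{N,i}^4]\le \frac{c\,\sigmax^4}{\lambda^3\,\wbar{\sigma}_j^4\,N}.
\end{align}
Condition \cref{item_sigma_bounded_below} keeps $\wbar{\sigma}_j^2$ bounded away from zero, so the right-hand side tends to zero, the Lyapunov condition holds, and the Lyapunov/Lindeberg--Feller CLT delivers the stated convergence.

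The main obstacle, though a mild one, is ensuring the moment control is genuinely uniform in both $i$ and $N$: the denominator bound (from \cref{assumption_pos}) and the numerator bound (from sub-Gaussianity) are needed simultaneously, and the argument would break if $\wbar{\sigma}_j^2$ were allowed to degenerate, which is precisely why condition \cref{item_sigma_bounded_below} is imposed. Everything else is a routine second- and fourth-moment calculation, greatly simplified by the $a_{i,j}(1-a_{i,j})=0$ identity.
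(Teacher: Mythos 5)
Your proof is correct, and it shares the paper's skeleton: both recognize $\sum_{i\in[N]}\variance_{i,j}$ as a sum of independent, mean-zero variables, both show $\Variance(\variance_{i,j}) = (\sigma_{i,j}^{(1)})^2/p_{i,j} + (\sigma_{i,j}^{(0)})^2/(1-p_{i,j})$ so that the normalization by $\wbar{\sigma}_j\sqrt{N}$ is exact, and both conclude with the Lyapunov CLT (the paper's \cref{lemma_clt}). The execution differs in two places, and yours is the more elementary of the two. For the variance, the paper keeps the representation in terms of $\eta_{i,j}$, computes the variances of $\varepsilon_{i,j}^{(1)}(1+\eta_{i,j}/p_{i,j})$ and $\varepsilon_{i,j}^{(0)}(1-\eta_{i,j}/(1-p_{i,j}))$ separately, and shows their covariance vanishes because $\Expectation[\eta_{i,j}^2]=p_{i,j}(1-p_{i,j})$; you instead substitute $a_{i,j}=p_{i,j}+\eta_{i,j}$ and exploit the pointwise identities $a_{i,j}(1-a_{i,j})=0$ and $a_{i,j}^k=a_{i,j}$, which annihilate every cross term before any expectation is taken. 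For the Lyapunov condition, the paper takes $\omega=1$ and bounds absolute third moments by noting that $\variance_{i,j}$ is $\subExponential$ uniformly in $i$ (a bound that implicitly invokes \cref{assumption_pos} through the $1/p_{i,j}$ factors entering its norm), whereas you take $\omega=2$ and compute the fourth moment in closed form, $\Expectation[\variance_{i,j}^4]=\Expectation[(\varepsilon_{i,j}^{(1)})^4]/p_{i,j}^3+\Expectation[(\varepsilon_{i,j}^{(0)})^4]/(1-p_{i,j})^3$, bounded by $c\,\sigmax^4/\lambda^3$ via sub-Gaussianity (\cref{assumption_noise}\cref{item_ass_2cc}) and positivity (\cref{assumption_pos}). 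Both verifications succeed—the paper's Lyapunov ratio decays as $N^{-1/2}$ and yours as $N^{-1}$, either being sufficient—but your route avoids the sub-Exponential/sub-Weibull machinery entirely and makes the role of $\lambda$ explicit, which is a genuine simplification enabled by the binary structure of the treatment.
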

Now, the result in \cref{thm_normality} follows from Slutsky's theorem.

\subsection{Proof of \texorpdfstring{\cref{lemma_bias_dr}}{}}
\label{subsec_bias}
Fix any $j \in [M]$. Consider any $a \in \normalbraces{0,1}$. We claim that
\begin{align}
    \frac{1}{\sqrt{N}} \sum_{i\in [N]} \biasterm^{(a, \mathrm{DR})}_{i,j} \leq O\Bigparenth{ \sqrt{N} \Rcol\bigparenth{\empo} \RP} + o_p(1).
    \label{eq_bias_expression_fsg_dr}
\end{align}
    We provide a proof of this claim at the end of this section.
    Then, using \cref{eq_bias_expression_fsg_dr} and the fact that $\wbar{\sigma}_j \geq c > 0$ as per condition \cref{item_sigma_bounded_below}, we obtain the following,
    \begin{align}
         \frac{1}{\wbar{\sigma}_j \sqrt{N}}  \sum_{i \in [N]} \!\!\Bigparenth{\biasterm^{(1, \mathrm{DR})}_{i,j} \!+\! \biasterm^{(0, \mathrm{DR})}_{i,j}} & \!\leq \frac{1}{c} \biggparenth{O\Bigparenth{ \sqrt{N} \RTheta \RP} + o_p(1)}\\
         & \sequal{(a)} \frac{1}{c} \biggparenth{ \sqrt{N} o_p(N^{-1/2}) + o_p(1)} \sequal{(b)} o_p(1),
         \label{eq_without_conditions}
    \end{align}
    where $(a)$ follows from \cref{item_product_of_error_for_normality}, and $(b)$ follows because $o_p(1) + o_p(1) = o_p(1)$. 
    \medskip

\noindent {\bf Proof of \texorpdfstring{\cref{eq_bias_expression_fsg_dr}}{}}
Recall the partitioning of the units $[N]$ into $\cR_0$ and $\cR_1$ from \cref{assumption_estimates}. Now, to enable the application of concentration bounds, we split the summation over $i \in [N]$ in the left hand side of \cref{eq_bias_expression_fsg_dr} into two parts---one over $i \in \cR_0$ and the other over $i \in \cR_1$---such that the noise terms are independent of the estimates of $\mpoz, \mpoo, \mta$ in each of these parts as in \cref{eq_independence_requirement_estimates_dr,eq_independence_requirement_estimates_dr_p}. 

Fix $a = 1$. Then, \cref{eq:mu1_err,eq_bias1_ij_term} imply that 
\begin{align}
    \biasterm^{(1, \mathrm{DR})}_{i,j}
    &= \frac{ \bigparenth{\what{\theta}_{i,j}^{(1)} \!-\! \theta_{i,j}^{(1)}}\bigparenth{\what{p}_{i,j} \!-\! p_{i,j}}}{\what{p}_{i,j}} \!-\! \frac{\bigparenth{\what{\theta}_{i,j}^{(1)} \!-\! \theta_{i,j}^{(1)}} \eta_{i,j}}{\what{p}_{i,j}} 
    + \frac{\vareps_{i, j}^{(1)}p_{i,j}}{\what{p}_{i,j}} 
    + \frac{\vareps_{i, j}^{(1)}\eta_{i, j}}{\what{p}_{i,j}} 
    -\varepsilon_{i,j}^{(1)} - \frac{\varepsilon_{i,j}^{(1)} \eta_{i,j}}{p_{i,j}}
    \\
    &= \frac{ \bigparenth{\what{\theta}_{i,j}^{(1)} \!-\! \theta_{i,j}^{(1)}}\bigparenth{\what{p}_{i,j} \!-\! p_{i,j}}}{\what{p}_{i,j}} \!-\! \frac{\bigparenth{\what{\theta}_{i,j}^{(1)} \!-\! \theta_{i,j}^{(1)}} \eta_{i,j}}{\what{p}_{i,j}} \!-\! \frac{\varepsilon_{i,j}^{(1)} \bigparenth{\what{p}_{i,j} \!-\! p_{i,j}}}{\what{p}_{i,j}}  \!-\! \frac{\varepsilon_{i,j}^{(1)} \eta_{i,j} \bigparenth{\what{p}_{i,j} \!-\! p_{i,j}}}{\what{p}_{i,j} p_{i,j}}.  \label{eq_bias_1}
\end{align}
Now, note that $| \sum_{i\in[N]}\biasterm^{(1, \mathrm{DR})}_{i,j} | \leq |\sum_{i\in\cR_0} \biasterm^{(1, \mathrm{DR})}_{i,j}|+|\sum_{i\in\cR_1} \biasterm^{(1, \mathrm{DR})}_{i,j}|$. Fix any $s \in \normalbraces{0,1}$. Then, triangle inequality implies that
\begin{align}
  \frac{1}{\sqrt{N}} \Bigabs{\sum_{i\in\cR_s} \biasterm^{(1, \mathrm{DR})}_{i,j}} &  \leq \frac{1}{\sqrt{N}} \Bigabs{\sum_{i\in\cR_s}\frac{ \bigparenth{\what{\theta}_{i,j}^{(1)} \!-\! \theta_{i,j}^{(1)}}\bigparenth{\what{p}_{i,j} \!-\! p_{i,j}}}{\what{p}_{i,j}}} + \frac{1}{\sqrt{N}} \Bigabs{\sum_{i\in\cR_s} \frac{\bigparenth{\what{\theta}_{i,j}^{(1)} \!-\! \theta_{i,j}^{(1)}} \eta_{i,j}}{\what{p}_{i,j}}} \\
    &+ \frac{1}{\sqrt{N}} \Bigabs{\sum_{i\in\cR_s}\frac{\varepsilon_{i,j}^{(1)} \bigparenth{\what{p}_{i,j} \!-\! p_{i,j}}}{\what{p}_{i,j}}}  + \frac{1}{\sqrt{N}} \Bigabs{\sum_{i\in\cR_s}\frac{\varepsilon_{i,j}^{(1)} \eta_{i,j} \bigparenth{\what{p}_{i,j} \!-\! p_{i,j}}}{\what{p}_{i,j} p_{i,j}}}.
    \label{eq_bias_1_triangled_bias}
\end{align}
To control the first term in \cref{eq_bias_1_triangled_bias}, we use 
the Cauchy-Schwarz inequality and \cref{assumption_pos_estimated} as in \cref{sec_proof_thm_fsg} (see \cref{eq:t1,eq_dr_almost_final,eq_dr_final}). 

To control the second term in \cref{eq_bias_1_triangled_bias}, we condition on $\normalbraces{\normalparenth{\what{p}_{i, j}, \what{\theta}_{i,j}^{(1)}}}_{i\in\cR_s}$.
Then, \cref{assumption_estimates} (i.e., \cref{eq_independence_requirement_estimates_dr}) provides that $\normalbraces{\normalparenth{\what{p}_{i, j}, \what{\theta}_{i,j}^{(1)}}}_{i\in\cR_s} \indep \normalbraces{\eta_{i, j}}_{i\in\cR_s}$. As a result, $\sum_{i\in\cR_s} \!\! \bigparenth{\what{\theta}_{i,j}^{(1)} - \theta_{i,j}^{(1)}} \eta_{i,j} / \what{p}_{i,j}$ is $\subGaussian\bigparenth{\bigbrackets{\sum_{i\in\cR_s} \bigparenth{\what{\theta}_{i,j}^{(1)} \!-\! \theta_{i,j}^{(1)}}^2 / \bigparenth{\what{p}_{i,j}}^2}^{1/2}/\sqrt{\lone}}$ because $\eta_{i,j}$ is $ \subG[1/\sqrt{\lone}]$ (see Example 2.5.8 in \citet{vershynin2018high}) as well as zero-mean and independent across all $i \in [N]$ due to \cref{assumption_noise}\cref{item_ass_2aa}.
Then, we have 
\begin{align}
\frac{1}{\sqrt{N}}  \Expectation\biggbrackets{\biggabs{\! \sum_{i\in\cR_s} \!\! \frac{\bigparenth{\what{\theta}_{i,j}^{(1)} \!-\! \theta_{i,j}^{(1)}} \eta_{i,j}}{\what{p}_{i,j}}} \Big| \normalbraces{\normalparenth{\what{p}_{i, j}, \what{\theta}_{i,j}^{(1)}}}_{i\in\cR_s}} & 
\sless{(a)} \frac{c}{\sqrt{N}}  \sqrt{\sum_{i\in\cR_s} \biggparenth{\frac{\what{\theta}_{i,j}^{(1)} \!-\! \theta_{i,j}^{(1)} }{\what{p}_{i,j}}}^2} \\
& \leq \frac{c}{\sqrt{N}}  \twonorm{ \bigparenth{\empoo_{\cdot,j} \!-\! \mpoo_{\cdot,j}} \odiv \emta_{\cdot,j}} \\
& \sless{(b)}  \frac{c}{\lbar} \Rcol\bigparenth{\empoo} \leq \frac{c}{\lbar} \RTheta \sequal{(c)}  o_p(1),
\label{eq_bias_1_triangled_2_bias}
\end{align}
where $(a)$ follows as the first moment of $\subG[\sigma]$ is $O(\sigma)$, $(b)$ follows from \cref{assumption_pos_estimated} and \cref{eq_notation}, and $(c)$ follows from \cref{item_error_for_normality}.

To control the third term in \cref{eq_bias_1_triangled_bias}, we condition on $\normalbraces{\what{p}_{i, j}}_{i\in\cR_s}$. 
Then, \cref{assumption_estimates} (i.e., \cref{eq_independence_requirement_estimates_dr_p}) provides that $\normalbraces{\what{p}_{i, j}}_{i\in\cR_s} \indep \normalbraces{\vareps_{i, j}^{(1)}}_{i\in\cR_s}$. As a result, $\sum_{i\in\cR_s} \!\! \varepsilon_{i,j}^{(1)} \bigparenth{\what{p}_{i,j} \!-\! p_{i,j}}/ \what{p}_{i,j}$ is $\subGaussian\bigparenth{\sigmax\bigbrackets{\sum_{i\in\cR_s} \bigparenth{\what{p}_{i,j} \!-\! p_{i,j}}^2 / \bigparenth{\what{p}_{i,j}}^2}^{1/2}}$ because $\varepsilon_{i,j}^{(1)}$ is $ \subG[\sigmax]$, zero-mean, and independent across all $i \in [N]$ due to \cref{assumption_noise}. Then, we have
\begin{align}
\frac{1}{\sqrt{N}} \Expectation\biggbrackets{ \biggabs{  \sum_{i\in\cR_s}\frac{\varepsilon_{i,j}^{(1)} \bigparenth{\what{p}_{i,j} \!-\! p_{i,j}}}{\what{p}_{i,j}}} \Big| \normalbraces{\what{p}_{i, j}}_{i\in\cR_s}} 
& \sless{(a)} \frac{c \sigmax}{\sqrt{N}} 
\sqrt{\sum_{i\in\cR_s}
 \biggparenth{\frac{\what{p}_{i,j} \!-\! p_{i,j}}{\what{p}_{i,j}}}^{2}}\\
& \leq \frac{c \sigmax}{\sqrt{N}}   \twonorm{ \bigparenth{\emta_{\cdot,j} \!-\! \mta_{\cdot,j}} \odiv \emta_{\cdot,j}} \\
& \sless{(b)}  \frac{c\sigmax}{\lbar} \RP \sequal{(c)}  o_p(1),
\label{eq_bias_1_triangled_3_bias}
\end{align}
where $(a)$ follows as the first moment of $\subG[\sigma]$ is $O(\sigma)$, $(b)$ follows from \cref{assumption_pos_estimated} and \cref{eq_notation}, and $(c)$ follows from \cref{item_error_for_normality}.
To control the fourth term in \cref{eq_bias_1_triangled_bias}, we condition on $\normalbraces{\what{p}_{i, j}}_{i\in\cR_s}$.
Then, \cref{assumption_estimates} (i.e., \cref{eq_independence_requirement_estimates_dr_p}) provides that $\normalbraces{ \what{p}_{i, j} }_{i\in\cR_s} \indep \normalbraces{\normalparenth{\eta_{i, j}, \varepsilon^{(1)}_{i,j}}}_{i\in\cR_s}$. As a result, $\sum_{i\in\cR_s}  \varepsilon_{i,j}^{(1)} \eta_{i,j} \bigparenth{\what{p}_{i,j} - p_{i,j}} / \what{p}_{i,j} p_{i,j}$ is $\subExponential\bigparenth{\sigmax\bigbrackets{\sum_{i\in\cR_s} \bigparenth{\what{p}_{i,j} - p_{i,j}}^2 / \bigparenth{\what{p}_{i,j} p_{i,j}}^2}^{1/2}/\sqrt{\lone}}$ because  $\varepsilon_{i,j}^{(1)}\eta_{i,j}$ is $\subE[\sigmax/\sqrt{\lone}]$ due to \cref{lem_prod_subG} as well as zero-mean and independent across all $i\in[N]$ due to \cref{assumption_noise}.
Then, we have
\begin{align}
\frac{1}{\sqrt{N}}  \Expectation\biggbrackets{ \biggabs{ \sum_{i\in\cR_s}\frac{\varepsilon_{i,j}^{(1)} \eta_{i,j} \bigparenth{\what{p}_{i,j} \!-\! p_{i,j}}}{\what{p}_{i,j}p_{i,j}}} \Big| \normalbraces{\what{p}_{i, j}}_{i\in\cR_s}} 
& \sless{(a)} \frac{c \sigmax}{\sqrt{N}} \sqrt{ \sum_{i\in\cR_s} \biggparenth{\frac{\what{p}_{i,j} \!-\! p_{i,j}}{\what{p}_{i,j}p_{i,j}}}^{2}}
\\
& \leq
\frac{c \sigmax}{\sqrt{N}} \twonorm{ \bigparenth{\emta_{\cdot,j} \!-\! \mta_{\cdot,j}} \odiv \bigparenth{\emta_{\cdot,j} \odot \mta_{\cdot,j}}} \\
 & \sless{(b)}  \frac{c \sigmax}{\lbar \lambda} \RP \sequal{(c)}  o_p(1),
\label{eq_bias_1_triangled_4_bias}
\end{align}
where $(a)$ follows as the first moment of $\subE[\sigma]$ is $O(\sigma)$, $(b)$ follows from \cref{assumption_pos_estimated} and \cref{eq_notation}, and $(c)$ follows from \cref{item_error_for_normality}.

Putting together \cref{eq_bias_1_triangled_bias,eq_bias_1_triangled_2_bias,eq_bias_1_triangled_3_bias,eq_bias_1_triangled_4_bias} using \cref{lemma_bigOMarkov}, we have
\begin{align}
    \frac{1}{\sqrt{N}} \Bigabs{\sum_{i\in\cR_s} \biasterm^{(1, \mathrm{DR})}_{i,j}} \leq O\Bigparenth{ \sqrt{N} \Rcol\bigparenth{\empoo} \RP} + o_p(1).
\end{align}
Then, the claim in \cref{eq_bias_expression_fsg_dr} follows for $a=1$ by using $| \sum_{i\in[N]}\biasterm^{(1, \mathrm{DR})}_{i,j} |  \leq |\sum_{i\in\cR_0} \biasterm^{(1, \mathrm{DR})}_{i,j}|+|\sum_{i\in\cR_1} \biasterm^{(1, \mathrm{DR})}_{i,j}|$. The proof of \cref{eq_bias_expression_fsg_dr} for $a = 0$ follows similarly.

\subsection{Proof of \texorpdfstring{\cref{lemma_variance_dr}}{}} 
\label{subsec_variance}
To prove this result, we invoke Lyapunov central limit theorem (CLT).
\begin{lemma}[Lyapunov CLT, see Theorem 27.3 of \citet{billingsley2017probability}]
\label{lemma_clt}
    Consider a sequence $x_1, x_2, \cdots $ of mean-zero independent random variables such that the moments
    $\Expectation[|x_i|^{2+\omega}]$ are finite for some $\omega>0$.
    Moreover,  assume that the \textit{Lyapunov’s condition} is satisfied, i.e., 
    \begin{align}
\displaystyle\sum_{i=1}^N \Expectation[{|{x_i}|^{2+\omega}}]\Big/\Big(\displaystyle\sum_{i=1}^N \E[x_i^2]\Big)^{\frac{2+\omega}{2}} \longrightarrow 0, \label{eq:lya_cond}
        \shortintertext{as $N \to \infty$. Then,}
\displaystyle\sum_{i=1}^N x_i\Big/\Big(\displaystyle\sum_{i=1}^N \E[x_i^2]\Big)^{\frac12} \stackrel{d}{\longrightarrow} \mathcal{N}(0,1),
    \end{align}
as $N \to \infty$.
\end{lemma}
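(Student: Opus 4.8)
The final statement is the classical Lyapunov central limit theorem. Since it is quoted as Theorem 27.3 of \citet{billingsley2017probability}, one option is simply to cite that reference; for completeness, the plan below sketches the standard self-contained argument, which proceeds in two stages---first reducing Lyapunov's condition to Lindeberg's condition, and then establishing the Lindeberg--Feller CLT through characteristic functions. Throughout, write $s_N^2 \defn \sum_{i=1}^N \E[x_i^2]$ and $S_N \defn \sum_{i=1}^N x_i$, so that the goal is $S_N/s_N \stackrel{d}{\longrightarrow} \mathcal N(0,1)$.

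\emph{Stage 1 (Lyapunov $\Rightarrow$ Lindeberg).} I would first show that \cref{eq:lya_cond} implies, for every $\epsilon > 0$,
\[
\frac{1}{s_N^2}\sum_{i=1}^N \E\bigbrackets{x_i^2\,\Indicator\sbraces{|x_i| > \epsilon s_N}} \longrightarrow 0 \qquad (N \to \infty).
\]
This is routine: on the event $\sbraces{|x_i| > \epsilon s_N}$ we have $x_i^2 \leq |x_i|^{2+\omega}/(\epsilon s_N)^{\omega}$, so the displayed quantity is bounded by $\epsilon^{-\omega}\,s_N^{-(2+\omega)}\sum_{i=1}^N \E[|x_i|^{2+\omega}]$, which tends to zero by \cref{eq:lya_cond}. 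In particular, Lindeberg's condition also forces the uniform negligibility $\max_{i \leq N} \E[x_i^2]/s_N^2 \to 0$.

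\emph{Stage 2 (Lindeberg--Feller CLT).} Normalize by $\xi_i \defn x_i/s_N$, so that $\sum_{i=1}^N \sigma_i^2 = 1$ with $\sigma_i^2 \defn \E[\xi_i^2]$, and the Lindeberg bound reads $\sum_i \E[\xi_i^2\,\Indicator\sbraces{|\xi_i|>\epsilon}] \to 0$. Let $\phi_i(t) \defn \E[\exp(\mathrm{i}\,t\,\xi_i)]$ be the characteristic function of $\xi_i$; it suffices, by L\'evy's continuity theorem, to show $\prod_{i=1}^N \phi_i(t) \to e^{-t^2/2}$ for each fixed $t$. Using the expansion $e^{\mathrm{i}y} = 1 + \mathrm{i}y - y^2/2 + r(y)$ with $|r(y)| \leq \min\sbraces{|y|^2,\,|y|^3/6}$ and $\E[\xi_i]=0$, one gets $\phi_i(t) = 1 - \tfrac{t^2}{2}\sigma_i^2 + \E[r(t\xi_i)]$; splitting the remainder according to $|\xi_i| \leq \epsilon$ versus $|\xi_i| > \epsilon$ and invoking the Lindeberg bound yields $\sum_{i=1}^N |\E[r(t\xi_i)]| \leq \tfrac{\epsilon|t|^3}{6} + t^2\,o(1)$, which is made arbitrarily small by sending $N \to \infty$ and then $\epsilon \to 0$. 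Finally I would compare $\prod_i \phi_i(t)$ with $\prod_i e^{-t^2\sigma_i^2/2} = e^{-t^2/2}$ using the product-of-unit-modulus inequality $\bigabs{\prod_i z_i - \prod_i w_i} \leq \sum_i |z_i - w_i|$ together with $|e^{-w}-(1-w)| \leq |w|^2$ and the negligibility $\max_i \sigma_i^2 \to 0$ from Stage 1, which controls $\sum_i \sigma_i^4 \leq (\max_i \sigma_i^2)\sum_i \sigma_i^2 \to 0$.

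I expect \emph{Stage 2} to be the main obstacle: the delicate points are the uniform control of the Taylor remainder $\sum_i \E[r(t\xi_i)]$ via the Lindeberg truncation (rather than a crude moment bound) and the telescoping comparison of the two products of complex numbers, both of which rely essentially on the uniform asymptotic negligibility established at the end of Stage 1. Stage 1 and the concluding invocation of L\'evy's continuity theorem are entirely standard.
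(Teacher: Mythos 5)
Your proposal is correct, but note that the paper does not prove this lemma at all: it is quoted verbatim as Theorem 27.3 of Billingsley (2017) and invoked as a black box in the proof of the variance lemma. Your two-stage sketch --- reducing Lyapunov's condition to Lindeberg's via the truncation bound $x_i^2 \leq |x_i|^{2+\omega}/(\epsilon s_N)^{\omega}$ on $\{|x_i| > \epsilon s_N\}$, then running the characteristic-function proof of Lindeberg--Feller with the remainder bound $|r(y)| \leq \min\{|y|^2, |y|^3/6\}$ and the telescoping product inequality --- is precisely the argument of the cited reference (Billingsley derives Theorem 27.3 as a corollary of his Theorem 27.2 in exactly this way), so you have simply reproduced the standard proof that the paper outsources to the literature; all the delicate steps you flag (uniform negligibility $\max_i \sigma_i^2 \to 0$, the $\epsilon$-split of the Taylor remainder, and the comparison with $\prod_i e^{-t^2\sigma_i^2/2}$) are handled correctly in your sketch.
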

Fix any $j \in [M]$. We apply Lyapunov CLT in \cref{lemma_clt} on the sequence $\variance_{1,j}, \variance_{2,j}, \cdots$ where $\variance_{i,j}$ is as defined in \cref{eq_variance_ij_term}. Note that this sequence is zero-mean from \cref{assumption_noise}\cref{item_ass_2aa} and \cref{assumption_noise}\cref{item_ass_2bb}, and independent from \cref{assumption_noise}\cref{item_ass_2bb}. 
First, we show in \cref{subsubsec_proof_1} that
\begin{align}
    \Variance(\variance_{i,j}) = \frac{(\sigma_{i,j}^{(1)})^2}{p_{i,j}}+\frac{(\sigma_{i,j}^{(0)})^2}{1-p_{i,j}}, \label{eq_variance_claim}
\end{align}
for each $i\in [N]$. Next, we show in \cref{subsubsec_proof_2} that Lyapunov’s condition~\eqref{eq:lya_cond} holds for the sequence $\variance_{1,j}, \variance_{2,j}, \cdots$ with $\omega = 1$.
Finally, applying \cref{lemma_clt} and using the definition of $\wbar{\sigma}_j$ from \cref{eq_normality_variance_thm} yields \cref{lemma_variance_dr}.

\subsubsection{Proof of \texorpdfstring{\cref{eq_variance_claim}}{}} 
\label{subsubsec_proof_1} 
Fix any $i \in [N]$ and consider $\Variance(\variance_{i,j})$. We have
\begin{align}
    \Variance\Bigparenth{\variance_{i,j}} = & \Variance\biggparenth{\varepsilon_{i,j}^{(1)}\Bigparenth{1 + \frac{ \eta_{i,j}}{p_{i,j}}} - \varepsilon_{i,j}^{(0)}\Bigparenth{1 - \frac{ \eta_{i,j}}{1-p_{i,j}}}}. \label{eq_var_0}
\end{align}
We claim the following:
\begin{gather}
 \Variance\biggparenth{\varepsilon_{i,j}^{(1)}\Bigparenth{1 + \frac{ \eta_{i,j}}{p_{i,j}}}}  = \frac{(\sigma_{i,j}^{(1)})^2}{p_{i,j}}, \label{eq_var_1}\\
 \Variance\biggparenth{\varepsilon_{i,j}^{(0)}\Bigparenth{1 - \frac{ \eta_{i,j}}{1-p_{i,j}}}} = \frac{(\sigma_{i,j}^{(0)})^2}{1-p_{i,j}}, \label{eq_var_2}\\
\shortintertext{and}
 \Covariance\biggparenth{\varepsilon_{i,j}^{(1)}\Bigparenth{1 + \frac{ \eta_{i,j}}{p_{i,j}}}, \varepsilon_{i,j}^{(0)}\Bigparenth{1 - \frac{ \eta_{i,j}}{1-p_{i,j}}}} = 0,  \label{eq_var_3}
\end{gather}
with \cref{eq_variance_claim} following from  \cref{eq_var_0,eq_var_1,eq_var_2,eq_var_3}. 

To establish \cref{eq_var_1}, notice that  \cref{assumption_noise}\cref{item_ass_2aa,item_ass_2bb} imply $\varepsilon_{i,j}^{(1)}\indep \eta_{i,j}$ and $\Expectation[\varepsilon_{i,j}^{(1)}] = \Expectation[\eta_{i,j}] = 0 $ so that $\Expectation[\varepsilon_{i,j}^{(1)}(1 + \eta_{i,j}/p_{i,j})] = 0$. Then,
\begin{align}
\Variance\biggparenth{\varepsilon_{i,j}^{(1)}\Bigparenth{1 + \frac{ \eta_{i,j}}{p_{i,j}}}} 
&=\Expectation\biggbrackets{\Bigparenth{\varepsilon_{i,j}^{(1)}\Bigparenth{1 + \frac{ \eta_{i,j}}{p_{i,j}}}}^2} 
= \Expectation\biggbrackets{\Bigparenth{\varepsilon_{i,j}^{(1)}}^2}\Expectation\biggbrackets{\Bigparenth{1 + \frac{ \eta_{i,j}}{p_{i,j}}}^2} \\
& 
= \Expectation\biggbrackets{\Bigparenth{\varepsilon_{i,j}^{(1)}}^2}\biggbrackets{1 + \Expectation\biggbrackets{\frac{ \eta_{i,j}^2}{p_{i,j}^2}}} 
\sequal{(a)} (\sigma_{i,j}^{(1)})^2 \brackets{1 + \frac{p_{i,j}(1-p_{i,j})}{p_{i,j}^2}}\\ &=\frac{(\sigma_{i,j}^{(1)})^2}{p_{i,j}},
\end{align}
where $(a)$ follows because $\Expectation\normalbrackets{\eta_{i,j}^2} = \Variance\normalparenth{\eta_{i,j}} =p_{i,j}(1-p_{i,j})$ from \cref{eq_treatment_model}, and $\Expectation\bigbrackets{\normalparenth{\varepsilon_{i,j}^{(1)}}^2} = \Variance\normalparenth{\varepsilon_{i,j}^{(1)}} = (\sigma_{i,j}^{(1)})^2$ from condition \cref{item_sigma_bounded_below}. A similar argument establishes \cref{eq_var_2}. \cref{eq_var_3} follows from,
\begin{align}
\Covariance\biggparenth{\varepsilon_{i,j}^{(1)}\Bigparenth{1 + \frac{ \eta_{i,j}}{p_{i,j}}}, \varepsilon_{i,j}^{(0)}\Bigparenth{1 - \frac{ \eta_{i,j}}{1-p_{i,j}}}} 
&= 
\Expectation\biggbrackets{\varepsilon_{i,j}^{(1)}\Bigparenth{1 + \frac{ \eta_{i,j}}{p_{i,j}}} \varepsilon_{i,j}^{(0)}\Bigparenth{1 - \frac{ \eta_{i,j}}{1-p_{i,j}}}} \\
&
\sequal{(a)}
\Expectation\biggbrackets{\Bigparenth{1 + \frac{ \eta_{i,j}}{p_{i,j}}} \Bigparenth{1 - \frac{ \eta_{i,j}}{1-p_{i,j}}}} \Expectation[{\varepsilon_{i,j}^{(1)}\varepsilon_{i,j}^{(0)}}] \\
&=
\biggparenth{1 - \Expectation\biggbrackets{\frac{\eta^2_{i,j}}{{p_{i,j} \bigparenth{1-p_{i,j}}}} }}\Expectation[{\varepsilon_{i,j}^{(1)}\varepsilon_{i,j}^{(0)}}] \\
&\sequal{(b)} 0 \cdot\Expectation[{\varepsilon_{i,j}^{(1)}\varepsilon_{i,j}^{(0)}}] = 0,
\end{align}
where $(a)$ follows because $(\varepsilon^{(0)}_{i, j}, \varepsilon^{(1)}_{i, j}) \indep \eta_{i, j}$ from \cref{assumption_noise}\cref{item_ass_2bb} and $(b)$ follows because $\Expectation\normalbrackets{\eta_{i,j}^2} = \Variance\normalparenth{\eta_{i,j}} =p_{i,j}(1-p_{i,j})$.
\subsubsection{Proof of Lyapunov's condition with \texorpdfstring{$\omega=1$}{}}
\label{subsubsec_proof_2}
We have
\begin{align}
    \frac{\sum_{i \in [N]} \Expectation\bigbrackets{\normalabs{\variance_{i,j}}^{3}}}{\bigparenth{\sum_{i \in [N]}  \Variance(\variance_{i,j})}^{3/2}} &= \frac{1}{N^{3/2}}  \frac{\sum_{i \in [N]} \Expectation\bigbrackets{\normalabs{\variance_{i,j}}^{3}}}{\bigparenth{\frac{1}{N}\sum_{i \in [N]}  \Variance(\variance_{i,j})}^{3/2}}
     \sequal{(a)} \frac{1}{N^{3/2}}  \frac{\sum_{i \in [N]} \Expectation\bigbrackets{\normalabs{\variance_{i,j}}^{3}}}{\bigparenth{\wbar{\sigma}_j }^{3/2}} 
\\
    & \sless{(b)} \frac{1}{N^{3/2}}  \frac{\sum_{i \in [N]} \Expectation\bigbrackets{\normalabs{\variance_{i,j}}^{3}}}{c_1^{3/2}}  \sless{(c)} \frac{1}{N^{1/2}} \frac{c_2}{c_1^{3/2}}, \label{eq_pre_limit}
\end{align}
where $(a)$ follows by putting together \cref{eq_normality_variance_thm,eq_variance_claim}, $(b)$ follows because $\wbar{\sigma}_j \geq c_1 > 0$ as per condition \cref{item_sigma_bounded_below}, $(c)$ follows because  the absolute third moments of $\subExponential$ random variables are bounded, after noting that $\variance_{i,j}$ is a $\subExponential$ random variable. Then, condition \eqref{eq:lya_cond} holds for $\omega=1$ as the right hand side of \cref{eq_pre_limit} goes to zero as $N\to \infty$.

\subsection{Proof of \texorpdfstring{\cref{prop_variance_estimate}: \textbf{Consistent variance estimation}}{}}
\label{subsec_proof_variance_estimate}
Fix any $j \in [M]$ and recall the definitions of $\wbar{\sigma}_j^2$ and $\what{\sigma}_j^2$ from \cref{eq_normality_variance_thm,eq_estimated_variance}, respectively. The error $\Delta_j = \what{\sigma}_j^2 - \wbar{\sigma}_j^2$ can be expressed as
\begin{align}
    \Delta_j & = \frac{1}{N}\sum_{i \in [N]} \biggparenth{\frac{ \bigparenth{\what{\theta}^{(1)}_{i,j} - y_{i,j}}^2 a_{i,j}}{\bigparenth{\what{p}_{i,j}}^2} + \frac{ \bigparenth{\what{\theta}^{(0)}_{i,j} - y_{i,j}}^2 \normalparenth{1-a_{i,j}}}{\bigparenth{1-\what{p}_{i,j}}^2}} - \biggparenth{ \frac{(\sigma_{i,j}^{(1)})^2}{p_{i,j}} + \frac{(\sigma_{i,j}^{(0)})^2}{1-p_{i,j}}} \\
    & = \frac{1}{N}\sum_{i \in [N]} \biggparenth{\frac{ \bigparenth{\what{\theta}^{(1)}_{i,j} - y_{i,j}}^2 a_{i,j}}{\bigparenth{\what{p}_{i,j}}^2} - \frac{(\sigma_{i,j}^{(1)})^2}{p_{i,j}}} + \biggparenth{\frac{ \bigparenth{\what{\theta}^{(0)}_{i,j} - y_{i,j}}^2 \normalparenth{1-a_{i,j}}}{\bigparenth{1-\what{p}_{i,j}}^2} - \frac{(\sigma_{i,j}^{(0)})^2}{1-p_{i,j}}} \\
    &\sequal{(a)} \frac{1}{N}  \sum_{i \in [N]} \Bigparenth{\termV[1] + \termV[0]}, \label{eq_var_dr}
\end{align}
where $(a)$ follows after defining
\begin{align}
    \termV[1] \defn \frac{ \bigparenth{\what{\theta}^{(1)}_{i,j} - y_{i,j}}^2 a_{i,j}}{\bigparenth{\what{p}_{i,j}}^2} - \frac{(\sigma_{i,j}^{(1)})^2}{p_{i,j}} \qtext{and} \termV[0]  \defn \frac{ \bigparenth{\what{\theta}^{(0)}_{i,j} - y_{i,j}}^2 \normalparenth{1-a_{i,j}}}{\bigparenth{1-\what{p}_{i,j}}^2} - \frac{(\sigma_{i,j}^{(0)})^2}{1-p_{i,j}}.
\end{align}
for every $(i,j) \in [N] \times [M]$. Then, we have
\begin{align}
    \termV[1] & \sequal{(a)} \frac{\bigparenth{\what{\theta}^{(1)}_{i,j} - \theta_{i,j}^{(1)} -  \varepsilon_{i,j}^{(1)}}^2 \bigparenth{p_{i,j} + \eta_{i,j}} }{\bigparenth{\what{p}_{i,j}}^2} - \frac{(\sigma_{i,j}^{(1)})^2}{p_{i,j}} \\
    & = 
    \frac{ \bigparenth{\what{\theta}_{i,j}^{(1)} \!-\! \theta_{i,j}^{(1)}}^2 a_{i,j}}{\bigparenth{\what{p}_{i,j}}^2} - 
    \!-\! \frac{2 \varepsilon_{i,j}^{(1)} p_{i,j} \bigparenth{\what{\theta}_{i,j}^{(1)} \!-\! \theta_{i,j}^{(1)}}}{\bigparenth{\what{p}_{i,j}}^2} \!-\! \frac{2 \varepsilon_{i,j}^{(1)} \eta_{i,j} \bigparenth{\what{\theta}_{i,j}^{(1)} \!-\! \theta_{i,j}^{(1)}}}{\bigparenth{\what{p}_{i,j}}^2}\\
    & \qquad \qquad   + \frac{ \bigparenth{\varepsilon_{i,j}^{(1)}}^2 p_{i,j}}{\bigparenth{\what{p}_{i,j}}^2} + \frac{ \bigparenth{\varepsilon_{i,j}^{(1)}}^2 \eta_{i,j}}{\bigparenth{\what{p}_{i,j}}^2} - \frac{(\sigma_{i,j}^{(1)})^2}{p_{i,j}},
\end{align}
where $(a)$ follows from \cref{eq_consistency,eq_outcome_model,eq_treatment_model}. A similar derivation for $a = 0$ implies that 
\begin{align}
    \termV[0] & = \frac{ \bigparenth{\what{\theta}^{(0)}_{i,j} - \theta_{i,j}^{(0)} -  \varepsilon_{i,j}^{(0)}}^2 \normalparenth{1-p_{i,j} - \eta_{i,j}}}{\bigparenth{1-\what{p}_{i,j}}^2} - \frac{(\sigma_{i,j}^{(0)})^2}{1-p_{i,j}} \\
    & = 
    \frac{ \bigparenth{\what{\theta}_{i,j}^{(0)} \!-\! \theta_{i,j}^{(0)}}^2 \bigparenth{1-a_{i,j}}}{\bigparenth{1-\what{p}_{i,j}}^2}
    \!-\! \frac{2 \varepsilon_{i,j}^{(0)} \bigparenth{1-p_{i,j}} \bigparenth{\what{\theta}_{i,j}^{(0)} \!-\! \theta_{i,j}^{(0)}}}{\bigparenth{1-\what{p}_{i,j}}^2}  \!+\! \frac{2 \varepsilon_{i,j}^{(0)} \eta_{i,j} \bigparenth{\what{\theta}_{i,j}^{(0)} \!-\! \theta_{i,j}^{(0)}}}{\bigparenth{1-\what{p}_{i,j}}^2}  \\
    & \qquad \qquad + \frac{ \bigparenth{\varepsilon_{i,j}^{(0)}}^2 \bigparenth{1-p_{i,j}}}{\bigparenth{1-\what{p}_{i,j}}^2} - \frac{ \bigparenth{\varepsilon_{i,j}^{(0)}}^2 \eta_{i,j}}{\bigparenth{1-\what{p}_{i,j}}^2} - \frac{(\sigma_{i,j}^{(0)})^2}{1-p_{i,j}}.
\end{align}
Consider any $a \in \normalbraces{0,1}$. We claim that 
\begin{align}
     \frac{1}{N} \Bigabs{\sum_{i \in [N]} \termV[a]} = o_p(1).
     \label{eq_bound_11_var}
\end{align}
We provide a proof of this claim at the end of this section. Then, applying triangle inequality in \cref{eq_var_dr}, we obtain the following
\begin{align}
    \Delta_j & \leq o_p(1) + o_p(1)  
     \sequal{(a)} o_p(1), 
\end{align}
where $(a)$ follows 
because 
$o_p(1) + o_p(1) = o_p(1)$. 

\paragraph{Proof of bound \cref{eq_bound_11_var}.}
This proof follows a very similar road map to that used for establishing the inequality in \cref{eq_bias_expression_fsg_dr}.
Recall the partitioning of the units $[N]$ into $\cR_0$ and $\cR_1$ from \cref{assumption_estimates}. Now, to enable the application of concentration bounds, we split the summation over $i \in [N]$ in the left hand side of \cref{eq_bound_11_var} into two parts---one over $i \in \cR_0$ and the other over $i \in \cR_1$---such that the noise terms are independent of the estimates of $\mpoz, \mpoo, \mta$ in each of these parts as in \cref{eq_independence_requirement_estimates_dr,eq_independence_requirement_estimates_dr_p}. 

Fix $a = 1$. Now, note that $|\sum_{i\in[N]}\termV[1]| \leq |\sum_{i\in\cR_0} \termV[1]|+|\sum_{i\in\cR_1} \termV[1]|$. Fix any $s \in \normalbraces{0,1}$. Then, triangle inequality implies that
\begin{align}
     \frac{1}{N} \Bigabs{\!\sum_{i \in \cR_s} \termV[1]} & \leq \frac{1}{N} \Bigabs{\!\sum_{i \in \cR_s} \frac{ \bigparenth{\what{\theta}_{i,j}^{(1)} \!-\! \theta_{i,j}^{(1)}}^2 a_{i,j}}{\bigparenth{\what{p}_{i,j}}^2}} 
     \!+\!  \frac{1}{N} \Bigabs{\!\sum_{i \in \cR_s}\frac{2 \varepsilon_{i,j}^{(1)} p_{i,j} \bigparenth{\what{\theta}_{i,j}^{(1)} \!-\! \theta_{i,j}^{(1)}}}{\bigparenth{\what{p}_{i,j}}^2}} \\
    &   \!+\!  \frac{1}{N} \Bigabs{\!\sum_{i \in \cR_s}\frac{2 \varepsilon_{i,j}^{(1)} \eta_{i,j} \bigparenth{\what{\theta}_{i,j}^{(1)} \!-\! \theta_{i,j}^{(1)}}}{\bigparenth{\what{p}_{i,j}}^2}} \!+\!  \frac{1}{N} \Bigabs{\!\sum_{i \in \cR_s}\frac{ \bigparenth{\varepsilon_{i,j}^{(1)}}^2 \eta_{i,j}}{\bigparenth{\what{p}_{i,j}}^2}} \!+\!  \frac{1}{N} \Bigabs{\!\sum_{i \in \cR_s} \frac{ \bigparenth{\varepsilon_{i,j}^{(1)}}^2 p_{i,j}}{\bigparenth{\what{p}_{i,j}}^2} \!-\! \frac{(\sigma_{i,j}^{(1)})^2}{p_{i,j}}}. \label{eq_decomp_variance_0}
\end{align}
To bound the first term in \cref{eq_decomp_variance_0}, we have
\begin{align}
    \frac{1}{N} \Bigabs{\!\sum_{i \in \cR_s}\!\! \frac{ \bigparenth{\what{\theta}_{i,j}^{(1)} \!-\! \theta_{i,j}^{(1)}}^2 a_{i,j}}{\bigparenth{\what{p}_{i,j}}^2}} \sless{(a)} \frac{1}{N} \Bigabs{\!\sum_{i \in \cR_s}\!\! \frac{ \bigparenth{\what{\theta}_{i,j}^{(1)} \!-\! \theta_{i,j}^{(1)}}^2}{\bigparenth{\what{p}_{i,j}}^2}} 
    & \sless{(b)} \frac{1}{\lbar^2 N} \twonorm{\empoo_{\cdot,j} \!-\! \mpoo_{\cdot,j}}^2 \\
    & \sequal{(c)} \frac{1}{\lbar^2}  \Bigbrackets{\Rcol\bigparenth{\empoo}}^2  \\
    & \leq \frac{1}{\lbar^2}  \Bigbrackets{\RTheta}^2  \sequal{(d)}  o_p(1) o_p(1) \sequal{(e)}  o_p(1), \label{eq_decomp_variance_1}
\end{align}
where $(a)$ follows as $a_{i,j} \in \normalbraces{0,1}$, $(b)$ follows from \cref{assumption_pos_estimated}, $(c)$ follows from \cref{eq_notation}, $(d)$ follows from \cref{item_error_for_normality}, and $(e)$ follows because $o_p(1) o_p(1) = o_p(1)$.

To control second term in \cref{eq_decomp_variance_0}, 
we condition on $\normalbraces{\bigparenth{\what{p}_{i, j}, \what{\theta}^{(1)}_{i,j}}}_{i\in\cR_s}$. Then, \cref{eq_independence_requirement_estimates_new} provides that $\normalbraces{\bigparenth{\what{p}_{i, j}, \what{\theta}^{(1)}_{i,j}}}_{i\in\cR_s} \indep \normalbraces{\vareps_{i, j}^{(1)}}_{i\in\cR_s}$. As a result, $\sum_{i\in\cR_s} \!\! \varepsilon_{i,j}^{(1)} p_{i,j}  \bigparenth{\what{\theta}_{i,j}^{(1)} - \theta_{i,j}^{(1)}} / \bigparenth{\what{p}_{i,j}}^2$ is $\subGaussian\bigparenth{\sigmax\bigbrackets{\sum_{i\in\cR_s} \bigparenth{p_{i,j}}^2  \bigparenth{\what{\theta}_{i,j}^{(1)} - \theta_{i,j}^{(1)}}^2 / \bigparenth{\what{p}_{i,j}}^4}^{1/2}}$ because $\varepsilon_{i,j}^{(1)}$ is $ \subG[\sigmax]$, zero-mean and independent across all $i \in [N]$ due to \cref{assumption_noise}. Then, we have 
\begin{align}
    \frac{1}{N}  \Expectation\biggbrackets{\biggabs{\! \sum_{i\in\cR_s} \! \frac{2 \varepsilon_{i,j}^{(1)} p_{i,j} \bigparenth{\what{\theta}_{i,j}^{(1)} \!-\! \theta_{i,j}^{(1)}}}{\bigparenth{\what{p}_{i,j}}^2}} \Big| \normalbraces{\normalparenth{\what{p}_{i, j}, \what{\theta}_{i,j}^{(1)}}}_{i\in\cR_s}} & \!\sless{(a)}\! \frac{c\sigmax}{N} \sqrt{\sum_{i\in \cR_s} \biggparenth{\frac{p_{i,j} \bigparenth{\what{\theta}_{i,j}^{(1)} \!-\! \theta_{i,j}^{(1)}}}{\bigparenth{\what{p}_{i,j}}^2}}^2} \\
    & \! \sless{(b)} \! \frac{c \sigmax}{\lbar^2 N} \twonorm{\empoo_{\cdot,j} \!-\! \mpoo_{\cdot,j}}   \\
   &  \sequal{(c)} \frac{c \sigmax}{\lbar^2} \frac{\Rcol\bigparenth{\empoo}}{\sqrt{N}} \leq \frac{c \sigmax}{\lbar^2} \frac{\RTheta}{\sqrt{N}} \sequal{(d)} o_p(1), \label{eq_decomp_variance_2}
\end{align}
where $(a)$ follows as the first moment of $\subG[\sigma]$ is $O(\sigma)$, $(b)$ follows from \cref{assumption_pos,assumption_pos_estimated}, $(c)$ follows from \cref{eq_notation}, and $(d)$ follows from \cref{item_error_for_normality}.

To control third term in \cref{eq_decomp_variance_0}, we condition on $\normalbraces{\bigparenth{\what{p}_{i, j}, \what{\theta}^{(1)}_{i,j}}}_{i\in\cR_s}$. Then, \cref{eq_independence_requirement_estimates_new} provides that $\normalbraces{\bigparenth{\what{p}_{i, j}, \what{\theta}^{(1)}_{i,j}}}_{i\in\cR_s} \indep \normalbraces{\normalparenth{\eta_{i, j}, \varepsilon^{(1)}_{i,j}}}_{i\in\cR_s}$. As a result, $\sum_{i\in\cR_s} \!\! \varepsilon_{i,j}^{(1)} \eta_{i,j}  \bigparenth{\what{\theta}_{i,j}^{(1)} - \theta_{i,j}^{(1)}} / \bigparenth{\what{p}_{i,j}}^2$ is $\subExponential\bigparenth{\sigmax\bigbrackets{\sum_{i\in\cR_s} \bigparenth{\what{\theta}_{i,j}^{(1)} - \theta_{i,j}^{(1)}}^2 / \bigparenth{\what{p}_{i,j}}^4}^{1/2} /\sqrt{\lone}}$ because  $\varepsilon_{i,j}^{(1)}\eta_{i,j}$ is $\subExponential(\sigmax/\sqrt{\lone})$ due to \cref{lem_prod_subG} as well as zero-mean and independent across all $i\in[N]$ due to \cref{assumption_noise}. Then, we have 
\begin{align}
    \frac{1}{N}  \Expectation\biggbrackets{\biggabs{\! \sum_{i\in\cR_s}\! \frac{2 \varepsilon_{i,j}^{(1)} \eta_{i,j} \bigparenth{\what{\theta}_{i,j}^{(1)} \!-\! \theta_{i,j}^{(1)}}}{\bigparenth{\what{p}_{i,j}}^2}} \Big| \normalbraces{\normalparenth{\what{p}_{i, j}, \what{\theta}_{i,j}^{(1)}}}_{i\in\cR_s}} & \!\sless{(a)} \frac{c\sigmax}{N} 
    \sqrt{\sum_{i\in \cR_s} \biggparenth{\frac{\what{\theta}_{i,j}^{(1)} \!-\! \theta_{i,j}^{(1)}}{\bigparenth{\what{p}_{i,j}}^2}}^2} \\
    & \! \sless{(b)} \! \frac{c \sigmax}{\lbar^2 N} \twonorm{\empoo_{\cdot,j} \!-\! \mpoo_{\cdot,j}} \\
    &  \sequal{(c)} \frac{c \sigmax}{\lbar^2} \frac{\Rcol\bigparenth{\empoo}}{\sqrt{N}} \leq \frac{c \sigmax}{\lbar^2} \frac{\RTheta}{\sqrt{N}} \sequal{(d)} o_p(1), \label{eq_decomp_variance_3}
\end{align}
where $(a)$ follows as the first moment of $\subE[\sigma]$ is $O(\sigma)$ \citep[Corollary 3]{zhang2022sharper}, $(b)$ follows from \cref{assumption_pos_estimated}, $(c)$ follows from \cref{eq_notation}, and $(d)$ follows from \cref{item_error_for_normality}.

To control fourth term in \cref{eq_decomp_variance_0}, 
we condition on $\normalbraces{\what{p}_{i, j}}_{i\in\cR_s}$. Then, \cref{eq_independence_requirement_estimates_new} provides that $\normalbraces{ \what{p}_{i, j} }_{i\in\cR_s} \indep \normalbraces{\normalparenth{\eta_{i, j}, \varepsilon^{(1)}_{i,j}}}_{i\in\cR_s}$. As a result, $\sum_{i \in \cR_s} \bigparenth{\varepsilon_{i,j}^{(1)}}^2 \eta_{i,j} / \bigparenth{\what{p}_{i,j}}^2$ is $\subWeibull_{2/3}$ $\bigparenth{\sigmax^2\bigbrackets{\sum_{i\in\cR_s} 1 / \bigparenth{\what{p}_{i,j}}^4}^{1/2} /\sqrt{\lone}}$ because $\normalparenth{\varepsilon_{i,j}^{(1)} }^2 \eta_{i,j}$ is $\subWeibull_{2/3}(\sigmax^2/\sqrt{\lone})$ due to \cref{lem_prod_subW} as well as zero-mean and independent across all $i\in[N]$ due to \cref{assumption_noise}. Then, we have
\begin{align}
\frac{1}{N}  \Expectation\biggbrackets{\biggabs{ \sum_{i \in \cR_s} \frac{ \bigparenth{\varepsilon_{i,j}^{(1)}}^2 \eta_{i,j}}{\bigparenth{\what{p}_{i,j}}^2}} \Big| \normalbraces{\what{p}_{i, j}}_{i\in\cR_s}} 
\sless{(a)} \frac{c\sigmax^2}{N}  \sqrt{\sum_{i\in \cR_s} \frac{1}{\bigparenth{\what{p}_{i,j}}^4}} \sless{(b)} \frac{c\sigmax^2}{\lbar^2\sqrt{N} } = o_p(1), 
    \label{eq_decomp_variance_4}
\end{align}
where $(a)$ follows as the first moment of $\subWeibull_{2/3}(\sigma)$ is $O(\sigma)$ \citep[Corollary 3]{zhang2022sharper} and $(b)$ follows from \cref{assumption_pos_estimated}.

To control fifth term in \cref{eq_decomp_variance_0}, we have
\begin{align}
     \biggabs{\!\sum_{i \in \cR_s} \!\! \biggparenth{\frac{ \bigparenth{\varepsilon_{i,j}^{(1)}}^2 p_{i,j}}{\bigparenth{\what{p}_{i,j}}^2} \!-\! \frac{(\sigma_{i,j}^{(1)})^2}{p_{i,j}}}} & =  \biggabs{\!\sum_{i \in \cR_s} \!\! \biggparenth{\frac{ \bigparenth{\varepsilon_{i,j}^{(1)}}^2 p_{i,j}}{\bigparenth{\what{p}_{i,j}}^2} \!-\! \frac{ \bigparenth{\sigma_{i,j}^{(1)}}^2 p_{i,j}}{\bigparenth{\what{p}_{i,j}}^2} \!+\! \frac{ \bigparenth{\sigma_{i,j}^{(1)}}^2 p_{i,j}}{\bigparenth{\what{p}_{i,j}}^2} \!-\! \frac{(\sigma_{i,j}^{(1)})^2}{p_{i,j}} } } \\
     & \sless{(a)}  \biggabs{\!\sum_{i \in \cR_s} \!\! \biggparenth{\frac{ \bigbrackets{\bigparenth{\varepsilon_{i,j}^{(1)}}^2 \!-\! \bigparenth{\sigma_{i,j}^{(1)}}^2} p_{i,j}}{\bigparenth{\what{p}_{i,j}}^2}}} \!+\! 
     \biggabs{\!\sum_{i \in \cR_s} \!\! \biggparenth{\frac{ \bigparenth{\sigma_{i,j}^{(1)}}^2 p_{i,j}}{\bigparenth{\what{p}_{i,j}}^2} \!-\! \frac{(\sigma_{i,j}^{(1)})^2}{p_{i,j}} } }, \label{eq_decomp_variance_5}
\end{align}
where $(a)$ follows from the triangle inequality. 
To control the first term in \cref{eq_decomp_variance_5}, we condition on $\normalbraces{\what{p}_{i, j}}_{i\in\cR_s}$. 
Then, \cref{eq_independence_requirement_estimates_new} provides that $\normalbraces{\what{p}_{i, j}}_{i\in\cR_s} \indep \normalbraces{\vareps_{i, j}^{(1)}}_{i\in\cR_s}$. Further, $\Expectation\normalbrackets{\normalparenth{\varepsilon_{i,j}^{(1)}}^2 - (\sigma_{i,j}^{(1)})^2} = 0$
due to \cref{item_sigma_bounded_below} and \cref{assumption_noise}. As a result, $\sum_{i \in \cR_s} \bigbrackets{\bigparenth{\varepsilon_{i,j}^{(1)}}^2 \!-\! \bigparenth{\sigma_{i,j}^{(1)}}^2} p_{i,j} / \bigparenth{\what{p}_{i,j}}^2$ is $\subExponential\bigparenth{\sigmax^2\bigbrackets{\sum_{i\in\cR_s} \bigparenth{p_{i,j}}^2 / \bigparenth{\what{p}_{i,j}}^4}^{1/2}}$ because $\normalparenth{\varepsilon_{i,j}^{(1)}}^2 - (\sigma_{i,j}^{(1)})^2$ is $ \subE[\sigmax^2]$ and independent across all $i \in [N]$ due to \cref{lem_prod_subG}. Then, we have 
\begin{align}
\frac{1}{N}  \Expectation\biggbrackets{\biggabs{ \sum_{i \in \cR_s} \frac{ \bigbrackets{\bigparenth{\varepsilon_{i,j}^{(1)}}^2 \!-\! \bigparenth{\sigma_{i,j}^{(1)}}^2} p_{i,j}}{\bigparenth{\what{p}_{i,j}}^2}} \Big| \normalbraces{\what{p}_{i, j}}_{i\in\cR_s}}
     \sless{(a)} \frac{c\sigmax^2}{N}  \sqrt{\sum_{i\in \cR_s} \biggparenth{\frac{p_{i,j}}{\bigparenth{\what{p}_{i,j}}^2}}^2} \sless{(b)} \frac{c\sigmax^2}{\lbar^2\sqrt{N} } = o_p(1), \label{eq_decomp_variance_51}
\end{align}
where $(a)$ follows as the first moment of $\subE[\sigma]$ is $O(\sigma)$ and $(b)$ follows from \cref{assumption_pos_estimated}. To bound the second term in \cref{eq_decomp_variance_5}, applying the Cauchy-Schwarz inequality yields that
\begin{align}
    \frac{1}{N} \biggabs{\sum_{i \in \cR_s} \! \biggparenth{\frac{ \bigparenth{\sigma_{i,j}^{(1)}}^2 p_{i,j}}{\bigparenth{\what{p}_{i,j}}^2} - \frac{(\sigma_{i,j}^{(1)})^2}{p_{i,j}} } } & = \frac{1}{N} \biggabs{\sum_{i \in \cR_s} {\frac{ \bigparenth{\sigma_{i,j}^{(1)}}^2 \bigparenth{\bigparenth{p_{i,j}}^2 - \bigparenth{\what{p}_{i,j}}^2}}{\bigparenth{\what{p}_{i,j}}^2 p_{i,j}}}}  \\
    & \sless{(a)} \frac{2}{N} {\sum_{i \in \cR_s} {\frac{\bigparenth{\sigma_{i,j}^{(1)}}^2 \bigabs{p_{i,j} - \what{p}_{i,j}}}{\bigparenth{\what{p}_{i,j}}^2 p_{i,j}}}} \\
    & \sless{(b)} \frac{2 \sigmax^2 }{\lambda \lbar^2 N} \sum_{i \in \cR_s}  \bigabs{p_{i,j} - \what{p}_{i,j}} \\
    & \sless{(c)} \frac{2 \sigmax^2 }{\lambda \lbar^2 \sqrt{N} }  \twonorm{ \mta_{\cdot,j} - \emta_{\cdot,j}}  \sequal{(d)} \frac{2 \sigmax^2 }{\lambda \lbar^2}\RP \sequal{(e)} o_p(1), \label{eq_decomp_variance_52}
\end{align}
where $(a)$ follows by using $\bigparenth{p_{i,j}}^2 - \bigparenth{\what{p}_{i,j}}^2 = (p_{i,j} + \what{p}_{i,j})(p_{i,j} - \what{p}_{i,j}) \leq 2 |p_{i,j} - \what{p}_{i,j}|$, $(b)$ follows from \cref{assumption_pos,assumption_pos_estimated}, and because the variance of a $\subGaussian$ random variable is upper bounded by the square of its $\subGaussian$ norm, $(c)$ follows by the relationship between $\ell_1$ and $\ell_2$ norms of a vector, $(d)$ follows from \cref{eq_notation}, and $(e)$ follows from \cref{item_error_for_normality}.

Putting together \cref{eq_decomp_variance_0,eq_decomp_variance_1,eq_decomp_variance_2,eq_decomp_variance_3,eq_decomp_variance_4,eq_decomp_variance_5,eq_decomp_variance_51,eq_decomp_variance_52} using \cref{lemma_bigOMarkov}, 
\begin{align}
     \frac{1}{N} \Bigabs{\sum_{i \in \cR_s} \termV[1]} =  o_p(1).
     \label{eq_var_final}
\end{align}
Then, the claim in \cref{eq_bound_11_var} follows for $a=1$ by using $|\sum_{i\in[N]}\termV[1]| \leq |\sum_{i\in\cR_0} \termV[1]|+|\sum_{i\in\cR_1} \termV[1]|$. The proof of \cref{eq_bound_11_var} for $a = 0$ follows similarly.

\section{Proof of \texorpdfstring{Proposition \ref{thm_fsg_ipw_oi} \MakeLowercase{\eqref{fsg_oi_ate_t}}}{}: \fsgoi}
\label{sec_proof_thm_fsg_oi}
Fix any $j \in [M]$. Recall the definitions of the parameter $\ATETrue$ and corresponding outcome imputation estimate $\ATEOI$ from \cref{eq_ate_parameter_combined,eq_counterfactual_mean_oi}, respectively. The error $\Delta \ATETrue^{\mathrm{OI}} = \ATEOI - \ATETrue$ 
can be re-expressed as
\begin{align}
    \Delta \ATETrue^{\mathrm{OI}} & = \frac{1}{N}\sum_{i \in [N]} \Bigparenth{\what{\theta}_{i,j}^{(1)} - \what{\theta}_{i,j}^{(0)}} - \frac{1}{N}\sum_{i \in [N]} \Bigparenth{\theta_{i,j}^{(1)} - \theta_{i,j}^{(0)}}\\
    & =  \frac{1}{N} \sum_{i \in [N]} \biggparenth{ \bigparenth{\what{\theta}_{i,j}^{(1)} - \theta_{i,j}^{(1)} } - \bigparenth{\what{\theta}_{i,j}^{(0)} - \theta_{i,j}^{(0)} }}.
\label{eq_ate_error_t_decomposition_oi_1}
\end{align}
Using the triangle inequality, we have
\begin{align}
    \bigabs{\Delta \ATETrue^{\mathrm{OI}}} \leq 
    \frac{1}{N} \Bigabs{\sum_{i \in [N]} \bigparenth{\what{\theta}_{i,j}^{(1)} - \theta_{i,j}^{(1)} } } + 
    \frac{1}{N} \Bigabs{\sum_{i \in [N]} \bigparenth{\what{\theta}_{i,j}^{(0)} - \theta_{i,j}^{(0)} } }.  \label{eq_ate_error_t_decomposition_oi_2}
\end{align}
Consider any $a \in \normalbraces{0,1}$. We claim that
\begin{align}
    \frac{1}{N} \Bigabs{\sum_{i \in [N]} \bigparenth{\what{\theta}_{i,j}^{(a)} - \theta_{i,j}^{(a)} } } \leq \Rcol\bigparenth{\empo}. \label{eq_ate_error_t_decomposition_oi_3}
\end{align}
The proof is complete by putting together \cref{eq_ate_error_t_decomposition_oi_2,eq_ate_error_t_decomposition_oi_3}.\medskip

\noindent {\bf Proof of \cref{eq_ate_error_t_decomposition_oi_3}}
Fix any $a \in \normalbraces{0,1}$. Using the Cauchy-Schwarz inequality, we have
\begin{align}
   \frac{1}{N} \Bigabs{ \sum_{i \in [N]} \bigparenth{\what{\theta}_{i,j}^{(1)} - \theta_{i,j}^{(1)}}} \leq \frac{1}{N} \stwonorm{\ones} \stwonorm{\empoo_{\cdot,j} - \mpoo_{\cdot,j}} = \frac{1}{\sqrt{N}}\stwonorm{\empoo_{\cdot,j} - \mpoo_{\cdot,j}} \leq \frac{1}{\sqrt{N}} \onetwonorm{\empoo - \mpoo}. \label{eq_guarantee_counterfactual_mean1_oi_t}
\end{align}

\section{Proof of \texorpdfstring{Proposition \ref{thm_fsg_ipw_oi} \MakeLowercase{\eqref{fsg_ipw_ate_t}}}{}: \fsgipw}
\label{sec_proof_thm_fsg_ipw}
Fix any $j \in [M]$. Recall the definitions of the parameter $\ATETrue$ and corresponding inverse probability weighting estimate $\ATEIPW$ from \cref{eq_ate_parameter_combined,eq_counterfactual_mean_ipw}, respectively. The error $\Delta \ATETrue^{\mathrm{IPW}} = \ATEIPW - \ATETrue$ 
can be re-expressed as
\begin{align}
\Delta \ATETrue^{\mathrm{IPW}} & = \frac{1}{N}\sum_{i \in [N]} \Bigparenth{\frac{ y_{i,j} a_{i,j}}{\what{p}_{i,j}} - \frac{ y_{i,j} \normalparenth{1-a_{i,j}}}{1 - \what{p}_{i,j}} } - \frac{1}{N}\sum_{i \in [N]} \Bigparenth{\theta_{i,j}^{(1)} - \theta_{i,j}^{(0)}}\\
& =  \frac{1}{N} \sum_{i \in [N]} \biggparenth{ \Bigparenth{  \frac{y_{i,j} a_{i,j}}{\what{p}_{i,j}  } - \theta_{i,j}^{(1)}} - \Bigparenth{\frac{ y_{i,j} \normalparenth{1-a_{i,j}}}{1 - \what{p}_{i,j}} - \theta_{i,j}^{(0)} }}\\
& \sequal{(a)}  \frac{1}{N}  \sum_{i \in [N]} \Bigparenth{\termIPW[1] + \termIPW[0] }, \label{eq_main_ipw} 
\end{align}
where $(a)$ follows after defining $\termIPW[1] \defn y_{i,j} a_{i,j} / \what{p}_{i,j}  - \theta_{i,j}^{(1)}$ and $\termIPW[0] \defn \theta_{i,j}^{(0)} - y_{i,j} \normalparenth{1-a_{i,j}} / (1 - \what{p}_{i,j})$. Then, we have
\begin{align}
    \termIPW[1] & = \frac{y_{i,j} a_{i,j}}{\what{p}_{i,j}  } - \theta_{i,j}^{(1)} \\ 
    & \sequal{(a)} \frac{\bigparenth{ \theta_{i,j}^{(1)} +  \varepsilon_{i,j}^{(1)}} \bigparenth{p_{i,j} + \eta_{i,j}} }{\what{p}_{i,j}  } - \theta_{i,j}^{(1)} \\
    & = \theta_{i,j}^{(1)} \Bigparenth{\frac{p_{i,j}+\eta_{i, j}}{\what{p}_{i,j}} - 1} +  \vareps_{i, j}^{(1)} \Bigparenth{\frac{p_{i,j}+\eta_{i, j}}{\what{p}_{i,j}}} \\
    & = \frac{  \theta_{i,j}^{(1)}\bigparenth{p_{i,j} - \what{p}_{i,j} }}{\what{p}_{i,j}} + \frac{\theta_{i,j}^{(1)}  \eta_{i,j}}{\what{p}_{i,j}} + \frac{\varepsilon_{i,j}^{(1)} p_{i,j}}{\what{p}_{i,j}} + \frac{\varepsilon_{i,j}^{(1)} \eta_{i,j}}{\what{p}_{i,j}}, \label{eq:mu1_err_ipw}
\end{align}
where $(a)$ follows from \cref{eq_consistency,eq_outcome_model,eq_treatment_model}. A similar derivation for $a = 0$ implies that  
\begin{align}
    \termIPW[0] & = \theta_{i,j}^{(0)} - \frac{ y_{i,j} \normalparenth{1-a_{i,j}}}{1 - \what{p}_{i,j}} \\
    & = - \frac{  \theta_{i,j}^{(0)}\bigparenth{1-p_{i,j} - \bigparenth{1-\what{p}_{i,j}} }}{1-\what{p}_{i,j}} - \frac{\theta_{i,j}^{(0)} (-\eta_{i, j})}{1-\what{p}_{i,j}} - \frac{\varepsilon_{i,j}^{(0)} \bigparenth{1-p_{i,j}}}{1-\what{p}_{i,j}} - \frac{\varepsilon_{i,j}^{(0)} (-\eta_{i, j})}{1-\what{p}_{i,j}}\\
    & = \frac{\theta_{i,j}^{(0)}\bigparenth{p_{i,j} - \what{p}_{i,j}}}{1-\what{p}_{i,j}} + \frac{\theta_{i,j}^{(0)} \eta_{i,j}}{1-\what{p}_{i,j}} - \frac{\varepsilon_{i,j}^{(0)} \bigparenth{1-p_{i,j}}}{1-\what{p}_{i,j}} + \frac{\varepsilon_{i,j}^{(0)} \eta_{i,j}}{1-\what{p}_{i,j}}. \label{eq:mu0_err_ipw}
\end{align}
Consider any $a \in \normalbraces{0,1}$ and any $\delta \in (0,1)$. We claim that, with probability at least $1 - 6\delta$, 
\begin{align}
    \frac{1}{N} \Bigabs{\sum_{i \in [N]} \termIPW[a]} \leq &   \frac{2}{\lbar} \maxmatnorm{ \mpo }\,  \RP + \frac{2 \sqrt{c \ld}}{\lbar \sqrt{\lone N}}\maxmatnorm{ \mpo } + \frac{2\sigmax \sqrt{c \ld}}{\lbar \sqrt{N}} + \frac{2\sigmax \m(c \ld)}{\lbar \sqrt{\lone N}}. \label{eq_bound_11_ipw}
\end{align}
where recall that $\m(c \ld) = \max\bigparenth{c\ld, \sqrt{c \ld}}$. We provide a proof of this claim at the end of this section. Applying triangle inequality in \cref{eq_main_ipw} and using \cref{eq_bound_11_ipw} with a union bound, we obtain that
\begin{align}
    \bigabs{\Delta \ATETrue^{\mathrm{IPW}}} \leq \frac{2}{\lbar} \thetamax\,  \RP + \frac{2 \sqrt{c \ld}}{\lbar \sqrt{\lone N}}\thetamax + \frac{4\sigmax \sqrt{c \ld}}{\lbar \sqrt{N}} + \frac{4\sigmax \m(c \ld)}{\lbar \sqrt{\lone N}},
\end{align}
with probability at least $1-12\delta$.
The claim in \cref{fsg_ipw_ate_t} follows  by  re-parameterizing $\delta$.\medskip

\noindent {\bf Proof of \cref{eq_bound_11_ipw}.} This proof follows a very similar road map to that used for establishing the inequality in \cref{eq_bound_11_dr}. Recall the partitioning of the units $[N]$ into $\cR_0$ and $\cR_1$ from \cref{assumption_estimates}. Now, to enable the application of concentration bounds, we split the summation over $i \in [N]$ in the left hand side of \cref{eq_bound_11_ipw} into two parts---one over $i \in \cR_0$ and the other over $i \in \cR_1$---such that the noise terms are independent of the estimates of $\mpoz, \mpoo, \mta$ in each of these parts as in \cref{eq_independence_requirement_estimates_dr,eq_independence_requirement_estimates_dr_p}. 

Fix $a = 1$ and note that $|\sum_{i\in[N]}\termIPW[1]| \leq |\sum_{i\in\cR_0} \termIPW[1]|+|\sum_{i\in\cR_1} \termIPW[1]|$. Fix any $s \in \normalbraces{0,1}$. Then, \cref{eq:mu1_err_ipw} and triangle inequality imply that
\begin{align}
    \Bigabs{\sum_{i\in\cR_s}\termIPW[1]} \!\leq\! \Bigabs{\sum_{i\in\cR_s}\frac{ \theta_{i,j}^{(1)}\bigparenth{  p_{i,j} \!-\! \what{p}_{i,j}}}{\what{p}_{i,j}}} \!+\!\Bigabs{\sum_{i\in\cR_s}\frac{\theta_{i,j}^{(1)} \eta_{i,j}}{\what{p}_{i,j}}} + \Bigabs{\sum_{i\in\cR_s}\frac{\varepsilon_{i,j}^{(1)} p_{i,j}}{\what{p}_{i,j}}}
     \!+\! \Bigabs{\sum_{i\in\cR_s}\frac{\varepsilon_{i,j}^{(1)} \eta_{i,j}}{\what{p}_{i,j}}}.
     \label{eq:decomp-T_ipw}
\end{align}

Next, note that the decomposition in \cref{eq:decomp-T_ipw} is identical to the one in \cref{eq:decomp-T}, except for the fact when compared to \cref{eq:decomp-T}, the first two terms in \cref{eq:decomp-T_ipw} have a factor of $\theta_{i,j}^{(1)}$ instead of $\bigparenth{\what{\theta}_{i,j}^{(1)} \!-\! \theta_{i,j}^{(1)}}$. As a result, mimicking steps used to derive \cref{eq_dr_almost_final}, we obtain the following bound, with probability at least $1-3\delta$,  
\begin{align}
    \frac{1}{N} \Bigabs{\sum_{i \in \cR_s} \termIPW[1]} 
    & \!\leq\! 
    \frac{1}{\lbar N} \onetwonorm{ \mpoo } \onetwonorm{\emta \!-\! \mta} \!+\! \frac{\sqrt{c \ld}}{ \lbar  \sqrt{\lone} N}\onetwonorm{ \mpoo } \!+\! \frac{\sigmax \sqrt{c \ld}}{\lbar N}\onetwonorm{ \mta } \!+\! \frac{\sigmax \m(c \ld)}{\lbar \sqrt{\lone} N}\onetwonorm{ \allones } \\
    & \!\sless{(a)}\! 
    \frac{1}{\lbar \sqrt{N}} \maxmatnorm{ \mpoo } \onetwonorm{\emta \!-\! \mta} \!+\! \frac{\sqrt{c \ld}}{\lbar\sqrt{\lone N}}\maxmatnorm{ \mpoo } \!+\! \frac{\sigmax \sqrt{c \ld}}{\lbar \sqrt{N}} \!+\! \frac{\sigmax \m(c \ld)}{\lbar \sqrt{\lone N}}\\
    & \!\sless{(b)}\! \frac{1}{\lbar} \maxmatnorm{ \mpoo }\,  \RP + \frac{\sqrt{c \ld}}{\lbar \sqrt{\lone N}}\maxmatnorm{ \mpoo }  + \frac{\sigmax \sqrt{c \ld}}{\lbar \sqrt{N}} + \frac{ \sigmax \m(c \ld)}{\lbar \sqrt{\lone N}},
    \label{eq_ipw_final}
\end{align}
where $(a)$ follows because $\onetwonorm{\mpoo} \leq \sqrt{N} \maxmatnorm{\mpoo}$, $\onetwonorm{ \mta } \leq \sqrt{N}$ and $\onetwonorm{ \allones } = \sqrt{N}$, and $(b)$ follows from \cref{eq_notation}. Then, the claim in \cref{eq_bound_11_ipw} follows for $a=1$ by using \cref{eq_ipw_final} and applying a union bound over $s \in \normalbraces{0,1}$. The proof of \cref{eq_bound_11_ipw} for $a = 0$ follows similarly.

\section{Proofs of \texorpdfstring{Propositions \ref{prop_ssmc} and \ref{alg_guarantee}}{}}
\label{sec_alg_guarantee_proof}
In \cref{proof_prop_ssmc}, we prove \cref{prop_ssmc}, i.e., we show that the estimates of $\mta$, $\mpoz$, and $\mpoo$ generated by $\ssMC$ satisfy \cref{assumption_estimates}. Next, we prove \cref{alg_guarantee} implying that the estimates of $\mta$, $\mpoz$, and $\mpoo$ generated by $\ssSVD$ satisfy the condition \cref{item_product_of_error_for_normality} in \cref{thm_normality} as long as $\sqrt{N} / M = o(1)$. 

\subsection{Proof of \texorpdfstring{\cref{prop_ssmc}}{}: \texorpdfstring{\propssmc}{}}
\label{proof_prop_ssmc}
Consider any matrix completion algorithm $\mca$. We show that
\begin{gather}
\emta_{\cI}, \empo_{\cI} \indep \noisea_{\cI}
\label{eq_independence_requirement_estimates_dr_new}
\shortintertext{and}
\emta_{\cI} \indep \noisea_{\cI}, \noisey_{\cI},
\label{eq_independence_requirement_estimates_dr_p_new}
\end{gather}
for every $\cI \in \cP$ and $a \in \normalbraces{0,1}$, where  $\cP$ is the block partition of $[N] \times [M]$ into four blocks from \cref{assumption_block_noise}. Then, \cref{eq_independence_requirement_estimates_dr,eq_independence_requirement_estimates_dr_p} in \cref{assumption_estimates} follow from \cref{eq_independence_requirement_estimates_dr_new,eq_independence_requirement_estimates_dr_p_new}, respectively.

Consider $\empoz, \empoo$, and $\emta$ as in \cref{ssmc_1,ssmc_2,ssmc_3}. Fix any $a \in \normalbraces{0,1}$. From \cref{eq_mc_sample_split}, note that $\emta_{\cI}$ depends only on $\ta  \otimes \allones^{-\cI}$ and $\empo_{\cI}$ depends on $\ooa  \otimes \allones^{-\mc I}$. In other words, the randomness in $\bigparenth{\emta_{\cI}, \empo_{\cI}}$ stems from the randomness in $\bigparenth{\ta_{-\cI}, \ooa[{-\cI}]}$ which in turn stems from the randomness in $\bigparenth{\noisea_{-\cI}, \noisey_{-\cI}}$. Then, \cref{eq_independence_requirement_estimates_dr_new} follows from \cref{eq_noise_independence_1}. Likewise, the randomness in $\emta_{\cI}$ stems from the randomness in $\ta_{-\cI}$ which in turn stems from the randomness in $\noisea_{-\cI}$. Then, \cref{eq_independence_requirement_estimates_dr_p_new} follows from \cref{eq_noise_independence_2}. 

To prove \cref{eq_independence_requirement_estimates_new}, we show that
\begin{gather}
\emta_{\cI}, \empo_{\cI}  \indep \noisea_{\cI}, \noisey_{\cI},
\label{eq_independence_requirement_estimates_dr_3}
\end{gather}
for every $\cI \in \cP$ and $a \in \normalbraces{0,1}$.
As mentioned above, the randomness in $\bigparenth{\emta_{\cI}, \empo_{\cI}}$ stems from the randomness in $\bigparenth{\ta_{-\cI}, \ooa[{-\cI}]}$ which in turn stems from the randomness in $\bigparenth{\noisea_{-\cI}, \noisey_{-\cI}}$. Then, \cref{eq_independence_requirement_estimates_dr_3} follows from \cref{eq_noise_independence_3}.

\subsection{Proof of \texorpdfstring{\cref{alg_guarantee}}{}: \texorpdfstring{\propsssvd}{}}
\label{proof_bai_ng_prop}
To prove this result, we first derive a corollary of Lemma A.1 in \citet{bai2021matrix} for a generic matrix of interest $\matT$, such that $\matS = (\matT + \matH) \otimes \matmask$, and apply it to $\mta$, $\mpoz \odot (\allones - \mta)$, and $\mpoo \odot \mta$. We impose the following restrictions on $\matT$, $\matH$, and $\matmask$.
\begin{assumption}[Strong linear latent factors]
\label{ass_T_factors}
    There exist a constant $r_{T} \in [\min\sbraces{N, M}]$ and a collection of latent factors
\begin{align}
	\wtil{\matU} \in \Reals^{N \times r_{T}} 
	\qtext{and} \wtil{\matV} \in \Reals^{M \times r_{T}},
\end{align}
such that, 
\begin{enumerate}[itemsep=1mm, topsep=2mm, label=(\alph*)]
    \item \label{item_ass_factors_a} $\matT$ satisfies the factorization: $\matT = \wtil{\matU} \wtil{\matV}^{\top}$,
    \item \label{item_ass_factors_b} $\twoinfnorm{\wtil{\matU}} \leq c$ and $\twoinfnorm{\wtil{\matV}} \leq c$ for some positive constant $c$, and
    \item \label{item_ass_factors_c} The matrices defined below exist and are positive definite:
    \begin{align}
        \lim_{N \to \infty} \frac{\wtil{\matU}^{\top} \wtil{\matU}}{N} \qtext{and} \lim_{M \to \infty} \frac{\wtil{\matV}^{\top} \wtil{\matV}}{M}.
    \end{align}
\end{enumerate}
\end{assumption}

\begin{assumption}[Zero-mean, weakly dependent, and $\subExponential$ noise]
\label{ass_H_noise} The noise matrix $\matH$ is such that,
\begin{enumerate}[itemsep=1mm, topsep=2mm, label=(\alph*)]
    \item\label{item_h_ass_2a}  $\sbraces{h_{i,j} : i\in[N], j \in [M]}$ are zero-mean $\subExponential$ with the $\subExponential$ norm bounded by a constant $\sigmax$,
    \item\label{item_h_ass_2b} $\sum_{j' \in [M]} \bigabs{\Expectation\normalbrackets{h_{i,j} h_{i,j'}}} \leq c$ for every $i \in [N]$ and $j \in [M]$, and 
    \item\label{item_h_ass_2c}  The elements of $\sbraces{\matH_{i,\cdot} : i\in[N]}$ are mutually independent (across $i$).
\end{enumerate}
\end{assumption}
\begin{assumption}[Strong block factors]
\label{ass_T_factors_extra}
    Consider the latent factors $\wtil{\matU} \in \Reals^{N \times r_{T}}$ and $\wtil{\matV} \in \Reals^{M \times r_{T}}$ from \cref{ass_T_factors}. Let $\cR_{\mathrm{obs}}  \subseteq [N]$ and $\cC_{\mathrm{obs}}  \subseteq [M]$ denote the set of rows and columns of $\matS$, respectively, with all entries observed, and $\cR_{\mathrm{miss}} \defn [N] \setminus \cR_{\mathrm{obs}}$ and $\cC_{\mathrm{miss}} \defn [M]  \setminus \cC_{\mathrm{obs}}$. Let $\wtil{\matU}^{\mathrm{obs}} \in \Reals^{\normalabs{\cR_{\mathrm{obs}}} \times r_T}$ and $\wtil{\matU}^{\mathrm{miss}} \in \Reals^{\normalabs{\cR_{\mathrm{miss}}} \times r_T}$ be the sub-matrices of $\wtil{\matU}$ that keeps the rows corresponding to the indices in $\cR_{\mathrm{obs}}$ and $\cR_{\mathrm{miss}}$, respectively. Let $\wtil{\matV}^{\mathrm{obs}} \in \Reals^{\normalabs{\cC_{\mathrm{obs}}} \times r_T}$ and $\wtil{\matV}^{\mathrm{miss}} \in \Reals^{\normalabs{\cC_{\mathrm{miss}}} \times r_T}$ be the sub-matrices of $\wtil{\matV}$ that keeps the rows corresponding to the indices in $\cC_{\mathrm{obs}}$ and $\cC_{\mathrm{miss}}$, respectively.
    Then, the matrices defined below exist and are positive definite:  
    \begin{align}
        \lim_{N \to \infty}
        \frac{\wtil{\matU}^{\mathrm{obs}\top} \wtil{\matU}^{\mathrm{obs}}}{\normalabs{\cR_{\mathrm{obs}}}}, \!\!\quad 
        \lim_{M \to \infty}
        \frac{\wtil{\matU}^{\mathrm{miss}\top} \wtil{\matU}^{\mathrm{miss}}}{\normalabs{\cR_{\mathrm{miss}}}}, \!\!\quad 
        \lim_{N \to \infty}
        \frac{\wtil{\matV}^{\mathrm{obs}\top} \wtil{\matV}^{\mathrm{obs}}}{\normalabs{\cC_{\mathrm{obs}}}}, \!\!\qtext{and}\!\! 
        \lim_{M \to \infty}
        \frac{\wtil{\matV}^{\mathrm{miss}\top} \wtil{\matV}^{\mathrm{miss}}}{\normalabs{\cC_{\mathrm{miss}}}}. \label{eq_strong_block_factors}
    \end{align}
    Further, the mask matrix $\matmask$ is such that
    \begin{align}
 \normalabs{\cR_{\mathrm{obs}}}= \Omega(N), \qquad \normalabs{\cR_{\mathrm{miss}}}= \Omega(N), \qquad \normalabs{\cC_{\mathrm{obs}}}= \Omega(M), \qtext{and} \normalabs{\cC_{\mathrm{miss}}}= \Omega(M). \label{eq_mask_matrix_size}
    \end{align}
\end{assumption}
The next result characterizes the entry-wise error in recovering the missing entries of a matrix where all entries in one block are deterministically missing (see the discussion in \cref{sub:kbb}) using the $\tallwide$ algorithm (summarized in \cref{sec:TW_algorithm}). Its proof, essentially established as a corollary of \citet[Lemma~A.1]{bai2021matrix}, is provided in \cref{subsec_proof_coro_bai_ng}.
\begin{corollary}
\label{coro_bai_ng}
    Consider a matrix of interest $\matT$, a noise matrix $\matH$, and a mask matrix $\matmask$ such that that \cref{ass_T_factors,ass_H_noise,ass_T_factors_extra} hold.
    Let $\matS  \in \{\Reals \cup \{\star\}\}^{N \times M}$ be the observed matrix as in \cref{eq_mc_main}. 
    Let $\cR_{\mathrm{obs}}  \subseteq [N]$ and $\cC_{\mathrm{obs}}  \subseteq [M]$ denote the set of rows and columns of $\matS$, respectively, with all entries observed.
    Let $\cI = \cR_{\mathrm{miss}} \times \cC_{\mathrm{miss}}$ where $\cR_{\mathrm{miss}} \defn [N] \setminus \cR_{\mathrm{obs}}$ and $\cC_{\mathrm{miss}} \defn [M]  \setminus \cC_{\mathrm{obs}}$. 
    Then, $\tallwide_{r_T}$ produces an estimate $\what{\matT}_{\mc I}$ of $\matT_{\cI}$ such that 
    \begin{align}
        \maxmatnorm{\what{\matT}_{\mc I} - \matT_{\cI}} = O_p\biggparenth{\frac{1}{\sqrt{N}} + \frac{1}{\sqrt{M}}}, \label{eq_bai_ng_coro}
    \end{align}
    as $N, M  \to \infty$.
\end{corollary}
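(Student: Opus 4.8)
The plan is to deduce \cref{coro_bai_ng} from \citet[Lemma~A.1]{bai2021matrix} by verifying that \cref{ass_T_factors,ass_H_noise,ass_T_factors_extra} supply exactly the hypotheses that lemma requires, and then reading off the entry-wise rate. The structural observation that makes this work is that the observed pattern $\matS = (\matT + \matH) \otimes \matmask$ is the single-missing-block pattern that $\tallwide$ is designed for: the tall sub-matrix $\matS^{\tall} = \matS_{[N] \times \cC_{\mathrm{obs}}}$ has all of $[N]$ as fully observed rows and yields row-factor estimates valid over $\cR_{\mathrm{miss}}$, the wide sub-matrix $\matS^{\wide} = \matS_{\cR_{\mathrm{obs}} \times [M]}$ yields column-factor estimates valid over $\cC_{\mathrm{miss}}$, and the overlap $\cR_{\mathrm{obs}} \times \cC_{\mathrm{obs}}$ drives the rotation $\matR^{\miss}$ that aligns the two factor systems so that the reconstruction $\what{\matT}_{\cI} = \matU^{\tall} \wbar{\Sigma}^{\tall} \matR^{\miss} \wbar{\matV}^{\wide\top}$ restricted to $\cI$ is coherent. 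Because $\tallwide_{r_T}$ is run with the correctly specified rank $r_T$ from \cref{ass_T_factors}, no rank-misspecification bias arises and the only error is statistical.

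Then I would match the hypotheses term by term. The factorization $\matT = \wtil{\matU}\wtil{\matV}^{\top}$ and the bounds $\stwoinfnorm{\wtil{\matU}}, \stwoinfnorm{\wtil{\matV}} \leq c$ are \cref{ass_T_factors}\cref{item_ass_factors_a,item_ass_factors_b}; the positive-definiteness of the limiting Gram matrices $\wtil{\matU}^{\top}\wtil{\matU}/N$ and $\wtil{\matV}^{\top}\wtil{\matV}/M$ is \cref{ass_T_factors}\cref{item_ass_factors_c}; and the analogous positive-definiteness on the observed and missing sub-blocks, together with $\normalabs{\cR_{\mathrm{obs}}}, \normalabs{\cR_{\mathrm{miss}}} = \Omega(N)$ and $\normalabs{\cC_{\mathrm{obs}}}, \normalabs{\cC_{\mathrm{miss}}} = \Omega(M)$, is \cref{ass_T_factors_extra}. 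These last conditions are precisely what guarantee that each SVD is taken on a matrix whose signal singular values grow like $\sqrt{NM}$ up to constants and that the rotation step is well-conditioned, which is what underlies the $O_p(N^{-1/2}+M^{-1/2})$ scaling.

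The step I expect to be the main obstacle is reconciling the noise conditions. \citet{bai2021matrix} work under moment-type restrictions on $\matH$, whereas \cref{ass_H_noise} posits zero-mean subExponential entries (\cref{item_h_ass_2a}), weak dependence across columns (\cref{item_h_ass_2b}), and independence across rows (\cref{item_h_ass_2c}). The subExponential norm bound implies uniformly controlled moments of every order, so every moment hypothesis invoked in the lemma is met; the real work is to re-derive the handful of concentration steps in the lemma's proof — the control of signal--noise cross terms such as $N^{-1}\sum_{i} h_{i,j}\wtil{u}_{i}$ and of the quadratic noise terms arising in the SVD perturbation and in forming $\matR^{\miss}$ — using the subGaussian and subExponential tail bounds of \cref{cor_subg_conc,cor_sube_conc} in place of the moment bounds, with \cref{item_h_ass_2b} bounding the variances of column sums and \cref{item_h_ass_2c} supplying the independence needed to apply those inequalities. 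Once each such term is confirmed to be $O_p(N^{-1/2})$ or $O_p(M^{-1/2})$ uniformly over the entries of $\cI$ (so that no slower polylogarithmic correction survives), invoking the lemma yields $\maxmatnorm{\what{\matT}_{\cI} - \matT_{\cI}} = O_p(N^{-1/2} + M^{-1/2})$, which is \cref{eq_bai_ng_coro}.
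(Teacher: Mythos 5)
Your top-level strategy coincides with the paper's: reduce \cref{coro_bai_ng} to \citet[Lemma~A.1]{bai2021matrix} and check its hypotheses, with \cref{ass_T_factors} matching their factor conditions and \cref{ass_T_factors_extra} supplying their Assumptions B and C (block sizes of order $\Omega(N)$, $\Omega(M)$ and positive-definite block Gram limits). Where you diverge is in how the noise conditions are reconciled, and this is where your plan has a genuine gap and also takes on unnecessary work.

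First, the gap. You propose to verify ``exactly the hypotheses that lemma requires,'' but Bai and Ng's Assumption A contains conditions that are \emph{not} implied by \cref{ass_T_factors,ass_H_noise,ass_T_factors_extra} at all: the distinct-eigenvalue condition in their Assumption A(a)(iii) and the asymptotic-normality conditions in their Assumptions A(c) and C. No amount of concentration work will derive these from the paper's hypotheses. The paper's proof resolves this by inspecting Bai and Ng's argument and observing that those conditions are never used in the proof of their Lemma A.1; without that observation (or an equivalent one), your verification cannot be completed, and a referee checking hypotheses literally would reject the reduction. Second, the route you flag as ``the real work''---re-deriving the signal--noise cross terms and quadratic noise terms inside Bai--Ng's proof using the tail bounds of \cref{cor_subg_conc,cor_sube_conc}---is both riskier and unnecessary. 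Riskier because uniform control over the $\normalabs{\cI}$ entries via tail bounds plus a union bound generically produces $\sqrt{\log(NM)}$ corrections, and you do not explain how these are avoided; unnecessary because Bai--Ng's conditions are \emph{moment} conditions, and subExponential entries (\cref{ass_H_noise}\cref{item_h_ass_2a}) have bounded moments of every order. The paper exploits this: it isolates the moment conditions actually used (collected as \cref{ass_H_noise_extra}, i.e., their Assumption A(b)(ii)--(v)) and verifies them by direct computation---triangle inequality plus the column-wise weak dependence of \cref{ass_H_noise}\cref{item_h_ass_2b} for the column-sum condition, row independence from \cref{ass_H_noise}\cref{item_h_ass_2c} to kill all cross-row covariances, and bounded fourth moments for the quartic condition---after which Lemma A.1 applies as a black box and delivers \cref{eq_bai_ng_coro} with no polylogarithmic loss. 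If you replace your ``re-derive the concentration steps'' plan with this moment-verification step and add the observation about which of Bai--Ng's assumptions are unused, your proof closes.
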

Given this corollary, we now complete the proof of \cref{alg_guarantee}. Consider the partition $\cP$ from \cref{assumption_block_noise} and fix any $\cI \in \cP$. Recall that $\ssSVD$ applies $\tallwide$ on $\mta \otimes \allones^{-\cI}$, $\barz \otimes \allones^{-\cI}$, and $\baro \otimes \allones^{-\cI}$, and note that the mask matrix $\allones^{-\cI}$ satisfies the requirement in \cref{ass_T_factors_extra}, i.e., \cref{eq_mask_matrix_size} under \cref{ass_stron_block_factors}.

\subsubsection{Estimating \texorpdfstring{$\mta$}{}.} Consider estimating $\mta$ using $\ssSVD$. To apply \cref{coro_bai_ng}, we use \cref{assumption_lf,assumption_lf_strong} to note that $\mta$ satisfies \cref{ass_T_factors} with rank parameter $r_p$. Then, we use \cref{eq_eta_defn}, \cref{assumption_noise}, and \cref{ass_weak_noise} to note that $\noisea$ satisfies \cref{ass_H_noise}. Finally, we use \cref{ass_stron_block_factors} to note that \cref{ass_T_factors_extra} holds. Step \cref{item_ssSVD_P} of $\ssSVD$ can be rewritten as $\emta = \texttt{Proj}_{\lbar} \bigparenth{\wbar{\mta}}$ and $\wbar{\mta} = \ssMC(\tallwide_{r_1}, \ta, \cP)$ where $r_1 = r_p$. Then,
\begin{align}
    \maxmatnorm{\emta_{\mc I} - \mta_{\cI}} \sless{(a)} \maxmatnorm{\wbar{\mta}_{\mc I} - \mta_{\cI}} \sequal{(b)} O_p\biggparenth{\frac{1}{\sqrt{N}} + \frac{1}{\sqrt{M}}},
\end{align}
where $(a)$ follows from \cref{assumption_pos}, the choice of $\lbar$, and the definition of $\texttt{Proj}_{\lbar}(\cdot)$, and $(b)$ follows from \cref{coro_bai_ng}. Applying a union bound over all $\cI \in \cP$, we have
\begin{align}
    \RP \sless{(a)} \maxmatnorm{\emta - \mta} = O_p\biggparenth{\frac{1}{\sqrt{N}} + \frac{1}{\sqrt{M}}}, \label{eq_p_final_bai_ng}
\end{align}
where $(a)$ follows from the definition of $(1,2)$ operator norm.
\subsubsection{Estimating \texorpdfstring{$\mpoz$}{} and \texorpdfstring{$\mpoo$}{}.} 
For every $a \in \normalbraces{0,1}$, we show that
\begin{align}
    \Rcol\bigparenth{\empo} = O_p\biggparenth{\frac{1}{\sqrt{N}} + \frac{1}{\sqrt{M}}}. \label{eq_t1_to_show}
\end{align}
We focus on $a = 1$ noting that the proof for $a = 0$ is analogous. We split the proof in two cases: (i) $\maxmatnorm{ \bigparenth{\empoo - \mpoo} \odot \emta} \leq  \maxmatnorm{ \mpoo \odot \bigparenth{\emta - \mta} }$ and (ii) $\maxmatnorm{ \bigparenth{\empoo - \mpoo} \odot \emta} \geq  \maxmatnorm{ \mpoo \odot \bigparenth{\emta - \mta} }$.

In the first case, we have
\begin{align}
    \lbar \maxmatnorm{\empoo - \mpoo} \sless{(a)} \maxmatnorm{ \bigparenth{\empoo - \mpoo} \odot \emta} & \leq  \maxmatnorm{ \mpoo \odot \bigparenth{\emta - \mta} } \\
    & \sless{(b)} \maxmatnorm{\mpoo} \maxmatnorm{\emta - \mta}, \label{eq_t1_inter_bai_ng}
\end{align}
where $(a)$ follows from \cref{assumption_pos_estimated} and $(b)$ follows from the definition of $\maxmatnorm{\mpoo}$. Then, 
\begin{align}
    \Rcol\bigparenth{\empoo} \sless{(a)} \maxmatnorm{\empoo \!-\! \mpoo} \sless{(b)} \frac{\maxmatnorm{\mpoo}}{\lbar} \maxmatnorm{\emta \!-\! \mta} \sequal{(c)} \frac{\maxmatnorm{\mpoo}}{\lbar} O_p\biggparenth{\frac{1}{\sqrt{N}} + \frac{1}{\sqrt{M}}},
\end{align}
where $(a)$ follows from the definition of $(1,2)$ operator norm, $(b)$ follows from \cref{eq_t1_inter_bai_ng}, and $(c)$ follows from \cref{eq_p_final_bai_ng}. 
Then, \cref{eq_t1_to_show} follows as $1/\lbar$ and $\maxmatnorm{\mpoo}$ are assumed to be bounded.

In the second case, 
using \cref{eq_outcome_model,eq_treatment_model} to expand $\baro$, we have
\begin{align}
    \baro & = \mpoo \odot \mta + \mpoo \odot \noisea + \noiseyo \odot \mta + \noiseyo \odot \noisea.
\end{align}
Next, we utilize two claims proven in \cref{subsubsec_claim_p_t_factors,subsubsec_claim_p_t_noise} respectively: $\mpoo \odot \mta$ satisfies \cref{ass_T_factors} with rank parameter $r_{\theta_1} r_p$ and 
\begin{align}
    \wbar{\varepsilon}^{(1)} \defn \mpoo \odot \noisea + \noiseyo \odot \mta + \noiseyo \odot \noisea,
\end{align}
satisfies \cref{ass_H_noise}. Finally, \cref{ass_stron_block_factors} in \cref{sec_algorithm} implies that \cref{ass_T_factors_extra} holds. 

Now, note that step \cref{item_ssSVD_Theta1} of $\ssSVD$ can be rewritten as $\empoo = \bmpoo \odiv \emta$ and $\bmpoo = \ssMC(\tallwide_{r_3}, \baro, \cP)$ where $r_3 = r_{\theta_1} r_p$. Then, from \cref{coro_bai_ng},
\begin{align}
    \maxmatnorm{\bmpoo_{\cI} - \mpoo_{\cI} \odot \mta_{\cI}} = O_p\biggparenth{\frac{1}{\sqrt{N}} + \frac{1}{\sqrt{M}}}.
\end{align}
Applying a union bound over all $\cI \in \cP$ and noting that $\bmpoo = \empoo \odot \emta$, we have
\begin{align}
    \maxmatnorm{\empoo \odot \emta - \mpoo \odot \mta} = O_p\biggparenth{\frac{1}{\sqrt{N}} + \frac{1}{\sqrt{M}}}. \label{eq_union_bound_theta_p}
\end{align}
The left hand side of \cref{eq_union_bound_theta_p} can be written as,
\begin{align}
    \maxmatnorm{\empoo \odot \emta - \mpoo \odot \mta} & = \maxmatnorm{\empoo \odot \emta - \mpoo \odot \emta + \mpoo \odot \emta - \mpoo \odot \mta} \\
    & \sgreat{(a)} \maxmatnorm{ \bigparenth{\empoo - \mpoo} \odot \emta} - \maxmatnorm{ \mpoo \odot \bigparenth{\emta - \mta} } \\
    & \sgreat{(b)} \lbar \maxmatnorm{\empoo - \mpoo}  - \maxmatnorm{\mpoo} \maxmatnorm{\emta - \mta}, \label{eq_triangle_bai_ng}
\end{align}
where $(a)$ follows from triangle inequality as $\maxmatnorm{ \bigparenth{\empoo - \mpoo} \odot \emta} \geq  \maxmatnorm{ \mpoo \odot \bigparenth{\emta - \mta} }$ and $(b)$ follows from the choice of $\lbar$ and the definition of $\maxmatnorm{\mpoo}$. Then,
\begin{align}
    \Rcol\bigparenth{\empoo} \sless{(a)} \maxmatnorm{\empoo - \mpoo} & \sless{(b)} \frac{1}{\lbar} \maxmatnorm{\empoo \odot \emta - \mpoo \odot \mta} + \frac{\maxmatnorm{\mpoo}}{\lbar} \maxmatnorm{\emta \!-\! \mta} \\
    & \sequal{(c)} \frac{1}{\lbar} O_p\biggparenth{\frac{1}{\sqrt{N}} + \frac{1}{\sqrt{M}}} + \frac{\maxmatnorm{\mpoo}}{\lbar} O_p\biggparenth{\frac{1}{\sqrt{N}} + \frac{1}{\sqrt{M}}},
\end{align}
where $(a)$ follows from the definition of $L_{1,2}$ norm, $(b)$ follows from \cref{eq_triangle_bai_ng},
and $(c)$ follows from \cref{eq_p_final_bai_ng,eq_union_bound_theta_p}. Then, \cref{eq_t1_to_show} follows as $1/\lbar$ and $\maxmatnorm{\mpoo}$ are assumed to be bounded.

\subsubsection{Proof that \texorpdfstring{$\mpoz \odot (\allones - \mta)$ \text{ and } $\mpoo \odot \mta$}{} satisfy  \texorpdfstring{\cref{ass_T_factors}}{}.}
\label{subsubsec_claim_p_t_factors}
First, we show that $\wbar{\matU}^{(0)} \in \Reals^{N \times r_{\theta_0}(r_p + 1)}$ and $\wbar{\matV}^{(0)} \in \Reals^{N \times r_{\theta_0}(r_p + 1)}$ are factors of $\mpoz \odot (\allones - \mta)$, and $\wbar{\matU}^{(1)} \in \Reals^{N \times r_{\theta_1}r_p}$ and $\wbar{\matV}^{(1)} \in \Reals^{N \times r_{\theta_1}}$ are factors of $\mpoo \odot \mta$ as claimed in \cref{eq_factors_theta_p}. We have
\begin{align}
    \mpoo \odot \mta = \Bigparenth{\sum_{i \in [r_{\theta_1}]} U_{i, \cdot}^{(1)} V_{i, \cdot}^{(1)^\top}} \odot \Bigparenth{\sum_{j \in [r_p]} U_{j, \cdot} V_{j, \cdot}^{\top}} & = \sum_{i \in [r_{\theta_1}]} \sum_{j \in [r_p]} \Bigparenth{ U_{i, \cdot}^{(1)} \odot U_{j, \cdot}  }  \Bigparenth{V_{i, \cdot}^{(1)} \odot V_{j, \cdot}}^\top \\
    & \sequal{(a)} \bigparenth{\matU * \matU^{(1)}} \bigparenth{\matV * \matV^{(1)} }^\top \sequal{(b)} \wbar{\matU}^{(1)} \wbar{\matV}^{(1)^\top},
\end{align}
where $(a)$ follows from the definition of Khatri-Rao product (see \cref{section_introduction}) and $(b)$ follows from the definitions of $\wbar{\matU}^{(1)}$ and $\wbar{\matV}^{(1)}$. The proof for $\mpoz \odot (\allones - \mta)$ follows similarly.
Then, \cref{ass_T_factors}\cref{item_ass_factors_a} holds from \cref{eq_factors_theta_p}. Next, we note that
\begin{align}
    \twoinfnorm{\wbar{\matU}^{(1)}} =  \twoinfnorm{\matU * \matU^{(1)}} \sequal{(a)} \max_{i \in [N]} \sqrt{\sum_{j \in [r_p]} u_{i,j}^2 \sum_{j' \in [r_{\theta_1} ]} \normalparenth{u^{(1)}_{i,j'}}^2} \leq \twoinfnorm{\matU} \twoinfnorm{\matU^{(1)}} \sless{(b)} c,
\end{align}
where $(a)$ follows from the definition of Khatri-Rao product (see \cref{section_introduction}), and $(b)$ follows from \cref{assumption_lf_strong}. Then, $\mpoo \odot \mta$ satisfies \cref{ass_T_factors}\cref{item_ass_factors_b} by using similar arguments on $\wbar{\matV}^{(1)}$. Further, $\mpoz \odot (\allones - \mta)$ satisfies \cref{ass_T_factors}\cref{item_ass_factors_b} by noting that $\twoinfnorm{\wbar{\matU}}$ and $\twoinfnorm{\wbar{\matV}}$ are bounded whenever $\twoinfnorm{\matU}$ and $\twoinfnorm{\matV}$ are bounded, respectively. Finally, \cref{ass_T_factors}\cref{item_ass_factors_c} holds from \cref{assumption_lf_strong}.

\subsubsection{Proof that \texorpdfstring{$\wbar{\varepsilon}^{(1)}$}{} satisfies \texorpdfstring{\cref{ass_H_noise}}{}}
\label{subsubsec_claim_p_t_noise}
Recall that $\wbar{\varepsilon}^{(1)} \defn \mpoo \odot \noisea + \noiseyo \odot \mta + \noiseyo \odot \noisea$. Then, \cref{ass_H_noise}\cref{item_h_ass_2a} holds as $\wbar{\varepsilon}^{(1)}_{i,j}$ is zero-mean from \cref{assumption_noise} and \cref{eq_treatment_model}, and $\wbar{\varepsilon}^{(1)}_{i,j}$ is $\subExponential$ because $\varepsilon^{(1)}_{i,j} \eta_{i,j}$ is a $\subExponential$ random variable \cref{lem_prod_subG}, every $\subGaussian$ random variable is $\subExponential$ random variable, and sum of $\subExponential$ random variables is a $\subExponential$ random variable. Finally, \cref{ass_H_noise}\cref{item_h_ass_2b} and \cref{ass_H_noise}\cref{item_h_ass_2b} hold from \cref{ass_weak_noise}\cref{item_weak_ass_2bb} and \cref{ass_weak_noise}\cref{item_weak_ass_2cc}, respectively.

\subsection{Proof of \texorpdfstring{\cref{coro_bai_ng}}{}}
\label{subsec_proof_coro_bai_ng}
\cref{coro_bai_ng} is a direct application of \citet[Lemma~A.1]{bai2021matrix}, specialized to our setting. Notably, \cite{bai2021matrix} make three assumptions numbered A, B, and C in their paper to establish the corresponding result. It remains to establish that the conditions assumed in \cref{coro_bai_ng} imply the necessary conditions used in the proof of \citet[Lemma~A.1]{bai2021matrix}. First, note that certain assumptions in \citet{bai2021matrix} are not actually used in their proof of Lemma A.1 (or in the proof of other results used in that proof), namely, the distinct eigenvalue condition in Assumption A(a)(iii), the asymptotic normality conditions in Assumption A(c) and the asymptotic normality conditions in Assumption C. Next, \cref{eq_mask_matrix_size} in  \cref{ass_T_factors_extra} implies Assumption B and \cref{eq_strong_block_factors} in \cref{ass_T_factors_extra} is equivalent to the remaining conditions in Assumption C. 

It remains to show how \cref{ass_T_factors,ass_H_noise} imply the remainder of conditions in \citet[Assumptions A]{bai2021matrix}. 
For completeness, these conditions are collected in the following assumption.
\begin{assumption}
\label{ass_H_noise_extra} The noise matrix $\matH$ is such that,
\begin{enumerate}[itemsep=1mm, topsep=2mm, label=(\alph*)]
    \item\label{item_h_ass_2a_extra}  
    $\max_{j \in [M]} \frac{1}{N} \sum_{j' \in [M]} \bigabs{ \sum_{i \in [N]}  \Expectation\normalbrackets{h_{i,j} h_{i,j'}} } \leq c$,
    \item\label{item_h_ass_2b_extra} $\max_{j \in [M]} \bigabs{\Expectation\normalbrackets{h_{i,j} h_{i',j}}} \leq c_{i,i'}$ and $\max_{i \in [N]}  \sum_{i' \in [N]} c_{i,i'} \leq c$,
    \item\label{item_h_ass_2c_extra}   $\frac{1}{NM} \sum_{i,i' \in [N]} \sum_{j,j' \in [M]} \bigabs{ \Expectation\normalbrackets{h_{i,j} h_{i',j'}} } \leq c$, and
    \item\label{item_h_ass_2d_extra} $\max_{j, j' \in [M]} \frac{1}{N^2} \Expectation\bigbrackets{ \bigabs{ \sum_{i \in [N]} \bigparenth{h_{i,j} h_{i,j'} - \Expectation\normalbrackets{h_{i,j} h_{i,j'}}} }^4 }$.
\end{enumerate}
\end{assumption}
\cref{ass_H_noise_extra} is a restatement of the subset of conditions from \citet[Assumption A]{bai2021matrix} necessary in \citet[proof of Lemma A.1]{bai2021matrix} and it essentially requires weak dependence in the noise across measurements and across units. In particular, \cref{ass_H_noise_extra}\cref{item_h_ass_2a_extra}, \cref{item_h_ass_2b_extra}, \cref{item_h_ass_2c_extra}, and \cref{item_h_ass_2d_extra} correspond to Assumption A(b)(ii), (iii), (iv), (v), respectively, of \citet{bai2021matrix}. 
For the other conditions in \citet[Assumption A]{bai2021matrix}, note that \cref{ass_T_factors} above is equivalent to their Assumption A(a)(i) and (ii) of \citet{bai2021matrix} when the factors are non-random as in this work.  Similarly, \cref{ass_H_noise}\cref{item_h_ass_2a} above is analogous to Assumption A(b)(i) of \citet{bai2021matrix}.  Assumption A(b)(vi) of \citet{bai2021matrix} is implied by their other Assumptions for non-random factors as stated in \citet{bai2003inferential}. 

To establish \cref{coro_bai_ng}, it remains to establish that \cref{ass_H_noise_extra} holds, which is done in \cref{assum_11_holds} below.

\subsubsection{\texorpdfstring{\cref{ass_H_noise_extra}}{} holds}
\label{assum_11_holds}
First, \cref{ass_H_noise_extra}\cref{item_h_ass_2a_extra} holds as follows,
\begin{align}
    \max_{j \in [M]} \frac{1}{N} \sum_{j' \in [M]} \Bigabs{\sum_{i \in [N]}  \Expectation\bigbrackets{h_{i,j} h_{i,j'}} } \sless{(a)} \max_{j \in [M]} \frac{1}{N} \sum_{i \in [N]}  \sum_{j' \in [M]} \Bigabs{ \Expectation\bigbrackets{h_{i,j} h_{i,j'}} } \sless{(b)} \max_{j \in [M]} \frac{1}{N} \sum_{i \in [N]} c = c, 
\end{align}
where $(a)$ follows from triangle inequality and  $(b)$ follows from \cref{ass_H_noise}\cref{item_h_ass_2b}. Next, from 
\cref{ass_H_noise}\cref{item_h_ass_2a} and \cref{ass_H_noise}\cref{item_h_ass_2c}, we have
\begin{align} \label{cases_assum}
    \max_{j \in [M]} \bigabs{\Expectation\normalbrackets{h_{i,j} h_{i',j}}} =  \begin{cases}
        0 & \text{ if } i \neq i'\\
        \max_{j \in [M]} \bigabs{\Expectation\normalbrackets{h_{i,j}^2}} \leq c & \text{ if } i = i'
    \end{cases}
\end{align}
Then,  \cref{ass_H_noise_extra}\cref{item_h_ass_2b_extra} holds as follows,
\begin{align}
    \max_{i \in [N]} \max_{j \in [M]}  \sum_{i' \in [N]} \bigabs{\Expectation\normalbrackets{h_{i,j} h_{i',j}}} \leq c.
\end{align}
Next, \cref{ass_H_noise_extra}\cref{item_h_ass_2c_extra} holds as follows, 
\begin{align}
    \frac{1}{NM} \sum_{i,i' \in [N]} \sum_{j,j' \in [M]} \bigabs{ \Expectation\normalbrackets{h_{i,j} h_{i',j'}} } \sequal{(a)} \frac{1}{NM} \sum_{i \in [N]} \sum_{j,j' \in [M]} \bigabs{ \Expectation\normalbrackets{h_{i,j} h_{i,j'}} } \sless{(b)} \frac{1}{NM} \sum_{i \in [N]} \sum_{j \in [M]} c = c,
\end{align}
where $(a)$ follows from \cref{ass_H_noise}\cref{item_h_ass_2c} and $(b)$ follows from \cref{ass_H_noise}\cref{item_h_ass_2b}. Next, let $\gamma_{i,j,j'} \defn h_{i,j} h_{i,j'} - \Expectation\normalbrackets{h_{i,j} h_{i,j'}}$ and fix any $j,j' \in [M]$. Then, \cref{ass_H_noise_extra}\cref{item_h_ass_2d_extra} holds as follows,
\begin{align}
    \frac{1}{N^2} \Expectation\Bigbrackets{ \Bigparenth{ \sum_{i \in [N]} \gamma_{i,j,j'} }^4 } & = \frac{1}{N^2} \Expectation\Bigbrackets{ \Bigparenth{\sum_{i_1 \in [N]} \gamma_{i_1,j,j'}} \Bigparenth{\sum_{i_2 \in [N]} \gamma_{i_2,j,j'}} \Bigparenth{\sum_{i_3 \in [N]} \gamma_{i_3,j,j'}} \Bigparenth{\sum_{i_4 \in [N]} \gamma_{i_4,j,j'}} } \\
    & \sequal{(a)} \frac{1}{N^2} \sum_{i \in [N]} \Expectation\Bigbrackets{ \gamma_{i,j,j'}^4 } + \frac{3}{N^2} \sum_{i \neq i' \in [N]} \Expectation\Bigbrackets{ \gamma_{i,j,j'}^2 \gamma_{i',j,j'}^2 } \leq  c, 
\end{align}
where $(a)$ follows from linearity of expectation and \cref{ass_H_noise}\cref{item_h_ass_2c} after by noting that $\Expectation[\gamma_{i,j,j'}] = 0$ for all $i, j, j' \in [N] \times [M] \times [M]$ and $(b)$ follows because $\gamma_{i,j,j'}$ has bounded moments due to \cref{ass_H_noise}\cref{item_h_ass_2a}. 

\section{Data generating process for the simulations}
\label{app_dgp}
The inputs of the data generating process (DGP) are: the probability bound $\lambda$; two positive constants $c^{(0)}$ and $c^{(1)}$; and the standard deviations ${\sigma_{i,j}^{(a)}}$ for every $i \in [N], j \in [M], a \in \normalbraces{0,1}$. The DGP is:
\begin{enumerate}
\item\label{dgp_1} For positive integers $r_p$, $r_{\theta}$ and $r = \max\normalbraces{r_p, r_{\theta}}$,
generate a proxy for the common unit-level latent factors $\matU^{\trm{shared}} \in \Reals^{N \times r}$, such that, for all $i \in [N]$ and $j \in [r]$, $u^{\trm{shared}}_{i,j}$ is independently sampled from a $\Uniform(\sqrt{\lambda}, \sqrt{1-\lambda})$ distribution, with $\lambda\in (0,1)$.
\item\label{dgp_2} Generate proxies for the measurement-level latent factors $\matV, \matV^{(0)}, \matV^{(1)} \in \Reals^{M \times r}$, such that, for all $i \in [M]$ and $j \in [r]$, $v_{i,j}, v_{i,j}^{(0)}, v_{i,j}^{(1)}$ are independently sampled from a $\Uniform(\sqrt{\lambda}, \allowbreak\sqrt{1-\lambda})$ distribution.
\item\label{dgp_3} Generate the treatment assignment probability matrix $\mta$
\begin{align}
   \mta = \frac{1}{r_p} \matU^{\trm{shared}}_{[N] \times [r_p]} \matV_{[M] \times [r_p]}^{\top}.
\end{align}
\item\label{dgp_4} For $a \in \normalbraces{0,1}$, run \texttt{SVD} on $\matU^{\trm{shared}} \matV^{{(a)}\top}$, i.e.,
\begin{align}
    \texttt{SVD}(\matU^{\trm{shared}} \matV^{{(a)}\top}) = ( \matU^{(a)}, \Sigma^{(a)}, \matW^{(a)}  ).
\end{align}
Then, generate the mean potential outcome matrices $\mpoz$ and $ \mpoo$:
\begin{align}
    \mpo = \frac{c^{(a)} \texttt{Sum}(\Sigma^{(a)})}{r_{\theta}} \matU^{(a)}_{[N] \times [r_{\theta}]} \matW^{(a)\top}_{[M] \times [r_{\theta}]} ,
\end{align}
where $\texttt{Sum}(\Sigma^{(a)})$ denotes the sum of all entries of $\Sigma^{(a)}$.
\item\label{dgp_5} Generate the noise matrices $\noiseyz$ and $\noiseyo$, such that, for all $i \in [N], j \in [M], a \in \normalbraces{0,1}$, $\varepsilon^{(a)}_{i, j}$ is independently sampled from a $\cN(0, \normalparenth{\sigma_{i,j}^{(a)}}^2)$ distribution.
Then, determine $y^{(a)}_{i, j}$ from \cref{eq_outcome_model}. 
\item\label{dgp_6} Generate the noise matrix $\noisea$, such that, for all $i \in [N], j \in [M]$, $\eta_{i, j}$ is independently sampled as per \cref{eq_eta_defn}. Then, determine $a_{i,j}$ and $y_{i,j}$ from \cref{eq_treatment_model} and \cref{eq_consistency}, respectively.
\end{enumerate}
\begin{figure}[t!]
    \centering
    \begin{tabular}{cc}
    \includegraphics[width=0.45\linewidth,clip]{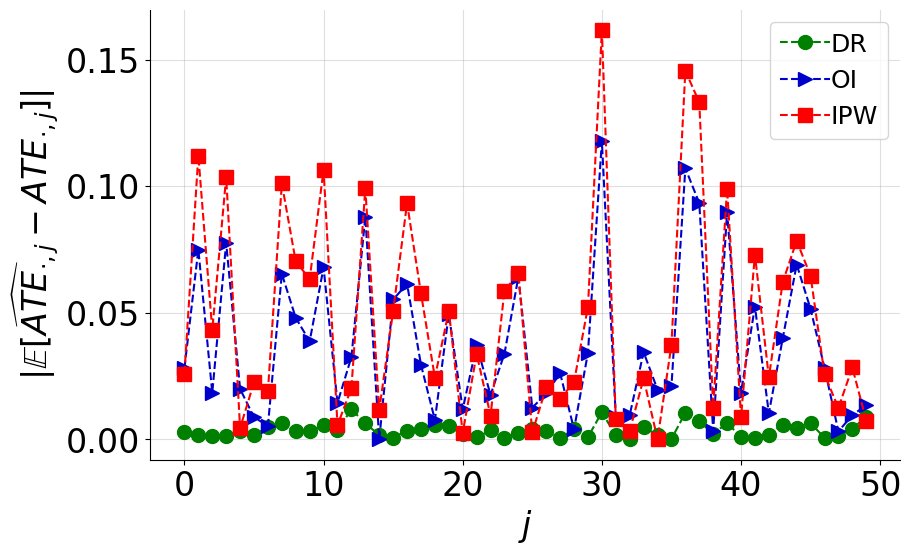} &
    \includegraphics[width=0.45\linewidth,clip]{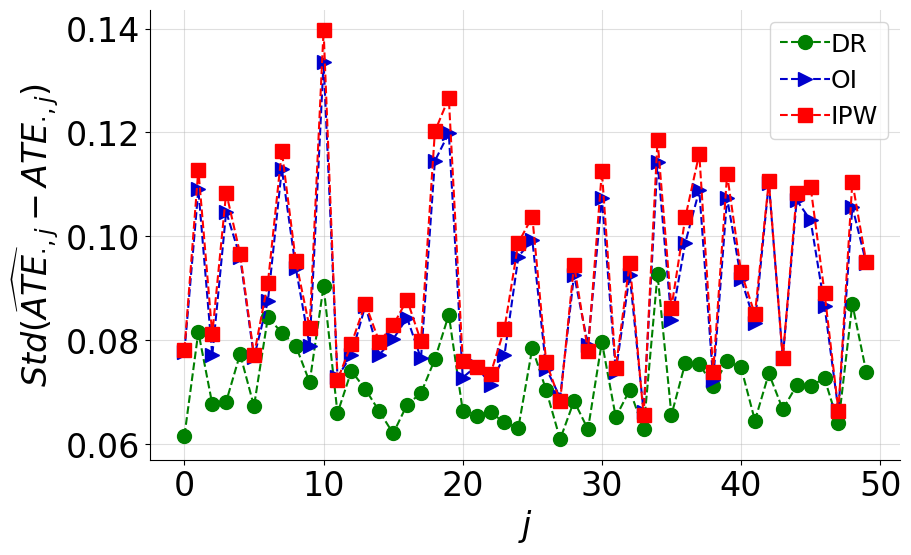}\\
    \multicolumn{2}{c}{$(a)$ $r_p = 3 $, $ r_{\theta} = 3$} \\
    \includegraphics[width=0.45\linewidth,clip]{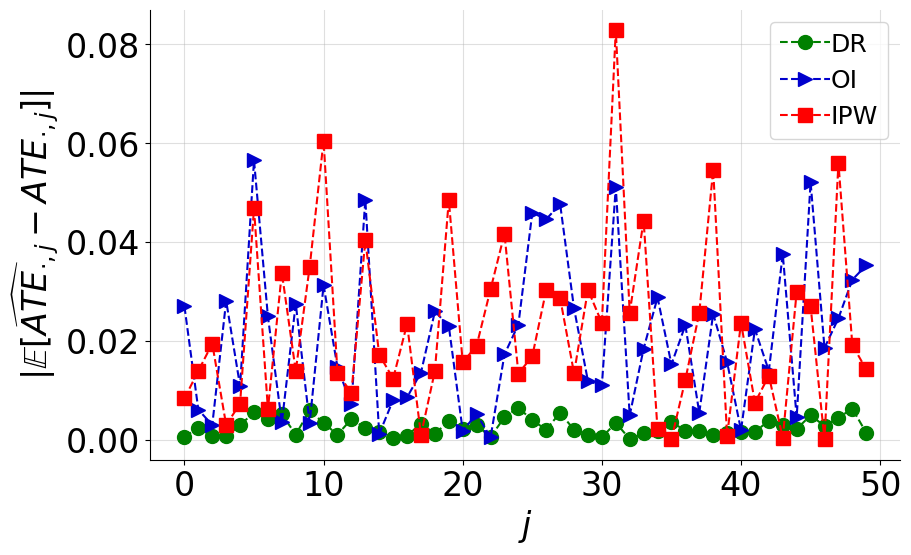} &
    \includegraphics[width=0.45\linewidth,clip]{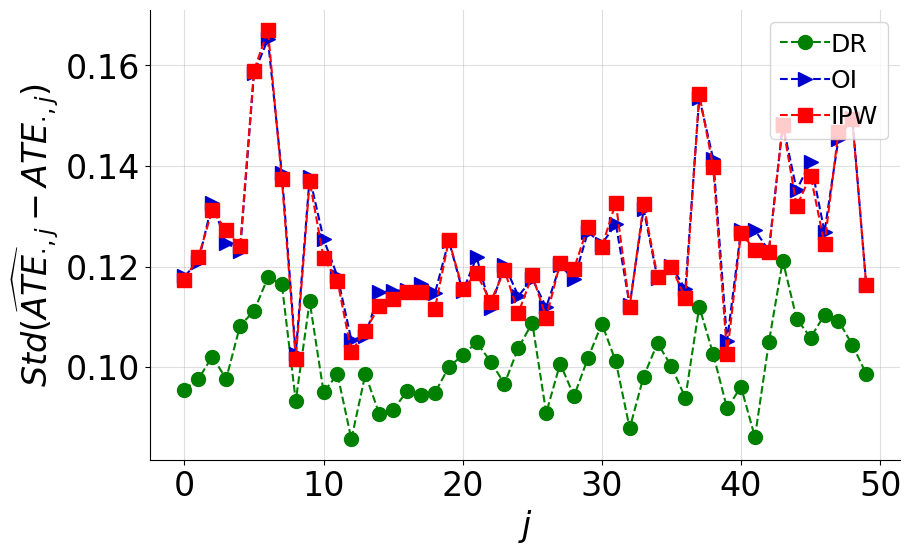}\\
    \multicolumn{2}{c}{$(b)$ $r_p = 3 $, $ r_{\theta} = 5$} \\
    \includegraphics[width=0.45\linewidth,clip]{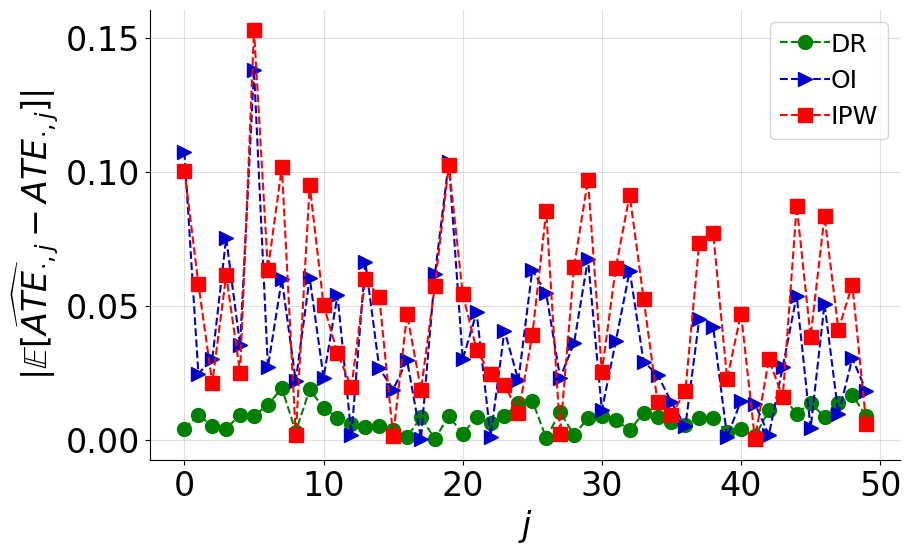} &
    \includegraphics[width=0.45\linewidth,clip]{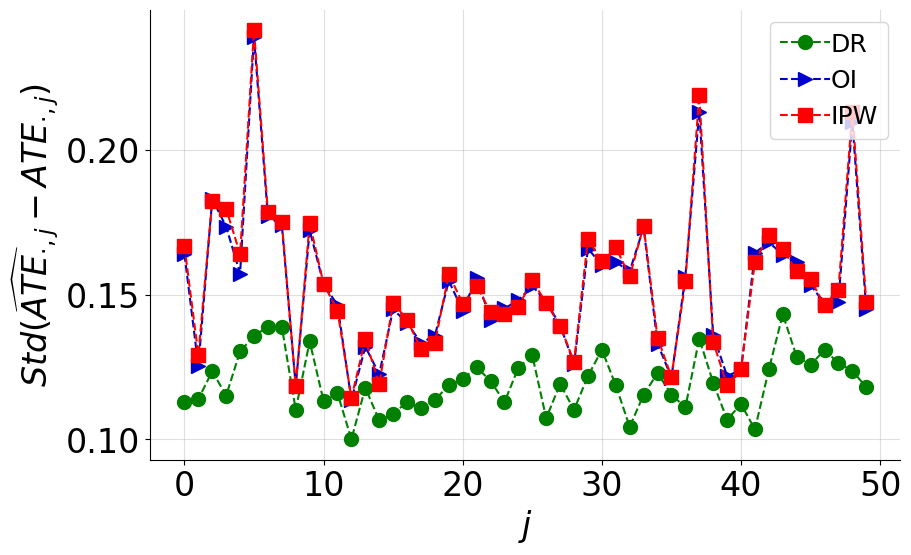}\\
    \multicolumn{2}{c}{$(c)$ $r_p = 5 $, $ r_{\theta} = 3$} \\
    \end{tabular}
\caption{Empirical illustration of the biases and the standard deviations of DR, OI, and IPW estimators for different $j$, and for different $r_p = 5 $ and $ r_{\theta}$.
}
\label{figure_thm2_all}
\end{figure}
In our simulations, we set $\lambda = 0.05$, $c^{(0)} = 1$ and $c^{(1)} = 2$. In practice, instead of choosing the values of $\sigma_{i,j}^{(a)}$ as ex-ante inputs, we make them equal to the standard deviation of all the entries in $\mpo$ for every $i$ and $j$, separately for  $a \in \normalbraces{0,1}$. 

In \cref{figure_thm2_all}, we compare the absolute biases and the standard deviations of OI, IPW, and DR across the first 50 values of $j$ for $N = 1000$, with $r_p = 3$, $ r_{\theta} = 3$ in Panel $(a)$, $r_p = 3$, $ r_{\theta} = 5$ in Panel $(b)$, and $r_p = 5$, $ r_{\theta} = 3$ in Panel $(c)$. For each $j$, the estimate of the biases of OI, IPW, and DR is the average of $\ATEOI-\ATETrue$, $\ATEIPW-\ATETrue$ and $\ATEDR - \ATETrue$ across the $Q$ simulation instances. Likewise, the estimate of the standard deviation of OI, IPW, and DR is the standard deviation of $\ATEOI-\ATETrue$, $\ATEIPW-\ATETrue$ and $\ATEDR - \ATETrue$ across the $Q$ simulation instances. The DR estimator consistently outperforms the OI and IPW estimators in reducing both absolute biases and standard deviations.

\section{Doubly-robust estimation in panel data with lagged effects}
\label{sec:app_dynamic}
\newcommand{\T}{T}
\newcommand{\ra}[1][]{\Rcol(\what\alpha^{#1})}
\newcommand{\cdr}{C_{\mrm{DR}}}
\newcommand{\J}{J}
\newcommand{\est}[1][a]{\what{\mu}_{\cdot, T, J}^{[#1, \mathrm{DR}]}}
\newcommand{\termdefault}{\mbb T}
\newcommand{\tone}[1][a]{\termdefault_{\J}^{(#1)}}
\newcommand{\ttwo}[1][a]{\mbb U_{\J}^{(#1)}}
\newcommand{\tthree}[1][a]{\termdefault_{\J}^{(#1)}}
\newcommand{\ttone}[1][a]{\wtil{\termdefault}_{\J}^{(#1)}}
\newcommand{\htone}[1][a]{\what{\termdefault}_{\J}^{(#1)}}
\newcommand{\etthree}[1][a]{\what{\termdefault}_{\J}^{(#1)}}
\newcommand{\residual}[1][a]{\mbb{V}^{(#1)}}
\newcommand{\eresidual}[1][a]{\what{\mbb{V}}^{(#1)}}

This section describes how the doubly-robust framework of this article can be generalized to a panel data setting with lagged treatment effects. We highlight that, as is the convention in a panel data setting, $t$ denotes the column (time) index and $T$ denotes the total number of columns (time periods). 
\subsection{Setup}
As described in \cref{extension_delayed}, potential outcomes are generated as follows: for all $i \in [N], t \in [T]$, and $a \in \normalbraces{0,1}$, 
\begin{align}
y_{i,t}^{(a|y_{i,t-1})} = \alpha^{(a)} y_{i,t-1} + \theta_{i, t}^{(a)} 
+ \vareps_{i,t}^{(a)},
\label{equation:dynamic}
\end{align}
where $y_{i,t}^{(a|y_{i,t-1})}$ is the potential outcome for unit $i$ at time $t$ given treatment $a\in \{0,1\}$ and lagged outcome $y_{i, t-1}$. This model combines unobserved confounding and lagged treatment effects, where the lagged effect is carried over via the auto-regressive term, $\alpha^{(a)} y_{i,t-1}$, with  $\alpha^{(a)}$ being the auto-regressive parameter for treatment $a\in \{0,1\}$. The treatment possibly starts at $t=1$, and $y_{i,0}$ is assumed to not be affected by any future exposure to the treatment.
Treatment assignments are continually assumed to be generated via \cref{eq_treatment_model}.
As in \cref{eq_consistency}, realized outcomes, $y_{i,t}$, depend on potential outcomes and treatment assignments,
\begin{align}
\label{eq:dynamic_obs}
    y_{i,t} & = y_{i,t}^{(0|y_{i,t-1})} (1-a_{i,t}) + y_{i,t}^{(1|y_{i,t-1})}  a_{i,t},
\end{align}
for all $i \in [N] \stext{and} t \in [T]$. 

\subsection{Target causal estimand}
The lagged effects in \cref{equation:dynamic} imply that the treatment effects need to be defined for sequences of treatments. For concreteness, consider the effect at time $T$ for an always-treat policy, i.e., $a_{i, t}= 1$, versus never-treat, i.e., $a_{i, t}= 0$, for $i \in [N] \stext{and} j \in [T]$. Let $y_{i, \T}^{[1]}$ be the potential outcome for unit $i$ at time $\T$ under always-treat and  $y_{i, \T}^{[0]}$ be the potential outcome for unit $i$ at time $\T$ under never-treat. We aim to estimate the difference in the expected potential outcomes under these two treatment policies averaged over all units,
\begin{gather}
    {\ATETrue[T]} \defeq \muexpected[1] - \muexpected[0], 
    \label{eq:ate_t}
\shortintertext{where} 
    \muexpected[a] \defeq \frac{1}{N}\sum_{i \in [N]} \E[y_{i,\T}^{[a]}],
\end{gather}
with the expectation taken over the distribution of {$\{\varepsilon_{i, t}^{(a)}\}_{i \in [N], t\in [T]}$}, conditioned on the initial outcomes $\sbraces{y_{i, 0}}_{i\in[N]}$. We make the following assumption about the noise in potential outcomes.
\begin{assumption}[Zero-mean noise conditioned on the initial outcomes]
\label{assum:mean_zero_init}
     $\sbraces{{\varepsilon_{i,t}^{(a)}} : i\in[N], t\in [T], a \in \normalbraces{0,1}}$ are mean zero conditioned on $\sbraces{y_{i, 0}}_{i\in[N]}$.
\end{assumption}
\cref{assum:mean_zero_init} holds whenever \cref{assumption_noise}\cref{item_ass_2aa} holds conditioned on the initial outcomes $\sbraces{y_{i, 0}}_{i\in[N]}$. 
{Another sufficient condition for \cref{assum:mean_zero_init} is that 
$(\varepsilon_{i, t}^{(0)},\varepsilon_{i, t}^{(1)})$ are independent in time.} %
Given this, the time dependence in the expected potential outcome $\E[y_{i,\T}^{[a]}]$ is captured as follows: for $a \in \normalbraces{0,1}$
\begin{align}
\E[y_{i, \T}^{[a]}] = (\alpha^{(a)})^{\T}y_{i,0} + \sum_{s=0}^{\T-1} (\alpha^{(a)})^s\theta_{i, \T-s}^{(a)} .
\label{equation:ar_factors}
\end{align}
\cref{equation:ar_factors} forms the basis of our doubly-robust estimator of $\ATETrue[T]$. 

We chose the contrast between always-treat and never-treat for concreteness. However, the framework and the results in this section can be generalized in a straightforward manner to contrast any two pre-specified sequences of treatments, where the treatment can also be chosen stochastically with pre-specified probabilities. For the remainder of this section, we condition on the initial outcomes $\sbraces{y_{i, 0}}_{i\in[N]}$ but omit it from our notation for brevity. 

\subsection{Doubly-robust estimator}
The DR estimator of $\ATETrue[T]$ combines the estimates of $(\alpha^{(0)},\alpha^{(1)})$, $(\mpoz, \mpoo)$, and $\mta$. First, we obtain the estimates $(\what\alpha^{(0)},\what\alpha^{(1)})$. These estimates can be computed using the likelihood approach of \citet{bai2024likelihood} whenever there exists some units such that they all have treatment $a$ for some consecutive time points, for $a \in \normalbraces{0,1}$.

Next, we define the residual matrices $\tooz$ and $\tooo$. Let $\tooz \in \{\Reals\cup \{\star\}\}^{N \times T}$ be a matrix with $(i, t)$-th entry equal to $y_{i, t} - \what \alpha^{(0)} y_{i, t-1}$ if $a_{i, t} = 0$, and equal to $\star$ otherwise. Analogously, let $\tooo  \in \{\Reals\cup \{\star\}\}^{N \times T}$ be a matrix with $(i, t)$-th entry equal to $y_{i, t} - \what \alpha^{(1)} y_{i, t-1}$ if $a_{i, t} = 1$, and equal to $\star$ otherwise. Then, similar to \cref{eq:denoised_estimated}, the application of matrix completion yields the following estimates:
\begin{align}\label{eq:denoised_estimated_dynamic}
\empoz = \mca(\tooz), \quad \empoo = \mca(\tooo), \qtext{and} \emta = \mca(\ta).
\end{align}
Then, the DR estimate is defined as follows:
\begin{align}
 \ATEDR[T,\J] \defeq \est[1] - \est[0] \qtext{where} \est[a] = \frac{1}{N} \sum_{i \in [N]} \brackets{ (\what\alpha^{(a)})^{\T} y_{i, 0} +  \sum_{s=0}^{\J-1} (\what\alpha^{(a)})^s \what{\theta}_{i,\T-s}^{[a,\mathrm{DR}]}}, \label{mu_hat_dynamic}
\end{align}
where
\begin{align}
     \what{\theta}_{i,\T-s}^{[0,\mathrm{DR}]} \defn \what\theta^{(0)}_{i,\T-s} +\big(y_{i, \T-s} -\what\alpha^{(0)} y_{i,\T-s-1}  - \what \theta^{(0)}_{i, \T-s} \big)  \frac{1-a_{i, \T-s}}{1-\what p_{i, \T-s}},
\end{align}
and
\begin{align}
\label{eq_theta_hat_dr_dynamic}
\what{\theta}_{i,\T-s}^{[1,\mathrm{DR}]} \defn \what\theta^{(1)}_{i,\T-s} +\big(y_{i, \T-s}-\what\alpha^{(1)} y_{i,\T-s-1} - \what \theta^{(1)}_{i, \T-s} \big)  \frac{a_{i, \T-s}}{\what p_{i, \T-s}} 
\end{align}
The estimator is parameterized by an integer $\J$, which denotes the contiguous number of time periods preceding time $\T$ that are used to estimate the expectations at time $\T$ (see the summation in \cref{equation:ar_factors}). 
Notably, using preceding $\J$ terms instead of $\T-1$ terms 
allows us to adapt cross-fitting for the setting with lagged treatment effects. Let us briefly elaborate: suppose $(\what\alpha^{(0)},\what\alpha^{(1)})$ are estimated from entries {of $\oo$} in $[N]\times[L]$ for some $L < \T-\J$. Consider the column partitions $\mc C_0 = \sbraces{L+1, \ldots, \T-\J}$ and $\mc C_1 = \sbraces{\T-\J+1, \ldots, \T}$ of times $[T] \setminus [L]$. Suppose \cref{eq_noise_independence_1,eq_noise_independence_2} in \cref{assumption_block_noise} hold for $\cI = \mc R_0 \times \mc C_1$ and $\cI = \mc R_1 \times \mc C_1$ for some row partitions $\mc R_0$ and $\mc R_1$ of units $[N]$. Then, applying \ssMC\ on the residual matrices $\tooz$ and $\tooo$ with row partitions $(\mc R_0, \mc R_1)$ and column partitions $(\mc C_0, \mc C_1)$ ensures that \cref{assumption_estimates} holds for every column in $\mc C_1$ with row partitions $(\mc R_0, \mc R_1)$.

\newcommand{\abound}{\overline{\alpha}}

\subsection{Non-asymptotic guarantees}
Recall the notation for $\RTheta$ and $\RP$ from \cref{eq_notation} and define
\begin{align}
\ra\defn  \sum_{a \in \normalbraces{0,1}} \ra[(a)]  \qtext{where}  \ra[(a)]  \defn |\what\alpha^{(a)} - \alpha^{(a)}|. \label{eq_notation_2}
\end{align}

Our analysis makes two additional assumptions to state a non-asymptotic error bound for $\ATEDR[T,\J] - \ATETrue[T]$.
\begin{assumption}[Bounded auto-regressive parameters and estimates]
    \label{assumption_bounded_ar_params}
    The auto-regressive parameters and their estimates are such that $\normalabs{\alpha^{(a)}} \leq \abound$ and $\normalabs{\what\alpha^{(a)}} \leq \abound$, for all $a\in \sbraces{0, 1}$, where $\abound \in [0, 1)$.
\end{assumption}
\cref{assumption_bounded_ar_params} requires the regression parameters to be bounded by a fixed constant less than $1$. This condition is standard for auto-regressive models, as it implies stability of the outcome process in \cref{equation:dynamic}. The analogous condition on the estimated parameters can be ensured by truncating the estimates to $[0, \abound]$.
\newcommand{\cthree}{C_3}
\newcommand{\ctwo}{C_2}
\begin{assumption}[Bounded observed outcomes, mean potential outcomes, and estimated mean potential outcomes]
\label{assumption_bounded_outcomes}
    The observed outcomes, the mean potential outcomes, and the estimates of the mean potential outcomes are such that $\normalabs{y_{i, t}} \leq C_1$, $\normalabs{\theta^{(a)}_{i, t}} \leq \ctwo$, and $\normalabs{\what\theta^{(a)}_{i, t}} \leq \cthree$, for all 
    $i \in [N]$, $j \in [M]$, and $a\in \sbraces{0, 1}$, where $C_1$, $\ctwo$, and $\cthree$ are universal constants.
\end{assumption}
\cref{assumption_bounded_outcomes} requires the observed outcomes, the mean potential outcomes, and the estimates of the mean potential outcomes to be bounded to simplify our proof.
With a more delicate analysis, \cref{assumption_bounded_outcomes} can be relaxed to require the average observed outcomes over $i\in [N]$, the average mean potential outcomes over $i\in [N]$, and the average estimated mean potential outcomes over $i\in [N]$ to be bounded.

\renewcommand{\thetheorem}{A.\arabic{theorem}}
\setcounter{theorem}{0}

\newcommand{\fsglagged}{Finite Sample Guarantees for DR with lagged effects}
\begin{theorem}[\fsglagged]\label{thm_fsg_dynamic}
Consider the panel data model with lagged effects defined via \cref{equation:dynamic,eq:dynamic_obs}. Suppose 
\cref{assumption_pos,assumption_noise,assumption_pos_estimated,assumption_bounded_ar_params,assumption_bounded_outcomes} hold and \cref{assumption_estimates} holds for $t \in \sbraces{\T-J+1, \ldots, \T}$ for some integer $\J \in [\T]$. Fix $\delta \in (0, 1)$. Then, with probability at least $1-\delta$, we have
\begin{align}
\bigabs{\ATEDR[T,\J]-\ATETrue[T]}  \leq  \frac{\error_{N, \delta/J}}{1-\abound}
+ C \brackets{\frac{\abound^{\J}}{1-\abound} +  \ra \Bigparenth{T\abound^{T-1} + \frac{1}{1-\abound} 
}},
\label{fsg_dr_ate_t_dynamic}
\end{align}
for $ \error_{N, \delta}$ as defined in \cref{eq_combined_error_dr} in \cref{thm_fsg} and a universal constant $C$.
\end{theorem}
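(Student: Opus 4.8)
The plan is to reduce the dynamic DR error, one column at a time, to the static per-column guarantee already proved in \cref{thm_fsg}, while paying an extra price that is entirely governed by the auto-regressive estimation error $\ra$ and the geometric decay of $\abound^s$. Throughout I fix $a \in \sbraces{0,1}$ and expand $\est - \muexpected$ using \cref{mu_hat_dynamic,equation:ar_factors} into three deterministic pieces: the initial-condition term $\frac1N\sum_{i}\bigbrackets{(\what\alpha^{(a)})^{\T} - (\alpha^{(a)})^{\T}}y_{i,0}$; the \emph{matched} sum over $s=0,\dots,\J-1$ of $(\what\alpha^{(a)})^s\what\theta_{i,\T-s}^{[a,\mathrm{DR}]} - (\alpha^{(a)})^s\theta_{i,\T-s}^{(a)}$; and the \emph{truncation tail} $-\frac1N\sum_i\sum_{s=\J}^{\T-1}(\alpha^{(a)})^s\theta_{i,\T-s}^{(a)}$ that is dropped by using only $\J$ lags. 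The first piece is controlled by the factorization $x^{\T}-y^{\T}=(x-y)\sum_{k=0}^{\T-1}x^ky^{\T-1-k}$ together with \cref{assumption_bounded_ar_params,assumption_bounded_outcomes}, giving the $\T\abound^{\T-1}\ra$ contribution; the truncation tail is a geometric remainder bounded by $\ctwo\,\abound^{\J}/(1-\abound)$ via $\bigabs{\theta_{i,\T-s}^{(a)}}\leq \ctwo$, giving the $\abound^{\J}/(1-\abound)$ contribution.

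The heart of the argument is the matched sum. The key observation is that the residual potential outcome obeys $y_{i,t}^{(a|y_{i,t-1})} - \alpha^{(a)} y_{i,t-1} = \theta_{i,t}^{(a)} + \vareps_{i,t}^{(a)}$, which is exactly the static outcome model \cref{eq_outcome_model}. Consequently, if I introduce the \emph{oracle} debiased estimate $\check\theta_{i,\T-s}^{[a,\mathrm{DR}]}$ obtained from \cref{eq_theta_hat_dr_dynamic} by substituting the true $\alpha^{(a)}$ for $\what\alpha^{(a)}$ inside the residual, then on $\sbraces{a_{i,\T-s}=a}$ the residual equals $\theta_{i,\T-s}^{(a)}+\vareps_{i,\T-s}^{(a)}$ and (elsewhere the correction is killed by the indicator), so $\check\theta_{i,\T-s}^{[a,\mathrm{DR}]}-\theta_{i,\T-s}^{(a)}$ has precisely the algebraic form of the static per-column, per-treatment term $\termDR$ dissected in \cref{sec_proof_thm_fsg}. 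Hence the concentration bound \cref{eq_bound_11_dr} applies verbatim to each column $\T-s$, provided \cref{assumption_estimates} holds there, which is assumed for $t\in\sbraces{\T-\J+1,\dots,\T}$.

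I would then split each matched summand as $(\what\alpha^{(a)})^s(\what\theta_{i,\T-s}^{[a,\mathrm{DR}]}-\theta_{i,\T-s}^{(a)}) + \bigbrackets{(\what\alpha^{(a)})^s-(\alpha^{(a)})^s}\theta_{i,\T-s}^{(a)}$, and split the first bracket through the oracle. The oracle part $\sum_{s}(\what\alpha^{(a)})^s\cdot\frac1N\sum_i(\check\theta_{i,\T-s}^{[a,\mathrm{DR}]}-\theta_{i,\T-s}^{(a)})$ is bounded, after the geometric sum $\sum_s\abound^s\leq 1/(1-\abound)$ and a union bound over the $\J$ columns and the two treatments, by $\error_{N,\delta/\J}/(1-\abound)$; this is exactly where the $\delta/\J$ originates (replacing $\ell_{\delta/12}$ in \cref{thm_fsg} by $\ell_{\delta/(12\J)}$). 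The plug-in remainder $\what\theta_{i,\T-s}^{[1,\mathrm{DR}]}-\check\theta_{i,\T-s}^{[1,\mathrm{DR}]}=-(\what\alpha^{(1)}-\alpha^{(1)})\,y_{i,\T-s-1}\,a_{i,\T-s}/\what p_{i,\T-s}$ (and its $a=0$ analogue) is bounded in absolute value by $C_1\,\ra[(a)]/\lbar$ using \cref{assumption_pos_estimated,assumption_bounded_outcomes}, contributing $\ra/(1-\abound)$ after the geometric sum. The last term uses $\bigabs{(\what\alpha^{(a)})^s-(\alpha^{(a)})^s}\leq s\,\abound^{s-1}\,\ra[(a)]$ and summation of $s\abound^{s-1}$. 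Finally, summing the three pieces over $a$ by the triangle inequality, collecting $\sum_a\Rcol\bigparenth{\empo}=\RTheta$ and $\sum_a\ra[(a)]=\ra$, and reparameterizing $\delta$ yields \cref{fsg_dr_ate_t_dynamic}.

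The main obstacle I anticipate is the bookkeeping around the reduction to \cref{thm_fsg}: verifying that the oracle residual term reproduces the static structure \emph{exactly}, so that the sub-Gaussian and sub-exponential concentration underlying \cref{eq_bound_11_dr} transfers with $\vareps_{i,t}^{(a)}$ playing the role of the outcome noise, and carefully propagating the union bound over all $\J$ columns simultaneously (which is what both enforces $\error_{N,\delta/\J}$ and requires \cref{assumption_estimates} on the entire window $\sbraces{\T-\J+1,\dots,\T}$). A secondary technical point is that the $\what\alpha$-perturbation terms carry the lagged outcomes $y_{i,\T-s-1}$, so the boundedness in \cref{assumption_bounded_outcomes} is essential to keep these errors linear in $\ra$; the geometric factor arising from $\sum_s s\,\abound^{s-1}$ is at worst $1/(1-\abound)^2$, which must be reconciled with (or absorbed into the constant $C$ alongside) the $1/(1-\abound)$ displayed in the stated bound.
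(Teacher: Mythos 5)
Your proposal is correct and takes essentially the same approach as the paper: the same decomposition into an initial-condition term, a geometric truncation tail, and a matched sum; the same oracle substitution of $\alpha^{(a)}$ for $\what\alpha^{(a)}$ inside the residual so that each of the $\J$ columns reduces exactly to the static bound \cref{eq_bound_11_dr} (applied at level $\delta/(2\J)$ with a union bound, which is where $\error_{N,\delta/J}$ arises); and the same geometric-series bookkeeping for the $\what\alpha$-perturbation terms. The minor differences in how you arrange the cross terms (weighting the oracle term by $(\what\alpha^{(a)})^s$ instead of $(\alpha^{(a)})^s$, and multiplying the power difference by $\theta^{(a)}_{i,\T-s}$ rather than by the plug-in DR estimate) are immaterial, and the $1/(1-\abound)^2$ factor you flag is precisely what the paper's own bound \cref{eq:bound_alpha} produces, so that concern reflects the paper's statement rather than a gap in your argument.
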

The proof of \cref{thm_fsg_dynamic} is given in \cref{sub_proof_of_thm_fsg_dynamic}. For brevity, the finite sample guarantees above uses $\RTheta$ and $\RP$ as defined in \cref{eq_notation}, but the proof can be easily modified to replace the $\max_{j\in [T]}$ appearing in the definition of $\onetwonorm{\cdot}$ in \cref{eq_notation} with $\max_{j\in \sbraces{\T-\J+1,\cdots,\T}}$.

Next, we remark that \cref{thm_fsg_dynamic} is a strict generalization of \cref{thm_fsg}. To this end, note that when $\alpha^{(a)} = 0$ for all $a \in \normalbraces{0,1}$, the 
model considered in \cref{thm_fsg_dynamic}
simplifies to the model considered in \cref{thm_fsg}. For this setting, the assumptions in \cref{thm_fsg} imply that the assumptions in \cref{thm_fsg_dynamic} hold with $\J=1$. First, \cref{assumption_bounded_ar_params} holds with $\abound=0$ when $\alpha^{(a)} = 0$ for all $a \in \normalbraces{0,1}$. Second, the proof of \cref{thm_fsg_dynamic} can be easily modified to drop the requirement of \cref{assumption_bounded_outcomes} when $\J=1$ and $\abound=0$. 
Substituting $\abound=0$, $\ra=0$ (i.e., the auto-regressive parameters are known to be $0$), and $\J=1$ in \cref{fsg_dr_ate_t_dynamic} recovers the guarantee stated in \cref{thm_fsg}. \medskip

\noindent {\bf Doubly-robust behavior of \texorpdfstring{$\ATEDR[T,\J]$}{}.} 
When $\abound  \neq 0$ and bounded away from one, \cref{fsg_dr_ate_t_dynamic} bounds the absolute error of the DR estimator by the rate of
\begin{align}
    \RTheta \Bigparenth{\RP  + \sqrt{\frac{\log \J}{N}}} + \frac{1}{\sqrt N} + \abound^{\J}  + \ra.
\end{align}
Then, if the conditions of \cref{thm_fsg_dynamic} are satisfied for some $\J$ such that $C \log N \geq \J \geq \log N / (2\log (1 / \abound))$,
the error rate of the DR estimator is bounded by $$\RTheta \Bigparenth{\RP  + \sqrt{\frac{\log \log N}{N}}} + \frac{1}{\sqrt N} + \ra,$$ which decays a parametric rate of $O_p(N^{-0.5})$ as long as
\begin{align}
    \RTheta\RP = O_p\Bigparenth{\frac{1}{\sqrt N} }, \quad \RTheta = O_p\Bigparenth{\frac{1}{\sqrt{\log\log N}}},
    \qtext{and}
    \ra = O_{p}\Bigparenth{\frac{1}{\sqrt{N}}}.
\end{align}

Note that \cref{alg_guarantee} still implies that \ssSVD\ achieves $\RP = O_p(N^{-0.5}+\T^{-0.5})$ under suitable conditions. To estimate the auto-regressive parameter $\alpha^{(a)}$ for $a \in \normalbraces{0,1}$, \citet[Section~5]{bai2024likelihood} shows that whenever there exist $K$ units such that they all have treatment $a$ for $L$ consecutive time points, a full information maximum likelihood estimator provides $|\alpha^{(a)}-\what{\alpha}^{(a)}| = O_p((KL)^{-0.5})$. Next, establishing a matrix completion guarantee for the mean potential outcomes by residualizing as in \cref{eq:denoised_estimated_dynamic} can be reduced to deriving a matrix completion guarantee for an approximately low-rank matrix. To this end, \citet[Theorem~5]{agarwal2021causal} suggests that, up to logarithmic factors, an error rate of $N^{-0.5}+ \T^{-0.5} + \ra$ is plausible for $\RTheta$ for our setting. A complete derivation of error guarantees for $\ra$ and $\RTheta$ in the dynamic model is an interesting venue for future work.

\subsection{Proof of \cref{thm_fsg_dynamic}: \fsglagged}
\label{sub_proof_of_thm_fsg_dynamic}
The error $\Delta \ATETrue[T]^{\mathrm{DR}} = \ATEDR[T,\J]-\ATETrue[T]$ can be re-expressed as
    \begin{align}
        \Delta \ATETrue[T]^{\mathrm{DR}} = \bigparenth{\est[1] - \est[0]} - \bigparenth{\muexpected[1] - \muexpected[0]} = \bigparenth{\est[1] - \muexpected[1]} - \bigparenth{\est[0] - \muexpected[0]}. \label{eq_dynamic_error_rearrange}
    \end{align}
We claim that, with probability at least $1-\delta$, 
\begin{align}
     &\abs{\est[1] - \muexpected[1]} \leq  
     C\brackets{
    \frac{|\alpha^{(1)}|^{\J} - |\alpha^{(1)}|^{\T}}{1-|\alpha^{(1)}|} +\ra[(1)] \Bigparenth{T\abound^{T-1} + \frac{1-|\alpha^{(1)}|^{\J}}{1-|\alpha^{(1)}|} + \frac{1}{(1 - |\alpha^{(1)}|)^2} } 
     }\\ 
      &+ \frac{2}{(1-|\alpha^{(1)}|)\lbar}   \bigg[ \Rcol(\empoo)  \RP 
        +  \frac{1}{\sqrt{N}} \Bigparenth{\frac{\sqrt{c \ld[/(12\J)]  }}{\sqrt{\lone}} \Rcol(\empoo) + 2\sigmax \sqrt{c \ld[/(12\J)]  } + \frac{2\sigmax \m(c \ld[/(12\J)] }{\sqrt{\lone}} }  \bigg], \label{eq:dyn_bound_a1}
\end{align}
and 
\begin{align}
    &\abs{\est[0] - \muexpected[0]} \leq  
     C\brackets{
    \frac{|\alpha^{(0)}|^{\J} - |\alpha^{(0)}|^{\T}}{1-|\alpha^{(0)}|} +\ra[(0)] \Bigparenth{T\abound^{T-1} + \frac{1-|\alpha^{(0)}|^{\J}}{1-|\alpha^{(0)}|} + \frac{1}{(1 - |\alpha^{(0)}|)^2} } 
     }\\ 
      &+ \frac{2}{(1-|\alpha^{(0)}|)\lbar}   \bigg[ \Rcol(\empoz)  \RP 
        +  \frac{1}{\sqrt{N}} \Bigparenth{\frac{\sqrt{c \ld[/(12\J)]  }}{\sqrt{\lone}} \Rcol(\empoz) + 2\sigmax \sqrt{c \ld[/(12\J)]  } + \frac{2\sigmax \m(c \ld[/(12\J)] }{\sqrt{\lone}} }  \bigg].
        \label{eq:dyn_bound_a0}
\end{align}
Then, the claim in \cref{fsg_dr_ate_t_dynamic} follows by applying triangle inequality in \cref{eq_dynamic_error_rearrange} and using \cref{assumption_bounded_ar_params}.
We prove the bound \eqref{eq:dyn_bound_a1} in \cref{sub_dynamic_a_1}, and also provide an expression for $C$. The proof of \cref{eq:dyn_bound_a0} follows similarly.
    \subsubsection{Proof of \cref{eq:dyn_bound_a1}}
    \label{sub_dynamic_a_1}
    We start by decomposing $\muexpected[1]$ as follows:
    \begin{align}
    \muexpected[1] = 
        \frac1N \brackets{
        \sum_{i \in [N]} (\alpha^{(1)})^{\T} y_{i, 0}
        + 
        \sum_{s=0}^{\T-1} (\alpha^{(1)})^s \sum_{i \in [N]} 
        \theta^{(1)}_{i, \T-s}}
        &= \tone[1] + \ttwo[1] + \residual[1],
    \end{align}
where 
    \begin{align}
        \tone[1] \defeq \frac1N\sum_{s=0}^{J-1} (\alpha^{(1)})^s \sum_{i \in [N]} 
        \theta^{(1)}_{i, \T-s}, \quad \ttwo[1] \defeq \frac1N\sum_{s=J}^{\T-1} (\alpha^{(1)})^s \sum_{i \in [N]} 
        \theta^{(1)}_{i, \T-s}, \label{def_residual_0}
    \end{align}
and
    \begin{align}
    \residual[1] \defeq   
        (\alpha^{(1)})^{\T} \frac1N \sum_{i \in [N]}  y_{i, 0}.
        \label{def_residual}
    \end{align}
    Next, we decompose $\est[1]$ in \cref{mu_hat_dynamic} as $\est[1] = \htone[1] + \eresidual[1] $, where
    \begin{align}
        \htone[1] \defn \frac1N  \sum_{s=0}^{J-1} (\what\alpha^{(1)})^s  \sum_{i \in [N]} \what{\theta}_{i,\T-s}^{[1,\mathrm{DR}]}, 
        \qtext{and} \eresidual[1] \defeq   
        (\what\alpha^{(1)})^{\T} \frac1N \sum_{i \in [N]}  y_{i, 0}. \label{def_residual_2}
    \end{align}
    Finally, we define 
    \begin{align}
        \ttone[1] &\defeq \frac1N  \sum_{s=0}^{J-1} (\alpha^{(1)})^s  \sum_{i \in [N]} 
        \left[
        \what\theta^{(1)}_{i,\T-s} +\big(y_{i, \T-s}-\alpha^{(1)} y_{i,\T-s-1} - \what \theta^{(1)}_{i, \T-s} \big)  \frac{a_{i, \T-s}}{\what p_{i, \T-s}} 
        \right], \label{def_residual_3}
    \end{align}   
    which is similar to $\htone[1]$ except that $\what{\alpha}^{(1)}$ is replaced by $\alpha^{(1)}$.
    The proof proceeds by bounding each term in the following fundamental decomposition:
    \begin{align}
        \est[1] - \muexpected[1]
        = (\eresidual[1] - \residual[1]) + (\ttone[1] - \tone[1]) + (\htone[1] - \ttone[1]) - \ttwo[1].
        \label{eq:decompose_dyn_a1}
    \end{align}
    With $C_{0} \defeq \max_{i\in[N]}|y_{i, 0}|$ and $\cdr \defeq \cthree + (2C_1+ \cthree)/\lbar$, we claim that the bounds
    \begin{align}
        \bigabs{\eresidual[1] - \residual[1]} \leq C_0 T \ra[(1)] \abound^{T-1}, \qquad |\ttwo[1]| \leq \ctwo   \frac{|\alpha^{(1)}|^{\J}-|\alpha^{(1)}|^{\T}}{1-|\alpha^{(1)}|},
        \label{eq:bound_residual}
    \end{align}
    and
    \begin{align}
        |\htone[1] - \ttone[1]|  \leq  \ra[(1)] \biggparenth{\frac{C_1}{\lambda}  \frac{(1-|\alpha^{(1)}|^{\J})}{1-|\alpha^{(1)}|} + \cdr \frac{1}{(1-|\alpha^{(1)}|)^2}},
        \label{eq:bound_alpha}
    \end{align}
    hold deterministically (conditioned on $\what\alpha^{(1)}$), and that the bound
    \begin{align}
        |\ttone[1] - \tone[1] |  &\leq 
        \frac{2}{(1-|\alpha^{(1)}|)\lbar}   \bigg[ \Rcol(\empoo)  \RP \\
        &\qquad + \Bigparenth{\frac{\sqrt{c \ld[/(12\J)]  }}{\sqrt{\lone}} \Rcol(\empoo) + 2\sigmax \sqrt{c \ld[/(12\J)]  } + \frac{2\sigmax \m(c \ld[/(12\J)] }{\sqrt{\lone}} }  \frac{1}{\sqrt{N}}  \bigg],
        \label{eq:bound_last_dynamic}
    \end{align}
    holds with probability at least $1-\delta/2$.
    The claim in \cref{eq:dyn_bound_a1} follows by
    applying triangle inequality in \cref{eq:decompose_dyn_a1} and using the above bounds.

It remains to establish the intermediate claims~\cref{eq:bound_residual,eq:bound_alpha,eq:bound_last_dynamic}. 
Throughout the rest of the proof, we repeatedly use the inequality below that holds for all $s \in [T]$:
\begin{align}
    \Bigabs{(\what\alpha^{(1)})^{s} - (\alpha^{(1)})^s} = \Bigabs{(\what\alpha^{(1)}-\alpha^{(1)}) \bigparenth{\sum_{l \in [s]} (\what\alpha^{(1)})^{s-l} (\alpha^{(1)})^{l -1}}} 
    & \sless{(a)} s \bigabs{(\what\alpha^{(1)}-\alpha^{(1)})} \abound^{s-1} \\
    & \sequal{(b)} s \ra[(1)] \abound^{s-1}, \label{eq_helper_dynamic}
\end{align}
where $(a)$ follows from \cref{assumption_bounded_ar_params} and $(b)$ follows from \cref{eq_notation_2}. \medskip

\noindent {\bf Proof of \cref{eq:bound_residual}}
First, from \cref{def_residual_0}, we have
\begin{align}
   |\ttwo[1]| = \Bigabs{\frac1N\sum_{s=J}^{\T-1} (\alpha^{(1)})^s \sum_{i \in [N]} 
        \theta^{(1)}_{i, \T-s}} \sless{(a)} \ctwo \sum_{s=J}^{\T-1} \bigabs{\alpha^{(1)}}^s \sequal{(b)} \ctwo \frac{|\alpha^{(1)}|^{\J}-|\alpha^{(1)}|^{\T}}{1-|\alpha^{(1)}|}, 
\end{align}
where $(a)$ follows from \cref{assumption_bounded_outcomes} and $(b)$ follows from the sum of geometric series. Next, from \cref{def_residual,def_residual_2}, we have
\begin{align}
   \bigabs{\eresidual[1] - \residual[1]} = \Bigabs{\Bigparenth{(\what\alpha^{(1)})^{\T} - (\alpha^{(1)})^{\T}} \frac1N \sum_{i \in [N]}  y_{i, 0} }
    \sless{(a)} 
    C_0 T \ra[(1)] \abound^{T-1},
\end{align}
where $(a)$ follows from the definition of $C_0$ and \cref{eq_helper_dynamic}. \medskip

\noindent {\bf Proof of \cref{eq:bound_alpha}}
From  \cref{def_residual_2,def_residual_3}, and the triangle inequality, we have
\begin{align}
    \bigabs{\htone[1] - \ttone[1]} 
    & \leq \frac1N \sum_{i \in [N]} \sum_{s=0}^{\J-1} \bigg | (\what\alpha^{(1)})^s  \Bigparenth{\what\theta^{(1)}_{i,\T-s} + \big(y_{i, \T-s}-\what\alpha^{(1)} y_{i,\T-s-1} - \what\theta^{(1)}_{i, \T-s} \big) \frac{a_{i, \T-s}}{\what p_{i, \T-s}}} \\ 
     &\qquad \qquad \qquad - 
     (\alpha^{(1)})^s  \parenth{\what\theta^{(1)}_{i,\T-s} +  \big(y_{i, \T-s}-\alpha^{(1)} y_{i,\T-s-1} - \what\theta^{(1)}_{i, \T-s} \big) \frac{a_{i, \T-s}}{\what p_{i, \T-s}}} \bigg|\\
     & = \frac1N \sum_{i \in [N]} \sum_{s=0}^{\J-1} \bigg |(\alpha^{(1)})^s (\alpha^{(1)} - \what \alpha^{(1)})  y_{i,\T-s-1} \frac{a_{i, \T-s}}{\what p_{i, \T-s}} \\
     &\qquad+ \Bigparenth{(\what\alpha^{(1)})^s - (\alpha^{(1)})^s } \cdot \Bigparenth{\what\theta^{(1)}_{i,\T-s} + \big(y_{i, \T-s}-\what\alpha^{(1)} y_{i,\T-s-1} - \what\theta^{(1)}_{i, \T-s} \big) \frac{a_{i, \T-s}}{\what p_{i, \T-s}}}
     \bigg|\\
    &\sless{(a)} \frac1N \sum_{i \in [N]} \sum_{s=0}^{\J-1} \Bigabs{\frac{C_1}{\lambda} |\alpha^{(1)}|^s \ra[(1)]  + \cdr s \ra[(1)] \abound^{s-1}}\\
    &= \ra[(1)] \biggparenth{\frac{C_1}{\lambda}  \frac{(1-|\alpha^{(1)}|^{\J})}{1-|\alpha^{(1)}|} + \cdr \frac{1}{(1-|\alpha^{(1)}|)^2}},
\end{align}
where $(a)$ follows from \cref{eq_notation_2}, \cref{assumption_pos_estimated,assumption_bounded_outcomes}, and because $\max_{i \in [N], t \in [T]}\bigabs{\what{\theta}_{i,t}^{[1,\mathrm{DR}]}} \leq \cdr$ from \cref{assumption_bounded_outcomes,assumption_bounded_ar_params,assumption_pos_estimated}, and $(b)$ follows from the sum of geometric and arithmetico-geometric sequences. \medskip

\noindent {\bf Proof of \cref{eq:bound_last_dynamic}}
We start by defining
\begin{align}
    \wtil{\theta}_{i,\T-s}^{[1,\mathrm{DR}]}&\defeq \what\theta^{(1)}_{i,\T-s} +  \big(y_{i, \T-s}-\alpha^{(1)} y_{i,\T-s-1} - \what\theta^{(1)}_{i, \T-s} \big) \frac{a_{i, \T-s}}{\what p_{i, \T-s}}.
    \label{eq:dr_dyn_i}
\end{align}
Then, from \cref{def_residual_0,def_residual_3}, we have
\begin{align}
    |\ttone[1] - \tone[1] | = \biggabs{\sum_{s=0}^{J-1} (\alpha^{(1)})^s 
    \frac1N\sum_{i \in [N]} ( \wtil{\theta}_{i,\T-s}^{[1,\mathrm{DR}]} - \theta_{i, \T-s}^{(1)})
    }
    \sless{(a)} \sum_{s=0}^{J-1} |\alpha^{(1)}|^s \frac1N\Bigabs{\sum_{i \in [N]} ( \wtil{\theta}_{i,\T-s}^{[1,\mathrm{DR}]} - \theta_{i, \T-s}^{(1)} )},
\end{align}
where $(a)$ follows from triangle inequality. From \cref{,eq_treatment_model,equation:dynamic}, we have
\begin{align}
    \wtil{\theta}_{i,\T-s}^{[1,\mathrm{DR}]}- \theta_{i, \T-s}^{(1)} =  \what\theta^{(1)}_{i,\T-s} + (\theta^{(1)}_{i, \T-s} + \vareps^{(1)}_{i, \T-s} - \what\theta^{(1)}_{i, \T-s}) \frac{p_{i, \T-s} + \eta_{i, \T-s}}{\what p_{i, \T-s}} - \theta_{i, \T-s}^{(1)}.
\end{align}
Then, the term $\wtil{\theta}_{i,\T-s}^{[1,\mathrm{DR}]} - \theta_{i, \T-s}^{(1)} $ is analogous to the display \cref{eq:int_error_dr_nondynamic} in the proof of \cref{thm_fsg}. Following similar algebra as in \cref{sec_proof_thm_fsg}, we first obtain
\begin{align}
    \wtil{\theta}_{i,\T-s}^{[1,\mathrm{DR}]} - \theta_{i, \T-s}^{(1)} 
    &=
    \frac{ (\what{\theta}_{i,\T-s}^{(1)} - \theta_{i,\T-s}^{(1)}) (\what{p}_{i,\T-s}-p_{i,\T-s})}{\what{p}_{i,\T-s}} - 
    \frac{ (\what{\theta}_{i,\T-s}^{(1)} - \theta_{i,\T-s}^{(1)}) \eta_{i,\T-s}}{\what{p}_{i,\T-s}} + \frac{\vareps_{i,\T-s}^{(1)}p_{i,\T-s}}{\what{p}_{i,\T-s}} 
    \\
    & \qquad  + \frac{\vareps_{i,\T-s}^{(1)}\eta_{i,\T-s}}{\what{p}_{i,\T-s}}.
\end{align}
Now, note that \cref{assumption_estimates} holds for $j=\T-s$ for all $s \in \sbraces{0, \ldots, \J-1}$. Hence, for any such $s$ and for any $\delta \in (0,1)$, mimicking the derivation of \cref{eq_bound_11_dr} from \cref{sec_proof_thm_fsg}, we obtain, with probability at least $1 - \delta/({2\J})$,
\begin{align}
     \frac{1}{N} \Bigabs{\sum_{i \in [N]} (\wtil{\theta}_{i,\T-s}^{[1,\mathrm{DR}]} - \theta_{i, \T-s}^{(1)} ) } &\leq    \frac{2}{\lbar} \Rcol\bigparenth{\empoo}   \RP + \frac{2\sqrt{c \ld[/(12\J)] }}{\lbar \sqrt{\lone N}}\Rcol\bigparenth{\empoo} + \frac{2\sigmax \sqrt{c \ld[/(12\J)]}}{\lbar \sqrt{N}} + \\
     & \qquad \frac{2 \sigmax \m(c \ld[/(12\J)])}{\lbar \sqrt{\lone N}}. \label{eq_bound_11_dr_dynamic}
\end{align}
Finally, multiplying both sides of \cref{eq_bound_11_dr_dynamic} by $(\alpha^{(1)})^s$, summing it over $s\in\sbraces{0, \ldots, \J-1}$, and using a union bound argument yields that the bound in ~\cref{eq:bound_last_dynamic} holds with probability at least $1-\delta/2$.

\section{Doubly-robust estimation in a panel data model with staggered adoption}
\label{sec_stag_adop_app}
This section shows how to extend the doubly-robust framework of this article to a setting with panel data and staggered adoption. Recall (from \cref{sec:app_dynamic}) that for panel data, $t$ denotes the column (time) index and $T$ denotes the total number of columns (time periods). In a staggered adoption setting, for every unit $i \in [N]$, there exists a time point $t_i \in [T]$ such that $a_{i, t} = 0$ for $t \leq t_i$, and $a_{i, t} = 1$ for $t > t_i$. This defines the observed treatment assignment matrix $\ta$. As mentioned in \cref{sec_stag_adop_short} and illustrated in the example below, a staggered treatment assignment leads to a heavy time-series dependence in $\{\eta_{i, t}\}_{t \in [T]}$.

\begin{example}[Single adoption time]\label{example:staggered_adoption}
    Consider a panel data setting where all units remain in the control group until time $T_0$. At time $T_0 + 1$, each unit $i \in [N]$ receives treatment with probability $p_i$, and remains in treatment until time $T$. With probability $1 - p_i$, each unit $i \in [N]$ stays in the control group until time $T$. In other words, for each unit $i \in [N]$
    \begin{align}
        p_{i,t} & = 0 \qtext{for all} t \leq T_0 \qtext{and} p_{i,t} = p_i \qtext{for all} T_0 < t \leq T.
    \end{align}
    Further, for units remaining in control,
    \begin{align}
        \eta_{i,t} & = 0 \qtext{for all} t \leq T_0 \qtext{and} \eta_{i,t} = -p_i \qtext{for all} T_0 < t \leq T,
    \end{align}
    and for units receiving treatment,
    \begin{align}
        \eta_{i,t} & = 0 \qtext{for all} t \leq T_0 \qtext{and} \eta_{i,t} = 1 - p_i \qtext{for all} T_0 < t \leq T.
    \end{align}
\end{example}
The strong time-series dependence in $\eta_{i,t}$ above implies that \cref{ass_stron_block_factors} or \cref{ass_weak_noise}\cref{item_weak_ass_2aa} do not hold, which in turn implies that the guarantees for $\ssSVD$, as in \cref{alg_guarantee}, may not hold.
To see this, first note that to ensure \cref{assumption_block_noise}, the set of column partitions $\normalbraces{\cC_0, \cC_1}$ must be equal to $\normalbraces{[T_0], [T] \setminus [T_0]}$ due to the dependence in the noise $\noisea$. Now, for \cref{ass_stron_block_factors} to hold, we need $\normalabs{\cC_k}= \Omega(T)$ for every $k \in \normalbraces{0,1}$. However, for \cref{ass_weak_noise}\cref{item_weak_ass_2aa} to hold, we need $T-T_0$ to be a constant with respect to $T$ as, for any $t \in [T] \setminus [T_0]$ and $i \in [N]$,  $\sum_{t' \in [T]} \bigabs{\Expectation\normalbrackets{\eta_{i,t} \eta_{i,t'}}} = (T-T_0) c_i$ where $c_i \in \normalbraces{p_i^2,(1-p_i)^2}$. 

Moreover, in \cref{example:staggered_adoption}, $t_i=T_0$ for all treated units. This allows the choice of $\normalbraces{[T_0], [T] \setminus [T_0]}$ as the set of column partitions $\normalbraces{\cC_0, \cC_1}$ in \cref{assumption_block_noise}. More generally, if treatment adoption times $\normalbraces{t_i}_{i \in [N]}$ differ across units, then it may not be feasible to obtain a partition of $[T]$ into $\normalbraces{\cC_0, \cC_1}$ such that \cref{assumption_block_noise} holds. 

In this section, we propose an alternative approach to the $\ssSVD$ algorithm such that \cref{assumption_estimates} still holds for a suitable staggered adoption model.

\begin{assumption}[Staggered adoption and common unit factors]
\label{defn_stag_adop}
We consider a panel data setting with staggered adoption where 
\begin{enumerate}
    \item all units remain under control till time $T_0$, i.e., for every unit $i \in [N]$, there exists a time point $t_i \geq T_0$ such that $a_{i, t} = 0$ for $t \leq t_i$, and $a_{i, t} = 1$ for $t > t_i$, and 
    \item the unit-dependent latent factors corresponding to $\mta$, $\mpoz$, and $\mpoo$ are the same, i.e., $U = U^{(0)} = U^{(1)} \in \Reals^{N \times r}$. In other words, for every $i\in [N]$ and $t \in [T]$, $p_{i,t} = g(U_i, V_t)$, $\theta^{(0)}_{i,t} = \langle U_i, V_t^{(0)} \rangle$, and $\theta^{(1)}_{i,t} = \langle U_i, V_t^{(1)} \rangle$ for some known function $g: \Reals^{r} \times \Reals^{r} \to \Reals$, with $\langle \cdot,\cdot \rangle$ denoting the inner product.
\end{enumerate}
\end{assumption}

For the setting in \cref{example:staggered_adoption}, the function $g$ corresponds to the inner product, the unit-dependent latent factors are 1-dimensional (i.e., $r = 1$) with $U_i = p_i$ for every $i \in [N]$, and the time-dependent latent factors for the assignment probability are such that $V_t = 0$ for every $t \in [T_0]$ and $V_t = 1$ for every $t \in [T] \setminus [T_0]$. 
Consequently, \cref{example:staggered_adoption} is consistent with \cref{defn_stag_adop} if $U^{(a)}_i = p_i$ for every $a \in \normalbraces{0,1}$ and $i \in [N]$. 
Next, we provide a more flexible version of \cref{example:staggered_adoption} that allows different adoption times for different units.

\begin{example}[Different adoption times]\label{example:staggered_adoption_multiple}
    Consider a panel data setting where all units remain in the control group until time $T_0$. At every time $t \in [T] \setminus [T_0]$, each unit $i \in [N]$ receives treatment with probability $p_i$, and remains in treatment until time $T$. Therefore, for $t \in [T] \setminus [T_0]$ and $i \in [N]$, $a_{i,t} = 1$ if the adoption time point $t_i \in \normalbraces{T_0 + 1, \cdots, t}$, which occurs with probability $\sum_{t' \in [t-T_0-1]} (1-p_i)^{t'-1} p_i$. In other words, for each unit $i \in [N]$,
    \begin{align}
        p_{i,t} & = 0 \qtext{for all} t \leq T_0 \qtext{and} p_{i,t} = 1 - (1 - p_i)^{t - T_0} \qtext{for all} T_0 < t \leq T.
    \end{align}
\end{example}
For the setting in \cref{example:staggered_adoption_multiple}, the unit-dependent latent factors are 1-dimensional (i.e., $r = 1$) with $U_i = p_i$ for every $i \in [N]$, and the time-dependent latent factors for the assignment probability are such that $V_t = 0$ for every $t \in [T_0]$ and $V_t = t - T_0$ for every $t \in [T] \setminus [T_0]$. Further the function $g$ is such that $g(U_i, V_t) = 1 - (1 - U_i)^{V_t}$. Consequently, \cref{example:staggered_adoption_multiple} is consistent with \cref{defn_stag_adop} if $U^{(a)}_i = p_i$ for every $a \in \normalbraces{0,1}$ and $i \in [N]$.

We now describe \cfReg, an algorithm that generates estimates of $(\mpob, \mta)$ for the staggered adoption model in \cref{defn_stag_adop} such that \cref{assumption_estimates} holds.

\begin{enumerate}
\item The inputs are $(i)$ $\ta \in \Reals^{N \times T}$, $(ii)$ $\ooa \in \normalbraces{\Reals \cup \{\star\}}^{N \times T}$ for $a \in \normalbraces{0,1}$,  $(iii)$ the rank $r$ of the unit-dependent latent factors, $(iv)$ the time period $T_0$ until which all units remain under control, $(v)$ the time period $t \in [T] \setminus [T_0]$ for which we want to estimate the average treatment effect, and $(vi)$ the function $g$.
\item Let $\oocontrol \in \mathbb R^{N \times T_0}$ be the sub-matrix of $\ooz$ that keeps the first $T_0$ columns only. Run \texttt{SVD} on $\oocontrol$, i.e.,
    \begin{align}
    \texttt{SVD}(\oocontrol)  = (\what{\matU} \in \Reals^{N \times r}, \what{\Sigma} \in \Reals^{r \times r}, \what{\matV} \in \Reals^{\normalabs{T_0}  \times r}). 
    \end{align}
    \item Let $\cR^{(0)}$ and $\cR^{(1)}$ be the set of units receiving control and treatment at time $t$, respectively. In other words, for every $a \in \normalbraces{0,1}$, $\cR^{(a)} \defn \normalbraces{i \in [N]: a_{i,t} = a}$. Next, randomly partition $\cR^{(a)}$ into two nearly equal parts $\cR^{(a)}_0$ and $\cR^{(a)}_1$. For every $s \in \normalbraces{0,1}$, define $\cR_s = \cR^{(0)}_s \cup \cR^{(1)}_s$.
    \item For every $s \in \normalbraces{0,1}$, regress $\normalbraces{a_{i,t}}_{i \in \cR_s}$ on $\normalbraces{\what{U}_i}_{i \in \cR_s}$ using $g$ to obtain $\what{V}_{1-s}$. For every $s \in \normalbraces{0,1}$ and $i \in \cR_s$, return $\what{p}_{i,t} = g(\what{U}_i, \what{V}_{s})$.
    \item For every $a \in \normalbraces{0,1}$ and $s \in \normalbraces{0,1}$, regress $\normalbraces{y_{i,t}}_{i \in \cR^{(a)}_s}$ on $\normalbraces{\what{U}_i}_{i \in \cR^{(a)}_s}$ to obtain $\what{V}_{1-s}^{(a)}$. For every $a \in \normalbraces{0,1}$, $s \in \normalbraces{0,1}$, and $i \in \cR_s$, return  $\what{\theta}^{(a)}_{i,t} = \what{U}_i \what{V}_{s}^{(a)^\top}$.
\end{enumerate}
In summary, $\cfReg$ estimates the shared unit-dependent latent factors using the observed outcomes for all units until time period $T_0$.  Then, for every $s \in \normalbraces{0,1}$, the time-dependent latent factors $\what{V}_{s}$, $\what{V}_{s}^{(0)}$, and $\what{V}_{s}^{(1)}$ are estimated using the treatment assignments and the observed outcomes for units in $\cR_{1-s}$.

To establish guarantees for $\cfReg$, we adopt the 
subsequent assumption on the noise variables.
\begin{assumption}[Independence across units and with respect to pre-adoption noise]
\label{ass_stag_noise}
{\color{white}.}
\begin{enumerate}[itemsep=1mm, topsep=2mm, label=(\alph*)]
    \item\label{item_stag_2aa} 
    $\normalbraces{(\eta_{i,t}, \varepsilon_{i,t}^{(a)}) : i \in [N]}$ are mutually independent (across $i$) given $\normalbraces{\varepsilon_{i,t}^{(0)}}_{i \in [N], t \in [T_0]}$ for every $t \in [T] \setminus [T_0]$ and $a \in \normalbraces{0,1}$.
    \item\label{item_stag_2bb}
    $\normalbraces{\varepsilon_{i,t}^{(0)}}_{i \in [N], t \in [T_0]}   \indep \normalbraces{\eta_{i,t}, \varepsilon_{i,t}^{(a)}}_{i \in [N]}$ for every $t \in [T] \setminus [T_0]$ and $a \in \normalbraces{0,1}$.
\end{enumerate}
\end{assumption}
\cref{ass_stag_noise}\cref{item_stag_2aa} requires the noise $\normalparenth{\noisey, \noisea}$ corresponding to a time period $t \in T \setminus [T_0]$ to be jointly independent across units given the noise $\noiseyz$ corresponding to time periods $[T_0]$, for every $a \in \normalbraces{0,1}$. \cref{ass_stag_noise}\cref{item_stag_2bb} is satisfied if, for instance, the noise variables follow a moving average model of order $t - T_0 - 1$. The following result, proven in \cref{proof_prop_cfreg}, establishes that the estimates generated by $\cfReg$ satisfy \cref{assumption_estimates}.
\begin{proposition}[Guarantees for $\cfReg$]
\label{prop_cfreg}
Consider the staggered adoption model in \cref{defn_stag_adop} and suppose \cref{ass_stag_noise} holds.
Fix any $t \in [T] \setminus [T_0]$, and  $\normalbraces{\what{\theta}^{(0)}_{i,t}, \what{\theta}^{(1)}_{i,t}, \what{p}_{i,t}}_{i \in [N]}$ be the estimates returned by $\cfReg$. 
Then, \cref{assumption_estimates} holds.
\end{proposition}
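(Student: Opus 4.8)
The plan is to verify the two independence requirements \cref{eq_independence_requirement_estimates_dr} and \cref{eq_independence_requirement_estimates_dr_p} of \cref{assumption_estimates} for the data-dependent partition $(\cR_0, \cR_1)$ returned by $\cfReg$, by combining the cross-fitting structure of the algorithm with \cref{ass_stag_noise}. Fix $t \in [T] \setminus [T_0]$. The first step is to isolate the randomness feeding $\what{\matU}$. Since all units are under control through time $T_0$ (\cref{defn_stag_adop}), the pre-adoption block $\oocontrol$ is fully observed and equals $\mpoz_{[N] \times [T_0]} + \noiseyz_{[N] \times [T_0]}$, so $\what{\matU}$, and hence every row $\what{U}_i$, is a deterministic function of the pre-adoption noise $\{\varepsilon^{(0)}_{i,t'}\}_{i \in [N], t' \in [T_0]}$ alone. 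By \cref{ass_stag_noise}\cref{item_stag_2bb}, this pre-adoption noise is jointly independent of the time-$t$ noise $\{(\eta_{i,t}, \varepsilon^{(a)}_{i,t})\}_{i \in [N]}$, so $\what{\matU} \indep \{(\eta_{i,t}, \varepsilon^{(a)}_{i,t})\}_{i \in [N]}$ for each $a$.

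Next I would make the cross-fitting explicit. For $i \in \cR_s$, the estimate $\what{p}_{i,t} = g(\what{U}_i, \what{V}_s)$ uses $\what{V}_s$, obtained by regressing $\{a_{i',t}\}_{i' \in \cR_{1-s}}$ on $\{\what{U}_{i'}\}_{i' \in \cR_{1-s}}$, and $\what{\theta}^{(a)}_{i,t} = \what{U}_i \what{V}^{(a)\top}_s$ uses $\what{V}^{(a)}_s$, obtained by regressing $\{y_{i',t}\}_{i' \in \cR^{(a)}_{1-s}}$ on $\{\what{U}_{i'}\}_{i' \in \cR^{(a)}_{1-s}}$. Consequently, the collection $\{(\what{p}_{i,t}, \what{\theta}^{(a)}_{i,t})\}_{i \in \cR_s}$ is a measurable function of $\what{\matU}$, the partition sets, and the fold-$(1-s)$ data $\{a_{i',t}\}_{i' \in \cR_{1-s}}$ and $\{y_{i',t}\}_{i' \in \cR^{(a)}_{1-s}}$; crucially, it never references the treatment assignments or outcomes of units in $\cR_s$.

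The heart of the argument, and what I expect to be the main obstacle, is a conditioning step made delicate by the fact that the partition is itself a function of the time-$t$ noise through $a_{i,t} = p_{i,t} + \eta_{i,t}$, so it cannot be treated as independent of $\{\eta_{i,t}\}_{i \in \cR_s}$. I would condition on the full partition $\mathcal{P}$, i.e., on the treatment vector $\{a_{i,t}\}_{i \in [N]}$ and the internal randomization used to split each $\cR^{(a)}$ into $\cR^{(a)}_0, \cR^{(a)}_1$, so that all four sets are fixed. Given $\mathcal{P}$, the assignments in $\cR_{1-s}$ are constants, so each fold-$(1-s)$ outcome reduces to a fixed mean plus its idiosyncratic term $\varepsilon^{(a)}_{i',t}$, and the values $\{\eta_{i,t}\}_{i \in \cR_s}$ are determined. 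It then suffices to establish, for each $a$, the joint conditional independence
\[
\{(\what{p}_{i,t}, \what{\theta}^{(a)}_{i,t})\}_{i \in \cR_s} \indep \{(\eta_{i,t}, \varepsilon^{(a)}_{i,t})\}_{i \in \cR_s} \,\big|\, \mathcal{P},
\]
since \cref{eq_independence_requirement_estimates_dr} and \cref{eq_independence_requirement_estimates_dr_p} are immediate marginal consequences. The left-hand collection is a function of $\what{\matU}$ and the fold-$(1-s)$ noise $\{\varepsilon^{(a)}_{i',t}\}_{i' \in \cR^{(a)}_{1-s}}$, while the right-hand collection involves only fold-$s$ noise. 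By \cref{ass_stag_noise}\cref{item_stag_2bb}, $\what{\matU}$ remains independent of all time-$t$ noise after conditioning on $\mathcal{P}$ (as $\mathcal{P}$ together with the time-$t$ noise is jointly independent of the pre-adoption noise), and by \cref{ass_stag_noise}\cref{item_stag_2aa} the time-$t$ noise is independent across units given the pre-adoption noise, so the fold-$(1-s)$ noise is independent of the fold-$s$ noise given $\mathcal{P}$. Combining these two facts yields the displayed independence.

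The point that must be stated carefully is the interpretation: because the folds are data-dependent, the independence in \cref{assumption_estimates} is established for the conditional law given the realized partition $\mathcal{P}$, which is exactly the form in which it is consumed by the cross-fitting concentration arguments (as in the treatment of \cref{eq:t2} in the proof of \cref{thm_fsg}). The proof should therefore emphasize that conditioning on $\mathcal{P}$ neither perturbs the law of $\what{\matU}$ (via \cref{item_stag_2bb}) nor couples the two folds' idiosyncratic errors (via \cref{item_stag_2aa}), so that the estimates attached to $\cR_s$ depend only on quantities independent of the fold-$s$ noise that the downstream analysis must average out.
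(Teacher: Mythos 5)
Your core argument is the same as the paper's: trace the randomness of the fold-$s$ estimates to (i) the pre-adoption noise, through $\what{\matU}$, and (ii) the opposite fold's time-$t$ data, and then invoke \cref{ass_stag_noise}\cref{item_stag_2aa} and \cref{ass_stag_noise}\cref{item_stag_2bb} to separate these two sources from the fold-$s$ noise. Where you differ is in handling the data-dependence of the folds: the paper derives its key display (\cref{eq_noise_stag_adoption_proof}) for an arbitrary \emph{fixed} partition $(\wbar{\cR}_0,\wbar{\cR}_1)$ and then applies it to the realized folds without comment, whereas you condition explicitly on the realized partition $\mathcal{P}$ (treatment labels at time $t$ plus the split randomization) and prove the independence for the conditional law. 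That conditioning step is a genuine refinement, since the folds $\cR_s$ are functions of exactly the noise $\normalbraces{\eta_{i,t}}_{i\in[N]}$ that appears on the right-hand side of \cref{assumption_estimates}.

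However, one step of yours is wrong and should be excised: the claim that \cref{eq_independence_requirement_estimates_dr} and \cref{eq_independence_requirement_estimates_dr_p} are ``immediate marginal consequences'' of the conditional independence given $\mathcal{P}$. Conditional independence given a variable correlated with both sides does not marginalize to unconditional independence, and here the unconditional version can genuinely fail: the collection $\normalbraces{\eta_{i,t}}_{i\in\cR_s}$ reveals the treated/control counts inside fold $s$, which---through the nearly-equal split of each $\cR^{(a)}$---are correlated with the treated count in fold $1-s$, and that count feeds into $\what{V}_s$ and hence into $\normalbraces{\what{p}_{i,t}}_{i\in\cR_s}$. (With two units and $p_1=p_2=1/2$, conditioning on the event that the lone fold-$s$ unit is treated makes the fold-$(1-s)$ unit treated with probability $2/3$ rather than $1/2$.) So the only provable statement is the conditional-on-$\mathcal{P}$ one, which your final paragraph correctly retreats to, and which is also the reading under which the paper's own proof (with its fixed-partition display) is valid. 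One caveat on your closing remark: conditional on $\mathcal{P}$, the variables $\normalbraces{\eta_{i,t}}_{i\in\cR_s}$ are degenerate constants, so this conditional form is \emph{not} literally what the sub-Gaussian concentration steps in the proof of \cref{thm_fsg} consume; the paper sidesteps this tension by claiming only that \cref{assumption_estimates} holds for \cfReg\ and explicitly deferring error rates for the staggered adoption model to future work.
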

Deriving error bounds, i.e., $\RP$ and $\RTheta$, for the estimates generated by $\cfReg$ within the staggered adoption model is an interesting direction for future research.

\subsection{Proof of \texorpdfstring{\cref{prop_cfreg}}{}: Guarantees for \texorpdfstring{\cfReg}{}}
\label{proof_prop_cfreg}
Fix any $s \in \normalbraces{0,1}$. Then, \cref{ass_stag_noise}\cref{item_stag_2aa} and \cref{ass_stag_noise}\cref{item_stag_2bb} imply that 
\begin{align}
\normalbraces{\varepsilon_{i,t}^{(0)}}_{i \in [N], t \in [T_0]} \cup \normalbraces{\eta_{i,t}, \varepsilon_{i,t}^{(a)}}_{i \in \wbar{\cR}_{1-s}}   \indep \normalbraces{\eta_{i,t}, \varepsilon_{i,t}^{(a)}}_{i \in \wbar{\cR}_s}, \label{eq_noise_stag_adoption_proof}
\end{align}
for every partition $(\wbar{\cR}_0, \wbar{\cR}_1)$ of the units $[N]$.
    
Now, $\cfReg$ estimates $\normalbraces{\what{p}_{i,t}}_{i \in \cR_s}$ using $\normalbraces{\what{U}_i}_{i \in \cR_s}$ and $\what{V}_{s}$, where $\what{V}_{s}$ is estimated using $\normalbraces{\what{U}_i}_{i \in \cR_{1-s}}$ and $\normalbraces{a_{i,t}}_{i \in \cR_{1-s}}$. In other words, the randomness in $\normalbraces{\what{p}_{i,t}}_{i \in \cR_s}$ stems from the randomness in $\oocontrol$ and $\normalbraces{a_{i,t}}_{i \in \cR_{1-s}}$ which in turn stems from the randomness in $\normalbraces{\varepsilon_{i,t}^{(0)}}_{i \in [N], t \in [T_0]}$ and $\normalbraces{\eta_{i,t}}_{i \in \cR_{1-s}}$. Then, \cref{eq_independence_requirement_estimates_dr_p} follows from \cref{eq_noise_stag_adoption_proof}.

Next, fix any $a \in \normalbraces{0,1}$. Then, $\cfReg$ estimates $\normalbraces{\what{\theta}^{(a)}}_{i \in \cR_s}$ using $\normalbraces{\what{U}_i}_{i \in \cR_s}$ 
    and $\what{V}_{s}^{(a)}$, where $\what{V}_{s}^{(a)}$ is estimated using $\normalbraces{\what{U}_i}_{i \in \cR^{(a)}_{1-s}}$ and $\normalbraces{y_{i,t}}_{i \in \cR^{(a)}_{1-s}}$. In other words, the randomness in $\normalbraces{\what{\theta}^{(a)}}_{i \in \cR_s}$ stems from the randomness in $\oocontrol$ and $\normalbraces{y_{i,t}}_{i \in \cR^{(a)}_{1-s}}$ which in turn stems from the randomness in $\normalbraces{\varepsilon_{i,t}^{(0)}}_{i \in [N], t \in [T_0]}$ and $\normalbraces{\varepsilon^{(a)}_{i,t}}_{i \in \cR^{(a)}_{1-s}}$. Then, \cref{eq_independence_requirement_estimates_dr} follows from \cref{eq_noise_stag_adoption_proof}.

\bibliography{main}

\end{document}